\documentclass[10pt,twoside,a4paper]{article}

\usepackage[ngerman,USenglish]{babel} 

\usepackage{amsmath}                
\usepackage{amsfonts}               
\usepackage{amssymb}                
\usepackage{amsopn}                 
\allowdisplaybreaks[3] 
\usepackage{amsthm}                
\usepackage{bbm}                    
\usepackage{mathrsfs}               
\usepackage{array}                  
\usepackage{calc}                   
\usepackage{graphicx}        
\usepackage{psfrag}                 
\usepackage{subfigure}              
\usepackage{wrapfig}                
\usepackage{color}           
\usepackage{fancyhdr}               
\usepackage{verbatim}               
\usepackage{moreverb}               
\usepackage{exscale}                
\usepackage{enumerate}
\usepackage{IEEEtrantools}

\usepackage{ifthen}
\usepackage{pgf,tikz}
\usetikzlibrary{positioning,arrows,shapes,decorations}
\usepackage{float}
\usepackage{wasysym}
\hbadness=3000

\newcommand{\overbar}[1]{\mkern 1.5mu\overline{\mkern-1.5mu#1\mkern-1.5mu}\mkern 1.5mu}

%
%
%
%

\usepackage{amsmath}
\usepackage{amssymb}
\usepackage{amsfonts}
\usepackage{mathrsfs}
\usepackage{xspace}
\usepackage{bm}
\usepackage{fancyref}
\usepackage{textcomp}
\usepackage[Symbol]{upgreek}

\newcommand{\safemath}[2]{\newcommand{#1}{\ensuremath{#2}\xspace}}
\renewcommand{\safemath}[2]{\newcommand{#1}{\ensuremath{#2}\xspace}}


\newcommand{\ssa}{\mathsf{a}}
\newcommand{\ssb}{\mathsf{b}}
\newcommand{\ssc}{\mathsf{c}}
\newcommand{\ssd}{\mathsf{d}}
\newcommand{\sse}{\mathsf{e}}
\newcommand{\ssf}{\mathsf{f}}
\newcommand{\ssg}{\mathsf{g}}
\newcommand{\ssh}{\mathsf{h}}
\newcommand{\ssi}{\mathsf{i}}
\newcommand{\ssj}{\mathsf{j}}
\newcommand{\ssk}{\mathsf{k}}
\newcommand{\ssl}{\mathsf{l}}
\newcommand{\ssm}{\mathsf{m}}
\newcommand{\ssn}{\mathsf{n}}
\newcommand{\sso}{\mathsf{o}}
\newcommand{\ssp}{\mathsf{p}}
\newcommand{\ssq}{\mathsf{q}}
\newcommand{\ssr}{\mathsf{r}}
\newcommand{\sss}{\mathsf{s}}
\newcommand{\sst}{\mathsf{t}}
\newcommand{\ssu}{\mathsf{u}}
\newcommand{\ssv}{\mathsf{v}}
\newcommand{\ssw}{\mathsf{w}}
\newcommand{\ssx}{\mathsf{x}}
\newcommand{\ssy}{\mathsf{y}}
\newcommand{\ssz}{\mathsf{z}}



\safemath{\bmsa}{\bm{\ssa}}
\safemath{\bmsb}{\bm{\ssb}}
\safemath{\bmsc}{\bm{\ssc}}
\safemath{\bmsd}{\bm{\ssd}}
\safemath{\bmse}{\bm{\sse}}
\safemath{\bmsf}{\bm{\ssf}}
\safemath{\bmsg}{\bm{\ssg}}
\safemath{\bmsh}{\bm{\ssh}}
\safemath{\bmsi}{\bm{\ssi}}
\safemath{\bmsj}{\bm{\ssj}}
\safemath{\bmsk}{\bm{\ssk}}
\safemath{\bmsl}{\bm{\ssl}}
\safemath{\bmsm}{\bm{\ssm}}
\safemath{\bmsn}{\bm{\ssn}}
\safemath{\bmso}{\bm{\sso}}
\safemath{\bmsp}{\bm{\ssp}}
\safemath{\bmsq}{\bm{\ssq}}
\safemath{\bmsr}{\bm{\ssr}}
\safemath{\bmss}{\bm{\sss}}
\safemath{\bmst}{\bm{\sst}}
\safemath{\bmsu}{\bm{\ssu}}
\safemath{\bmsv}{\bm{\ssv}}
\safemath{\bmsw}{\bm{\ssw}}
\safemath{\bmsx}{\bm{\ssx}}
\safemath{\bmsy}{\bm{\ssy}}
\safemath{\bmsz}{\bm{\ssz}}

\bmdefine{\bmualphad}{\upalpha}
\bmdefine{\bmubetad}{\upbeta}
\bmdefine{\bmuchid}{\upchi}
\bmdefine{\bmudeltad}{\updelta}
\bmdefine{\bmuepsilond}{\upepsilon}
\bmdefine{\bmuvarepsilond}{\upvarepsilon}
\bmdefine{\bmuetad}{\upeta}
\bmdefine{\bmugammad}{\upgamma}
\bmdefine{\bmuiotad}{\upiota}
\bmdefine{\bmukappad}{\upkappa}
\bmdefine{\bmulambdad}{\uplambda}
\bmdefine{\bmumud}{\upmu}
\bmdefine{\bmunud}{\upnu}
\bmdefine{\bmuomegad}{\upomega}
\bmdefine{\bmuphid}{\upphi}
\bmdefine{\bmuvarphid}{\upvarphi}
\bmdefine{\bmupid}{\uppi}
\bmdefine{\bmuvarpid}{\upvarpi}
\bmdefine{\bmupsid}{\uppsi}
\bmdefine{\bmurhod}{\uprho}
\bmdefine{\bmuvarrhod}{\upvarrho}
\bmdefine{\bmusigmad}{\upsigma}
\bmdefine{\bmuvarsigmad}{\upvarsigma}
\bmdefine{\bmutaud}{\uptau}
\bmdefine{\bmuthetad}{\uptheta}
\bmdefine{\bmuvarthetad}{\upvartheta}
\bmdefine{\bmuupsilond}{\upupsilon}
\bmdefine{\bmuxid}{\upxi}
\bmdefine{\bmuzetad}{\upzeta}

\safemath{\bmua}{\mathbf{a}}
\safemath{\bmub}{\mathbf{b}}
\safemath{\bmuc}{\mathbf{c}}
\safemath{\bmud}{\mathbf{d}}
\safemath{\bmue}{\mathbf{e}}
\safemath{\bmuf}{\mathbf{f}}
\safemath{\bmug}{\mathbf{g}}
\safemath{\bmuh}{\mathbf{h}}
\safemath{\bmui}{\mathbf{i}}
\safemath{\bmuj}{\mathbf{j}}
\safemath{\bmuk}{\mathbf{k}}
\safemath{\bmul}{\mathbf{l}}
\safemath{\bmum}{\mathbf{m}}
\safemath{\bmun}{\mathbf{n}}
\safemath{\bmuo}{\mathbf{o}}
\safemath{\bmup}{\mathbf{p}}
\safemath{\bmuq}{\mathbf{q}}
\safemath{\bmur}{\mathbf{r}}
\safemath{\bmus}{\mathbf{s}}
\safemath{\bmut}{\mathbf{t}}
\safemath{\bmuu}{\mathbf{u}}
\safemath{\bmuv}{\mathbf{v}}
\safemath{\bmuw}{\mathbf{w}}
\safemath{\bmux}{\mathbf{x}}
\safemath{\bmuy}{\mathbf{y}}
\safemath{\bmuz}{\mathbf{z}}

\safemath{\bmualpha}{\bmualphad}
\safemath{\bmubeta}{\bmubetad}
\safemath{\bmuchi}{\bumchid}
\safemath{\bmudelta}{\bmudeltad}
\safemath{\bmuepsilon}{\bmuepsilond}
\safemath{\bmuvarepsilon}{\bmuvarepsilond}
\safemath{\bmueta}{\bmuetad}
\safemath{\bmugamma}{\bmugammad}
\safemath{\bmuiota}{\bmuiotad}
\safemath{\bmukappa}{\bmukappad}
\safemath{\bmulambda}{\bmulambdad}
\safemath{\bmumu}{\bmumud}
\safemath{\bmunu}{\bmunud}
\safemath{\bmuomega}{\bmuomegad}
\safemath{\bmuphi}{\bmuphid}
\safemath{\bmuvarphi}{\bmuvarphid}
\safemath{\bmupi}{\bmupid}
\safemath{\bmuvarpi}{\bmuvarpid}
\safemath{\bmupsi}{\bmupsid}
\safemath{\bmurho}{\bmurhod}
\safemath{\bmuvarrho}{\bmuvarrhod}
\safemath{\bmusigma}{\bmusigmad}
\safemath{\bmuvarsigma}{\bmuvarsigmad}
\safemath{\bmutau}{\bmutaud}
\safemath{\bmutheta}{\bmuthetad}
\safemath{\bmuvartheta}{\bmuvarthetad}
\safemath{\bmuupsilon}{\bmuupsilond}
\safemath{\bmuxi}{\bmuxid}
\safemath{\bmuzeta}{\bmuzetad}

\bmdefine{\bmiad}{a}
\bmdefine{\bmibd}{b}
\bmdefine{\bmicd}{c}
\bmdefine{\bmidd}{d}
\bmdefine{\bmied}{e}
\bmdefine{\bmifd}{f}
\bmdefine{\bmigd}{g}
\bmdefine{\bmihd}{h}
\bmdefine{\bmiid}{i}
\bmdefine{\bmijd}{j}
\bmdefine{\bmikd}{k}
\bmdefine{\bmild}{l}
\bmdefine{\bmimd}{m}
\bmdefine{\bmind}{n}
\bmdefine{\bmiod}{o}
\bmdefine{\bmipd}{p}
\bmdefine{\bmiqd}{q}
\bmdefine{\bmird}{r}
\bmdefine{\bmisd}{s}
\bmdefine{\bmitd}{t}
\bmdefine{\bmiud}{u}
\bmdefine{\bmivd}{v}
\bmdefine{\bmiwd}{w}
\bmdefine{\bmixd}{x}
\bmdefine{\bmiyd}{y}
\bmdefine{\bmizd}{z}

\bmdefine{\bmialphad}{\alpha}
\bmdefine{\bmibetad}{\beta}
\bmdefine{\bmichid}{\chi}
\bmdefine{\bmideltad}{\delta}
\bmdefine{\bmiepsilond}{\epsilon}
\bmdefine{\bmivarepsilond}{\varepsilon}
\bmdefine{\bmietad}{\eta}
\bmdefine{\bmigammad}{\gamma}
\bmdefine{\bmiiotad}{\iota}
\bmdefine{\bmikappad}{\kappa}
\bmdefine{\bmivarkappad}{\varkappa}
\bmdefine{\bmilambdad}{\lambda}
\bmdefine{\bmimud}{\mu}
\bmdefine{\bminud}{\nu}
\bmdefine{\bmiomegad}{\omega}
\bmdefine{\bmiphid}{\phi}
\bmdefine{\bmivarphid}{\varphi}
\bmdefine{\bmipid}{\pi}
\bmdefine{\bmivarpid}{\varpi}
\bmdefine{\bmipsid}{\psi}
\bmdefine{\bmirhod}{\rho}
\bmdefine{\bmivarrhod}{\varrho}
\bmdefine{\bmisigmad}{\sigma}
\bmdefine{\bmivarsigmad}{\varsigma}
\bmdefine{\bmitaud}{\tau}
\bmdefine{\bmithetad}{\theta}
\bmdefine{\bmivarthetad}{\vartheta}
\bmdefine{\bmiupsilond}{\upsilon}
\bmdefine{\bmixid}{\xi}
\bmdefine{\bmizetad}{\zeta}

\safemath{\bmia}{\bmiad}
\safemath{\bmib}{\bmibd}
\safemath{\bmic}{\bmicd}
\safemath{\bmid}{\bmidd}
\safemath{\bmie}{\bmied}
\safemath{\bmif}{\bmifd}
\safemath{\bmig}{\bmigd}
\safemath{\bmih}{\bmihd}
\safemath{\bmii}{\bmiid}
\safemath{\bmij}{\bmijd}
\safemath{\bmik}{\bmikd}
\safemath{\bmil}{\bmild}
\safemath{\bmim}{\bmimd}
\safemath{\bmin}{\bmind}
\safemath{\bmio}{\bmiod}
\safemath{\bmip}{\bmipd}
\safemath{\bmiq}{\bmiqd}
\safemath{\bmir}{\bmird}
\safemath{\bmis}{\bmisd}
\safemath{\bmit}{\bmitd}
\safemath{\bmiu}{\bmiud}
\safemath{\bmiv}{\bmivd}
\safemath{\bmiw}{\bmiwd}
\safemath{\bmix}{\bmixd}
\safemath{\bmiy}{\bmiyd}
\safemath{\bmiz}{\bmizd}

\safemath{\bmialpha}{\bmialphad}
\safemath{\bmibeta}{\bmibetad}
\safemath{\bmichi}{\bmichid}
\safemath{\bmidelta}{\bmideltad}
\safemath{\bmiepsilon}{\bmiepsilond}
\safemath{\bmivarepsilon}{\bmivarepsilond}
\safemath{\bmieta}{\bmietad}
\safemath{\bmigamma}{\bmigammad}
\safemath{\bmiiota}{\bmiiotad}
\safemath{\bmikappa}{\bmikappad}
\safemath{\bmivarkappa}{\bmivarkappad}
\safemath{\bmilambda}{\bmilambdad}
\safemath{\bmimu}{\bmimud}
\safemath{\bminu}{\bminud}
\safemath{\bmiomega}{\bmiomegad}
\safemath{\bmiphi}{\bmiphid}
\safemath{\bmivarphi}{\bmivarphid}
\safemath{\bmipi}{\bmipid}
\safemath{\bmivarpi}{\bmivarpid}
\safemath{\bmipsi}{\bmipsid}
\safemath{\bmirho}{\bmirhod}
\safemath{\bmivarrho}{\bmivarrhod}
\safemath{\bmisigma}{\bmisigmad}
\safemath{\bmivarsigma}{\bmivarsigmad}
\safemath{\bmitau}{\bmitaud}
\safemath{\bmitheta}{\bmithetad}
\safemath{\bmivartheta}{\bmivarthetad}
\safemath{\bmiupsilon}{\bmiupsilond}
\safemath{\bmixi}{\bmixid}
\safemath{\bmizeta}{\bmizetad}

\bmdefine{\bmuDeltad}{\Updelta}
\bmdefine{\bmuGammad}{\Upgamma}
\bmdefine{\bmuLambdad}{\Uplambda}
\bmdefine{\bmuOmegad}{\Upomega}
\bmdefine{\bmuPhid}{\Upphi}
\bmdefine{\bmuPid}{\Uppi}
\bmdefine{\bmuPsid}{\Uppsi}
\bmdefine{\bmuSigmad}{\Upsigma}
\bmdefine{\bmuThetad}{\Uptheta}
\bmdefine{\bmuUpsilond}{\Upupsilon}
\bmdefine{\bmuXid}{\Upxi}

\safemath{\bmuA}{\mathbf{A}}
\safemath{\bmuB}{\mathbf{B}}
\safemath{\bmuC}{\mathbf{C}}
\safemath{\bmuD}{\mathbf{D}}
\safemath{\bmuE}{\mathbf{E}}
\safemath{\bmuF}{\mathbf{F}}
\safemath{\bmuG}{\mathbf{G}}
\safemath{\bmuH}{\mathbf{H}}
\safemath{\bmuI}{\mathbf{I}}
\safemath{\bmuJ}{\mathbf{J}}
\safemath{\bmuK}{\mathbf{K}}
\safemath{\bmuL}{\mathbf{L}}
\safemath{\bmuM}{\mathbf{M}}
\safemath{\bmuN}{\mathbf{N}}
\safemath{\bmuO}{\mathbf{O}}
\safemath{\bmuP}{\mathbf{P}}
\safemath{\bmuQ}{\mathbf{Q}}
\safemath{\bmuR}{\mathbf{R}}
\safemath{\bmuS}{\mathbf{S}}
\safemath{\bmuT}{\mathbf{T}}
\safemath{\bmuU}{\mathbf{U}}
\safemath{\bmuV}{\mathbf{V}}
\safemath{\bmuW}{\mathbf{W}}
\safemath{\bmuX}{\mathbf{X}}
\safemath{\bmuY}{\mathbf{Y}}
\safemath{\bmuZ}{\mathbf{Z}}

\safemath{\bmuZero}{\mathbf{0}}
\safemath{\bmuOne}{\mathbf{1}}

\safemath{\bmuDelta}{\bmuDeltad}
\safemath{\bmuGamma}{\bmuGammad}
\safemath{\bmuLambda}{\bmuLambdad}
\safemath{\bmuOmega}{\bmuOmegad}
\safemath{\bmuPhi}{\bmuPhid}
\safemath{\bmuPi}{\bmuPid}
\safemath{\bmuPsi}{\bmuPsid}
\safemath{\bmuSigma}{\bmuSigmad}
\safemath{\bmuTheta}{\bmuThetad}
\safemath{\bmuUpsilon}{\bmuUpsilond}
\safemath{\bmuXi}{\bmuXid}

\bmdefine{\bmiAd}{A}
\bmdefine{\bmiBd}{B}
\bmdefine{\bmiCd}{C}
\bmdefine{\bmiDd}{D}
\bmdefine{\bmiEd}{E}
\bmdefine{\bmiFd}{F}
\bmdefine{\bmiGd}{G}
\bmdefine{\bmiHd}{H}
\bmdefine{\bmiId}{I}
\bmdefine{\bmiJd}{J}
\bmdefine{\bmiKd}{K}
\bmdefine{\bmiLd}{L}
\bmdefine{\bmiMd}{M}
\bmdefine{\bmiOd}{N}
\bmdefine{\bmiPd}{O}
\bmdefine{\bmiQd}{P}
\bmdefine{\bmiRd}{R}
\bmdefine{\bmiSd}{S}
\bmdefine{\bmiTd}{T}
\bmdefine{\bmiUd}{U}
\bmdefine{\bmiVd}{V}
\bmdefine{\bmiWd}{W}
\bmdefine{\bmiXd}{X}
\bmdefine{\bmiYd}{Y}
\bmdefine{\bmiZd}{Z}

\bmdefine{\bmiDeltad}{\Delta}
\bmdefine{\bmiGammad}{\Gamma}
\bmdefine{\bmiLambdad}{\Lambda}
\bmdefine{\bmiOmegad}{\Omega}
\bmdefine{\bmiPhid}{\Phi}
\bmdefine{\bmiPid}{\Pi}
\bmdefine{\bmiPsid}{\Psi}
\bmdefine{\bmiSigmad}{\Sigma}
\bmdefine{\bmiThetad}{\Theta}
\bmdefine{\bmiUpsilond}{\Upsilon}
\bmdefine{\bmiXid}{\Xi}

\safemath{\bmiA}{\bmiAd}
\safemath{\bmiB}{\bmiBd}
\safemath{\bmiC}{\bmiCd}
\safemath{\bmiD}{\bmiDd}
\safemath{\bmiE}{\bmiEd}
\safemath{\bmiF}{\bmiFd}
\safemath{\bmiG}{\bmiGd}
\safemath{\bmiH}{\bmiHd}
\safemath{\bmiI}{\bmiId}
\safemath{\bmiJ}{\bmiJd}
\safemath{\bmiK}{\bmiKd}
\safemath{\bmiL}{\bmiLd}
\safemath{\bmiM}{\bmiMd}
\safemath{\bmiN}{\bmiNd}
\safemath{\bmiO}{\bmiOd}
\safemath{\bmiP}{\bmiPd}
\safemath{\bmiQ}{\bmiQd}
\safemath{\bmiR}{\bmiRd}
\safemath{\bmiS}{\bmiSd}
\safemath{\bmiT}{\bmiTd}
\safemath{\bmiU}{\bmiUd}
\safemath{\bmiV}{\bmiVd}
\safemath{\bmiW}{\bmiWd}
\safemath{\bmiX}{\bmiXd}
\safemath{\bmiY}{\bmiYd}
\safemath{\bmiZ}{\bmiZd}

\safemath{\bmiDelta}{\bmiDeltad}
\safemath{\bmiGamma}{\bmiGammad}
\safemath{\bmiLambda}{\bmiLambdad}
\safemath{\bmiOmega}{\bmiOmegad}
\safemath{\bmiPhi}{\bmiPhid}
\safemath{\bmiPi}{\bmiPid}
\safemath{\bmiPsi}{\bmiPsid}
\safemath{\bmiSigma}{\bmiSigmad}
\safemath{\bmiTheta}{\bmiThetad}
\safemath{\bmiUpsilon}{\bmiUpsilond}
\safemath{\bmiXi}{\bmiXid}


\safemath{\setA}{\mathcal{A}}
\safemath{\setB}{\mathcal{B}}
\safemath{\setC}{\mathcal{C}}
\safemath{\setD}{\mathcal{D}}
\safemath{\setE}{\mathcal{E}}
\safemath{\setF}{\mathcal{F}}
\safemath{\setG}{\mathcal{G}}
\safemath{\setH}{\mathcal{H}}
\safemath{\setI}{\mathcal{I}}
\safemath{\setJ}{\mathcal{J}}
\safemath{\setK}{\mathcal{K}}
\safemath{\setL}{\mathcal{L}}
\safemath{\setM}{\mathcal{M}}
\safemath{\setN}{\mathcal{N}}
\safemath{\setO}{\mathcal{O}}
\safemath{\setP}{\mathcal{P}}
\safemath{\setQ}{\mathcal{Q}}
\safemath{\setR}{\mathcal{R}}
\safemath{\setS}{\mathcal{S}}
\safemath{\setT}{\mathcal{T}}
\safemath{\setU}{\mathcal{U}}
\safemath{\setV}{\mathcal{V}}
\safemath{\setW}{\mathcal{W}}
\safemath{\setX}{\mathcal{X}}
\safemath{\setY}{\mathcal{Y}}
\safemath{\setZ}{\mathcal{Z}}
\safemath{\emptySet}{\varnothing}

\safemath{\colA}{\mathscr{A}}
\safemath{\colB}{\mathscr{B}}
\safemath{\colC}{\mathscr{C}}
\safemath{\colD}{\mathscr{D}}
\safemath{\colE}{\mathscr{E}}
\safemath{\colF}{\mathscr{F}}
\safemath{\colG}{\mathscr{G}}
\safemath{\colH}{\mathscr{H}}
\safemath{\colI}{\mathscr{I}}
\safemath{\colJ}{\mathscr{J}}
\safemath{\colK}{\mathscr{K}}
\safemath{\colL}{\mathscr{L}}
\safemath{\colM}{\mathscr{M}}
\safemath{\colN}{\mathscr{N}}
\safemath{\colO}{\mathscr{O}}
\safemath{\colP}{\mathscr{P}}
\safemath{\colQ}{\mathscr{Q}}
\safemath{\colR}{\mathscr{R}}
\safemath{\colS}{\mathscr{S}}
\safemath{\colT}{\mathscr{T}}
\safemath{\colU}{\mathscr{U}}
\safemath{\colV}{\mathscr{V}}
\safemath{\colW}{\mathscr{W}}
\safemath{\colX}{\mathscr{X}}
\safemath{\colY}{\mathscr{Y}}
\safemath{\colZ}{\mathscr{Z}}

\safemath{\opA}{\mathbb{A}}
\safemath{\opB}{\mathbb{B}}
\safemath{\opC}{\mathbb{C}}
\safemath{\opD}{\mathbb{D}}
\safemath{\opE}{\mathbb{E}}
\safemath{\opF}{\mathbb{F}}
\safemath{\opG}{\mathbb{G}}
\safemath{\opH}{\mathbb{H}}
\safemath{\opI}{\mathbb{I}}
\safemath{\opJ}{\mathbb{J}}
\safemath{\opK}{\mathbb{K}}
\safemath{\opL}{\mathbb{L}}
\safemath{\opM}{\mathbb{M}}
\safemath{\opN}{\mathbb{N}}
\safemath{\opO}{\mathbb{O}}
\safemath{\opP}{\mathbb{P}}
\safemath{\opQ}{\mathbb{Q}}
\safemath{\opR}{\mathbb{R}}
\safemath{\opS}{\mathbb{S}}
\safemath{\opT}{\mathbb{T}}
\safemath{\opU}{\mathbb{U}}
\safemath{\opV}{\mathbb{V}}
\safemath{\opW}{\mathbb{W}}
\safemath{\opX}{\mathbb{X}}
\safemath{\opY}{\mathbb{Y}}
\safemath{\opZ}{\mathbb{Z}}
\safemath{\opZero}{\mathbb{O}}
\safemath{\identityop}{\opI}


\safemath{\sca}{a}
\safemath{\scb}{b}
\safemath{\scc}{c}
\safemath{\scd}{d}
\safemath{\sce}{e}
\safemath{\scf}{f}
\safemath{\scg}{g}
\safemath{\sch}{h}
\safemath{\sci}{i}
\safemath{\scj}{j}
\safemath{\sck}{k}
\safemath{\scl}{l}
\safemath{\scm}{m}
\safemath{\scn}{n}
\safemath{\sco}{o}
\safemath{\scp}{p}
\safemath{\scq}{q}
\safemath{\scr}{r}
\safemath{\scs}{s}
\safemath{\sct}{t}
\safemath{\scu}{u}
\safemath{\scv}{v}
\safemath{\scw}{w}
\safemath{\scx}{x}
\safemath{\scy}{y}
\safemath{\scz}{z}

\safemath{\scA}{A}
\safemath{\scB}{B}
\safemath{\scC}{C}
\safemath{\scD}{D}
\safemath{\scE}{E}
\safemath{\scF}{F}
\safemath{\scG}{G}
\safemath{\scH}{H}
\safemath{\scI}{I}
\safemath{\scJ}{J}
\safemath{\scK}{K}
\safemath{\scL}{L}
\safemath{\scM}{M}
\safemath{\scN}{N}
\safemath{\scO}{O}
\safemath{\scP}{P}
\safemath{\scQ}{Q}
\safemath{\scR}{R}
\safemath{\scS}{S}
\safemath{\scT}{T}
\safemath{\scU}{U}
\safemath{\scV}{V}
\safemath{\scW}{W}
\safemath{\scX}{X}
\safemath{\scY}{Y}
\safemath{\scZ}{Z}

\safemath{\scalpha}{\alpha}
\safemath{\scbeta}{\beta}
\safemath{\scchi}{\chi}
\safemath{\scdelta}{\delta}
\safemath{\scepsilon}{\epsilon}
\safemath{\scvarepsilon}{\varepsilon}
\safemath{\sceta}{\eta}
\safemath{\scgamma}{\gamma}
\safemath{\sciota}{\iota}
\safemath{\sckappa}{\kappa}
\safemath{\scvarkappa}{\varkappa}
\safemath{\sclambda}{\lambda}
\safemath{\scmu}{\mu}
\safemath{\scnu}{\nu}
\safemath{\scomega}{\omega}
\safemath{\scphi}{\phi}
\safemath{\scvarphi}{\varphi}
\safemath{\scpi}{\pi}
\safemath{\scvarpi}{\varpi}
\safemath{\scpsi}{\psi}
\safemath{\scrho}{\rho}
\safemath{\scvarrho}{\varrho}
\safemath{\scsigma}{\sigma}
\safemath{\scvarsigma}{\varsigma}
\safemath{\sctau}{\tau}
\safemath{\sctheta}{\theta}
\safemath{\scvartheta}{\vartheta}
\safemath{\scupsilon}{\upsilon}
\safemath{\scxi}{\xi}
\safemath{\sczeta}{\zeta}


\safemath{\veca}{\mathbf{a}}
\safemath{\vecb}{\mathbf{b}}
\safemath{\vecc}{\mathbf{c}}
\safemath{\vecd}{\mathbf{d}}
\safemath{\vece}{\mathbf{e}}
\safemath{\vecf}{\mathbf{f}}
\safemath{\vecg}{\mathbf{g}}
\safemath{\vech}{\mathbf{h}}
\safemath{\veci}{\mathbf{i}}
\safemath{\vecj}{\mathbf{j}}
\safemath{\veck}{\mathbf{k}}
\safemath{\vecl}{\mathbf{l}}
\safemath{\vecm}{\mathbf{m}}
\safemath{\vecn}{\mathbf{n}}
\safemath{\veco}{\mathbf{o}}
\safemath{\vecp}{\mathbf{p}}
\safemath{\vecq}{\mathbf{q}}
\safemath{\vecr}{\mathbf{r}}
\safemath{\vecs}{\mathbf{s}}
\safemath{\vect}{\mathbf{t}}
\safemath{\vecu}{\mathbf{u}}
\safemath{\vectU}{\mathbf{U}}
\safemath{\vecv}{\mathbf{v}}
\safemath{\vecw}{\mathbf{w}}
\safemath{\vecx}{\mathbf{x}}
\safemath{\vectX}{\mathbf{X}}
\safemath{\vecy}{\mathbf{y}}
\safemath{\vecz}{\mathbf{z}}
\safemath{\veczero}{\mathbf{0}}
\safemath{\vecone}{\mathbf{1}}

\safemath{\vecalpha}{\upalpha}
\safemath{\vecbeta}{\upbeta}
\safemath{\vecchi}{\upchi}
\safemath{\vecdelta}{\updelta}
\safemath{\vecepsilon}{\upepsilon}
\safemath{\vecvarepsilon}{\upvarepsilon}
\safemath{\veceta}{\upeta}
\safemath{\vecgamma}{\upgamma}
\safemath{\veciota}{\upiota}
\safemath{\veckappa}{\upkappa}
\safemath{\veclambda}{\uplambda}
\safemath{\vecmu}{\text{\textmu}}
\safemath{\vecnu}{\upnu}
\safemath{\vecomega}{\upomega}
\safemath{\vecphi}{\upphi}
\safemath{\vecvarphi}{\upvarphi}
\safemath{\vecpi}{\uppi}
\safemath{\vecvarpi}{\upvarpi}
\safemath{\vecpsi}{\uppsi}
\safemath{\vecrho}{\uprho}
\safemath{\vecvarrho}{\upvarrho}
\safemath{\vecsigma}{\upsigma}
\safemath{\vecvarsigma}{\upvarsigma}
\safemath{\vectau}{\uptau}
\safemath{\vectheta}{\uptheta}
\safemath{\vecvartheta}{\upvartheta}
\safemath{\vecupsilon}{\upupsilon}
\safemath{\vecxi}{\upxi}
\safemath{\veczeta}{\upzeta}


\safemath{\matA}{\mathrm{A}}
\safemath{\matB}{\mathrm{B}}
\safemath{\matC}{\mathrm{C}}
\safemath{\matD}{\mathrm{D}}
\safemath{\matE}{\mathrm{E}}
\safemath{\matF}{\mathrm{F}}
\safemath{\matG}{\mathrm{G}}
\safemath{\matH}{\mathrm{H}}
\safemath{\matI}{\mathrm{I}}
\safemath{\matJ}{\mathrm{J}}
\safemath{\matK}{\mathrm{K}}
\safemath{\matL}{\mathrm{L}}
\safemath{\matM}{\mathrm{M}}
\safemath{\matN}{\mathrm{N}}
\safemath{\matO}{\mathrm{O}}
\safemath{\matP}{\mathrm{P}}
\safemath{\matQ}{\mathrm{Q}}
\safemath{\matR}{\mathrm{R}}
\safemath{\matS}{\mathrm{S}}
\safemath{\matT}{\mathrm{T}}
\safemath{\matU}{\mathrm{U}}
\safemath{\matV}{\mathrm{V}}
\safemath{\matW}{\mathrm{W}}
\safemath{\matX}{\mathrm{X}}
\safemath{\matY}{\mathrm{Y}}
\safemath{\matZ}{\mathrm{Z}}
\safemath{\matzero}{\mathrm{0}}

\safemath{\matDelta}{\Updelta}
\safemath{\matGamma}{\Upgammma}
\safemath{\matLambda}{\Uplambda}
\safemath{\matOmega}{\Upomega}
\safemath{\matPhi}{\Upphi}
\safemath{\matPi}{\Uppi}
\safemath{\matPsi}{\Uppsi}
\safemath{\matSigma}{\Upsigma}
\safemath{\matTheta}{\Uptheta}
\safemath{\matUpsilon}{\Upupsilon}
\safemath{\matXi}{\Upxi}

\safemath{\matidentity}{\matI}
\safemath{\matone}{\matO}


\safemath{\rnda}{\bmia}
\safemath{\rndb}{\bmib}
\safemath{\rndc}{\bmic}
\safemath{\rndd}{\bmid}
\safemath{\rnde}{\bmie}
\safemath{\rndf}{\bmif}
\safemath{\rndg}{\bmig}
\safemath{\rndh}{\bmih}
\safemath{\rndi}{\bmii}
\safemath{\rndj}{\bmij}
\safemath{\rndk}{\bmik}
\safemath{\rndl}{\bmil}
\safemath{\rndm}{\bmim}
\safemath{\rndn}{\bmin}
\safemath{\rndo}{\bmio}
\safemath{\rndp}{\bmip}
\safemath{\rndq}{\bmiq}
\safemath{\rndr}{\bmir}
\safemath{\rnds}{\bmis}
\safemath{\rndt}{\bmit}
\safemath{\rndu}{\bmiu}
\safemath{\rndv}{\bmiv}
\safemath{\rndw}{\bmiw}
\safemath{\rndx}{\bmix}
\safemath{\rndy}{\bmiy}
\safemath{\rndz}{\bmiz}

\safemath{\rndA}{\scA}
\safemath{\rndB}{\scB}
\safemath{\rndC}{\scC}
\safemath{\rndD}{\scD}
\safemath{\rndE}{\scE}
\safemath{\rndF}{\scF}
\safemath{\rndG}{\scG}
\safemath{\rndH}{\scH}
\safemath{\rndI}{\scI}
\safemath{\rndJ}{\scJ}
\safemath{\rndK}{\scK}
\safemath{\rndL}{\scL}
\safemath{\rndM}{\scM}
\safemath{\rndN}{\scN}
\safemath{\rndO}{\scO}
\safemath{\rndP}{\scP}
\safemath{\rndQ}{\scQ}
\safemath{\rndR}{\scR}
\safemath{\rndS}{\scS}
\safemath{\rndT}{\scT}
\safemath{\rndU}{\scU}
\safemath{\rndV}{\scV}
\safemath{\rndW}{\scW}
\safemath{\rndX}{\scX}
\safemath{\rndY}{\scY}
\safemath{\rndZ}{\scZ}

\safemath{\rndalpha}{\bmialpha}
\safemath{\rndbeta}{\bmibeta}
\safemath{\rndchi}{\bmichi}
\safemath{\rnddelta}{\bmidelta}
\safemath{\rndepsilon}{\bmiepsilon}
\safemath{\rndvarepsilon}{\bmivarepsilon}
\safemath{\rndeta}{\bmieta}
\safemath{\rndgamma}{\bmigamma}
\safemath{\rndiota}{\bmiiota}
\safemath{\rndkappa}{\bmikappa}
\safemath{\rndlambda}{\bmilambda}
\safemath{\rndmu}{\bmimu}
\safemath{\rndnu}{\bminu}
\safemath{\rndomega}{\bmiomega}
\safemath{\rndphi}{\bmiphi}
\safemath{\rndvarphi}{\bmivarphi}
\safemath{\rndpi}{\bmipi}
\safemath{\rndvarpi}{\bmivarpi}
\safemath{\rndpsi}{\bmipsi}
\safemath{\rndrho}{\bmirho}
\safemath{\rndvarrho}{\bmivarrho}
\safemath{\rndsigma}{\bmisigma}
\safemath{\rndvarsigma}{\bmivarsigma}
\safemath{\rndtau}{\bmitau}
\safemath{\rndtheta}{\bmitheta}
\safemath{\rndvartheta}{\bmivartheta}
\safemath{\rndupsilon}{\bmiupsilon}
\safemath{\rndxi}{\bmixi}
\safemath{\rndzeta}{\bmizeta}

\safemath{\rveca}{\bmua}
\safemath{\rvecb}{\bmub}
\safemath{\rvecc}{\bmuc}
\safemath{\rvecd}{\bmud}
\safemath{\rvece}{\bmue}
\safemath{\rvecf}{\bmuf}
\safemath{\rvecg}{\bmug}
\safemath{\rvech}{\bmuh}
\safemath{\rveci}{\bmui}
\safemath{\rvecj}{\bmuj}
\safemath{\rveck}{\bmuk}
\safemath{\rvecl}{\bmul}
\safemath{\rvecm}{\bmum}
\safemath{\rvecn}{\bmun}
\safemath{\rveco}{\bmuo}
\safemath{\rvecp}{\bmup}
\safemath{\rvecq}{\bmuq}
\safemath{\rvecr}{\bmur}
\safemath{\rvecs}{\bmus}
\safemath{\rvect}{\bmut}
\safemath{\rvecu}{\bmuu}
\safemath{\rvecv}{\bmuv}
\safemath{\rvecw}{\bmuw}
\safemath{\rvecx}{\bmux}
\safemath{\rvecy}{\bmuy}
\safemath{\rvecz}{\bmuz}

\safemath{\rvecalpha}{\bmualpha}
\safemath{\rvecbeta}{\bmubeta}
\safemath{\rvecchi}{\bmuchi}
\safemath{\rvecdelta}{\bmudelta}
\safemath{\rvecepsilon}{\bmuepsilon}
\safemath{\rvecvarepsilon}{\bmuvarepsilon}
\safemath{\rveceta}{\bmueta}
\safemath{\rvecgamma}{\bmugamma}
\safemath{\rveciota}{\bmuiota}
\safemath{\rveckappa}{\bmukappa}
\safemath{\rveclambda}{\bmulambda}
\safemath{\rvecmu}{\bmumu}
\safemath{\rvecnu}{\bmunu}
\safemath{\rvecomega}{\bmuomega}
\safemath{\rvecphi}{\bmuphi}
\safemath{\rvecvarphi}{\bmuvarphi}
\safemath{\rvecpi}{\bmupi}
\safemath{\rvecvarpi}{\bmuvarpi}
\safemath{\rvecpsi}{\bmupsi}
\safemath{\rvecrho}{\bmurho}
\safemath{\rvecvarrho}{\bmuvarrho}
\safemath{\rvecsigma}{\bmusigma}
\safemath{\rvecvarsigma}{\bmuvarsigma}
\safemath{\rvectau}{\bmutau}
\safemath{\rvectheta}{\bmutheta}
\safemath{\rvecvartheta}{\bmuvartheta}
\safemath{\rvecupsilon}{\bmuupsilon}
\safemath{\rvecxi}{\bmuxi}
\safemath{\rveczeta}{\bmuzeta}

\safemath{\rmatA}{\bmuA}
\safemath{\rmatB}{\bmuB}
\safemath{\rmatC}{\bmuC}
\safemath{\rmatD}{\bmuD}
\safemath{\rmatE}{\bmuE}
\safemath{\rmatF}{\bmuF}
\safemath{\rmatG}{\bmuG}
\safemath{\rmatH}{\bmuH}
\safemath{\rmatI}{\bmuI}
\safemath{\rmatJ}{\bmuJ}
\safemath{\rmatK}{\bmuK}
\safemath{\rmatL}{\bmuL}
\safemath{\rmatM}{\bmuM}
\safemath{\rmatN}{\bmuN}
\safemath{\rmatO}{\bmuO}
\safemath{\rmatP}{\bmuP}
\safemath{\rmatQ}{\bmuQ}
\safemath{\rmatR}{\bmuR}
\safemath{\rmatS}{\bmuS}
\safemath{\rmatT}{\bmuT}
\safemath{\rmatU}{\bmuU}
\safemath{\rmatV}{\bmuV}
\safemath{\rmatW}{\bmuW}
\safemath{\rmatX}{\bmuX}
\safemath{\rmatY}{\bmuY}
\safemath{\rmatZ}{\bmuZ}

\safemath{\rmatDelta}{\bmuDelta}
\safemath{\rmatGamma}{\bmuGamma}
\safemath{\rmatLambda}{\bmuLambda}
\safemath{\rmatOmega}{\bmuOmega}
\safemath{\rmatPhi}{\bmuPhi}
\safemath{\rmatPi}{\bmuPi}
\safemath{\rmatPsi}{\bmuPsi}
\safemath{\rmatSigma}{\bmuSigma}
\safemath{\rmatTheta}{\bmuTheta}
\safemath{\rmatUpsilon}{\bmuUpsilon}
\safemath{\rmatXi}{\bmuXi}

\safemath{\rndvecU}{\bmiU} 
\safemath{\rndvecW}{\bmiW} 
\safemath{\rndvecM}{\bmiM} 
\safemath{\rndvecV}{\bmiV} 
\safemath{\rndvecX}{\bmiX} 
\safemath{\rndvecY}{\bmiY} 
\safemath{\rndvecC}{\bmiC} 


\newenvironment{textbmatrix}{	\setlength{\arraycolsep}{2.5pt}%
								\big[\begin{matrix}}{\end{matrix}\big]%
								\raisebox{0.08ex}{\vphantom{M}}}


\def\be{\begin{equation}}
\def\ee{\end{equation}}
\def\een{\nonumber \end{equation}}
\def\mat{\begin{bmatrix}}
\def\emat{\end{bmatrix}}
\def\btm{\begin{textbmatrix}}
\def\etm{\end{textbmatrix}}

\def\ba#1\ea{\begin{align}#1\end{align}}
\def\bas#1\eas{\begin{align*}#1\end{align*}}
\def\bs#1\es{\begin{split}#1\end{split}} 
\def\bg#1\eg{\begin{gather}#1\end{gather}}
\def\bml#1\eml{\begin{multline}#1\end{multline}}
\def\bi#1\ei{\begin{itemize}#1\end{itemize}} 
\def\bipi#1\eipi{\begin{inparaitem}#1\end{inparaitem}}






\DeclareMathOperator{\adj}{adj}				
\DeclareMathOperator*{\argmin}{arg\;min}		
\DeclareMathOperator{\Exop}{\opE}			


\safemath{\fun}{\scf}						
\safemath{\altfun}{\scg}
\safemath{\aaltfun}{\sch}
\safemath{\bel}{\sce}					
\safemath{\altbel}{\sce}					
\safemath{\frel}{g}					
\safemath{\altfrel}{g}					
\safemath{\dfrel}{\tilde{g}}					
\safemath{\altdfrel}{\tilde{g}}					

\newcommand{\nullspace}{\setN}	 			
\newcommand{\Ex}[2]{\ensuremath{\Exop_{#1} [#2]}} 	
\newcommand{\smEx}[2]{\ensuremath{\Exop_{#1} [#2 ]}} 	
\newcommand{\bigEx}[2]{\ensuremath{\Exop_{#1} \bigl[#2\bigr]}} 	
\newcommand{\BigEx}[2]{\ensuremath{\Exop_{#1} \Bigl[#2\Bigr]}} 	
\newcommand{\BiggEx}[2]{\ensuremath{\Exop_{#1} \Biggl[#2\Biggr]}} 	
\newcommand{\card}[1]{| #1 |}			


\newcommand{\ind}[1]{\mathbbm{1}_{#1}}				
\newcommand{\conj}[1]{\ensuremath{\overline{#1}}} 	
\newcommand{\inv}[1]{\ensuremath{#1^{-1}}} 	

\safemath{\dirac}{\delta}					
\safemath{\diracp}{\dirac(\time)}			
\safemath{\krond}{\dirac}					

\newcommand{\set}[1]{\{#1\}}		
\safemath{\upto}{\uparrow}
\safemath{\downto}{\downarrow}
\safemath{\iu}{i}							
\safemath{\maj}{\succ}
\newcommand{\dftmat}[1]{\matF_{#1}}			
\safemath{\mdft}{\dftmat{}}					
\safemath{\runity}{\beta}					
\safemath{\eval}{\lambda}					
\safemath{\veval}{\veclambda}				
\safemath{\littleo}{\sco}					

\let\im\undefined
\safemath{\re}{\mathfrak{Re}}				
\safemath{\im}{\mathfrak{Im}}				

\safemath{\euclidspace}{\complexset}			
\safemath{\confunspace}{\setC}				
\newcommand{\banachseqspace}[1]{l^{#1}}		
\safemath{\hilseqspace}{\banachseqspace{2}}	
\newcommand{\banachfunspace}[1]{\setL^{#1}}	
\safemath{\hilfunspace}{\banachfunspace{2}}	
\safemath{\schwarzspace}{\setS}				

\newcommand{\hadj}[1]{#1^{\star}}			

\safemath{\SNR}{\text{\sc snr}} 				
\safemath{\No}{N_0}							
\safemath{\Es}{E_s}							
\safemath{\Eb}{E_b}							
\safemath{\EbNo}{\frac{\Eb}{\No}}
\safemath{\EsNo}{\frac{\Es}{\No}}

\let\time\undefined
\safemath{\time}{\sct}						
\safemath{\dtime}{\sck}						
\safemath{\delay}{\sctau}					
\safemath{\ddelay}{\scl}						
\safemath{\doppler}{\scnu}					
\safemath{\ddoppler}{\scm}					
\safemath{\freq}{\scf}						
\safemath{\dfreq}{\scn}						
\safemath{\Dtime}{\Delta\time}
\safemath{\Dfreq}{\Delta\freq}
\safemath{\Ddtime}{\Delta\dtime}
\safemath{\Ddfreq}{\Delta\dfreq}
\safemath{\bandwidth}{\scB}
\safemath{\maxdoppler}{\doppler_{0}}			
\safemath{\maxdelay}{\delay_{0}}				
\safemath{\spread}{\Delta_{\CHop}}			
\DeclareMathOperator{\CHop}{\ensuremath{\opH}} 
\safemath{\kernel}{\rndk_{\CHop}}			
\safemath{\kernelp}{\kernel(\time,\time')}	
\safemath{\tvir}{\rndh_{\CHop}}				
\safemath{\tvirp}{\tvir(\time,\delay)}		
\safemath{\tvirc}{\conj{\rndh}_{\CHop}}		
\safemath{\tvtf}{\rndl_{\CHop}}				
\safemath{\tvtfp}{\tvtf(\time,\freq)}			
\safemath{\tvtfc}{\conj{\rndl}_{\CHop}}		
\safemath{\spf}{\rnds_{\CHop}}				
\safemath{\spfp}{\spf(\doppler,\delay)}		
\safemath{\spfc}{\conj{\rnds}_{\CHop}}		
\safemath{\bff}{\rndb_{\CHop}}				
\safemath{\bffp}{\bff(\doppler,\freq)}		

\safemath{\irc}{\scr_{\rndh}}				
\safemath{\tfc}{\scr_{\rndl}}				
\safemath{\spc}{\scr_{\rnds}}				
\safemath{\bfc}{\scr_{\rndb}}				

\safemath{\scaf}{\scc_{\rnds}}				
\safemath{\scafp}{\scaf(\doppler,\delay)}		
\safemath{\ccf}{\scc_{\rndl}}				
\safemath{\ccfp}{\ccf(\Dtime,\Dfreq)}			
\safemath{\cic}{\scc_{\rndh}}				
\safemath{\cicp}{\cic(\Dtime,\delay)}			

\safemath{\mi}{\scI}							
\safemath{\capacity}{\scC}					

\DeclareMathOperator{\Prob}{\opP}		
\safemath{\normal}{\mathcal{N}}			
\safemath{\jpg}{\mathcal{CN}}			
\safemath{\mchain}{\leftrightarrow}		

\safemath{\dB}{\,\mathrm{dB}}
\safemath{\dBm}{\,\mathrm{dBm}}
\safemath{\Hz}{\,\mathrm{Hz}}
\safemath{\kHz}{\,\mathrm{kHz}}
\safemath{\MHz}{\,\mathrm{MHz}}
\safemath{\GHz}{\,\mathrm{GHz}}
\safemath{\s}{\,\mathrm{s}}
\safemath{\ms}{\,\mathrm{ms}}
\safemath{\mus}{\,\mathrm{\text{\textmu}s}}
\safemath{\ns}{\,\mathrm{ns}}
\safemath{\ps}{\,\mathrm{ps}}
\safemath{\meter}{\,\mathrm{m}}
\safemath{\mm}{\,\mathrm{mm}}
\safemath{\cm}{\,\mathrm{cm}}
\safemath{\m}{\,\mathrm{m}}
\safemath{\W}{\,\mathrm{W}}
\safemath{\mW}{\, \mathrm{mW}}
\safemath{\J}{\,\mathrm{J}}
\safemath{\K}{\,\mathrm{K}}
\safemath{\bit}{\,\mathrm{bit}}
\safemath{\nat}{\,\mathrm{nat}}


\safemath{\define}{\triangleq}					

\safemath{\equivalent}{\sim}
\safemath{\distas}{\sim}					
\newcommand{\OR}{\,\mathbf{or}\,}		
\safemath{\sdiff}{\Delta}				

\safemath{\reals}{\mathbb R}
\safemath{\positivereals}{\reals_{+}}
\safemath{\integers}{\mathbb Z}
\safemath{\posint}{\integers_{+}}
\safemath{\naturals}{\mathbb N}
\safemath{\posnaturals}{\naturals_{+}}
\safemath{\complexset}{\mathbb C}
\safemath{\rationals}{\mathbb Q}

\newcommand*{\fancyrefparlabelprefix}{par}		
\newcommand*{\fancyrefchalabelprefix}{cha}		
\newcommand*{\fancyrefapplabelprefix}{app}		
\newcommand*{\fancyrefthmlabelprefix}{thm}		
\newcommand*{\fancyreflemlabelprefix}{lem}		
\newcommand*{\fancyrefcorlabelprefix}{cor}		
\newcommand*{\fancyrefdeflabelprefix}{def}		

\frefformat{vario}{\fancyrefparlabelprefix}{Part~#1}
\frefformat{vario}{\fancyrefchalabelprefix}{Chapter~#1}
\frefformat{vario}{\fancyrefseclabelprefix}{Section~#1}
\frefformat{vario}{\fancyrefthmlabelprefix}{Theorem~#1}
\frefformat{vario}{\fancyreflemlabelprefix}{Lemma~#1}
\frefformat{vario}{\fancyrefcorlabelprefix}{Corollary~#1}
\frefformat{vario}{\fancyrefdeflabelprefix}{Definition~#1}
\frefformat{vario}{\fancyreffiglabelprefix}{Figure~#1}
\frefformat{vario}{\fancyrefapplabelprefix}{Appendix~#1}
\frefformat{vario}{\fancyrefeqlabelprefix}{(#1)}

\safemath{\iSet}{\setI}
\safemath{\rel}{\bowtie}					
\safemath{\eqrel}{\sim}					
\safemath{\rlord}{\leq}					
\safemath{\slord}{<}						
\safemath{\rpord}{\preceq}				
\safemath{\rrpord}{\succeq}				
\safemath{\spord}{\prec}					
\safemath{\sig}{\sigma}					
\safemath{\metric}{d}					

\safemath{\setfun}{\Phi}					
\safemath{\measure}{\mu}					
\newcommand{\outerm}[1]{#1^{\star}}		
\newcommand{\innerm}[1]{#1_{\star}}		
\safemath{\omeasure}{\outerm{\measure}}		
\safemath{\imeasure}{\innerm{\measure}}		
\safemath{\aecol}{\colS^{\star}_{\measure}} 
\safemath{\emeasure}{\bar{\measure}_{0}}	
\safemath{\rmeasure}{\tilde{\measure}}	
\safemath{\bmeasure}{\measure_{0}}		
\safemath{\glength}{\measure_{\altfun}}	
\safemath{\lebmea}{\lambda}				
\safemath{\blebmea}{\lebmea_{0}}			
\safemath{\sfun}{s}						
\safemath{\absintspace}{\colL^{1}}		
\safemath{\sqintspace}{\colL^{2}}		
\safemath{\abssumspace}{l^{1}}		
\safemath{\sqsumspace}{l^{2}}		
\safemath{\boundspace}{l^{\infty}}	

\safemath{\sfield}{\setF}				
\safemath{\vectors}{\setV}				
\safemath{\vecspace}{(\vectors,\sfield)}	
\safemath{\linspace}{\setV}				
\safemath{\clinspace}{(\linspace,\sfield)} 
\safemath{\nspace}{\setU}				
\safemath{\metspace}{\setM}				
\safemath{\bspace}{\setB}				
\safemath{\ipspace}{\setG}				
\safemath{\hilspace}{\setH}				
\safemath{\blospace}{\setG}				
\safemath{\lop}{\opT}					
\safemath{\altlop}{\opS}					
\safemath{\nullsp}{\nullspace(\lop)}		
\safemath{\lfun}{l}						
\safemath{\altlfun}{g}					
\newcommand{\dual}[1]{#1^{'}}			
\safemath{\dsum}{\oplus}					
\safemath{\funspace}{\colL}				
\renewcommand{\adj}[1]{#1^{\times}}		
\safemath{\adjlop}{\adj{\lop}}			
\safemath{\hadjlop}{\hadj{\lop}}			
\safemath{\tow}{\xrightarrow{w}}			
\safemath{\tows}{\xrightarrow{w^{*}}}		
\safemath{\cparam}{\lambda}
\safemath{\lopl}{\lop_{\cparam}}		
\safemath{\iop}{\opI}					
\safemath{\resolop}{\opR}				
\safemath{\resolvent}{\resolop_{\cparam}(\lop)}	
\safemath{\altresolvent}{\resolop_{\cparam}(\altlop)} 
\safemath{\reset}{\setQ}
\safemath{\spectrum}{\setS}
\safemath{\resolset}{\reset(\lop)}		
\safemath{\lopspec}{\spectrum(\lop)}		
\safemath{\altlopspec}{\spectrum(\altlop)} 
\safemath{\pspec}{\spectrum_{p}(\lop)}	
\safemath{\cspec}{\spectrum_{c}(\lop)}	
\safemath{\rspec}{\spectrum_{r}(\lop)}	
\safemath{\ev}{\cparam}					
\newcommand{\specrad}[1]{r_{#1}}			
\safemath{\lopsrad}{\specrad{\lop}}		
\safemath{\pop}{\opP}					

\safemath{\specfam}{\colE}				
\safemath{\specop}{\opE_{\cparam}}		
\safemath{\altspecop}{\opE_{\mu}}		
\safemath{\vmulti}{\vecone}				
\safemath{\unitaryop}{\opU}				
\safemath{\sval}{\sigma}					
\safemath{\corrcoef}{\rho}				
\safemath{\sangle}{\theta}				

\let\time\undefined
\safemath{\iset}{\setI}				
\safemath{\shift}{\nu}
\safemath{\scale}{\alpha}
\safemath{\time}{t}
\safemath{\specfreq}{\theta}	
\newcommand{\transopgen}[1]{\opT_{#1}} 
\safemath{\transop}{\transopgen{\delay}}
\newcommand{\modopgen}[1]{\opM_{#1}}	
\safemath{\modop}{\modopgen{\shift}}
\newcommand{\dilopgen}[1]{\opD_{#1}}	
\safemath{\dilop}{\dilopgen{\scale}}
\safemath{\fram}{\setF}				
\safemath{\dfram}{\dual{\fram}}		
\safemath{\ufb}{\scB}					
\safemath{\lfb}{\scA}					
\safemath{\sop}{\hadj{\aop}}				
\safemath{\aop}{\opT}			
\safemath{\fop}{\opS}				
\safemath{\daop}{\tilde\opT}			
\safemath{\dsop}{\hadj{\tilde\opT}}				

\safemath{\ifop}{\inv{\fop}}			
\safemath{\rifop}{\fop^{-1/2}}			
\newcommand{\ft}[1]{\widehat{#1}}	
\safemath{\transeq}{\setT}			
\safemath{\nfun}{\Phi}				
\safemath{\funvec}{\vecf}			
\safemath{\altfunvec}{\vecg}

\safemath{\samplespace}{\Omega}
\safemath{\probspace}{(\samplespace,\sfield,\Prob)}	
\safemath{\ccoef}{\rho}			

\safemath{\infstate}{\vecpi}				
\safemath{\typset}{\setA_{\epsilon}^{(N)}}	
\safemath{\expequal}{\doteq}				
\safemath{\code}{C}						
\safemath{\dstringset}{\setD^{\star}}		
\safemath{\cwlength}{l}					
\safemath{\elength}{L}					
\safemath{\extension}{C^{\star}}			
\safemath{\approaches}{\rightarrow}		
\safemath{\evnt}{\setA}					
\safemath{\altevnt}{\setB}					
\safemath{\rv}{\rndx}					
\safemath{\altrv}{\rndy}					
\safemath{\complexrv}{\rndu}					
\safemath{\altcrv}{\rndv}				
\safemath{\rvec}{\rvecx}					
\safemath{\altrvec}{\rvecy}				
\safemath{\crvec}{\rvecu}				
\safemath{\altcrvec}{\rvecv}				
\safemath{\variance}{\sigma^{2}}			
\safemath{\map}{T}						
\safemath{\jacobian}{J}					
\safemath{\wvec}{\rvecw}					
\safemath{\wrv}{\rndw}					
\safemath{\orthmat}{\matQ}				
\safemath{\evmat}{\matLambda}			
\safemath{\identity}{\matidentity}		
\safemath{\innovec}{\vecv}				
\safemath{\convas}{\xrightarrow{\text{a.s.}}}	
\safemath{\convr}{\xrightarrow{\text{r}}}	
\safemath{\convp}{\xrightarrow{\text{P}}}	
\safemath{\convd}{\xrightarrow{\text{D}}}	
\safemath{\ltis}{\opL}				
\safemath{\ir}{h}					
\safemath{\tf}{\MakeUppercase{\ir}}	


%
{\begin{list}{}%
         {\setlength{\leftmargin}{#1}}%
         \item[]%
}
{\end{list}}

\safemath{\signal}{\scx} 
\safemath{\signalFt}{\widehat{\signal}}	
\safemath{\sigTime}{\signal \! \left( \time \right)}	
\safemath{\orFt}{\widehat{\signal} \! \left(\freq\right)}	
\safemath{\sampFt}{\widehat{\signal}_{\scd} \! \left(\freq\right)}	
\safemath{\fZero}{\freq_{0}}	
\safemath{\fSamp}{\freq_{\scs}}	
\safemath{\specOc}{\setI} 
\safemath{\sampSet}{\setP} 
\safemath{\sampVal}{\signal \! \left(\time_\scn\right)}	
\safemath{\beuDen}{\setD^{-} \! \left(\sampSet\right)}	
\safemath{\samPer}{\scT_\scs}	
\safemath{\hSp}{\setH}	
\safemath{\multSig}{\setB \! \left( \specOc \right)}	
\safemath{\Ltwo}{\sqintspace \! \left(\reals\right)}	
\safemath{\cellNo}{\scL} 
\safemath{\noIn}{s} 
\safemath{\cosetNo}{\scK} 
\safemath{\sampSig}{\scX \! \left(\freq\right)} 
\newcommand{\vandEnt}[1]{\scx_{#1}} 
\safemath{\vandM}{\matV \! \left(\vandEnt{0}, \vandEnt{1}, \ldots, \vandEnt{\cosetNo - 1} \right)} 
\safemath{\sampMat}{\matA} 
\safemath{\suppXhat}{\gamma} 
\safemath{\orFtSupp}{\widehat{\signal}_{\suppXhat} \! \left(\freq\right)} 
\safemath{\maxCard}{\scC} 

\safemath{\dict}{\matD} 
\safemath{\measVec}{\vecy} 
\safemath{\sigVec}{\vecx} 
\safemath{\meas}{\scm} 
\safemath{\sigDim}{\scn} 
\safemath{\spars}{\scs} 
\safemath{\FM}{\matF}	
\safemath{\IM}{\matI}	
\safemath{\fONB}{\matA}	
\safemath{\sONB}{\matB}	
\safemath{\dictCol}{d} 
\safemath{\recSigVec}{\hat{\sigVec}} 
\safemath{\PZ}{\left( \text{P0} \right)} 
\safemath{\BP}{\left( \text{BP} \right)} 
\safemath{\PO}{\left( \text{P1} \right)} 
\safemath{\coher}{\mu} 
\safemath{\supp}{\setS}	
\safemath{\corMeas}{\vecz} 
\safemath{\erVec}{\vece} 
\safemath{\concSig}{\check{\sigVec}} 
\safemath{\fVec}{\vecp} 
\safemath{\sVec}{\vecq} 
\safemath{\conVec}{\vecv} 
\safemath{\sgnl}{\vecs} 
\safemath{\fONBCol}{\veca} 
\safemath{\sONBCol}{\vecb} 
\safemath{\fSupp}{\setP} 
\safemath{\sSupp}{\setQ} 
\safemath{\fDim}{\sigDim_{\fONBCol}} 
\safemath{\sDim}{\sigDim_{\sONBCol}} 
\safemath{\fCoher}{\coher_{\fONBCol}} 
\safemath{\sCoher}{\coher_{\sONBCol}} 
\safemath{\sigSupp}{\setX} 
\safemath{\erSupp}{\setE} 
\safemath{\PZErSup}{\left( \text{P0, } \erSupp \right)} 
\safemath{\falseSigVec}{\tilde{\sigVec}} 
\safemath{\falseErVec}{\tilde{\erVec}} 
\safemath{\sigSpars}{\sigDim_{\sigVec}} 
\safemath{\erSpars}{\sigDim_{\erVec}} 
\safemath{\BPErSup}{\left( \text{BP, } \erSupp \right)} 
\safemath{\gram}{\matG} 

\safemath{\WHT}{\scT}	
\safemath{\WHF}{\scF}	
\safemath{\prot}{\scg}	
\safemath{\vprot}{\vecg}	
\safemath{\proti}{\prot_{\scm, \scn}}	
\safemath{\prott}{\prot \! \left( \time \right) }	
\safemath{\protit}{\proti \! \left(\time\right)}	
\safemath{\vproti}{\vprot_{\scm, \scn}}	
\safemath{\dprot}{\tilde{\scg}}	
\safemath{\dvprot}{\tilde{\vecg}}	
\safemath{\dproti}{\dprot_{\scm, \scn}}	
\safemath{\dprott}{\dprot \! \left( \time \right) }	
\safemath{\dprotit}{\dproti \! \left(\time\right)}	
\safemath{\dvproti}{\dvprot_{\scm, \scn}}	
\safemath{\WO}{\scW}		
\safemath{\WOpam}{\WO^{\left( \WHT, \WHF \right)}_{\scm,\scn}}	
\newcommand{\WOind}[2]{\WO_{#1, #2}} 
\safemath{\WOsind}{\WOind \scm \scn }
\safemath{\zak}{\opZ} 
\safemath{\zakpar}{\zak^{\WHT,\WHF}} 
\safemath{\zaksig}{\zak_{\signal} \! \left( \time, \freq \right)} 
\newcommand{\zakprot}[2]{\zak_{\prot} \! \left( #1, #2 \right)} 
\safemath{\zakprots}{\zakprot{\time}{\freq}}
\safemath{\zakprotis}{\zak_{\prot_{\scm,\scn}} \! \left( \time, \freq \right)} 
\safemath{\tfr}{\scR} 
\safemath{\funmin}{\scm \! \left( \prott; \WHT \right)}	
\safemath{\funmax}{\scM \! \left( \prott; \WHT \right)}	


\safemath{\sclf}{\phi}	
\safemath{\vscf}{\vecphi}	
\newcommand{\scfa}[1]{\sclf \! \left( #1 \right)}	
\safemath{\scft}{\scfa \time}	
\newcommand{\vscfi}[2]{\vscf_{#1,#2}}	
\safemath{\vscfs}{\vscfi \scj \scn}
\safemath{\fscf}{\ft \sclf}	
\safemath{\vfscf}{\ft \vscf}	
\newcommand{\fscfa}[1]{\fscf \! \left( #1 \right)}	
\safemath{\fscff}{\fscfa \freq}	

\safemath{\wav}{\psi}	
\safemath{\vwav}{\vecpsi}	
\newcommand{\wava}[1]{\wav \! \left( #1 \right)}	
\safemath{\wavt}{\wava \time}	
\newcommand{\vwavi}[2]{\vwav_{#1,#2}}	
\safemath{\vwavs}{\vwavi \scj \scn} 
\safemath{\fwav}{\ft \wav}	
\safemath{\vfwav}{\ft \vwav}	
\newcommand{\fwava}[1]{\fwav \! \left( #1 \right)}	
\safemath{\fwavf}{\fwava \freq}	


\newcommand{\condmuti}[3]{\scI ( #1 , #2 | #3 )} 
\newcommand{\bigcondmuti}[3]{\scI \bigl( #1 , #2 \bigl| #3 \bigr)} 
\newcommand{\Biggcondmuti}[3]{\scI \Biggl( #1 , #2 \Biggl| #3 \Biggr)} 
\newcommand{\condmutivars}[3]{\scI ( #1 ; #2 | #3 )} 
\newcommand{\bigcondmutivars}[3]{\scI \bigl( #1 ; #2 \bigl| #3 \bigr)} 
\newcommand{\ent}[1]{\scH ( #1 )} 
\newcommand{\bigcondent}[2]{\scH \bigl( #1 \bigl| #2 \bigr)} 
\newcommand{\renent}[2]{\scH_{#1} ( #2 )} 

\safemath{\oprobf}{\hat{p}}	
\safemath{\probf}{p} 

\safemath{\binind}{\setI}			
\safemath{\pool}{\bm{ \mathcal P}}			
\safemath{\poolre}{\setP}			
\safemath{\eptyp}{\setT^{(n)}_\epsilon}	
\safemath{\dist}{\mathbb{P}}			
\safemath{\tdist}{\tilde \dist}			
\newcommand{\distof}[1]{\dist [ #1 ]}	
\newcommand{\bigdistof}[1]{\dist \bigl[ #1 \bigr]}	
\newcommand{\Bigdistsubof}[2]{\dist_{#1} \! \Bigl[ #2 \Bigr]}	
\safemath{\ry}{\setY}			
\safemath{\rz}{\setZ}			
\newcommand{\enc}[1]{f ( #1 )}	    
\newcommand{\dec}[1]{f^{-1} ( #1 )}	
\newcommand{\bigdec}[1]{f^{-1} \bigl( #1 \bigr)}	
\newcommand{\decphi}[1]{\phi ( #1 )}	
\newcommand{\bigdecphi}[1]{\phi \bigl( #1 \bigr)}	
\newcommand{\guess}[2]{\scG_{#1} ( #2 )}	
\newcommand{\bigguess}[2]{\scG_{#1} \bigl( #2 \bigr)}	
\newcommand{\guessast}[2]{\scG^\star_{#1} ( #2 )}	
\newcommand{\bigguessast}[2]{\scG^\star_{#1} \bigl( #2 \bigr)}	
\newcommand{\guessD}[3]{\scG_{#1}^{#2} ( #3 )}	
\newcommand{\bigguessD}[3]{\scG_{#1}^{#2} \bigl( #3 \bigr)}	
\newcommand{\hatguessD}[3]{\hat \scG_{#1}^{#2} ( #3 )}	
\newcommand{\bighatguessD}[3]{\hat \scG_{#1}^{#2} \bigl( #3 \bigr)}	
\safemath{\trho}{\frac{1}{1+\rho}}
\safemath{\tirho}{\tilde \rho}
\newcommand{\psifun}[1]{\psi ( #1 )}	    
\newcommand{\RDexp}{E^{(\rho)}_{X|Y} (P_{X,Y}, \Delta)}	    
\newcommand{\RDfun}{R_{X|Y} (Q_{X,Y}, \Delta)}			
\newcommand{\relent}[2]{D (#1 || #2 )}			
\newcommand{\bigrelent}[2]{D \bigl(#1 \bigl| \bigl| #2 \bigr)}			
\newcommand{\Bigrelent}[2]{D \Bigl(#1 \Bigl| \Bigl| #2 \Bigr)}			




%
\newtheorem{theorem}{Theorem}
\newtheorem{lemma}[theorem]{Lemma}
\newtheorem{corollary}[theorem]{Corollary}

\newtheorem{defin}{Definition}
\newtheorem{remark}[theorem]{Remark}

\newtheorem{fact}[theorem]{Fact}
\newtheorem{example}{Example}

%

%
\lhead[\fancyplain{}
{}]
{\fancyplain{}
  {}}
\chead[\fancyplain{}
{}]
{\fancyplain{}
  {}}
\rhead[\fancyplain{}
{}]
{\fancyplain{}
  {}}
\lfoot[\fancyplain{}
{}]
{\fancyplain{}
  {}}
\cfoot[\fancyplain{\thepage}
{\thepage}]
{\fancyplain{\thepage}
  {\thepage}}
\rfoot[\fancyplain{}
{}]
{\fancyplain{}
  {}}
%
\fussy
%
\setlength{\oddsidemargin}{1.5cm}
\setlength{\evensidemargin}{0.0cm}
\addtolength{\textheight}{0.0cm}
\addtolength{\textwidth}{2cm}
\addtolength{\topmargin}{-0.0cm}
%
%
\pagestyle{plain}
%

\large \normalsize
%
%
\begin{document}
%
%
\selectlanguage{USenglish}
\pagenumbering{arabic}
%
\title{Guessing Attacks on Distributed-Storage Systems}
  
\author{Annina Bracher, Eran Hof, and Amos Lapidoth}

\maketitle

\huge
\begin{abstract}
  \setcounter{page}{1}
  \normalsize
  \vspace{0.5cm}

  \let\thefootnote\relax\footnotetext{This paper was presented in part at the 2014 IEEE Information Theory Workshop and in part at the 2015 IEEE International Symposium on Information Theory.}
  \let\thefootnote\relax\footnotetext{A.\ Bracher and A.\ Lapidoth are with the Department of Information Technology and Electrical Engineering, ETH Zurich, Switzerland (e-mail: bracher@isi.ee.ethz.ch; lapidoth@isi.ee.ethz.ch), and E.\ Hof is with an R\&D center, Ramat-Gan, Israel (email: eran.hof@gmail.com).}

The secrecy of a distributed-storage system for passwords is studied. The encoder, Alice, observes a length-$n$ password and describes it using two hints, which she stores in different locations. The legitimate receiver, Bob, observes both hints. In one scenario the requirement is that the expected number of guesses it takes Bob to guess the password approach one as $n$ tends to infinity, and in the other that the expected size of the shortest list that Bob must form to guarantee that it contain the password approach one. The eavesdropper, Eve, sees only one of the hints. Assuming that Alice cannot control which hints Eve observes, the largest normalized (by $n$) exponent that can be guaranteed for the expected number of guesses it takes Eve to guess the password is characterized for each scenario. Key to the proof are new results on Arikan's guessing and Bunte and Lapidoth's task-encoding problem; in particular, the paper establishes a close relation between the two problems. A rate-distortion version of the model is also discussed, as is a generalization that allows for Alice to produce $\delta$ (not necessarily two) hints, for Bob to observe $\nu$ (not necessarily two) of the hints, and for Eve to observe $\eta$ (not necessarily one) of the hints. The generalized model is robust against $\delta - \nu$ disk failures.
\end{abstract}
\normalsize

\section{Introduction} \label{sec:intro}

Suppose that some sensitive information $X$ (e.g.\ password) is drawn from a finite set $\setX$ according to some probability mass function (PMF) $P_X$. A (stochastic) encoder, Alice, maps (possibly using randomization) $X$ to two hints $M_1$ and $M_2$ and stores them on different disks in different locations. The hints are intended for a legitimate receiver, Bob, who knows where they are stored and sees both. An eavesdropper, Eve, sees one of the hints but not both; we do not know which. Which hint is revealed to Eve is a subtle question. We adopt a conservative approach and assume that, after observing $X$, an adversarial ``genie'' reveals to Eve the hint that minimizes her ambiguity. Not allowing the genie to observe $X$ would lead to a weaker form of secrecy (Example~\ref{ex:notionOfSecrecy}). Given some notion of ambiguity, we would ideally like Bob's ambiguity about $X$ to be small and Eve's large.

There are several ways to define ambiguity. One approach would be to require that Bob be able to reconstruct $X$ whenever $X$ is ``typical'' and that the conditional entropy of $X$ given Eve's observation be large. For some scenarios, such an approach might be unsuitable. First, it may not properly address Bob's needs when $X$ is not typical. For example, if Bob must guess $X$, this approach does not guarantee that the expected number of guesses be small: it only guarantees that the probability of success after one guess be large. It does not indicate the number of guesses that Bob might need when $X$ is atypical. Second, conditional entropy need not be an adequate measure of Eve's ambiguity. For example, if $X$ is some password that Eve wishes to uncover, then we may care more about the number of guesses that Eve needs than about the conditional entropy \cite{arikanmerhav99}.

In this paper, we assume that Eve wants to guess $X$ with the minimal number of guesses of the form ``Is X = x?". We quantify Eve's ambiguity about $X$ by the expected number of guesses that she needs to uncover $X$. In this sense, Eve faces an instance of the Massey-Arikan guessing problem \cite{massey94,arikan96}: When faced with the problem of guessing $X$ after observing that $Z = z$, where $Z$ denotes Eve's observation, Eve must come up with a guessing order for the elements of $\setX$. Such an order can be specified using a bijective function $\guess {} { \cdot | z }$ from $\setX$ onto the set $\bigl\{ 1, \ldots, | \setX | \bigr\}$---a guessing function with the understanding that if Eve observes $z$, then the question ``Is $X = x$?'' will be her $\guess {}{x|z}$-th question. Eve's expected number of guesses is $\bigEx {}{\guess {}{ X | Z }}$. This expectation is minimized if for each $z \in \setZ$ the guessing function $\guess {} { \cdot | z }$ orders the elements of $\setX$ in decreasing order of their posterior probabilities given $Z = z$.

As to Bob, we will consider two different criteria: In the ``guessing version'' the criterion is the expected number of guesses it takes Bob to guess $X$, and in the ``list version'' the criterion is the expected size of the list that Bob must form to guarantee that it contain $X$. 

The former criterion is natural when Bob can check whether a guess is correct: if $X$ is some password, then Bob can stop guessing as soon as he has gained access to the account that is secured by $X$. The latter criterion is appropriate if Bob does not know whether a guess is correct. For example, if $X$ is a task that Bob must perform, then the only way for Bob to make sure that he performs $X$ is to perform all the tasks in the list $\setL_{M_1,M_2}$ comprising the tasks that have positive posterior probabilities given his observation. In this scenario, a good measure for Bob's ambiguity about $X$ is the expected number of tasks that he must perform, i.e., $\bigEx {}{| \setL_{M_1,M_2} |}$, and this will be small whenever Alice is a good task-encoder for Bob \cite{buntelapidoth14}.

Alternatively, the list-size criterion can also be viewed as a worst-case version of the guessing criterion: Even if Bob is incognizant of the PMF of $X$, the number of guesses it takes him to guess $X$ can be guaranteed not to exceed the size of the smallest list that is guaranteed to contain $X$.

The guessing and the list-size criterion for Bob lead to similar results in the following sense: Clearly, every guessing function $\guess {}{ \cdot | M_1,M_2 }$ for $X$ that guesses the elements of $\setX$ of zero posterior probability only after those of positive posterior probabilities satisfies $\bigEx {}{\guess {}{X | M_1,M_2}} \leq \bigEx {}{| \setL_{M_1,M_2} |}$. Conversely, one can prove that every pair of ambiguities for Bob and Eve that is achievable in the guessing version is---up to polylogarithmic factors of $| \setX |$---also achievable in the list version (Remark~\ref{remark:guessListClose}). These polylogarithmic factors wash out in the asymptotic regime where the sensitive information is an $n$-tuple and $n$ tends to infinity.

Things are different for Eve: applying the list-size criterion for Eve would lead to results that markedly differ from those that apply under the guessing criterion; see Theorem~\ref{th:fairOpponent} and the subsequent discussion.

To derive our results, we establish new results on guessing and task-encoding: we relate task-encoders to guessing functions (Theorem~\ref{th:relGuessEnc}), and we quantify how additional side information can help guessing (Lemma~\ref{le:ImproveGuess}). These results may be of interest in their own right. For example, the former result leads to alternative proofs of Bunte and Lapidoth's asymptotic task-encoding results \cite[Theorems~I.2 and VI.2]{buntelapidoth14} as well as the direct part of \cite[Theorem~I.1]{buntelapidoth14fb}, which states that, in the presence of feedback, the listsize capacity of a discrete-memoryless channel (DMC) with positive zero-error capacity equals the cutoff rate with feedback (which is in fact equal to that without feedback \cite[Corollary~I.4]{buntelapidoth14fb}). The latter result on how additional side information can help guessing is related to \cite{lapidothpfister15}: To quantify how additional side information can help guessing, we establish how an encoder must describe $X$ to minimize the expected number of guesses that a decoder needs to guess $X$. The list-size analog is Lapidoth and Pfister's optimal task-encoder \cite{lapidothpfister15}, which describes $X$ to minimize the expected size of the decoder's list. Despite the close relation between task-encoding and guessing, an optimal encoder for a guessing decoder is typically quite different from an optimal task-encoder.

We also generalize our problem in two different directions. The first, along the lines of \cite{arikanmerhav98, buntelapidoth14}, is a rate-distortion version of the model where Bob and Eve are content with reconstructing the sensitive information to within some given allowed distortion. The second considers the case where Alice produces $\delta$ $s$-bit hints, Bob sees $\nu \leq \delta$ hints, and Eve sees $\eta < \nu$ hints (not necessarily a subset of those that Bob sees). This may model a scenario where the hints are stored on different disks and we want to guarantee robustness against the failure of $\delta - \nu$ disks and the compromise of $\eta$ disks. We adopt again a conservative approach and assume that, after observing $X$, an adversarial genie reveals to Bob the $\nu$ hints that maximize his ambiguity and to Eve the $\eta$ hints that minimize her ambiguity. This guarantees that---no matter which disks fail---the model be robust against the failure of $\delta - \nu$ disks and the compromise of $\eta$ disks. The generalized model is a distributed-storage system, which is static in the sense that failed disks are not replaced.

The case where $X$ is drawn uniformly, Bob must reconstruct $X$, and Eve's observation must satisfy some information-theoretic security criterion (e.g., the mutual information between Eve's observation and $X$ must be null) corresponds to the erasure-erasure wiretap channel studied in \cite{subramanianmclaughlin09} and is a special case of the wiretap networks in \cite{caiyeung11,elrouayhebsoljaninsprintson12}. In the literature, this setting is also known as ``secret sharing.'' In traditional secret sharing, each set of hints either reveals $X$ or reveals no information about $X$ \cite{blakley79,shamir79}. More general are ramp schemes, where any $\nu$ hints reveal $X$ and the amount of information that fewer-than-$\nu$ hints reveal is controlled (see e.g.\ \cite{blakleychaum85}). Our setting is different in that we assume $X \sim P_X$ and in that, using some notion of ambiguity, we quantify how difficult it is for Bob and Eve to reconstruct $X$.

To better bring out the role of R\'{e}nyi entropy, we generalize the models and replace expectations with $\rho$-th moments. (The generalization comes with no extra effort.) For an arbitrary $\rho > 0$, we thus study the $\rho$-th (instead of the first) moment of the list-size and of the number of guesses. Moreover, we shall allow some side information $Y$ that is available to all parties.

The connection between R\'{e}nyi entropy and the $\rho$-th moment of the minimum number of guesses has been studied extensively in the literature \cite{arikan96,pfistersullivan04,sundaresan07,hanawalsundaresan11}. The connection with encoding tasks was studied in \cite{buntelapidoth14}.

The idea to quantify Eve's ambiguity by the $\rho$-th moment of the number of guesses she needs to uncover $X$ is due to Arikan and Merhav, who studied the Shannon cipher system with a guessing wiretapper \cite{arikanmerhav99}. Their approach was later adopted in \cite{hayashiyamamoto08,hanawalsundaresan11b}. The current setting differs from the ones in \cite{arikanmerhav99,hayashiyamamoto08,hanawalsundaresan11b} in the following sense: Instead of mapping $X$ to a public message using a secret key, which is available to Bob but not to Eve, here Alice produces two hints and stores them so that Bob sees both but Eve sees only one. Moreover, unlike \cite{arikanmerhav99,hayashiyamamoto08,hanawalsundaresan11b} we do not measure Bob's ambiguity in terms of the probability that $X$ is not his first guess.\\

The rest of this paper is structured as follows. Section~\ref{sec:notation} briefly describes our notation and summarizes some notions and results pertaining to the guessing problem and the problem of encoding tasks. In Section~\ref{sec:listsAndGuesses}, we quantify how additional side information can help guessing and relate task-encoders to guessing functions, thereby establishing the prerequisites for the proofs of our main results. Section~\ref{sec:problemStatement} contains the problem statement and the main results (both finite-blocklength and asymptotic). The results are discussed in Section~\ref{sec:discussion} and proved in Section~\ref{sec:distStorProofs}. Section~\ref{sec:extDiskFailures} generalizes the model to allow for a limited number of disk failures. Section~\ref{sec:rateDist} considers the rate-distortion version of the problem stated in Section~\ref{sec:problemStatement} and extends the results on guessing and task-encoding of Section~\ref{sec:listsAndGuesses} accordingly. Section~\ref{sec:conclusion} concludes the paper.

\section{Notation and Preliminaries}\label{sec:notation}
In this paper $( X, Y )$ is a pair of chance variables that is drawn from the finite set $\setX \times \setY$ according to the PMF $P_{X,Y}$, and $\rho > 0$ is fixed. We denote by $P_X$ the marginal PMF of $X$ and by $P_Y$ the marginal PMF of $Y$, e.g., $$P_X ( x ) = \sum_{y \in \setY} P_{X,Y} ( x,y ), \,\, \forall \, x \in \setX.$$ For every positive integer $n \in \naturals$ we denote by $P_{X,Y}^n$ the $n$-fold product of $P_{X,Y}$, i.e., $$P_{X,Y}^n ( \vecx, \vecy ) = \prod^n_{i = 1} P_{X,Y} ( x_i, y_i ), \,\, \forall \, (\vecx,\vecy) \in \setX^n \times \setY^n.$$ A generic probability measure on a measurable space $(\Omega, \setF)$ is denoted $\dist$, i.e., whenever we introduce a set of chance variables (e.g., $X$ and $Y$), we denote by $\dist$ the probability measure associated with the probability space $\left( \Omega, \setF, \dist \right)$ on which the chance variables live.

For some positive integer $k$, we denote by $\oplus_k$ addition modulo $k$, so $\alpha \oplus_k \beta$ is for any pair of integers $( \alpha, \beta)$ the unique element $\gamma \in \{ 0, \ldots, k-1 \}$ satisfying $$\gamma \equiv \alpha + \beta \mod k.$$ We denote by $\mathbb F_{q}$ the Galois field with $q$ elements.

By default $\log (\cdot)$ denotes base-2 logarithm, and $\ln (\cdot)$ denotes natural logarithm. We denote by $\alpha \vee \beta$ the maximum of two real numbers $\alpha$ and $\beta$ and by $\alpha \wedge \beta$ their minimum. For some real number $\alpha$, we denote by $[ \alpha ]^+$ the maximum of $\alpha$ and zero $$[ \alpha ]^+ = \alpha \vee 0,$$ by $\left\lceil \alpha \right\rceil$ the smallest integer that is at least as large as $\alpha$, and by $\left\lfloor \alpha \right\rfloor$ the largest integer that is at most as large as $\alpha$. We sometimes use the identity
\begin{equation}
\left\lceil \xi \right\rceil^\rho < 1 + 2^\rho \xi^\rho, \quad \xi \in \reals^+_0, \label{eq:ceilApprox}
\end{equation}
which is easily checked by considering separately the cases $0 \leq \xi \leq 1$ and $\xi > 1$ \cite{buntelapidoth14}.

\subsection{The Conditional R\'enyi Entropy}

To describe our results, we shall need the conditional version of R\'enyi entropy (originally proposed by Arimoto \cite{arimoto77} and also studied in \cite{buntelapidoth14,berensfehr14})
\be \label{eq:conditoinalRenaEntrDef}
\renent {\alpha}{ X | Y } = \frac{\alpha}{1 - \alpha} \log \sum_{y \in \setY} \Biggl( \sum_{x \in \setX} P_{X,Y} ( x,y )^{\alpha} \Biggr)^{1/\alpha},
\ee
where $\alpha \in [ 0,\infty ]$ is the order and where the cases where $\alpha$ is $0$, $1$, or $\infty$ are treated by a limiting argument. Let $\{(X_i,Y_i)\}_{i \in \naturals}$ be a discrete-time stochastic process with finite alphabet $\setX \times \setY$. Whenever the limit as $n$ tends to infinity of $\renent {\alpha}{ X^n | Y^n } / n$ exists, we denote it by $\renent {\alpha}{ \rndvecX | \rndvecY }$ and call it conditional R\'enyi entropy-rate. In this paper $\alpha$ will equal $1 / (1+\rho)$, and thus, since $\rho > 0$, will take values in the set $( 0,1 )$. To simplify notation, we henceforth write $\tirho$ for $1 / (1+\rho)$
\begin{equation}
\tirho \triangleq \frac{1}{1 + \rho}.
\end{equation}

The conditional R\'enyi entropy satisfies the following properties (see, e.g.~\cite[Theorem~2]{berensfehr14}): 

\begin{lemma}\label{le:uncertaintyIncreasesRenEnt}
Let $( X,Y,Z )$ be a triple of chance variables taking values in the finite set $\setX \times \setY \times \setZ$ according to the joint PMF $P_{X,Y,Z}$. For every $\alpha \in [ 0,\infty ]$
\be 
\renent \alpha {X | Y } \leq \renent \alpha {X,Z|Y}. \label{eq:uncertaintyIncreasesRenEnt}
\ee
\end{lemma}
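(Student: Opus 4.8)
The plan is to prove \eqref{eq:uncertaintyIncreasesRenEnt} directly from the definition \eqref{eq:conditoinalRenaEntrDef}, reducing it to the elementary fact that on $[0,\infty)$ the map $t \mapsto t^{\alpha}$ is subadditive when $\alpha \in (0,1)$ and superadditive when $\alpha > 1$, together with the marginalization identity $P_{X,Y}(x,y) = \sum_{z \in \setZ} P_{X,Y,Z}(x,y,z)$. The three exceptional orders $\alpha \in \{0,1,\infty\}$ are then dispatched by continuity of the right-hand side of \eqref{eq:conditoinalRenaEntrDef} in $\alpha$ (for $\alpha = 1$ one can alternatively invoke the chain rule, as noted below).

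Concretely, I would first treat the range $\alpha \in (0,1)$, which is the one actually needed in the paper since there $\alpha = \tirho$. For such $\alpha$ the prefactor $\alpha/(1-\alpha)$ is positive and $\log$ is increasing, so it suffices to show that $\sum_{y} \bigl( \sum_{x} P_{X,Y}(x,y)^{\alpha} \bigr)^{1/\alpha} \le \sum_{y} \bigl( \sum_{x,z} P_{X,Y,Z}(x,y,z)^{\alpha} \bigr)^{1/\alpha}$. By subadditivity of $t \mapsto t^{\alpha}$ and the marginalization identity, $P_{X,Y}(x,y)^{\alpha} = \bigl( \sum_{z} P_{X,Y,Z}(x,y,z) \bigr)^{\alpha} \le \sum_{z} P_{X,Y,Z}(x,y,z)^{\alpha}$ for every $(x,y)$; summing over $x$, then applying the increasing map $u \mapsto u^{1/\alpha}$, and finally summing over $y$ yields the claimed inequality between the two outer sums. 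Multiplying by $\alpha/(1-\alpha) > 0$ and taking $\log$ gives \eqref{eq:uncertaintyIncreasesRenEnt} for $\alpha \in (0,1)$.

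For $\alpha \in (1,\infty)$ the very same steps apply, except that $t \mapsto t^{\alpha}$ is now superadditive, so the inequality between the two outer sums reverses; but the prefactor $\alpha/(1-\alpha)$ is now negative, and multiplying by it flips the inequality back to the direction claimed in \eqref{eq:uncertaintyIncreasesRenEnt}. The endpoints follow by letting $\alpha$ tend to the respective value: for $\alpha = \infty$ the inequality reduces to $\max_{x,z} P_{X,Y,Z}(x,y,z) \le \max_{x} P_{X,Y}(x,y)$, for $\alpha = 0$ to an analogous support-size comparison, and for $\alpha = 1$ it is the chain rule $\renent{1}{X,Z|Y} = \condent{X,Z}{Y} = \condent{X}{Y} + \condent{Z}{X,Y} \ge \condent{X}{Y} = \renent{1}{X|Y}$. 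There is essentially no hard step here: the only thing that requires care is the sign bookkeeping, since the direction of (super-/sub-)additivity of $t \mapsto t^{\alpha}$ and the sign of $\alpha/(1-\alpha)$ both switch as $\alpha$ crosses $1$ and these two switches must cancel. Since the statement is standard (see \cite{berensfehr14}), in the write-up I would likely give only the two-line argument for the range $\alpha \in (0,1)$ that is actually used and relegate the other orders to a parenthetical remark.
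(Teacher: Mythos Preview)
Your argument is correct. The paper does not actually prove this lemma; it merely cites \cite[Theorem~2]{berensfehr14} (see the sentence preceding the lemma: ``The conditional R\'enyi entropy satisfies the following properties (see, e.g.~\cite[Theorem~2]{berensfehr14})''). Your self-contained proof via sub-/superadditivity of $t\mapsto t^{\alpha}$ and the sign of $\alpha/(1-\alpha)$ is clean and correct, and the sign bookkeeping you flag is handled properly.
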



\begin{lemma}\cite[Theorem~3]{berensfehr14}\label{le:chainRuleRenEnt}
Let $( X,Y,Z )$ be a triple of chance variables taking values in the finite set $\setX \times \setY \times \setZ$ according to the joint PMF $P_{X,Y,Z}$. For every $\alpha \in [ 0,\infty ]$
\be 
\renent \alpha {X | Y,Z } \geq \renent \alpha {X,Z | Y } - \log | \setZ |. \label{eq:renEntChainRule}
\ee
\end{lemma}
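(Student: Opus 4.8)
The plan is to derive~\eqref{eq:renEntChainRule} directly from Arimoto's definition~\eqref{eq:conditoinalRenaEntrDef}. Since in this paper the order is always $\alpha = \tirho \in ( 0,1 )$, I would carry out the argument in detail for $\alpha \in ( 0,1 )$ and only indicate at the end how the remaining orders $\alpha \in \{ 0 \} \cup ( 1,\infty ]$ are handled. For $\alpha \in ( 0,1 )$, first I would introduce the shorthand $a_{y,z} \triangleq \sum_{x \in \setX} P_{X,Y,Z} ( x,y,z )^{\alpha}$ for $( y,z ) \in \setY \times \setZ$. Unfolding~\eqref{eq:conditoinalRenaEntrDef} with the pair $( Y,Z )$ in the role of the conditioning variable gives
\[ \renent \alpha {X | Y,Z } = \frac{\alpha}{1 - \alpha} \log \sum_{y \in \setY} \sum_{z \in \setZ} a_{y,z}^{1/\alpha}, \]
and unfolding it with $( X,Z )$ in the role of the ``information'' variable and $Y$ as the conditioning variable gives
\[ \renent \alpha {X,Z | Y } = \frac{\alpha}{1 - \alpha} \log \sum_{y \in \setY} \biggl( \sum_{z \in \setZ} a_{y,z} \biggr)^{1/\alpha}. \]

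Next I would use that for $\alpha \in ( 0,1 )$ the coefficient $\alpha / ( 1 - \alpha )$ is strictly positive, so that $t \mapsto \tfrac{\alpha}{1-\alpha} \log t$ is increasing. Rewriting the additive term as $\log | \setZ | = \tfrac{\alpha}{1-\alpha} \log | \setZ |^{( 1-\alpha )/\alpha}$ and absorbing it into the logarithm, the claimed inequality~\eqref{eq:renEntChainRule} becomes equivalent to
\[ | \setZ |^{( 1-\alpha )/\alpha} \sum_{y \in \setY} \sum_{z \in \setZ} a_{y,z}^{1/\alpha} \;\geq\; \sum_{y \in \setY} \biggl( \sum_{z \in \setZ} a_{y,z} \biggr)^{1/\alpha}. \]
It then suffices to prove this inequality termwise in $y$. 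Writing $p \triangleq 1/\alpha$, so that $p > 1$ and $( 1-\alpha )/\alpha = p - 1$, the per-$y$ statement is $| \setZ |^{p-1} \sum_{z} a_{y,z}^{\,p} \geq \bigl( \sum_{z} a_{y,z} \bigr)^{p}$, which is precisely the power-mean (Jensen) inequality applied to the convex function $t \mapsto t^{p}$ on $[ 0,\infty )$ with the uniform distribution on $\setZ$: one has $\bigl( | \setZ |^{-1} \sum_{z} a_{y,z} \bigr)^{p} \leq | \setZ |^{-1} \sum_{z} a_{y,z}^{\,p}$, and multiplying through by $| \setZ |^{p}$ gives the claim. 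Summing over $y$ finishes the case $\alpha \in ( 0,1 )$.

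For the remaining orders the same scheme applies, with the directions of two steps reversed in a compatible way. For $\alpha \in ( 1,\infty )$ the coefficient $\alpha / ( 1 - \alpha )$ is negative, so the monotonicity step turns the target into the \emph{reversed} sum inequality; but now $p = 1/\alpha \in ( 0,1 )$ makes $t \mapsto t^{p}$ concave, and the reversed power-mean inequality is exactly what concavity yields, so~\eqref{eq:renEntChainRule} again follows. The boundary orders $\alpha \in \{ 0,1,\infty \}$ follow either from the ordinary Shannon chain rule $\renent 1 {X,Z | Y } = \renent 1 {X | Y,Z } + \renent 1 {Z | Y } \leq \renent 1 {X | Y,Z } + \log | \setZ |$ together with the analogous elementary bounds at $\alpha = 0$ and $\alpha = \infty$ (at $\alpha = \infty$ one uses $\renent \infty {X | Y,Z } = - \log \sum_{y,z} \max_{x} P_{X,Y,Z} ( x,y,z )$ and $\sum_{z} \max_{x} P_{X,Y,Z} ( x,y,z ) \leq | \setZ | \max_{x,z} P_{X,Y,Z} ( x,y,z )$), or directly from the continuity of $\alpha \mapsto \renent \alpha {X | Y }$. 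The one point that needs care is the sign bookkeeping when moving the $\log | \setZ |$ term across the logarithm, i.e.\ keeping track of whether $\alpha / ( 1 - \alpha )$ is positive or negative; once that is set up correctly the inequality collapses in a single line to Jensen's inequality, so I do not anticipate a genuine obstacle here.
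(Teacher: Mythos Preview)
Your proof is correct. The reduction to the termwise inequality $|\setZ|^{p-1}\sum_{z}a_{y,z}^{\,p}\ge\bigl(\sum_{z}a_{y,z}\bigr)^{p}$ with $p=1/\alpha$ is exactly the power-mean (Jensen) inequality, and your sign bookkeeping for $\alpha>1$ is right: the prefactor $\alpha/(1-\alpha)$ flips sign precisely when $t\mapsto t^{p}$ switches from convex to concave, so the two reversals cancel. The boundary cases are handled adequately.

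There is nothing in the paper to compare against: the lemma is quoted verbatim from Fehr and Berens~\cite[Theorem~3]{berensfehr14} and carries no proof here. Your argument is in fact the natural direct verification from Arimoto's definition~\eqref{eq:conditoinalRenaEntrDef}, and it is self-contained, so it adds value over a bare citation.
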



\subsection{Optimal Guessing Functions and Task-Encoders} \label{sec:optGuessTaskEnc}

Suppose we want to guess $X$ with guesses of the form ``Is $X = x$?'' Following the notation of \cite{arikan96}, we call a bijection $G \colon \setX \rightarrow \bigl\{ 1, \ldots, | \setX | \bigr\}$ a \emph{guessing function} for $X$. The guessing function determines the guessing order: If we use $G (\cdot)$ to guess $X$, then the question ``Is $X = x$?'' will be our $G (x)$-th question. With a slight abuse of the term ``function,'' we call $\guess {}{\cdot | Y }$ a guessing function for $X$ given $Y$ if the mapping $\guess {} { \cdot | y } \colon \setX \rightarrow \bigl\{ 1, \ldots, | \setX | \bigr\}$ is for every $y \in \setY$ a guessing function for $X$. If we use $\guess {}{\cdot | Y }$ to guess $X$ from the observation $Y$ and observe that $Y = y$, then the question ``Is $X = x$?'' will be our $\guess {} { x | y }$-th question.

In the following we shall consider guessing functions for $X$ given $Y$. Since every guessing function for $X$ can be viewed as a guessing function for $X$ given $Y$ for the case where $Y$ is null, the results also apply to guessing functions for $X$.

The performance of a guessing function is studied in terms of the $\rho$-th moment of the number of guesses that we need to guess $X$ when we use that function. That is, the expectation $\bigEx {}{\guess {} {X|Y}^\rho}$ is the performance of $\guess {}{\cdot|Y}$. We say that a guessing function $\guess {}{\cdot | Y}$ is optimal if its performance is optimal, i.e., $\guess {}{\cdot|Y}$ is optimal if, and only if, (iff) it minimizes $\bigEx {}{\guess {} {X|Y}^\rho}$ among all the guessing functions for $X$ given $Y$. It is easy to see that a guessing function $\guess {}{ \cdot | Y }$ is optimal iff for every $y \in \setY$, the function $\guess {}{ \cdot | y }$ orders the possible realizations of $X$ in decreasing order of their posterior probabilities given $Y = y$. We can use Arikan's results on guessing \cite{arikan96} to bound the performance of optimal guessing functions:

\begin{theorem}[On the Performance of Optimal Guessing Functions]\cite[Theorem~1 and Proposition~4]{arikan96}\label{th:optGuessFun}
There exists some guessing function $\guess {}{\cdot | Y}$ for which
\be 
\bigEx {}{\guess {} {X |Y }^\rho} \leq 2^{\rho \renent {\tirho}{X|Y}}.
\ee
Conversely, for every guessing function $\guess {}{\cdot | Y}$
\be 
\bigEx {}{\guess {} {X |Y }^\rho} \geq \bigl( 1 + \ln | \setX | \bigr)^{-\rho} 2^{\rho \renent {\tirho}{X|Y}} \vee 1. \label{eq:lbOptGuessFun}
\ee
\end{theorem}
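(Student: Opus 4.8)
The plan is to reduce both bounds to Arıkan's original (unconditional) estimates applied conditionally on each value $y$ of the side information. First I would recall Arıkan's bound \cite{arikan96}: for guessing $X$ without side information, the optimal guessing function $G$ (which lists $x$ in decreasing order of $P_X(x)$) satisfies
\be
\bigEx{}{G(X)^\rho} \leq \Biggl( \sum_{x \in \setX} P_X(x)^{\tirho} \Biggr)^{1+\rho},
\ee
and conversely every guessing function obeys $\bigEx{}{G(X)^\rho} \geq (1+\ln|\setX|)^{-\rho}\bigl(\sum_x P_X(x)^{\tirho}\bigr)^{1+\rho}$. For the achievability part I would fix, for each $y\in\setY$, an optimal guessing function $\guess{}{\cdot\,|\,y}$ for the conditional PMF $P_{X|Y=y}$, apply Arıkan's upper bound to the conditional law, and then average over $Y$:
\be
\bigEx{}{\guess{}{X|Y}^\rho} = \sum_{y\in\setY} P_Y(y)\, \bigEx{}{\guess{}{X|y}^\rho \,\big|\, Y=y} \leq \sum_{y\in\setY} P_Y(y) \Biggl( \sum_{x\in\setX} P_{X|Y}(x|y)^{\tirho} \Biggr)^{1+\rho}.
\ee
Writing $P_{X|Y}(x|y)=P_{X,Y}(x,y)/P_Y(y)$ and pulling out the factor $P_Y(y)^{-\tirho}$ from the inner sum, the right-hand side collapses to $\sum_y P_Y(y)\cdot P_Y(y)^{-(1+\rho)\tirho}\bigl(\sum_x P_{X,Y}(x,y)^{\tirho}\bigr)^{1+\rho}$; since $(1+\rho)\tirho = 1$, the powers of $P_Y(y)$ cancel, leaving $\sum_y \bigl(\sum_x P_{X,Y}(x,y)^{\tirho}\bigr)^{1+\rho}$. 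Recognizing this as $2^{\rho\,\renent{\tirho}{X|Y}}$ by the definition \eqref{eq:conditoinalRenaEntrDef} with $\alpha=\tirho$ (so $\alpha/(1-\alpha)=1/\rho$ and $1/\alpha=1+\rho$) completes the first inequality.

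For the converse I would argue the same way with inequalities reversed: any guessing function $\guess{}{\cdot\,|\,Y}$ restricts, for each $y$, to some guessing function for $P_{X|Y=y}$, so Arıkan's lower bound gives $\bigEx{}{\guess{}{X|y}^\rho\,|\,Y=y}\geq (1+\ln|\setX|)^{-\rho}\bigl(\sum_x P_{X|Y}(x|y)^{\tirho}\bigr)^{1+\rho}$; averaging over $y$ and performing the identical cancellation of $P_Y(y)$-powers yields $\bigEx{}{\guess{}{X|Y}^\rho}\geq (1+\ln|\setX|)^{-\rho}\,2^{\rho\,\renent{\tirho}{X|Y}}$. The bound $\bigEx{}{\guess{}{X|Y}^\rho}\geq 1$ is immediate because $\guess{}{x|y}\geq 1$ always; taking the maximum of the two lower bounds gives \eqref{eq:lbOptGuessFun}.

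The only genuinely delicate point is the algebraic bookkeeping that makes the $P_Y(y)$ factors disappear — specifically the identity $(1+\rho)\tirho=1$ together with matching the resulting double sum to the $\alpha=\tirho$ instance of the Arimoto conditional Rényi entropy. Everything else is a routine conditioning-and-averaging argument on top of Arıkan's theorem; there is no real combinatorial or analytic obstacle, and the limiting cases $\alpha\to 0,1,\infty$ do not arise since $\tirho\in(0,1)$ is fixed. One should also note in passing that the conditional guessing function can indeed be chosen measurably in $y$ (trivial here, as $\setY$ is finite), so assembling the per-$y$ optimal orderings into a single guessing function for $X$ given $Y$ poses no difficulty.
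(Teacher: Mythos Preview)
Your proposal is correct. The paper does not supply its own proof of this theorem; it simply cites Arikan's original results \cite[Theorem~1 and Proposition~4]{arikan96} without reproving them. Your conditioning-and-averaging argument, together with the cancellation via $(1+\rho)\tirho=1$ that identifies the averaged bound with $2^{\rho\renent{\tirho}{X|Y}}$, is exactly the standard route (and is how Arikan's Proposition~4 proceeds from his Theorem~1), so there is nothing to compare.
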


For task-encoders we adopt the terminology of \cite{buntelapidoth14}. Given some finite set of descriptions $\setZ$, we call a mapping $f \colon \setX \rightarrow \setZ$ a \emph{task-encoder} for $X$. We associate every task-encoder with a decoder of the form
\begin{IEEEeqnarray}{rCl}
f^{-1} \colon \setZ & \rightarrow & 2^{\setX} \nonumber \\*[-0.5\normalbaselineskip]
  \label{eq:taskDecForm}
\\*[-0.5\normalbaselineskip]
z & \mapsto & \Bigl\{x \in \setX \colon \bigl\{ P_X ( x ) > 0 \bigr\} \cap \bigl\{ \enc x = z \bigr\} \Bigr\}. \nonumber
\end{IEEEeqnarray}
If the encoder describes $X$ by $Z \triangleq \enc X$, then the list $\setL_Z \triangleq \dec Z$ produced by the decoder is the list containing all the realizations of $X$ of positive a priori probability that the encoder could have described by $Z$. (This is the shortest list that is almost-surely guaranteed to contain $X$ given its description $Z$.)

Consider now the scenario where some side information $Y$ is revealed to the encoder and decoder~\cite[Section~VI]{buntelapidoth14}. In this scenario we call $\enc { \cdot | Y }$ a task-encoder for $X$ given $Y$ if the mapping $\enc { \cdot | y } \colon \setX \rightarrow \setZ$ is for every $y \in \setY$ a task-encoder for $X$. We associate every task-encoder with a decoder $\dec { \cdot | Y }$ satisfying for every $y \in \setY$ that $\dec { \cdot | y }$ is of the form \eqref{eq:taskDecForm}, i.e., that
\begin{IEEEeqnarray}{rCl}
f^{-1} ( \cdot | y ) \colon \setZ & \rightarrow & 2^{\setX} \nonumber \\*[-0.5\normalbaselineskip]
  \label{eq:deterministicListDef}
\\*[-0.5\normalbaselineskip]
z & \mapsto & \Bigl\{x \in \setX \colon \bigl\{ P_{X|Y} ( x | y ) > 0 \bigr\} \cap \bigl\{ \enc {x|y} = z \bigr\} \Bigr\}. \nonumber
\end{IEEEeqnarray}
If, upon observing $Y$, the encoder describes $X$ by $Z \triangleq \enc { X | Y }$, then the list $\setL^Y_Z \triangleq \dec {Z|Y}$ produced by the decoder is the list containing all the realizations of $X$ that---given the side information $Y$---have a positive posterior probability under $P_{X|Y}$ and that the encoder could have described by $Z$.

In the following we shall consider task-encoders for $X$ given $Y$. Since every task-encoder for $X$ can be viewed as a task-encoder for $X$ given $Y$ for the case where $Y$ is null, the results also apply to task-encoders for $X$.

We shall also need the notion of a \emph{stochastic task-encoder}. Such an encoder associates with every possible realization $(x,y) \in \setX \times \setY$ of the pair $(X,Y)$ a PMF on $\setZ$ and, upon observing the side information $y$, describes $x$ by drawing $Z$ from $\setZ$ according to the PMF associated with $(x,y)$. The conditional probability that $Z = z$ given $( X,Y ) = (x,y)$ is thus determined by the stochastic encoder, and we denote it by
\begin{equation} \label{eq:stochasticEncoderDef}
\distof { Z = z | X = x, Y = y }, \quad \, (x,y,z) \in \setX \times \setY \times \setZ.
\end{equation}
Based on $(Y,Z)$ the decoder associated with the encoder \eqref{eq:stochasticEncoderDef} produces the smallest list $\setL^Y_Z$ that is guaranteed to contain $X$, i.e., if $(Y,Z) = (y,z)$, then the decoder produces the list
\begin{equation} \label{eq:stochasticListDef} 
\setL^y_z = \bigl\{ x \in \setX \colon \distof { X = x | Y = y, Z = z } > 0 \bigr\}, \quad (y,z) \in \setY \times \setZ
\end{equation}
of all the possible realizations $x \in \setX$ of $X$ of positive posterior probability
\begin{equation}
\distof { X = x | Y = y, Z = z } = \frac{ P_{X,Y} ( x,y ) \, \distof { Z = z | X = x, Y = y} }{ \sum_{\tilde x \in \setX} P_{X,Y} ( \tilde x, y ) \, \distof { Z = z | X = \tilde x, Y = y} }.
\end{equation}

We assess the performance of a task-encoder in terms of the $\rho$-th moment $\bigEx {}{| \setL^Y_Z |^\rho}$ of the size of the list that the associated decoder must form. As we argue shortly, deterministic task-encoders are optimal in the sense that for every stochastic task-encoder there exists a deterministic task-encoder that performs at least as well. Therefore, we can use Bunte and Lapidoth's results on deterministic task-encoders \cite{buntelapidoth14} to bound the performance of optimal stochastic task-encoders:

\begin{theorem}[On the Performance of the Optimal Task-Encoders]\cite[Theorem~VI.1]{buntelapidoth14}\label{th:optTaskEnc}
Let $\setZ$ be a finite set. If $\card \setZ > \log \card \setX + 2$, then there exists a deterministic task-encoder $\enc {\cdot | Y}$ for which
\begin{equation}
\BigEx {}{ \bigl| \setL^Y_Z \bigr|^\rho} = \BigEx {}{ \bigdec{ \enc{X|Y} \bigl| Y }^\rho } < 1 + 2^{\rho ( \renent {\tirho}{X|Y} - \log ( | \setZ | - \log | \setX | - 2 ) + 2 )}.
\end{equation}
Conversely, given any stochastic task-encoder \eqref{eq:stochasticEncoderDef}, the associated decoding lists $\{ \setL^y_z \}$ \eqref{eq:stochasticListDef} satisfy
\be 
\BigEx {}{ \bigl| \setL^Y_Z \bigr|^\rho } \geq 2^{\rho ( \renent {\tirho}{X|Y} - \log | \setZ | )} \vee 1.
\ee
\end{theorem}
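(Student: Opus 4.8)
The plan is to obtain Theorem~\ref{th:optTaskEnc} from Bunte and Lapidoth's statement for \emph{deterministic} task-encoders \cite[Theorem~VI.1]{buntelapidoth14} by showing that deterministic task-encoders lose nothing against stochastic ones. The direct part then needs no separate argument: a deterministic task-encoder is the special case of a stochastic one whose transition law \eqref{eq:stochasticEncoderDef} is a point mass (and for such an encoder the list \eqref{eq:stochasticListDef} coincides with \eqref{eq:deterministicListDef}), so whenever $|\setZ| > \log|\setX| + 2$ the encoder produced by \cite[Theorem~VI.1]{buntelapidoth14} already satisfies $\BigEx{}{|\setL^Y_Z|^\rho} < 1 + 2^{\rho(\renent{\tirho}{X|Y} - \log(|\setZ| - \log|\setX| - 2) + 2)}$ within the class of stochastic encoders. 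The remaining task is the converse for an arbitrary stochastic task-encoder.

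I would establish the converse by derandomization. Fix a stochastic task-encoder with transition law \eqref{eq:stochasticEncoderDef} and decoding lists $\setL^y_z$ as in \eqref{eq:stochasticListDef}, and define a deterministic encoder $\enc{\cdot|Y}$ by choosing, for each $(x,y)$ with $P_{X|Y}(x|y) > 0$,
\[
\enc{x|y} \in \argmin_{z \,:\, \distof{Z = z | X = x, Y = y} > 0} \bigl|\setL^y_z\bigr|
\]
(the minimization is over a nonempty set, since it is the support of a PMF on $\setZ$; ties are broken arbitrarily). Because $\enc{x|y} = z$ forces $\distof{Z=z|X=x,Y=y} > 0$, the deterministic decoding list \eqref{eq:deterministicListDef} obeys $\dec{z|y} \subseteq \setL^y_z$, hence $|\dec{z|y}| \le |\setL^y_z|$ for every $(y,z)$; consequently the cost $|\dec{\enc{x|y}|y}|^\rho$ charged to a realization $x$ under the deterministic encoder is at most $|\setL^y_{\enc{x|y}}|^\rho = \min_{z:\,\distof{Z=z|X=x,Y=y}>0}|\setL^y_z|^\rho \le \sum_z \distof{Z=z|X=x,Y=y}\,|\setL^y_z|^\rho$, which is the cost charged to $x$ under the stochastic encoder. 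Averaging over $(X,Y)\sim P_{X,Y}$ yields $\BigEx{}{|\dec{\enc{X|Y}|Y}|^\rho} \le \BigEx{}{|\setL^Y_Z|^\rho}$, and the deterministic converse of \cite[Theorem~VI.1]{buntelapidoth14} bounds the left-hand side below by $2^{\rho(\renent{\tirho}{X|Y} - \log|\setZ|)}$. The remaining bound $\BigEx{}{|\setL^Y_Z|^\rho} \ge 1$ is immediate because $\setL^Y_Z$ almost surely contains $X$ and is hence nonempty; taking the maximum of the two completes the proof.

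For robustness I would also record a self-contained route to the stochastic converse via two applications of Hölder's inequality: fixing $y$, one combines $\sum_z \distof{Z=z|X=x,Y=y} = 1$ with the subadditivity of $t \mapsto t^{\tirho}$ (valid since $\tirho \in (0,1)$) and one Hölder step per list to get $\sum_x P_{X|Y}(x|y)^{\tirho} \le \sum_z \bigl(|\setL^y_z|^\rho\,\distof{Z=z|Y=y}\bigr)^{\tirho}$, then a second Hölder step over the at most $|\setZ|$ descriptions to get $\sum_z |\setL^y_z|^\rho\,\distof{Z=z|Y=y} \ge |\setZ|^{-\rho}\bigl(\sum_x P_{X|Y}(x|y)^{\tirho}\bigr)^{1+\rho}$, and finally averages over $Y$, using $\sum_y P_Y(y)\bigl(\sum_x P_{X|Y}(x|y)^{\tirho}\bigr)^{1+\rho} = \sum_y\bigl(\sum_x P_{X,Y}(x,y)^{\tirho}\bigr)^{1+\rho} = 2^{\rho\renent{\tirho}{X|Y}}$ from \eqref{eq:conditoinalRenaEntrDef}. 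The calculations themselves are routine; the one place I would be careful is the bookkeeping — verifying that the derandomized lists really are nested inside the stochastic ones (so the per-symbol cost comparison is legitimate), and that the $P_Y$-weighting in the last step reproduces exactly the Arimoto conditional Rényi entropy of order $\tirho$ rather than a variant with a misplaced exponent.
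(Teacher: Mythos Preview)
Your derandomization argument is correct and matches the paper's own treatment essentially verbatim: the paper also constructs the deterministic encoder by setting $\enc{x|y}\in\argmin_{z\colon x\in\setL^y_z}|\setL^y_z|$ (which for $P_{X|Y}(x|y)>0$ is your $\argmin$ over the support of $Z$), establishes the nesting $\dec{\enc{x|y}|y}\subseteq\setL^y_{\enc{x|y}}$, and averages to reduce to \cite[Theorem~VI.1]{buntelapidoth14}. Your supplementary H\"older route is not in the paper but is a harmless bonus.
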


We conclude this section by showing that for every stochastic task-encoder there exists a deterministic task-encoder that performs at least as well. Given a stochastic task-encoder \eqref{eq:stochasticEncoderDef} with associated decoding lists \eqref{eq:stochasticListDef}, we can construct a deterministic task-encoder $\enc {\cdot | Y }$ as follows. If $(x,y) \in \setX \times \setY$ satisfies $P_{X|Y} ( x | y ) > 0$, then we choose $\enc { x | y }$ as one that---among all elements of $\{ z \in \setZ \colon x \in \setL^y_z \}$---minimizes $| \setL^y_z |$, so
\begin{equation}\label{eq:constructedDetTaskEnc}
\enc { x | y } \in \argmin_{z \in \setZ \colon x \in \setL^y_z} | \setL^y_z |.
\end{equation}
Otherwise, we choose $\enc { x | y }$ to be an arbitrary element of $\setZ$. It then follows from \eqref{eq:deterministicListDef} that the deterministic task-encoder performs at least as well as the stochastic task-encoder:
\begin{IEEEeqnarray}{l}
\BigEx {}{\bigl| \setL^Y_Z \bigr|^\rho} \nonumber \\
\quad = \sum_{(x,y) \in \setX \times \setY} \sum_{z \in \setZ} P_{X,Y} (x,y) \, \distof {Z = z | X = x, Y = y} | \setL^y_z |^\rho \\
\quad \geq \sum_{\substack{ (x,y) \in \setX \times \setY \colon \\ P_{X,Y} (x,y) > 0 }} \sum_{z \in \setZ} P_{X,Y} (x,y) \, \distof {Z = z | X = x, Y = y} \min_{z^\prime \in \setZ \colon x \in \setL^y_{z^\prime}} | \setL^y_{z^\prime} |^\rho \\
\quad = \sum_{\substack{ (x,y) \in \setX \times \setY \colon \\ P_{X,Y} (x,y) > 0 }} P_{X,Y} ( x,y ) \min_{z^\prime \in \setZ \colon x \in \setL^y_{z^\prime}} | \setL^y_{z^\prime} |^\rho \\
\quad = \sum_{(x,y) \in \setX \times \setY} P_{X,Y} ( x,y ) \, \bigl| \setL^y_{\enc { x | y }} \bigr|^\rho \\
\quad \stackrel{(a)}\geq \sum_{(x,y) \in \setX \times \setY} P_{X,Y} ( x,y ) \, \bigl| \bigdec { \enc { x | y } \bigl| y } \bigr|^\rho \\
\quad = \BigEx {}{\bigl| \bigdec { \enc { X | Y } \bigl| Y } \bigr|^\rho}, \label{eq:stochEncToDetEnc}
\end{IEEEeqnarray}
where $(a)$ holds because \eqref{eq:deterministicListDef} and \eqref{eq:constructedDetTaskEnc} imply that $\bigdec { \enc { x | y } | y } \subseteq \setL^y_{ \enc { x | y } }$.

\section{Lists and Guesses}\label{sec:listsAndGuesses}

In this section we relate task-encoders to guessing functions and explain why the performance guarantees for optimal guessing functions (Theorem~\ref{th:optGuessFun}) and task-encoders (Theorem~\ref{th:optTaskEnc}) are remarkably similar. Moreover, we quantify how additional side information can help guessing. We shall need these results to characterize the secrecy of the distributed-storage systems we study in the present paper, but they may also be of independent interest.

We start by quantifying how some additional information $Z$ (e.g., some description produced by an encoder) can help guessing. As the following lemma shows, $Z$ can reduce the $\rho$-th moment of the number of guesses by at most a factor of $\card \setZ^{-\rho}$:

\begin{lemma}\label{le:ImproveGuess}
Given a finite set $\setZ$, draw $Z$ from $\setZ$ according to some conditional PMF $P_{Z|X,Y}$, so $(X,Y,Z) \sim P_{X,Y} \times P_{Z|X,Y}$. For optimal guessing functions $\guessast {}{\cdot|Y,Z}$ and $\guessast {}{\cdot|Y}$ (which minimize $\bigEx {}{\guess {}{ X | Y, Z }^\rho}$ and $\bigEx {}{\guess {}{ X | Y }^\rho}$, respectively)
\begin{equation}
\bigEx {}{\guessast {}{X|Y,Z}^\rho} \geq \BigEx {}{\bigl\lceil \guessast {}{X | Y} / \card \setZ \bigr\rceil^\rho}. \label{eq:impGuess}
\end{equation}
Equality holds whenever $Z = f ( X,Y )$ for some mapping $f \colon \setX \times \setY \rightarrow \setZ$ for which $f ( x,y ) = f ( \tilde x, y )$ implies either $\bigl\lceil \guessast {}{ x | y } / | \setZ | \bigr\rceil \neq \bigl\lceil \guessast {}{ \tilde x | y } / | \setZ | \bigr\rceil$ or $x = \tilde x$. Such a mapping always exists, because for all $l \in \naturals$ at most $| \setZ |$ different $x \in \setX$ satisfy $\bigl\lceil \guessast {}{ x | y } / | \setZ | \bigr\rceil = l$.
\end{lemma}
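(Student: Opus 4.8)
The plan is to fix a realization $Y=y$ and reduce everything to a statement about a single guessing function on $\setX$; the side information $Y$ only enters through conditioning, so it suffices to prove the bound for each $y$ and then average with weight $P_Y(y)$. For a fixed $y$, let $g(\cdot) = \guessast{}{\cdot|y}$ be the optimal (posterior-decreasing) guessing order for $X$ given $Y=y$, so that $g$ is a bijection from the support of $P_{X|Y}(\cdot|y)$ onto $\{1,\ldots,m_y\}$ with $P_{X|Y}(x|y)$ nonincreasing in $g(x)$. The inequality \eqref{eq:impGuess} then amounts to: for \emph{any} stochastic map producing $Z$ from $(X,y)$, the optimal guessing function $\guessast{}{\cdot|y,z}$ satisfies $\sum_{z}\distof{Z=z|Y=y}\,\bigEx{}{\guessast{}{X|y,z}^\rho \mid Z=z} \ge \bigEx{}{\lceil g(X)/|\setZ|\rceil^\rho \mid Y=y}$.

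First I would prove the \textbf{lower bound}. The key structural fact is a counting/pigeonhole observation: for any value $z$ and any guessing function $h_z(\cdot)=\guessast{}{\cdot|y,z}$ on $\setX$, and for each rank $r\in\naturals$, the set $\{x : h_z(x)=r\}$ has at most one element per $z$, but more to the point, across \emph{all} $z\in\setZ$ the preimages $\{x : h_z(x)\le r\}$ can cover at most $|\setZ|\cdot r$ distinct elements of $\setX$. Hence for each $x$, if we sort the $x$'s by the optimal order $g$, the element $x$ with $g(x)=k$ cannot have $h_Z(x)\le \lceil k/|\setZ|\rceil - 1$ for more than\ldots—more carefully: the number of $x$ with $h_Z(x)\le j$ (for a fixed $Z=z$) is at most $j$, so the number of pairs $(x,z)$ with $h_z(x)\le j$ is at most $j|\setZ|$; therefore, viewing $\guessast{}{X|y,Z}$ as a random variable, for each threshold $j$ we have $\Prob[\guessast{}{X|y,Z}\le j \mid Y=y]\le \Prob[g(X)\le j|\setZ| \mid Y=y]$. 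This is exactly a stochastic-dominance statement: $\guessast{}{X|y,Z}$ stochastically dominates $\lceil g(X)/|\setZ|\rceil$ conditioned on $Y=y$. Taking $\rho$-th moments of a nondecreasing function preserves the inequality, and averaging over $y$ with weights $P_Y(y)$ gives \eqref{eq:impGuess}. Establishing the dominance inequality $\Prob[\guessast{}{X|y,Z}\le j]\le \Prob[g(X)\le j|\setZ|]$ rigorously—i.e.\ that the best possible joint use of $Z$ is to partition the top $j|\setZ|$ posterior-ranked symbols into $|\setZ|$ bins of size $j$, one per value of $Z$—is the crux and the main obstacle; it requires arguing that no stochastic assignment of $Z$ can do better than this idealized deterministic partition, which follows because the event $\{\guessast{}{X|y,Z}\le j\}$ decomposes over $z$ as a disjoint union of events each of probability at most that of picking $j$ symbols, and the total posterior mass captured is maximized by taking the $j|\setZ|$ largest posteriors.

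Next I would verify the \textbf{equality condition}. Suppose $Z=f(X,Y)$ with $f$ as described: within each $y$, $f(\cdot,y)$ separates any two symbols that share a value of $\lceil g(\cdot)/|\setZ|\rceil$. Then for a fixed $(y,z)$, the symbols $x$ with $f(x,y)=z$ have \emph{distinct} values of $\lceil g(x)/|\setZ|\rceil$, and since posteriors are monotone in $g$, ordering these symbols by their posteriors given $(Y,Z)=(y,z)$ is the same as ordering them by $g$, which is the same as ordering them by $\lceil g(x)/|\setZ|\rceil$. So the optimal guessing function given $(y,z)$ assigns to such an $x$ the rank equal to the number of $x'$ with $f(x',y)=z$ and $\lceil g(x')/|\setZ|\rceil \le \lceil g(x)/|\setZ|\rceil$, which is at most $\lceil g(x)/|\setZ|\rceil$; I would then check it is \emph{exactly} $\lceil g(x)/|\setZ|\rceil$ for the $f$ that realizes the worst case—e.g.\ the canonical choice $f(x,y) = \bigl(g(x)-1 \bmod |\setZ|\bigr)+1$—because then each bin $\{x : \lceil g(x)/|\setZ|\rceil = l\}$ of size at most $|\setZ|$ is split so that for each residue class exactly one symbol lands in each value of $Z$, making the within-$z$ rank of the bin-$l$ symbol equal to $l$. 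Finally, the existence claim ("at most $|\setZ|$ different $x$ satisfy $\lceil g(x)/|\setZ|\rceil = l$") is immediate: $\lceil k/|\setZ|\rceil = l$ iff $k \in \{(l-1)|\setZ|+1,\ldots,l|\setZ|\}$, an interval of length $|\setZ|$, and $g$ is injective—so the residue-based $f$ above is well defined and injective on each bin, which is all that is required. I expect the lower-bound dominance step to be the only part needing genuine care; the equality and existence parts are bookkeeping once the ceiling-interval structure is spelled out.
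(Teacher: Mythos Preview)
Your proposal is correct and takes a genuinely different route from the paper. The paper first reduces to deterministic $Z$ by replacing any stochastic $Z$ with $g(x,y)\in\argmin_{z}\guessast{}{x|y,z}$ (which can only lower the moment), and then argues by direct minimization: since for each $y$ and each rank $\ell$ at most $|\setZ|$ elements $x$ can satisfy $\guess{}{x|y,g(x,y)}=\ell$, the minimum of $\sum_{x}P_{X|Y}(x|y)\,\guess{}{x|y,g(x,y)}^{\rho}$ is attained precisely by assigning rank $\ell$ to the $x$'s with $\lceil g(x)/|\setZ|\rceil=\ell$. Your stochastic-dominance route is more direct because it handles arbitrary $P_{Z|X,Y}$ in one stroke, and the step you flag as needing care has a clean justification that you essentially already stated: with $T_{z}=\{x:\guessast{}{x|y,z}\le j\}$ and $T=\bigcup_{z}T_{z}$, the covering bound $|T|\le j|\setZ|$ together with the trivial inclusion $\{X\in T_{Z}\}\subseteq\{X\in T\}$ gives
\[
\Prob\bigl[\guessast{}{X|y,Z}\le j\bigr]=\Prob[X\in T_{Z}]\le\Prob[X\in T]\le\Prob\bigl[g(X)\le j|\setZ|\bigr],
\]
the last inequality because among sets of size at most $j|\setZ|$ the top-$j|\setZ|$ posterior set has maximal mass. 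For the equality part, note that for \emph{any} $f$ with the stated separation property (not just the canonical residue map) the within-$(y,z)$ rank of $x$ is at most $\lceil g(x)/|\setZ|\rceil$, which combined with your lower bound already forces equality in expectation; the pointwise equality you verify for the canonical $f$ is stronger than needed.
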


\begin{proof}
To prove \eqref{eq:impGuess} we first show that $$\bigEx {}{\guessast {}{X|Y,Z}^\rho}$$ is minimum if $Z$ is deterministic given $(X,Y)$. Indeed, define the function $g \colon \setX \times \setY \rightarrow \setZ$ so that $g ( x, y ) \in \argmin_{z \in \setZ} \guessast {}{ x | y, z }$ holds for all $( x,y ) \in \setX \times \setY$. This implies that
\begin{IEEEeqnarray}{rCl}
\bigguessast {}{X|Y,Z} \geq \bigguessast {}{X|Y,g ( X,Y )}
\end{IEEEeqnarray}
and consequently that
\begin{IEEEeqnarray}{rCl}
\bigEx {}{\guessast {}{X|Y,Z}^\rho} & \geq & \min_{\guess {}{ \cdot | Y }} \BigEx {}{\bigguess {}{X|Y,g ( X,Y )}^\rho}. 
\end{IEEEeqnarray}
It thus suffices to prove \eqref{eq:impGuess} for the case where $Z$ is deterministic given $(X,Y)$, and we thus assume w.l.g.\ that $Z = g ( X,Y )$ for some function $g \colon \setX \times \setY \rightarrow \setZ$. For every guessing function $\bigguess {}{\cdot | Y, g ( X,Y )}$ we have
\begin{IEEEeqnarray}{l}
\BigEx {}{\bigguess {}{X|Y,g(X,Y)}^\rho} = \sum_{(x,y) \in \setX \times \setY} P_{X,Y} ( x,y ) \, \bigguess {}{x|y, g ( x, y )}^\rho. \label{eq:expToMinImpGuess}
\end{IEEEeqnarray}
Moreover, for every distinct $x, \, \tilde x \in \setX$ and every $y \in \setY$ the equality 
\be \nonumber
\bigguess {}{x | y, g ( x,y )} = \bigguess {}{\tilde x | y, g ( \tilde x,y )}
\ee
implies that $g ( x,y ) \neq g ( \tilde x,y )$, because $\guess {}{\cdot | y, z} \colon \setX \rightarrow \bigl\{ 1, \ldots, |\setX| \bigr\}$ is for every $z \in \setZ$ one-to-one. Consequently, for every $\ell \in \naturals$ there are at most $| \setZ |$ different $x \in \setX$ for which $\bigguess {}{x|y,g ( x,y )} = \ell$. For every $y \in \setY$ order the possible realizations of $X$ in decreasing order of $P_{X,Y} ( x,y )$ or, equivalently, in decreasing order of their posterior probabilities given $Y = y$, and let $x_j^y$ denote the $j$-th element. Clearly, \eqref{eq:expToMinImpGuess} is minimum over $g ( \cdot, \cdot )$ and $\bigguess {}{\cdot | Y, g ( X,Y )}$ if for every $\ell \in \naturals$ and every $y \in \setY$ we have $\bigguess {}{x|y, g ( x, y )} = \ell$ whenever $x = x_j^y$ for some $j$ satisfying $(\ell - 1) | \setZ | + 1 \leq j \leq \ell | \setZ |$ or, equivalently, $\bigl\lceil j / | \setZ | \bigr\rceil = \ell$. Since $\guessast {}{\cdot | Y}$ minimizes $\bigEx {}{\guess {}{X | Y}^\rho}$, it orders the elements of $\setX$ in decreasing order of their posterior probabilities given $Y$, and consequently we can choose $x^y_j$ to be the unique $x \in \setX$ for which $\guessast {}{x | y} = j$. Hence, \eqref{eq:expToMinImpGuess} is minimized if $f ( \cdot, \cdot )$ satisfies the specifications in the lemma, $g ( \cdot, \cdot ) = f ( \cdot, \cdot )$, and $\bigguess {}{x|y, f ( x, y )} = \bigl\lceil \guessast {}{x|y} / | \setZ | \bigr\rceil$ (see Figure~\ref{fig:optimalEncGuessing}). Moreover, the minimum equals the RHS of \eqref{eq:impGuess}.
\end{proof}

One can infer from Lemma~\ref{le:ImproveGuess} how to construct an optimal encoder $f \colon \setX \times \setY \rightarrow \setZ$ for a guessing decoder, i.e., an encoder $Z = f ( X,Y )$ that minimizes $\min_{\guess {}{\cdot |Y,Z}} \bigEx {}{\guess {}{ X | Y, Z }^\rho}$ among all the possible descriptions $Z$ that are drawn from $\setZ$ according to some conditional PMF $P_{Z|X,Y}$. To that end recall that a guessing function $\guess {}{\cdot | Y}$ is optimal, i.e., minimizes $\bigEx {}{\guess {}{ X | Y }^\rho}$, iff for every $y \in \setY$ $\guess {}{ \cdot | y }$ orders the possible realizations of $X$ in decreasing order of their posterior probabilities given $Y = y$. An optimal encoder $f \colon \setX \times \setY \rightarrow \setZ$ for a guessing decoder can be constructed as follows: For every $y \in \setY$ we first order the possible realizations of $X$ in decreasing order of $P_{X,Y} ( x,y )$ or, equivalently, in decreasing order of their posterior probabilities given $Y = y$, and we let $x_j^y$ denote the $j$-th element. (Ties are resolved at will.) We then choose some mapping $f \colon \setX \times \setY \rightarrow \setZ$ for which $f ( x_j^y, y ) = f ( x_{j^\prime}^y, y )$ implies either $\bigl\lceil j / | \setZ | \bigr\rceil \neq \bigl\lceil j^\prime / | \setZ | \bigr\rceil$ or $j = j^\prime$, e.g., by indexing the elements of $\setZ$ by the elements of $\bigl\{ 0, \ldots, |\setZ| - 1 \bigr\}$ and choosing $f (  x_j^y, y )$ as the element of $\setZ$ indexed by the remainder of the Euclidean division of $j - 1$ by $| \setZ |$ (see Figure~\ref{fig:optimalEncGuessing}).

\begin{figure}[ht]
\vspace{-2mm}

\begin{center}
\def\pgfsysdriver{pgfsys-dvipdfm.def}
\begin{tikzpicture}

\path (0.4,3.6) node (prob) [] {\rotatebox{60}{$P(\cdot|y)$}};
\path (1,3.64) node (guess) [] {\rotatebox{60}{$\guessast {}{\cdot|y}$}};
\path (1.65,3.9) node (zed) [] {\rotatebox{60}{$z = f (\cdot,y)$}};
\path (2.15,3.8) node (zed) [] {\rotatebox{60}{$\guessast {}{\cdot|y,z}$}};

  \foreach \y in {0,0.5,...,2.5} {
      \foreach \x in {0,0.5} {
          \pgfmathparse{\y/2.5*(\x==0.5)+(1-\y/2.5)*(\x==0)}
          \definecolor{MyColor}{gray}{\pgfmathresult}
          \pgfmathparse{2*\x}
          \ifthenelse{1 = \pgfmathresult}{\path[fill=MyColor, opacity = 0.5] (\x,\y) rectangle node [opacity = 1] {\pgfmathparse{int(6-2*\y)}\pgfmathresult} ++(0.5,0.5)}{\path[fill=MyColor, opacity = 0.5] (\x,\y) rectangle ++(0.5,0.5)};
      }
  }
  
  \foreach \y in {0,0.5,...,2.5} {
      \foreach \x in {1} {
          \definecolor{MyColor}{gray}{1}
          \pgfmathparse{2*\y}
          \ifthenelse{5 = \pgfmathresult \OR 2 = \pgfmathresult}{\path[fill=MyColor, opacity = 0.5] (\x,\y) rectangle node [opacity = 1] {$\star$} ++(0.5,0.5)}{};
          \ifthenelse{4 = \pgfmathresult \OR 1 = \pgfmathresult}{\path[fill=MyColor, opacity = 0.5] (\x,\y) rectangle node [opacity = 1] {$\bullet$} ++(0.5,0.5)}{};
          \ifthenelse{3 = \pgfmathresult \OR 0 = \pgfmathresult}{\path[fill=MyColor, opacity = 0.5] (\x,\y) rectangle node [opacity = 1] {$\diamond$} ++(0.5,0.5)}{};
      }
  }
  
  \foreach \y in {0,0.5,...,2.5} {
      \foreach \x in {1.5} {
          \definecolor{MyColor}{gray}{1}
          \definecolor{MyOtherColor}{gray}{0.8}
          \pgfmathparse{3-int(1+\y/1.5)}
          \ifthenelse{1 = \pgfmathresult}{\path[fill=MyColor, opacity = 0.5] (\x,\y) rectangle node [opacity = 1] {\pgfmathparse{int(3-int(1+\y/1.5))}\pgfmathresult} ++(0.5,0.5)}{\path[fill=MyOtherColor, opacity = 0.5] (\x,\y) rectangle node [opacity = 1] {\pgfmathparse{int(3-int(1+\y/1.5))}\pgfmathresult} ++(0.5,0.5)};
      }
  }

  \draw[step=0.5,thick] (0,0) grid (2,3);
  
  \draw[step=0.5,thick] (0,0) grid (1,3); 
  \path (-0.3,3) -- node [] {\rotatebox{90}{$x \in \setX$}} (-0.3,0);
\end{tikzpicture}

\caption[Constructing an optimal encoder for a guessing decoder]{How to construct an optimal encoder $f \colon \setX \times \setY \rightarrow \setZ$ for a guessing decoder when $\setZ = \{ \star, \bullet, \diamond \}$. Light background tones indicate small values of $P (\cdot | y)$ or $\guessast {}{\cdot | y}$. }
\label{fig:optimalEncGuessing}
\end{center}
\vspace{-2mm}

\end{figure}

Lemma~\ref{le:ImproveGuess} and \eqref{eq:ceilApprox} imply the following corollary:

\begin{corollary}\label{co:impGuess}
Given a finite set $\setZ$, there exists some mapping $f \colon \setX \times \setY \rightarrow \setZ$ such that
\begin{equation}
\min_{\guess {}{\cdot |Y,Z}} \bigEx {}{\guess {}{X|Y,Z}^\rho} < 1 + 2^{\rho} | \setZ |^{-\rho} \min_{\guess {}{\cdot | Y}} \bigEx {}{\guess {}{X|Y}^\rho}, \label{eq:lbGuessSI}
\end{equation}
where $Z$ denotes $f ( X,Y )$. Conversely, for every chance variable $Z$ that takes values in $\setZ$
\begin{equation}
\min_{\guess {}{\cdot |Y,Z}} \bigEx {}{\guess {}{X|Y,Z}^\rho} \geq | \setZ |^{-\rho} \min_{\guess {}{\cdot | Y}} \bigEx {}{\guess {}{X|Y}^\rho} \vee 1. \label{eq:ubGuessSI}
\end{equation}
\end{corollary}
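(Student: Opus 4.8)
The plan is to read both parts of the corollary directly off Lemma~\ref{le:ImproveGuess}, using the elementary estimate~\eqref{eq:ceilApprox} to trade the ceiling for a product.

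For the direct part, I would take $f\colon\setX\times\setY\to\setZ$ to be the mapping furnished by the equality clause of Lemma~\ref{le:ImproveGuess} (the lemma itself guarantees that such an $f$ exists) and set $Z=f(X,Y)$. That clause gives $\min_{\guess{}{\cdot|Y,Z}}\bigEx{}{\guess{}{X|Y,Z}^\rho}=\BigEx{}{\lceil\guessast{}{X|Y}/|\setZ|\rceil^\rho}$. Applying~\eqref{eq:ceilApprox} pointwise with $\xi=\guessast{}{x|y}/|\setZ|$ (which lies in $\reals^+_0$) and then averaging over $(X,Y)\sim P_{X,Y}$, the right-hand side is strictly less than $1+2^\rho|\setZ|^{-\rho}\bigEx{}{\guessast{}{X|Y}^\rho}=1+2^\rho|\setZ|^{-\rho}\min_{\guess{}{\cdot|Y}}\bigEx{}{\guess{}{X|Y}^\rho}$; the strictness survives the expectation because the difference of the two sides is positive at every $(x,y)$ and $P_{X,Y}$ is a probability measure. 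This is~\eqref{eq:lbGuessSI}.

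For the converse, let $Z$ be any chance variable on $\setZ$; its joint law with $(X,Y)$ induces a conditional PMF $P_{Z|X,Y}$, so Lemma~\ref{le:ImproveGuess} applies and yields $\min_{\guess{}{\cdot|Y,Z}}\bigEx{}{\guess{}{X|Y,Z}^\rho}\ge\BigEx{}{\lceil\guessast{}{X|Y}/|\setZ|\rceil^\rho}$. Dropping the ceiling via $\lceil a\rceil\ge a$ bounds this below by $|\setZ|^{-\rho}\bigEx{}{\guessast{}{X|Y}^\rho}=|\setZ|^{-\rho}\min_{\guess{}{\cdot|Y}}\bigEx{}{\guess{}{X|Y}^\rho}$. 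For the ``$\vee 1$'', note that for each $(y,z)$ the map $\guess{}{\cdot|y,z}$ is a bijection onto $\{1,\dots,|\setX|\}$, so $\guess{}{X|Y,Z}\ge 1$ and hence $\bigEx{}{\guess{}{X|Y,Z}^\rho}\ge 1$. Taking the larger of the two lower bounds gives~\eqref{eq:ubGuessSI}.

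There is no real obstacle here: the corollary merely repackages Lemma~\ref{le:ImproveGuess}. The only points worth a moment's attention are that the hypothesis $\xi\in\reals^+_0$ of~\eqref{eq:ceilApprox} is indeed met (here $\guessast{}{x|y}/|\setZ|\ge 1/|\setZ|>0$), and that a pointwise strict inequality integrates to a strict inequality against a probability measure.
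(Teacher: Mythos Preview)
Your proposal is correct and is precisely the argument the paper has in mind: the paper simply states that the corollary follows from Lemma~\ref{le:ImproveGuess} and~\eqref{eq:ceilApprox}, and your write-up fills in exactly those steps (equality clause of the lemma plus~\eqref{eq:ceilApprox} for the direct part, the lemma's inequality plus $\lceil a\rceil\ge a$ and the trivial bound $\guess{}{X|Y,Z}\ge 1$ for the converse).
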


From Corollary~\ref{co:impGuess} and Theorem~\ref{th:optGuessFun}, which characterizes the performance of optimal guessing functions $\guess {}{\cdot | Y}$, we obtain the following upper and lower bounds on the smallest ambiguity $\min_{\guess {}{\cdot | Y, Z}} \bigEx {}{\guess {}{X|Y,Z}^\rho}$ that is achievable for a given $|\setZ|$. The bounds are tight up to polylogarithmic factors of $|\setX|$.

\begin{corollary}\label{co:equivBunteResultGuessing}
Given a finite set $\setZ$, there exists some mapping $f \colon \setX \times \setY \rightarrow \setZ$ for which
\begin{equation}
\min_{\guess {}{\cdot | Y, Z}} \bigEx {}{\guess {}{X|Y,Z}^\rho} < 1 + 2^{\rho ( \renent {\tirho}{X|Y} - \log | \setZ | + 1 )}, \label{eq:lbGuessSIExp}
\end{equation}
where $Z$ denotes $f ( X,Y )$. Conversely, for every chance variable $Z$ that takes values in $\setZ$
\begin{equation}
\min_{\guess {}{\cdot | Y, Z}} \bigEx {}{\guess {}{X|Y,Z}^\rho} \geq \bigl( 1 + \ln \card \setX \bigr)^{-\rho} 2^{\rho ( \renent {\tirho}{X|Y} - \log \card \setZ )} \vee 1. \label{eq:ubGuessSIExp}
\end{equation}
\end{corollary}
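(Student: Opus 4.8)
The plan is to obtain both inequalities by simply chaining the two results already in hand: Corollary~\ref{co:impGuess}, which bounds the gain afforded by the description $Z$ relative to guessing from $Y$ alone, and Theorem~\ref{th:optGuessFun}, which expresses the performance of the optimal guessing function $\guess{}{\cdot|Y}$ in terms of $\renent{\tirho}{X|Y}$. No new idea is needed; the content is merely that the $\log\card\setZ$ penalty appearing in Corollary~\ref{co:impGuess} becomes, after substituting the entropic characterization of Theorem~\ref{th:optGuessFun}, the claimed shift of the R\'enyi-entropy exponent by $-\log\card\setZ$.

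For the direct part I would take the mapping $f\colon\setX\times\setY\to\setZ$ furnished by Corollary~\ref{co:impGuess} and put $Z=f(X,Y)$. Inequality \eqref{eq:lbGuessSI} then gives
\[
\min_{\guess{}{\cdot|Y,Z}}\bigEx{}{\guess{}{X|Y,Z}^\rho}<1+2^{\rho}\,\card\setZ^{-\rho}\min_{\guess{}{\cdot|Y}}\bigEx{}{\guess{}{X|Y}^\rho},
\]
and the direct part of Theorem~\ref{th:optGuessFun}, namely $\min_{\guess{}{\cdot|Y}}\bigEx{}{\guess{}{X|Y}^\rho}\le 2^{\rho\renent{\tirho}{X|Y}}$, can be plugged in. Writing $2^{\rho}\card\setZ^{-\rho}=2^{\rho(1-\log\card\setZ)}$ and collecting exponents yields exactly \eqref{eq:lbGuessSIExp}.

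For the converse I would start from \eqref{eq:ubGuessSI}, which holds for an arbitrary $Z$ taking values in $\setZ$,
\[
\min_{\guess{}{\cdot|Y,Z}}\bigEx{}{\guess{}{X|Y,Z}^\rho}\ge\card\setZ^{-\rho}\min_{\guess{}{\cdot|Y}}\bigEx{}{\guess{}{X|Y}^\rho}\vee 1,
\]
and substitute the converse part of Theorem~\ref{th:optGuessFun}, $\min_{\guess{}{\cdot|Y}}\bigEx{}{\guess{}{X|Y}^\rho}\ge\bigl(1+\ln\card\setX\bigr)^{-\rho}2^{\rho\renent{\tirho}{X|Y}}\vee 1\ge\bigl(1+\ln\card\setX\bigr)^{-\rho}2^{\rho\renent{\tirho}{X|Y}}$. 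Multiplying by $\card\setZ^{-\rho}$, collecting exponents, and then taking the maximum with $1$ produces \eqref{eq:ubGuessSIExp}.

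The only points that need a little care are bookkeeping ones: keeping the direction of every inequality straight when a lower (resp.\ upper) bound is inserted inside the $\card\setZ^{-\rho}$ factor, and handling the $\vee 1$ truncations—in the converse one may discard the inner $\vee 1$ before multiplying by $\card\setZ^{-\rho}$ (this only weakens the bound) and reinstate the outer $\vee 1$ at the very end, which is legitimate since a $\rho$-th moment of a random variable valued in $\naturals$ is at least $1$. I do not expect any real obstacle; the corollary is essentially Corollary~\ref{co:impGuess} re-expressed through the entropic parametrization of Theorem~\ref{th:optGuessFun}, and its role is to present the side-information guessing bounds in the same form as Bunte and Lapidoth's task-encoding bounds in Theorem~\ref{th:optTaskEnc}, thereby making the announced parallel between the two problems visible.
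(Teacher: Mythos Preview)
Your proposal is correct and matches the paper's approach exactly: the paper states just before Corollary~\ref{co:equivBunteResultGuessing} that it follows from Corollary~\ref{co:impGuess} combined with Theorem~\ref{th:optGuessFun}, which is precisely the chaining you describe. The paper also remarks afterwards that the converse \eqref{eq:ubGuessSIExp} can alternatively be obtained from \eqref{eq:lbOptGuessFun} together with Lemmas~\ref{le:uncertaintyIncreasesRenEnt} and \ref{le:chainRuleRenEnt}, but your route via \eqref{eq:ubGuessSI} is the primary one the paper intends.
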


Note that \eqref{eq:ubGuessSIExp} also follows from~\eqref{eq:lbOptGuessFun} in Theorem~\ref{th:optGuessFun} and the properties of conditional R\'enyi entropy in Lemmas~\ref{le:uncertaintyIncreasesRenEnt} and \ref{le:chainRuleRenEnt}.\\

The performance guarantees for optimal guessing functions (Theorem~\ref{th:optGuessFun} and Corollary~\ref{co:equivBunteResultGuessing}) and task-encoders (Theorem~\ref{th:optTaskEnc}) are remarkably similar. To provide some intuition on this, we relate task-encoders to guessing functions. As the following theorem shows, a ``good'' guessing function ``induces'' a ``good'' task-encoder and vice versa:\footnote{We call a guessing function or task-encoder ``good'' if its performance is nearly optimal, and ``induce'' means here that---without knowing the PMF $P_{X,Y}$---we can construct from a guessing function a task-encoder and vice versa.}

\begin{theorem}\label{th:relGuessEnc}
Let $\setZ$ be a finite set.
\begin{enumerate}
\item Given any stochastic task-encoder \eqref{eq:stochasticEncoderDef}, the associated decoding lists $\{ \setL^y_z \}$ \eqref{eq:stochasticListDef} induce a guessing function $\guess {}{\cdot | Y }$ that satisfies
\begin{equation}
\bigEx {}{\guess {} { X | Y }^\rho} \leq | \setZ |^\rho \BigEx {}{ \bigl| \setL^Y_Z \bigr|^\rho}. \label{eq:listToGuess}
\end{equation}
\item Every guessing function $\guess {}{ \cdot | Y }$ and every positive integer $\omega \leq \card \setX$ satisfying
\begin{equation}
| \setZ | \geq \omega \biggl( 1 + \Bigl\lfloor \log \bigl\lceil | \setX | / \omega \bigr\rceil \Bigr\rfloor \biggr) \label{eq:relCardMandV}
\end{equation}
induce a deterministic task-encoder, i.e., a stochastic task-encoder whose conditional PMF \eqref{eq:stochasticEncoderDef} is $\{ 0,1 \}$-valued, whose associated decoding lists $\{ \setL^y_z \}$~\eqref{eq:stochasticListDef} satisfy
\begin{equation}
\BigEx {}{\bigl| \setL^Y_Z \bigr|^\rho} \leq \BigEx {}{\bigl\lceil \guess {} { X | Y } / \omega \bigr\rceil^\rho}. \label{eq:guessToList}
\end{equation}
\end{enumerate}
\end{theorem}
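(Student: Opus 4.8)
The plan is to prove both parts by writing down the required mappings explicitly from the combinatorial data at hand---the family of decoding lists for part~1, the guessing order for part~2---without ever consulting $P_{X,Y}$, as the word ``induce'' demands.

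\emph{Part 1.} Work with a fixed $y$. Every $x$ with $P_{X,Y}(x,y)>0$ satisfies $\distof{X=x|Y=y,Z=z}>0$ for at least one $z$, and for such a $z$ one has $x\in\setL^y_z$; I would pick $z^\star(x,y)\in\argmin_{z\colon x\in\setL^y_z}\card{\setL^y_z}$. The sets $B^y_z\define\{x\colon P_{X,Y}(x,y)>0,\ z^\star(x,y)=z\}$ then partition the support of $P_{X|Y}(\cdot|y)$ and obey $B^y_z\subseteq\setL^y_z$. I would build $\guess{}{\cdot|y}$ \emph{round-robin}: fix arbitrary orders on $\setZ$ and on each $B^y_z$, and in round $j$ ($j=1,2,\ldots$) hand out the next guess indices to the $j$-th elements of those $B^y_z$ that still have one; the leftover indices go to the zero-posterior realizations, so that $\guess{}{\cdot|y}$ is a bijection. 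A round spends at most $\card\setZ$ indices, so if $x$ occupies position $\ell$ of $B^y_{z^\star(x,y)}$ then $\guess{}{x|y}\le\ell\,\card\setZ\le\card\setZ\,\card{\setL^y_{z^\star(x,y)}}$, since $\ell\le\card{B^y_{z^\star(x,y)}}\le\card{\setL^y_{z^\star(x,y)}}$. Raising to the $\rho$, averaging over $(X,Y)$, and noting that for positive-probability $(x,y)$ every $z$ with $\distof{Z=z|X=x,Y=y}>0$ has $x\in\setL^y_z$ and hence $\card{\setL^y_z}\ge\card{\setL^y_{z^\star(x,y)}}$---so that $\card{\setL^y_{z^\star(x,y)}}^\rho\le\sum_z\distof{Z=z|X=x,Y=y}\,\card{\setL^y_z}^\rho$---delivers \eqref{eq:listToGuess}.

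\emph{Part 2.} Again fix $y$, and set $K\define\lceil\card\setX/\omega\rceil$. Order $\setX$ by $\guess{}{\cdot|y}$ and slice it into $K$ consecutive \emph{super-bins} of width $\omega$ (the last possibly shorter), so $x$ lands in super-bin $\lceil\guess{}{x|y}/\omega\rceil$. Partition a subset of $\setZ$ into $\omega$ \emph{tracks}, each split into $1+\lfloor\log K\rfloor$ \emph{slots}; the needed $\omega(1+\lfloor\log K\rfloor)$ descriptions exist by \eqref{eq:relCardMandV}, and any surplus descriptions are never used. Route the (at most $\omega$) realizations of each super-bin to distinct tracks; then a track carries at most one realization per super-bin, so listing its realizations in increasing super-bin order as positions $1,2,\ldots,m$ (with $m\le K$), the realization at position $\ell$ lies in a super-bin of index $\ge\ell$. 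Finally, within a track colour position $\ell$ with slot $1+\lfloor\log\ell\rfloor$; slot $i$ then receives only positions in $\{2^{i-1},\ldots,2^i-1\}$, at most $2^{i-1}$ of them, and $1+\lfloor\log m\rfloor\le 1+\lfloor\log K\rfloor$ slots suffice. This describes a deterministic $\enc{\cdot|y}$ (carried out for every $y$); its decoding list for $x$'s description consists of the positive-posterior realizations sharing $x$'s track and slot, of which there are at most $2^{i-1}\le\ell\le\lceil\guess{}{x|y}/\omega\rceil$, where $i=1+\lfloor\log\ell\rfloor$ and $\ell$ is $x$'s position. Thus $\card{\setL^y_{\enc{x|y}}}\le\lceil\guess{}{x|y}/\omega\rceil$, and raising to the $\rho$ and averaging yields \eqref{eq:guessToList}.

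The averaging and the bijection bookkeeping are routine. The real obstacle is the design in part~2: one has to arrange the super-bin/track/slot decomposition so that the dyadic colouring of each track fits into $1+\lfloor\log K\rfloor$ slots \emph{and} simultaneously keeps every decoding list no larger than the super-bin index of its members. It is exactly the tension between these two demands that fixes the cardinality requirement \eqref{eq:relCardMandV}; the governing arithmetic fact is that $1+\lfloor\log m\rfloor$ dyadic blocks $\{2^{i-1},\ldots,2^i-1\}$ suffice to cover $\{1,\ldots,m\}$, which is where the term $\lfloor\log\lceil\card\setX/\omega\rceil\rfloor$ originates.
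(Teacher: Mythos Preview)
Your proof is correct and follows essentially the same approach as the paper's: in Part~1 you build the guessing order from the decoding lists so that $G(x|y)\le |\setZ|\cdot\min_{z:x\in\setL^y_z}|\setL^y_z|$ (the paper enumerates the lists in increasing-size order rather than round-robin through the $B^y_z$, but the bound and the subsequent averaging are identical), and in Part~2 your track/slot/super-bin decomposition is exactly the paper's two-step $(O,S)$ encoding with $O$ the residue of $G(X|Y)-1$ modulo $\omega$ and $S=\lfloor\log\lceil G(X|Y)/\omega\rceil\rfloor$, the dyadic bound being precisely the paper's Fact~1.
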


To prove Theorem~\ref{th:relGuessEnc}, we need the following fact:

\begin{fact}\label{fa:1}
For every $k \in \naturals$
\begin{equation}
\bigl| \bigr\{ \tilde k \in \naturals \colon \lfloor \log \tilde k \rfloor  = \lfloor \log k \rfloor  \bigr\} \bigl| \leq k.
\end{equation}
\end{fact}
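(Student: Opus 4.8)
The plan is to make the dyadic block structure of the floor-of-log function explicit. Recall that $\log$ denotes the base-$2$ logarithm. First I would fix $k \in \naturals$ and set $m := \lfloor \log k \rfloor$, so that $m$ is a nonnegative integer and, by definition of the floor,
\[
2^m \leq k < 2^{m+1}.
\]

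Next I would identify the set in question. For any $\tilde k \in \naturals$ the condition $\lfloor \log \tilde k \rfloor = m$ is equivalent to $m \leq \log \tilde k < m+1$, i.e.\ to $2^m \leq \tilde k < 2^{m+1}$. Hence
\[
\bigl\{ \tilde k \in \naturals \colon \lfloor \log \tilde k \rfloor = \lfloor \log k \rfloor \bigr\} = \bigl\{ 2^m, 2^m + 1, \ldots, 2^{m+1} - 1 \bigr\},
\]
a set of exactly $2^{m+1} - 2^m = 2^m$ elements. Combining this count with the inequality $2^m \leq k$ from the definition of $m$ gives
\[
\bigl| \bigl\{ \tilde k \in \naturals \colon \lfloor \log \tilde k \rfloor = \lfloor \log k \rfloor \bigr\} \bigr| = 2^m \leq k,
\]
which is the claim.

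There is no real obstacle here: the only points requiring a moment's care are that the logarithm is taken to base $2$ and that the dyadic interval $[2^m, 2^{m+1})$ contains precisely $2^m$ positive integers; both are immediate. (If one prefers, the same argument shows the slightly stronger statement that the cardinality equals $2^{\lfloor \log k \rfloor}$, from which the bound $\leq k$ follows at once.)
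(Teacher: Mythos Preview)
Your proof is correct and follows essentially the same approach as the paper: identify the dyadic block $[2^{\lfloor \log k \rfloor}, 2^{\lfloor \log k \rfloor + 1})$, note it contains (at most) $2^{\lfloor \log k \rfloor}$ positive integers, and use $2^{\lfloor \log k \rfloor} \leq k$. The only cosmetic difference is that you compute the cardinality exactly as $2^m$, whereas the paper states it as an upper bound.
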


\begin{proof}[Proof of Fact~\ref{fa:1}]
If $k, \, \tilde k \in \naturals$ are such that $\lfloor \log \tilde k \rfloor  = \lfloor \log k \rfloor$, then
\begin{equation}
2^{\lfloor \log k \rfloor} \leq \tilde k < 2^{\lfloor \log k \rfloor + 1}.
\end{equation}
Hence,
\begin{equation}
\bigl| \bigl\{ \tilde k \in \naturals \colon \lfloor \log \tilde k \rfloor  = \lfloor \log k \rfloor \bigr\} \bigr| \leq 2^{\lfloor \log k \rfloor} \leq k.
\end{equation}
\end{proof}

\begin{proof}[Proof of Theorem~\ref{th:relGuessEnc}] As to the first part, suppose we are given a stochastic task-encoder \eqref{eq:stochasticEncoderDef} with associated decoding-lists $\{ \setL^y_z \}$ \eqref{eq:stochasticListDef}. For every $y \in \setY$ order the lists $\set{\setL^y_z}_{z \in \setZ}$ in increasing order of their cardinalities, and order the elements in each list in some arbitrary way. Now consider the guessing order where we first guess the elements of the first (and smallest) list in their respective order followed by those elements in the second list that have not yet been guessed (i.e., that are not contained in the first list), and where we continue until concluding by guessing those elements of the last (and longest) list that have not been previously guessed. Let $\guess {}{ \cdot | Y }$ be the corresponding guessing function, and observe that
\begin{IEEEeqnarray}{rCl}
\bigEx {}{\guess {}{X|Y}^\rho} & = & \sum_{x,y} P_{X,Y} ( x,y ) \, \bigl| \bigl\{ \tilde x \colon \guess {}{\tilde x|y} \leq \guess {}{x|y} \bigr\} \bigr|^\rho \\
& \stackrel{(a)}\leq & \sum_{x,y} P_{X,Y} ( x,y ) \, | \setZ |^\rho \min_{z \colon x \in \setL^y_z} \card {\setL^y_z}^\rho \\
& \stackrel{(b)}\leq & | \setZ |^\rho \, \BigEx {}{\bigl| \setL^Y_Z \bigr|^\rho},
\end{IEEEeqnarray}
where $(a)$ holds because for every $x, \, \tilde x \in \setX$ and $y \in \setY$ a necessary condition for $\guess {}{\tilde x|y} \leq \guess {}{x|y}$ is that $\tilde x \in \setL^y_{\tilde z}$ for some $\tilde z \in \setZ$ satisfying $$| \setL^y_{\tilde z} | \leq \min_{z \colon x \in \setL^y_z} | \setL^y_z |,$$ and because the number of lists whose size does not exceed $\min_{z \colon x \in \setL^y_z} | \setL^y_z |$ is at most $| \setZ |$; and $(b)$ holds because the list $\setL^Y_Z$ contains $X$ \eqref{eq:stochasticListDef}.\\

As to the second part, suppose we are given a guessing function $\guess {}{\cdot | Y }$ and a positive integer $\omega \leq \card \setX$ satisfying \eqref{eq:relCardMandV}. Let $\setO = \{ 0, \ldots, \omega - 1 \}$ and $$\setS = \biggl\{ 0, \ldots, \Bigl\lfloor \log \bigl\lceil | \setX | / \omega \bigr\rceil \Bigr\rfloor \biggr\}.$$ From \eqref{eq:relCardMandV} it follows that $| \setZ | \geq | \setO | \, | \setS |$. It thus suffices to prove the existence of a task-encoder that uses only $| \setO | \, | \setS |$ possible descriptions, and we thus assume w.l.g.\ that $\setZ = \setO \times \setS$. That is, using the side-information $y$ the task-encoder (deterministically) describes $x$ by $z = (o,s)$. The encoding involves two steps:

\textbf{Step~1:} In Step~1 the encoder first computes $O \in \setO$ as the remainder of the Euclidean division of $\guess {}{X|Y} - 1$ by $\card \setO$. This guarantees that if $(Y,O) = (y,o)$, then $X$ be in the set $$\setX_{y,o} \triangleq \Bigl\{ x \in \setX \colon \bigl( \guess {} { x | y } - 1 \bigr) \equiv o \mod | \setO | \Bigr\}.$$ It then constructs from $\guess {}{ \cdot | Y }$ a guessing function $\guess {}{ \cdot | Y,O }$ as follows. The encoder constructs the guessing function $\guess {}{ \cdot | y,o }$ so that---in the corresponding guessing order---we first guess the elements of $\setX_{y,o}$ in increasing order of $\guess {}{ x | y }$. Our first $| \setX_{y,o} |$ guesses are thus the elements of $\setX_{y,o}$ with $x \in \setX_{y,o}$ being guessed before $\tilde x \in \setX_{y,o}$ whenever $\guess {}{ x | y } < \guess {}{ \tilde x | y }$. Once we have guessed all the elements of $\setX_{y,o}$, we guess the remaining elements of $\setX$ in some arbitrary order. This order is immaterial, because $X$ is guaranteed to be in the set $\setX_{y,o}$. As we argue next, the guessing function $\guess {}{\cdot | Y,O}$ for $X$ satisfies
\begin{equation} \label{eq:guessToListO}
\guess {} { X | Y, O } = \bigl\lceil \guess {} { X | Y } / | \setO | \bigr\rceil.
\end{equation}
Indeed, observe that for every $(y,o) \in \setY \times \setO$ and $l \in \bigl\{ 1, \ldots, | \setX_{y,o} | \bigr\}$ our $l$-th guess $x_l$ is the element of $\setX_{y,o}$ for which $\guess {}{ x_l | y } = o + 1 + ( l - 1 ) | \setO |$. Since $o + 1 \in \bigl\{ 1, \ldots, | \setO | \bigr\}$, we find that $\guess {}{ x | y,o } = \bigl\lceil \guess {}{ x | y } / | \setO | \bigr\rceil$ whenever $x \in \setX_{y,o}$. But $X$ is guaranteed to be in the set $\setX_{y,o}$. This proves that the guessing function $\guess {}{\cdot | Y,O }$ for $X$ satisfies \eqref{eq:guessToListO}. By \eqref{eq:guessToListO} and because $|\setO| = \omega$,
\begin{equation} \label{eq:guessToListOV}
\guess {} { X | Y, O } = \bigl\lceil \guess {} { X | Y } / \omega \bigr\rceil.
\end{equation}

\textbf{Step~2:} In Step~2 the encoder first computes $S = \bigl\lfloor \log \guess {}{X | Y, O} \bigr\rfloor$ and then describes $X$ by $Z \triangleq ( O,S )$. By \eqref{eq:guessToListOV} $$1 \leq \guess {}{ X | Y, O } \leq \bigl\lceil |\setX| / \omega \bigr\rceil$$ and consequently $S \in \setS$. Since $O$ and $S$ are deterministic given $( X,Y )$, the conditional PMF \eqref{eq:stochasticEncoderDef} corresponding to the description $Z = (O,S)$ is $\{ 0,1 \}$-valued. It remains to show that the decoding lists $\{ \setL^y_z \}$~\eqref{eq:stochasticListDef} satisfy \eqref{eq:guessToList}. To this end note that if $(Y,O,S) = (y,o,s)$, then $X$ is in the set $$\setX_{y,o,s} \triangleq \Bigl\{ x \in \setX \colon \bigl\lfloor \log \guess {}{ x | y, o } \bigr\rfloor = s \Bigr\}.$$ Because every pair $x, \, \tilde x \in \setX_{y,o,s}$ satisfies $\bigl\lfloor \log \guess {}{ x | y, o } \bigr\rfloor = \bigl\lfloor \log \guess {}{ \tilde x | y, o } \bigr\rfloor$, Fact~\ref{fa:1} and the fact that the guessing function $\guess {}{\cdot | y,o }$ is a bijection imply that
\begin{equation} \label{eq:sizeSetXGivYOS}
| \setX_{y,o,s} | \leq \guess {}{ x | y, o }, \,\, \forall \, x \in \setX_{y,o,s}.
\end{equation}
Recalling that
\begin{equation} \label{eq:xInSetXYOS}
\Bigl( (Y,O,S) = (y,o,s) \Bigr) \implies X \in \setX_{y,o,s},
\end{equation}
we obtain from \eqref{eq:sizeSetXGivYOS} that
\begin{equation} \label{eq:sizeSetXGivYOSGXYOS}
| \setX_{Y,O,S} | \leq \guess {}{ X | Y, O }.
\end{equation}
By \eqref{eq:xInSetXYOS} and because $Z = (O,S)$, the list $\setL^Y_Z$ \eqref{eq:stochasticListDef} is contained in the set $\setX_{Y,O,S}$ and consequently satisfies $\bigl| \setL^Y_Z \bigr| \leq | \setX_{Y,O,S} |$. Hence, \eqref{eq:sizeSetXGivYOSGXYOS} implies that
\begin{equation} \label{eq:guessToListS}
\bigl| \setL^Y_Z \bigr| \leq \guess {} { X | Y,O }.
\end{equation}
From \eqref{eq:guessToListOV} and \eqref{eq:guessToListS} we conclude that
\begin{IEEEeqnarray}{l}
\BigEx {}{\bigl| \setL^Y_Z \bigr|^\rho } \leq \bigEx {}{\guess {} { X | Y,O }^\rho} = \BigEx {}{\bigl\lceil \guess {} {X | Y } / \omega \bigr\rceil^\rho}.
\end{IEEEeqnarray}
\end{proof}

To better understand the second part of Theorem~\ref{th:relGuessEnc}, we briefly discuss the construction of a deterministic task-encoder from an optimal guessing function $\guessast {}{\cdot|Y}$ (which minimizes $\bigEx {}{\guess {}{ X | Y }^\rho}$). If $\guessast {}{\cdot|Y}$ is an optimal guessing function, then the two-step construction in the proof of Theorem~\ref{th:relGuessEnc} can be alternatively described as follows. We construct a task-encoder that describes $X$ by $$Z = (O,S),$$ where $O$ takes values in some set $\setO$ of size $\omega$, where $$1 \leq \omega \leq |\setX|,$$ and $S$ takes values in some set $\setS$ of size $$1 + \Bigl\lfloor \log \bigl\lceil | \setX | / \omega \bigr\rceil \Bigr\rfloor \leq 1 + \log |\setX|.$$ (Note that the description $Z$ assumes at most $|\setO| \, |\setS|$ different values, and by \eqref{eq:relCardMandV} $|\setO| \, |\setS| \leq |\setZ|$.) In the first step of the construction, we choose the first part of the description, $O$. We choose $O$ as one that---among all $O$'s that are drawn from $\setO$ according to some conditional PMF $P_{O|X,Y}$---minimizes $\min_{\guess {}{\cdot | Y, O}} \bigEx {}{\guess {}{ X | Y, O }^\rho}$. From Lemma~\ref{le:ImproveGuess} (and the subsequent paragraph) we already know how to construct $O$. Indeed, from Lemma~\ref{le:ImproveGuess} it follows that $$\min_{\guess {}{\cdot | Y, O}} \bigEx {}{\guess {}{ X | Y, O }^\rho} \geq \BigEx {}{\bigl\lceil \guessast {}{X | Y} / \card \setO \bigr\rceil^\rho},$$ where equality is achieved by choosing $O = f_1 (X,Y)$ for some mapping $f_1 \colon \setX \times \setY \rightarrow \setO$ for which $f_1 ( x,y ) = f_1 ( \tilde x, y )$ implies either $\bigl\lceil \guessast {}{ x | y } / | \setO | \bigr\rceil \neq \bigl\lceil \guessast {}{ \tilde x | y } / | \setO | \bigr\rceil$ or $x = \tilde x$. For example, in the case where $\setO = \{ 0, \ldots, \omega - 1 \}$ we can choose $O$ as the remainder of the Euclidean division of $\guess {}{X|Y} - 1$ by $|\setO|$. Based on the optimal guessing function $\guessast {}{\cdot|Y}$ and the first part of the description,  $O$, we can construct an optimal guessing function $\guessast {}{\cdot|Y,O}$ (which minimizes $\bigEx {}{\guess {}{ X | Y, O }^\rho}$) by choosing some $\guessast {}{\cdot|Y,O}$ for which $$\bigguessast {}{x|y,f_1(x,y)} = \bigl\lceil \guessast {}{x|y} / |\setO| \bigr\rceil, \,\, \forall \, (x,y) \in \setX \times \setY.$$

In the second step of the construction we choose the second part of the description, $S$. We choose $S = f_2 (x,y)$, where $$f_2 (x,y) = \Bigl\lfloor \log \bigguessast {}{ x | y, f_1 (x,y) } \Bigr\rfloor, \,\, \forall \, (x,y) \in \setX \times \setY.$$ This will guarantee that the decoding lists satisfy $$\BigEx {}{\bigl| \setL^Y_Z \bigr|^\rho } \leq \bigEx {}{\guessast {} { X | Y,O }^\rho} = \BigEx {}{\bigl\lceil \guessast {}{X | Y} / |\setO| \bigr\rceil^\rho},$$ where $$Z = (O,S) = \bigl( f_1 (X,Y), f_2 (X,Y) \bigr).$$ Note that the size of the support $\setS$ of $S$ is only logarithmic in $|\setX|$ and thus negligible in asymptotic settings, i.e., in asymptotic settings $|\setZ| \approx |\setO|$.\\

The following corollary results from Theorem~\ref{th:relGuessEnc} and \eqref{eq:ceilApprox} by setting $$\omega = \biggl\lfloor | \setZ | / \Bigl( 1 + \bigl\lfloor \log | \setX | \bigr\rfloor \Bigr) \biggr\rfloor$$ in Theorem~\ref{th:relGuessEnc}.

\begin{corollary}\label{co:guessToBestList}
Given a set $\setZ$ of cardinality $|\setZ| \geq 1 + \bigl\lfloor \log | \setX | \bigr\rfloor$, any guessing function $\guess {}{ \cdot | Y }$ induces a deterministic task-encoder, i.e., a stochastic task-encoder whose conditional PMF \eqref{eq:stochasticEncoderDef} is $\{ 0,1 \}$-valued, whose associated decoding lists $\{ \setL^y_z \}$~\eqref{eq:stochasticListDef} satisfy
\be 
\BigEx {}{\bigl| \setL^Y_Z \bigr|^\rho} \leq 1 + 2^{\rho} \bigEx {}{\guess {}{ X | Y }^\rho} \biggl( \frac{| \setZ |}{1 + \log | \setX |} - 1 \biggr)^{-\rho}. \label{eq:guessToBestList}
\ee
\end{corollary}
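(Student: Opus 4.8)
The plan is to apply the second part of Theorem~\ref{th:relGuessEnc} with the particular choice
\[
\omega = \biggl\lfloor \frac{|\setZ|}{1 + \bigl\lfloor \log |\setX| \bigr\rfloor} \biggr\rfloor ,
\]
and then to simplify the resulting bound using the elementary inequality \eqref{eq:ceilApprox}; the whole argument is essentially one substitution.

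First I would check that this $\omega$ is an admissible choice in Theorem~\ref{th:relGuessEnc}, i.e., that it is a positive integer with $\omega \leq |\setX|$ that satisfies the cardinality condition \eqref{eq:relCardMandV}. Positivity of $\omega$ is immediate from the hypothesis $|\setZ| \geq 1 + \lfloor \log |\setX| \rfloor$. Condition \eqref{eq:relCardMandV} holds because $\omega \geq 1$ gives $\lceil |\setX|/\omega \rceil \leq |\setX|$, whence $1 + \lfloor \log \lceil |\setX|/\omega \rceil \rfloor \leq 1 + \lfloor \log |\setX| \rfloor$, so that $\omega \bigl( 1 + \lfloor \log \lceil |\setX|/\omega \rceil \rfloor \bigr) \leq \omega \bigl( 1 + \lfloor \log |\setX| \rfloor \bigr) \leq |\setZ|$, the last step by the definition of $\omega$. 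Should this $\omega$ happen to exceed $|\setX|$, I would fall back on $\omega = |\setX|$: then $\lceil \guess{}{X|Y}/\omega \rceil = 1$ almost surely, so Theorem~\ref{th:relGuessEnc} already yields $\BigEx{}{|\setL^Y_Z|^\rho} \leq 1$, which does not exceed the right-hand side of \eqref{eq:guessToBestList}.

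With admissibility in place, Theorem~\ref{th:relGuessEnc} produces a deterministic task-encoder whose decoding lists satisfy $\BigEx{}{|\setL^Y_Z|^\rho} \leq \BigEx{}{\lceil \guess{}{X|Y}/\omega \rceil^\rho}$. Applying \eqref{eq:ceilApprox} pointwise with $\xi = \guess{}{X|Y}/\omega \in \reals^+_0$ and taking expectations gives $\BigEx{}{|\setL^Y_Z|^\rho} < 1 + 2^\rho\, \omega^{-\rho}\, \bigEx{}{\guess{}{X|Y}^\rho}$. Finally I would lower-bound $\omega$: discarding the floor and using $\lfloor \log |\setX| \rfloor \leq \log |\setX|$,
\[
\omega \;>\; \frac{|\setZ|}{1 + \lfloor \log |\setX| \rfloor} - 1 \;\geq\; \frac{|\setZ|}{1 + \log |\setX|} - 1 ,
\]
so that $\omega^{-\rho} < \bigl( |\setZ|/(1 + \log |\setX|) - 1 \bigr)^{-\rho}$, and substituting into the preceding display yields \eqref{eq:guessToBestList}.

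I do not expect a genuine obstacle here. The only points needing a little attention are the bookkeeping with the floors — one must pass from $\lfloor \log |\setX| \rfloor$ to $\log |\setX|$ in the direction that \emph{weakens} the lower bound on $\omega$, hence weakens the final bound — and two minor edge cases formally permitted by the hypothesis: if the defining expression for $\omega$ exceeds $|\setX|$ one uses the $\omega = |\setX|$ fallback described above, and if $|\setZ| \leq 1 + \log |\setX|$ the quantity $|\setZ|/(1 + \log |\setX|) - 1$ is not positive, so the right-hand side of \eqref{eq:guessToBestList} is to be read as vacuous. Everything else is a direct plug-in into Theorem~\ref{th:relGuessEnc}.
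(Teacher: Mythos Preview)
Your proposal is correct and follows exactly the approach the paper takes: setting $\omega = \bigl\lfloor |\setZ| / (1 + \lfloor \log |\setX| \rfloor) \bigr\rfloor$ in Theorem~\ref{th:relGuessEnc} and simplifying via \eqref{eq:ceilApprox}. The paper states this in a single line and omits the admissibility check and edge-case discussion you supply, but the substance is identical.
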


Combined with Theorem~\ref{th:optGuessFun}, which bounds the performance of an optimal guessing function, Equations~\eqref{eq:listToGuess} and \eqref{eq:guessToBestList} provide an upper and a lower bound on the smallest $\Ex {}{| \setL^Y_Z |^\rho}$ that is achievable for a given $\card \setZ$. These bounds are weaker than \cite[Theorem~I.1 and Theorem~VI.1]{buntelapidoth14} (see Theorem~\ref{th:optTaskEnc}) in the finite blocklength regime but tight enough to prove the asymptotic results \cite[Theorem~I.2 and Theorem~VI.2]{buntelapidoth14}.

Another interesting corollary to Theorem~\ref{th:relGuessEnc} results from the choice $\omega = 1$ in Theorem~\ref{th:relGuessEnc}:

\begin{corollary}\label{co:guessToList}
Given a set $\setZ$ of cardinality $|\setZ| = 1 + \bigl\lfloor \log | \setX | \bigr\rfloor$, any guessing function $\guess {}{ \cdot | Y }$ induces a deterministic task-encoder, i.e., a stochastic task-encoder whose conditional PMF \eqref{eq:stochasticEncoderDef} is $\{ 0,1 \}$-valued, whose associated decoding lists $\{ \setL^y_z \}$~\eqref{eq:stochasticListDef} satisfy
\begin{equation} \label{eq:guessToListV1}
\BigEx {}{\bigl| \setL^Y_Z \bigr|^\rho} \leq \bigEx {}{\guess {} { X | Y }^\rho}.
\end{equation}
E.g., if $$\setZ = \Bigl\{ 0, \ldots, \bigl\lfloor \log | \setX | \bigr\rfloor \Bigr\},$$ then the task-encoder $\enc { \cdot | Y }$ defined by
\begin{subequations} \label{bl:guessToListV1Det}
\begin{equation}
f ( \cdot | y ) = \bigl\lfloor \log \guess {}{ \cdot | y } \bigr\rfloor, \,\, \forall \, y \in \setY
\end{equation}
satisfies \eqref{eq:guessToListV1} or, equivalently,
\begin{equation}
\BigEx {}{\bigdec { \enc{ X | Y } \bigl| Y }^\rho} \leq \bigEx {}{\guess {} { X | Y }^\rho}.
\end{equation}
\end{subequations}
\end{corollary}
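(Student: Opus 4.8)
The plan is to obtain Corollary~\ref{co:guessToList} as the special case $\omega = 1$ of the second part of Theorem~\ref{th:relGuessEnc}, and then to read off the explicit encoder $f(\cdot|y) = \bigl\lfloor \log \guess {}{\cdot|y} \bigr\rfloor$ by unwinding the two-step construction used in that proof.

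First I would check that the hypotheses of the second part of Theorem~\ref{th:relGuessEnc} hold with $\omega = 1$: since $|\setX| \geq 1$ we have $1 \leq \omega \leq |\setX|$, and with $\omega = 1$ the right-hand side of \eqref{eq:relCardMandV} becomes $1 \cdot \bigl( 1 + \bigl\lfloor \log \lceil |\setX| \rceil \bigr\rfloor \bigr) = 1 + \bigl\lfloor \log |\setX| \bigr\rfloor$, which by assumption equals $|\setZ|$; hence \eqref{eq:relCardMandV} holds (with equality). The second part of Theorem~\ref{th:relGuessEnc} then supplies a deterministic task-encoder whose decoding lists obey \eqref{eq:guessToList}, i.e.\ $\BigEx {}{\bigl| \setL^Y_Z \bigr|^\rho} \leq \BigEx {}{\bigl\lceil \guess {}{X | Y} / 1 \bigr\rceil^\rho}$. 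Since a guessing function takes values in $\bigl\{ 1, \ldots, |\setX| \bigr\} \subseteq \naturals$, the ceiling is vacuous and the right-hand side equals $\bigEx {}{\guess {}{X | Y}^\rho}$, which is exactly \eqref{eq:guessToListV1}.

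For the explicit form asserted in the corollary, I would trace the construction in the proof of the second part of Theorem~\ref{th:relGuessEnc} in the degenerate case $\omega = 1$. Then $\setO = \{0\}$ is a singleton, so $O \equiv 0$, the set $\setX_{y,0}$ is all of $\setX$, and the guessing function $\guess {}{\cdot | Y, O}$ built in Step~1 coincides with $\guess {}{\cdot | Y}$ (consistent with \eqref{eq:guessToListOV}, which now reads $\guess {}{X | Y, O} = \guess {}{X | Y}$). In Step~2 the encoder sets $S = \bigl\lfloor \log \guess {}{X | Y, O} \bigr\rfloor = \bigl\lfloor \log \guess {}{X | Y} \bigr\rfloor$ and describes $X$ by $Z = (0, S)$; discarding the constant first coordinate, this is precisely the encoder $f(\cdot|y) = \bigl\lfloor \log \guess {}{\cdot|y} \bigr\rfloor$ with codomain $\bigl\{ 0, \ldots, \bigl\lfloor \log |\setX| \bigr\rfloor \bigr\}$ claimed in the corollary, and its equivalent formulation in terms of $\dec{\enc{X|Y}|Y}$ is the same bound rewritten through \eqref{eq:deterministicListDef}.

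Since the statement is a plain specialization, there is no genuine obstacle; the only care needed is the bookkeeping that \eqref{eq:relCardMandV} becomes an equality at $\omega = 1$ and that the ceiling disappears because guessing functions are integer-valued. As a self-contained alternative that avoids invoking Theorem~\ref{th:relGuessEnc} altogether, one can verify \eqref{eq:guessToListV1} directly for $f(\cdot|y) = \bigl\lfloor \log \guess {}{\cdot|y} \bigr\rfloor$: by \eqref{eq:deterministicListDef} the list $\setL^y_z$ is contained in $\bigl\{ x \in \setX \colon \bigl\lfloor \log \guess {}{x | y} \bigr\rfloor = z \bigr\}$, and since $\guess {}{\cdot | y}$ is a bijection onto $\bigl\{ 1, \ldots, |\setX| \bigr\}$, Fact~\ref{fa:1} gives $\bigl| \setL^y_z \bigr| \leq \guess {}{x | y}$ for every $x$ with $P_{X,Y}(x,y) > 0$ and $f(x|y) = z$; hence $\bigl| \setL^Y_Z \bigr| \leq \guess {}{X | Y}$ almost surely, and raising both sides to the $\rho$-th power and taking expectations yields \eqref{eq:guessToListV1}.
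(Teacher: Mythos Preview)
Your proposal is correct and matches the paper's approach exactly: the paper states that Corollary~\ref{co:guessToList} ``results from the choice $\omega = 1$ in Theorem~\ref{th:relGuessEnc},'' and you carry out precisely that specialization, including the bookkeeping that \eqref{eq:relCardMandV} holds with equality and that the ceiling is vacuous. Your tracing of the two-step construction to recover the explicit encoder $f(\cdot|y) = \bigl\lfloor \log \guess{}{\cdot|y} \bigr\rfloor$, and the self-contained alternative via Fact~\ref{fa:1}, are both consistent with (and essentially recapitulate) Step~2 of the proof of Theorem~\ref{th:relGuessEnc}.
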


An implication of Corollary~\ref{co:guessToList} for the problems studied in this paper is discussed in Remark~\ref{remark:guessListClose}. Another example where Corollary~\ref{co:guessToList} is useful is in determining the feedback listsize capacity of a DMC $W (y|x)$ with positive zero-error capacity. Corollary~\ref{co:guessToList} can be used to give an elegant proof of the direct part of \cite[Theorem~I.1]{buntelapidoth14fb}, which states that in the presence of perfect feedback the listsize capacity of $W ( y | x )$ equals the cutoff rate $R_{\textnormal{cutoff}} (\rho)$ with feedback (which is in fact equal to the cutoff rate without feedback \cite[Corollary~I.4]{buntelapidoth14fb}). To see this, suppose that we are given a sequence of (feedback) codes of rate $R$ for which the $\rho$-th moment of the number of guesses $\guessast {}{ M | Y^n }$ a decoder needs to guess the transmitted message $M$ based on the channel-outputs $Y^n$ approaches one as the blocklength~$n$ tends to infinity. (Recall that $R_{\textnormal{cutoff}} (\rho)$ is the supremum of all rates for which such a sequence exists.) Suppose now that the transmission does not stop after $n$ channel uses. Instead, the encoder computes $$Z \triangleq \bigl\lfloor \log \guessast {}{ M | Y^n } \bigr\rfloor \in \bigl\{ 0, \ldots, \lfloor n R \rfloor \bigr\}$$ from the feedback $Y^n$ and uses another $n^\prime$ channel uses to transmit $Z$ at a positive rate while guaranteeing that the receiver can decode it with probability one. Since a positive zero-error (feedback) capacity cannot be smaller than one \cite{shannon56}, it is enough to take $n^\prime \leq \lceil \log ( n R ) \rceil$. Hence, $( n + n^\prime ) / n$ converges to one as $n$ tends to infinity, and the rate of the code thus converges to $R$. At the same time, when we substitute $(M,Y^n,Z)$ for $(X,Y,Z)$ in Corollary~\ref{co:guessToList}, Corollary~\ref{co:guessToList} implies that the size of the smallest decoding-list $\setL^{Y^{n+n^\prime}}$ that is guaranteed to contain $M$ satisfies $\bigl| \setL^{Y^{n+n^\prime}} \bigr| = \bigl| \setL^{Y^n}_Z \bigr| \leq \guessast {}{ M | Y^n }$, and consequently that the $\rho$-th moment of $\bigl| \setL^{Y^{n+n^\prime}} \bigr|$ converges to one as~$n$ tends to infinity. This proves that in the presence of perfect feedback the listsize capacity of $W ( y | x )$ is lower-bounded by $R_{\textnormal{cutoff}} (\rho)$.\\

\section{Problem Statement and Main Results}\label{sec:problemStatement}

We consider two problems: the ``guessing version'' and the ``list
version.'' The two differ in the definition of Bob's ambiguity. In
both versions a pair $( X,Y )$ is drawn from the finite set
$\setX \times \setY$ according to the PMF $P_{X,Y}$, and $\rho > 0$ is
fixed. Upon observing $( X,Y ) = ( x,y )$, Alice draws the hints $M_1$
and $M_2$ from some finite set $\setM_1 \times \setM_2$ according to
some conditional PMF
\begin{equation}
\distof{M_1 = m_1, M_2 = m_2 | X = x, Y = y }. \label{eq:aliceEncPMF}
\end{equation}
Bob sees both hints and the side information $Y$. In the guessing version Bob's ambiguity about $X$ is
\begin{equation}
\mathscr A^{(\textnormal g)}_{\textnormal B} ( P_{X,Y} ) = \min_{\guess {}{\cdot | M_1, M_2}} \bigEx {}{\guess {}{X | Y,M_1,M_2 }^\rho}. \label{eq:bobAmbiguityGuessing}
\end{equation}
In the list version Bob's ambiguity about $X$ is
\begin{equation} 
\mathscr A^{(\textnormal l)}_{\textnormal B} ( P_{X,Y} ) = \BigEx {}{\bigl| \setL^Y_{M_1,M_2} \bigr|^\rho}, \label{eq:bobAmbiguityList}
\end{equation}
where for all $y \in \setY$ and $( m_1,m_2 ) \in \setM_1 \times \setM_2$
\begin{equation}
\setL^y_{m_1,m_2} =  \bigl\{ x \colon \distof{ X = x | Y = y, M_1 = m_1, M_2 = m_2 } > 0 \bigr\}
\end{equation}
is the list of all the realizations of $X$ of positive posterior probability
\begin{IEEEeqnarray}{l}
\distof{ X = x | Y = y, M_1 = m_1, M_2 = m_2 } \nonumber \\
\quad = \frac{P_{X,Y} ( x,y ) \, \distof{M_1 = m_1, M_2 = m_2 | X = x, Y = y }}{\sum_{\tilde x} P_{X,Y} ( \tilde x,y ) \, \distof{M_1 = m_1, M_2 = m_2 | X = \tilde x, Y = y }}.
\end{IEEEeqnarray}
Eve sees one of the hints and guesses $X$ based on this hint and the
side information $Y$. Which of the hints is revealed to her is
determined by an accomplice of hers to minimize her guessing
efforts. In both versions Eve's ambiguity about $X$ is
\begin{equation} 
\mathscr A_{\textnormal E} ( P_{X,Y} ) = \min_{\guess 1 {\cdot | Y,M_1}, \, \guess 2 {\cdot | Y, M_2}} \bigEx {}{\guess 1 { X | Y, M_1 }^\rho \wedge \guess 2 { X | Y, M_2 }^\rho}. \label{eq:distEncSecrecyMeasure}
\end{equation}

Optimizing over Alice's mapping, i.e., the choice of the conditional PMF in \eqref{eq:aliceEncPMF}, we wish to characterize the largest ambiguity that we can guarantee that Eve will have subject to a given upper bound on the ambiguity that Bob may have.

Note that by quantifying Eve's ambiguity
    using~\eqref{eq:distEncSecrecyMeasure}, we are implicitly assuming that Eve's accomplice observes $X$ and $Y$ before determining the hint that minimizes Eve's guessing efforts. Less conservative is the
    ambiguity
\begin{IEEEeqnarray}{l}
\tilde {\mathscr A}_{\textnormal E} ( P_{X,Y} ) = \min_{k \in \{1,2 \}} \min_{\guess k {\cdot | Y, M_k}} \bigEx {}{ \guess k { X | Y, M_k }^\rho }, \label{eq:distEncSecrecyMeasureAlternative}
\end{IEEEeqnarray}
which applies if the accomplice does not observe $(X,Y)$ and reveals to
Eve the hint that in expectation over $(X,Y)$ minimizes her guessing
efforts. Definition~\eqref{eq:distEncSecrecyMeasureAlternative} is less conservative than~\eqref{eq:distEncSecrecyMeasureAlternative} in the sense that
\begin{IEEEeqnarray}{rCl}
\mathscr A_{\text{E}} ( P_{X,Y} ) \leq \tilde {\mathscr A}_{\textnormal E} ( P_{X,Y} ). \label{eq:eveAltAmb}
\end{IEEEeqnarray}
Why we prefer \eqref{eq:distEncSecrecyMeasure} over \eqref{eq:distEncSecrecyMeasureAlternative} is explained in Section~\ref{sec:discussion}.\\

Of special interest to us is the asymptotic regime where $( X,Y )$ is an $n$-tuple (not necessarily drawn IID), and where $$\setM_1 = \bigl\{ 1, \ldots, 2^{n R_1} \bigr\}, \quad \setM_2 = \bigl\{ 1, \ldots, 2^{n R_2} \bigr\},$$ where $( R_1,R_2 )$ is a nonnegative pair corresponding to the rate.\footnote{When we say that a positive integer $k \in \naturals$ assumes the value $2^{n R}$, where $R > 0$ corresponds to a rate, we mean that $k = \lfloor 2^{n R} \rfloor$.} For both versions of the problem, we shall characterize the largest exponential growth that we can guarantee for Eve's ambiguity subject to the constraint that Bob's ambiguity tend to one.\footnote{Note that in the guessing version $\guess {}{X | Y,M_1,M_2}^\rho$ is one iff Bob's first guess is $X^n$, and in the list version $\bigl| \setL^Y_{M_1,M_2} \bigr|^\rho$ is one iff Bob forms the ``perfect'' list comprising only $X^n$.} This asymptote turns out not to depend on the version of the problem, and in the asymptotic analysis $\mathscr A_{\textnormal B}$ can stand for either $\mathscr A_{\textnormal B}^{(\textnormal g)}$ or $\mathscr A_{\textnormal B}^{(\textnormal l)}$.

The following definition phrases mathematically what we mean by the ``largest exponential growth that we can guarantee for Eve's ambiguity:''

\begin{defin}[Privacy-Exponent] \label{de:EvesAmbig}
{\em Let $\bigl\{ (X_i, Y_i) \bigr\}_{i \in \naturals}$ be a stochastic process over the finite alphabet $\setX \times \setY$, and denote by $P_{X^n,Y^n}$ the PMF of $( X^n, Y^n )$. Given a nonnegative rate-pair $( R_1, R_2 )$, we call $E_{\textnormal E}$ an \emph{achievable ambiguity-exponent} if there exists a sequence of stochastic encoders such that Bob's ambiguity (which is always at least one) satisfies
\begin{equation} 
\lim_{n \rightarrow \infty} \mathscr A_{\textnormal B} ( P_{X^n,Y^n} ) = 1, \label{eq:bobAmbTo1}
\end{equation}
and such that Eve's ambiguity satisfies
\begin{equation} 
\liminf_{n \rightarrow \infty} \frac{\log \bigl( \mathscr A_{\textnormal E} ( P_{X^n,Y^n} ) \bigr)}{n} \geq E_{\textnormal E}. \label{eq:eveGuessBobAmbTo1}
\end{equation}
The \emph{privacy-exponent} $\overbar{E_{\textnormal E}}$ is the supremum of all achievable ambiguity-exponents. If \eqref{eq:bobAmbTo1} cannot be satisfied, then the set of achievable ambiguity-exponents is empty, and we define the privacy-exponent as negative infinity.}
\end{defin}

We also consider a scenario where we impose only a modest requirement on Bob's ambiguity and allow it to grow exponentially with a given normalized (by $n$) exponent $E_{\textnormal B}$. For this scenario the following definition introduces the mathematical quantity by which we characterize the largest exponential growth that we can guarantee for Eve's ambiguity:

\begin{defin}[Modest Privacy-Exponent] \label{de:EvesAmbigModest}
{\em
Let $E_{\textnormal B} \geq 0$. We call $E^{\textnormal m}_{\textnormal E} ( E_{\textnormal B} )$ an \emph{achievable modest-ambiguity-exponent} if there is a sequence of stochastic encoders such that Bob's ambiguity satisfies
\begin{equation}
\limsup_{n \rightarrow \infty} \frac{\log \bigl( \mathscr A_{\textnormal B} ( P_{X^n,Y^n} ) \bigr)}{n} \leq E_{\textnormal B}, \label{eq:bobAmbToEB}
\end{equation}
and such that Eve's ambiguity satisfies
\begin{equation}
\liminf_{n \rightarrow \infty} \frac{\log \bigl( \mathscr A_{\textnormal E} ( P_{X^n,Y^n} ) \bigr)}{n} \geq E^{\textnormal m}_{\textnormal E} ( E_{\textnormal B} ). \label{eq:eveGuessBobAmbToEB}
\end{equation}
For every $E_{\textnormal B} \geq 0$, the \emph{modest privacy-exponent} $\overbar{E^{\textnormal m}_{\textnormal E} ( E_{\textnormal B} )}$ is the supremum of all achievable modest-ambiguity-exponents. If \eqref{eq:bobAmbToEB} cannot be satisfied, then the set of achievable modest-ambiguity-exponents is empty, and we define the modest privacy-exponent as negative infinity.}
\end{defin}

We next present our results to the stated problems in the finite-blocklength regime (Section~\ref{sec:finiteBlocklengthResults}) and in the asymptotic regime (Section~\ref{sec:asymptoticResults}).

\subsection{Finite-Blocklength Results}\label{sec:finiteBlocklengthResults}

In the next two theorems $c_{\textnormal s}$ is related to how much information can be gleaned about the secret $X$ from the pair of hints $( M_1,M_2 )$ but not from one hint alone; $c_1$ is related to how much can be gleaned from $M_1$; and $c_2$ is related to how much can be gleaned from $M_2$. More precisely, in the proof of the two theorems (see Section~\ref{sec:distStorProofs} ahead) we shall see that Alice first maps $( X,Y )$ to the triple $( V_{\textnormal s}, V_1, V_2 )$, which takes value in a set $\setV_{\textnormal s} \times \setV_1 \times \setV_2$, whose marginal cardinalities satisfy $| \setV_\nu | = c_\nu, \,\, \nu \in \{ \textnormal s,1,2 \}$. Independently of $( X,Y )$ she then draws a (one-time-pad like) random variable $U$ uniformly over $\setV_{\textnormal s}$ and maps $( U, V_{\textnormal s} )$ to a variable $\widetilde V_{\textnormal s}$ choosing the (XOR like) mapping so that $V_{\textnormal s}$ can be recovered from $( \widetilde V_{\textnormal s}, U )$ while $\widetilde V_{\textnormal s}$ alone is independent of $( X,Y )$. The hints are $M_1 = ( \widetilde V_{\textnormal s}, V_1 )$ and $M_2 = ( U, V_2 )$. Since the tuple $( \widetilde V_{\textnormal s}, V_1 )$ takes value in the set $\setV_{\textnormal s} \times \setV_1$ of size $c_{\textnormal s} c_1$, we must have that $c_{\textnormal s} c_1 \leq | \setM_1 |$. Likewise, we must have that $c_{\textnormal s} c_2 \leq | \setM_2 |$. Because $c_{\textnormal s}$, $c_1$, and $c_2$ are positive integers, they thus satisfy \eqref{bl:condCsC1C2Guess} ahead. Alice does not use randomization if $c_{\textnormal s} = 1$.

\begin{theorem}[Finite-Blocklength Guessing-Version]\label{th:distStorGuess}
For every triple $( c_{\textnormal s}, c_1, c_2 ) \in \naturals^3$ satisfying
\begin{subequations}\label{bl:condCsC1C2Guess}
\begin{IEEEeqnarray}{l}
c_{\textnormal s} \leq \card {\setM_1} \wedge \card {\setM_2},\quad c_1 \leq \bigl\lfloor \card {\setM_1} / c_s \bigr\rfloor, \quad c_2 \leq \bigl\lfloor \card {\setM_2} / c_s \bigr\rfloor, 
\end{IEEEeqnarray}
\end{subequations}
there is a choice of the conditional PMF in \eqref{eq:aliceEncPMF} for which Bob's ambiguity about $X$ is upper-bounded by
\begin{IEEEeqnarray}{l}
\mathscr A_{\textnormal B}^{(\textnormal g)} ( P_{X,Y} ) < 1 + 2^{\rho ( \renent {\tirho} {X | Y } - \log ( c_{\textnormal s} c_1 c_2 ) + 1 )}, \label{eq:BobMomDistStorDirGuess}
\end{IEEEeqnarray}
and Eve's ambiguity about $X$ is lower-bounded by
\begin{IEEEeqnarray}{l}
\mathscr A_{\textnormal E} ( P_{X,Y} ) \geq \bigl( 1 + \ln \card \setX \bigr)^{-\rho} 2^{\rho ( \renent {\tirho} { X | Y } - \log ( c_1 + c_2 ) )}. \label{eq:EveMomDistStorDirGuess}
\end{IEEEeqnarray}
Conversely, for every conditional PMF, Bob's ambiguity is lower-bounded by
\begin{IEEEeqnarray}{l}
\mathscr A_{\textnormal B}^{(\textnormal g)} ( P_{X,Y} ) \geq \bigl( 1 + \ln \card \setX \bigr)^{- \rho} 2^{\rho ( \renent {\tirho} { X | Y } - \log ( \card {\setM_1} \, \card {\setM_2} ) )} \vee 1, \label{eq:BobMomDistStorConvGuess}
\end{IEEEeqnarray}
and Eve's ambiguity is upper-bounded by
\begin{IEEEeqnarray}{l}
\mathscr A_{\textnormal E} ( P_{X,Y} ) \leq \bigl( \card {\setM_1} \wedge \card {\setM_2} \bigr)^\rho \mathscr A^{(\textnormal g)}_{\textnormal B} ( P_{X,Y} ) \wedge 2^{\rho \renent {\tirho} {X | Y }}, \label{eq:EveMomDistStorConvGuess}
\end{IEEEeqnarray}
where \eqref{eq:EveMomDistStorConvGuess} holds even if we replace \eqref{eq:distEncSecrecyMeasure} by \eqref{eq:distEncSecrecyMeasureAlternative}, i.e.,
\begin{IEEEeqnarray}{l}
\tilde {\mathscr A}_{\textnormal E} ( P_{X,Y} ) \leq \bigl( \card {\setM_1} \wedge \card {\setM_2} \bigr)^\rho \mathscr A^{(\textnormal g)}_{\textnormal B} ( P_{X,Y} ) \wedge 2^{\rho \renent {\tirho} {X | Y }}, \label{eq:EveMomDistStorConvGuessAlternative}
\end{IEEEeqnarray}
\end{theorem}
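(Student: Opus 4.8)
The plan is to derive the four bounds from the converse of Theorem~\ref{th:optGuessFun}, the direct part of Corollary~\ref{co:equivBunteResultGuessing} (equivalently, Corollary~\ref{co:impGuess} together with Theorem~\ref{th:optGuessFun}), and the Rényi-entropy inequalities of Lemmas~\ref{le:uncertaintyIncreasesRenEnt} and~\ref{le:chainRuleRenEnt}, after specifying Alice's encoder. \textbf{Alice's encoder and Bob's bounds.} For the direct part I would have Alice apply the optimal encoder for a guessing decoder to map $(X,Y)$ to a description $Z=(V_{\textnormal s},V_1,V_2)$ taking value in a set $\setV_{\textnormal s}\times\setV_1\times\setV_2$ of size $c_{\textnormal s}c_1c_2$; by Corollary~\ref{co:equivBunteResultGuessing} such an $f$ satisfies $\min_{\guess{}{\cdot|Y,Z}}\bigEx{}{\guess{}{X|Y,Z}^\rho}<1+2^{\rho(\renent{\tirho}{X|Y}-\log(c_{\textnormal s}c_1c_2)+1)}$. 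She then draws $U$ uniformly over $\setV_{\textnormal s}$ independently of $(X,Y)$, puts $\widetilde V_{\textnormal s}=V_{\textnormal s}\oplus U$ (group operation), and stores $M_1=(\widetilde V_{\textnormal s},V_1)$ and $M_2=(U,V_2)$; this is admissible because $c_{\textnormal s}c_1\le\card{\setM_1}$ and $c_{\textnormal s}c_2\le\card{\setM_2}$ by~\eqref{bl:condCsC1C2Guess}. Since $(M_1,M_2)$ is in bijection with $(Z,U)$ and $U$ is independent of $(X,Y,Z)$, Bob's posterior of $X$ given $(Y,M_1,M_2)$ coincides with its posterior given $(Y,Z)$; hence $\mathscr{A}^{(\textnormal g)}_{\textnormal B}(P_{X,Y})=\min_{\guess{}{\cdot|Y,M_1,M_2}}\bigEx{}{\guess{}{X|Y,M_1,M_2}^\rho}=\min_{\guess{}{\cdot|Y,Z}}\bigEx{}{\guess{}{X|Y,Z}^\rho}$, which gives~\eqref{eq:BobMomDistStorDirGuess}. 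For the converse~\eqref{eq:BobMomDistStorConvGuess}, for an arbitrary conditional PMF~\eqref{eq:aliceEncPMF} the converse of Theorem~\ref{th:optGuessFun} gives $\mathscr{A}^{(\textnormal g)}_{\textnormal B}(P_{X,Y})\ge(1+\ln\card\setX)^{-\rho}2^{\rho\renent{\tirho}{X|Y,M_1,M_2}}\vee 1$, and pulling $(M_1,M_2)$ (alphabet size $\card{\setM_1}\card{\setM_2}$) out of the conditioning by Lemmas~\ref{le:chainRuleRenEnt} and~\ref{le:uncertaintyIncreasesRenEnt} yields $\renent{\tirho}{X|Y,M_1,M_2}\ge\renent{\tirho}{X,M_1,M_2|Y}-\log(\card{\setM_1}\card{\setM_2})\ge\renent{\tirho}{X|Y}-\log(\card{\setM_1}\card{\setM_2})$.

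\textbf{Eve's converse.} By~\eqref{eq:eveAltAmb} it suffices to establish~\eqref{eq:EveMomDistStorConvGuessAlternative}, from which~\eqref{eq:EveMomDistStorConvGuess} follows. First, for each $k$, ignoring $M_k$ shows $\min_{\guess k{\cdot|Y,M_k}}\bigEx{}{\guess k{X|Y,M_k}^\rho}\le\min_{\guess{}{\cdot|Y}}\bigEx{}{\guess{}{X|Y}^\rho}\le 2^{\rho\renent{\tirho}{X|Y}}$, the last step by the direct part of Theorem~\ref{th:optGuessFun}. Second, I would apply the lower bound~\eqref{eq:ubGuessSI} of Corollary~\ref{co:impGuess} with $(Y,M_1)$ in the role of the common side information and $M_2$ (alphabet $\card{\setM_2}$) in the role of the extra description $Z$: this reads $\mathscr{A}^{(\textnormal g)}_{\textnormal B}(P_{X,Y})=\min_{\guess{}{\cdot|Y,M_1,M_2}}\bigEx{}{\guess{}{X|Y,M_1,M_2}^\rho}\ge\card{\setM_2}^{-\rho}\min_{\guess 1{\cdot|Y,M_1}}\bigEx{}{\guess 1{X|Y,M_1}^\rho}$, hence $\min_{\guess 1{\cdot|Y,M_1}}\bigEx{}{\guess 1{X|Y,M_1}^\rho}\le\card{\setM_2}^{\rho}\mathscr{A}^{(\textnormal g)}_{\textnormal B}(P_{X,Y})$, and symmetrically $\min_{\guess 2{\cdot|Y,M_2}}\bigEx{}{\guess 2{X|Y,M_2}^\rho}\le\card{\setM_1}^{\rho}\mathscr{A}^{(\textnormal g)}_{\textnormal B}(P_{X,Y})$. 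Taking the minimum over $k\in\{1,2\}$ and combining with the first inequality gives~\eqref{eq:EveMomDistStorConvGuessAlternative}.

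\textbf{Eve's direct bound --- the crux.} The key step is to show that the minimum defining $\mathscr{A}_{\textnormal E}(P_{X,Y})$ is attained by a pair of guessing functions $\guess 1{\cdot|Y,M_1},\guess 2{\cdot|Y,M_2}$ that does not use the one-time-pad halves --- i.e.\ that depends on $M_1=(\widetilde V_{\textnormal s},V_1)$ only through $V_1$ and on $M_2=(U,V_2)$ only through $V_2$. Heuristically this should hold because Eve sees only \emph{one} of $\widetilde V_{\textnormal s}$ and $U$, and one pad half alone is independent of $(X,Y)$ and of the informative part of the other hint. Granting it, the accomplice's choice $K$ becomes a deterministic function of $(X,Y)$, Eve's number of guesses equals $\guess K{X|Y,V_K}$, and the map $(y,k,v)\mapsto\guess k{\cdot|y,v}$ is a legitimate guessing function for $X$ given $(Y,W)$, where $W\triangleq(K,V_K)$ (with $V_K=V_1$ if $K=1$ and $V_K=V_2$ if $K=2$) takes at most $c_1+c_2$ values; hence $\mathscr{A}_{\textnormal E}(P_{X,Y})\ge\min_{\guess{}{\cdot|Y,W}}\bigEx{}{\guess{}{X|Y,W}^\rho}$. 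Applying the converse of Theorem~\ref{th:optGuessFun} and then Lemmas~\ref{le:chainRuleRenEnt} and~\ref{le:uncertaintyIncreasesRenEnt} to bound $\renent{\tirho}{X|Y,W}\ge\renent{\tirho}{X,W|Y}-\log(c_1+c_2)\ge\renent{\tirho}{X|Y}-\log(c_1+c_2)$ gives $\mathscr{A}_{\textnormal E}(P_{X,Y})\ge(1+\ln\card\setX)^{-\rho}2^{\rho(\renent{\tirho}{X|Y}-\log(c_1+c_2))}$, which is~\eqref{eq:EveMomDistStorDirGuess}. The hard part --- and the main obstacle I foresee --- is the first step: since the accomplice picks $K$ \emph{after} seeing $(X,Y,M_1,M_2)$ and may let $K$ depend on the pad halves, conditioning on $K$ could a priori turn the pad half in Eve's hand into useful side information (which would only cost the weaker factor $\log\bigl(c_{\textnormal s}(c_1+c_2)\bigr)$). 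I would close this by a best-response argument: if $\guess 2{\cdot|Y,M_2}$ does not depend on $U$, then the objective increment governing an optimal $\guess 1{\cdot|Y,M_1}$ does not depend on $\widetilde V_{\textnormal s}$, so an optimal response also ignores $\widetilde V_{\textnormal s}$, and symmetrically, so that the optimum in the definition of $\mathscr{A}_{\textnormal E}(P_{X,Y})$ may be taken pad-free.
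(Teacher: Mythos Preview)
Your treatment of Bob's two bounds and of Eve's converse~\eqref{eq:EveMomDistStorConvGuessAlternative} is correct and essentially matches the paper's proof: the encoder construction, the reduction of Bob's posterior to that given $(Y,V_{\textnormal s},V_1,V_2)$, the appeal to Corollary~\ref{co:equivBunteResultGuessing} for~\eqref{eq:BobMomDistStorDirGuess} and~\eqref{eq:BobMomDistStorConvGuess}, and the use of Corollary~\ref{co:impGuess} and Theorem~\ref{th:optGuessFun} for~\eqref{eq:EveMomDistStorConvGuessAlternative} are the same steps the paper takes.

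The gap is exactly where you flag it: the ``pad-free'' claim in the crux. Your best-response argument shows only that the set of pad-free pairs is closed under alternating best responses---if $G_2$ ignores $U$ then an optimal $G_1$ may be chosen to ignore $\widetilde V_{\textnormal s}$, and symmetrically. That is a fixed-point statement, not a minimization statement: it does \emph{not} rule out a strictly better pair $(G_1,G_2)$ in which both functions use their pad halves. Concretely, once $G_2$ depends on $U$, then conditional on $(X,Y,\widetilde V_{\textnormal s},V_1)$ one has $U=\widetilde V_{\textnormal s}\ominus V_{\textnormal s}(X,Y)$, so the ``inner'' term $G_2(X\,|\,Y,U,V_2)$ that $G_1$ is best-responding against genuinely varies with $\widetilde V_{\textnormal s}$; there is no obvious averaging or domination argument that lets you strip the pads from an arbitrary optimal pair without possibly increasing the objective.

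The paper avoids this issue altogether by a different device. It never claims that optimal guessing functions ignore the pad. Instead it observes that, because $V_{\textnormal s}$ is a deterministic function of $(X,Y)$, the pad $U$ is recoverable from $(X,Y,M_k)$ for \emph{each} $k$; hence any guessing function for $X$ given $(Y,M_k)$ lifts, at no extra cost, to a guessing function for the pair $(X,U)$ given $(Y,M_k)$ (Claim~1). Then, for any pair $G_1,G_2$, the paper packages the accomplice's choice together with the revealed hint into a single variable $Z=(I,\hat U,\hat V)$ taking values in a set of size $c_{\textnormal s}(c_1+c_2)$ and exhibits a single guessing function $\guess{}{\cdot,\cdot|Y,Z}$ for $(X,U)$ that achieves $\guess{}{X,U|Y,Z}=G_1(X,U|Y,M_1)\wedge G_2(X,U|Y,M_2)$ (Claim~2). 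Applying the converse of Corollary~\ref{co:equivBunteResultGuessing} to the pair $(X,U)$ and using $\renent{\tirho}{X,U|Y}=\renent{\tirho}{X|Y}+\log c_{\textnormal s}$ (since $U$ is uniform and independent of $(X,Y)$), the $\log c_{\textnormal s}$ cancels the extra factor $c_{\textnormal s}$ in the alphabet size, yielding exactly $\log(c_1+c_2)$ in the exponent. This route works for \emph{all} $(G_1,G_2)$, pad-dependent or not, so it delivers~\eqref{eq:EveMomDistStorDirGuess} without the problematic reduction you propose.
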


\begin{proof}
See Section~\ref{sec:distStorProofsThmsFiniteBlockl}.
\end{proof}

\begin{theorem}[Finite-Blocklength List-Version]\label{th:distStor} \hspace{-2mm} If $| \setM_1 | \, | \setM_2 | > \log \card \setX + 2$, then for every triple $( c_{\textnormal s}, c_1, c_2 ) \in \naturals^3$ satisfying
\begin{subequations}\label{bl:condCsC1C2}
\begin{IEEEeqnarray}{l}
c_{\textnormal s} \leq \card {\setM_1} \wedge \card {\setM_2},\quad c_1 \leq \bigl\lfloor \card {\setM_1} / c_{\textnormal s} \bigr\rfloor, \quad c_2 \leq \bigl\lfloor \card {\setM_2} / c_{\textnormal s} \bigr\rfloor, \\
c_{\textnormal s} c_1 c_2 > \log \card \setX + 2, \label{eq:condCsC1C2List}
\end{IEEEeqnarray}
\end{subequations}
there is a choice of the conditional PMF in \eqref{eq:aliceEncPMF} for which Bob's ambiguity about $X$ is upper-bounded by
\begin{IEEEeqnarray}{l}
\mathscr A_{\textnormal B}^{(\textnormal l)} ( P_{X,Y} ) < 1 + 2^{\rho ( \renent {\tirho} {X | Y } - \log ( c_{\textnormal s} c_1 c_2 - \log \card \setX - 2 ) + 2 )}, \label{eq:BobMomDistStorDir}
\end{IEEEeqnarray}
and Eve's ambiguity about $X$ is lower-bounded by
\begin{IEEEeqnarray}{l}
\mathscr A_{\textnormal E} ( P_{X,Y} ) \geq \bigl( 1 + \ln \card \setX \bigr)^{-\rho} 2^{\rho ( \renent {\tirho} { X | Y } - \log ( c_1 + c_2 ) )}. \label{eq:EveMomDistStorDir}
\end{IEEEeqnarray}
Conversely, for every conditional PMF, Bob's ambiguity is lower-bounded by
\begin{IEEEeqnarray}{l}
\mathscr A_{\textnormal B}^{(\textnormal l)} ( P_{X,Y} ) \geq 2^{\rho ( \renent {\tirho} {X | Y } - \log ( \card {\setM_1} \, \card {\setM_2} ))} \vee 1, \label{eq:BobMomDistStorConv}
\end{IEEEeqnarray}
and Eve's ambiguity is upper-bounded by
\begin{IEEEeqnarray}{l}
\mathscr A_{\textnormal E} ( P_{X,Y} ) \leq \bigl( \card {\setM_1} \wedge \card {\setM_2} \bigr)^\rho \mathscr A^{(\textnormal l)}_{\textnormal B} ( P_{X,Y} ) \wedge 2^{\rho \renent {\tirho} {X | Y }}, \label{eq:EveMomDistStorConv}
\end{IEEEeqnarray}
where \eqref{eq:EveMomDistStorConv} holds even if we replace \eqref{eq:distEncSecrecyMeasure} by \eqref{eq:distEncSecrecyMeasureAlternative}, i.e.,
\begin{IEEEeqnarray}{l}
\tilde {\mathscr A}_{\textnormal E} ( P_{X,Y} ) \leq \bigl( \card {\setM_1} \wedge \card {\setM_2} \bigr)^\rho \mathscr A^{(\textnormal l)}_{\textnormal B} ( P_{X,Y} ) \wedge 2^{\rho \renent {\tirho} {X | Y }}, \label{eq:EveMomDistStorConvAlternative}
\end{IEEEeqnarray}
\end{theorem}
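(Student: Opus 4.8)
The plan is to build Alice's encoder around the one-time-pad construction sketched just before the theorem, and to read each of the four bounds off the earlier results of Sections~\ref{sec:notation} and \ref{sec:listsAndGuesses}.

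\emph{Achievability.} Since $c_{\textnormal s}c_1c_2>\log\card\setX+2$ by \eqref{eq:condCsC1C2List}, Theorem~\ref{th:optTaskEnc} supplies a deterministic task-encoder $\enc{\cdot|Y}$ into a set $\setZ$ with $\card\setZ=c_{\textnormal s}c_1c_2$ for which $\BigEx{}{\bigl|\dec{\enc{X|Y}|Y}\bigr|^\rho}<1+2^{\rho(\renent{\tirho}{X|Y}-\log(c_{\textnormal s}c_1c_2-\log\card\setX-2)+2)}$. Identify $\setZ$ with $\setV_{\textnormal s}\times\setV_1\times\setV_2$ where $\card{\setV_\nu}=c_\nu$, write $\enc{\cdot|Y}=(V_{\textnormal s},V_1,V_2)$, and identify $\setV_{\textnormal s}$ with $\{0,\dots,c_{\textnormal s}-1\}$. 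Alice draws $U$ uniformly on $\setV_{\textnormal s}$ independently of $(X,Y)$, sets $\widetilde V_{\textnormal s}=U\oplus_{c_{\textnormal s}}V_{\textnormal s}$, and outputs $M_1=(\widetilde V_{\textnormal s},V_1)$ and $M_2=(U,V_2)$, which by \eqref{bl:condCsC1C2} fit into $\setM_1$ and $\setM_2$ (for $c_{\textnormal s}=1$ no randomization is used). From $(M_1,M_2)$ Bob recovers $V_{\textnormal s}$ and hence $\enc{X|Y}$, and since $(\widetilde V_{\textnormal s},U)$ reveals nothing about $(X,Y)$ beyond $V_{\textnormal s}$, his list $\setL^Y_{M_1,M_2}$ equals $\dec{\enc{X|Y}|Y}$ almost surely; this gives \eqref{eq:BobMomDistStorDir}. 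For Eve the key fact is that $\widetilde V_{\textnormal s}$ and $U$ are each independent of $(X,Y)$, so whichever hint the genie reveals, its one-time-pad component is useless for guessing $X$; the idea is therefore that Eve's effective observation reduces to $Y$ together with one of $V_1,V_2$, so that the genie's choice bundled with the corresponding component is a single chance variable taking at most $c_1+c_2$ values. The converse part of Corollary~\ref{co:impGuess} (with $\card\setZ=c_1+c_2$), followed by Theorem~\ref{th:optGuessFun}, then yields \eqref{eq:EveMomDistStorDir}.

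\emph{Converse.} The pair $(M_1,M_2)$ is a stochastic task-encoder for $X$ given $Y$ with description set $\setM_1\times\setM_2$, so the converse of Theorem~\ref{th:optTaskEnc} with $\card\setZ=\card{\setM_1}\,\card{\setM_2}$ gives \eqref{eq:BobMomDistStorConv}. For Eve, recall $\mathscr A_{\textnormal E}(P_{X,Y})\le\tilde{\mathscr A}_{\textnormal E}(P_{X,Y})$ from \eqref{eq:eveAltAmb}, so it suffices to bound $\tilde{\mathscr A}_{\textnormal E}$. Discarding $M_1$, Theorem~\ref{th:optGuessFun} supplies a guessing function for $X$ given $Y$ of $\rho$-th moment at most $2^{\rho\renent{\tirho}{X|Y}}$, whence $\tilde{\mathscr A}_{\textnormal E}(P_{X,Y})\le2^{\rho\renent{\tirho}{X|Y}}$. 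On the other hand, $M_2$ is a stochastic task-encoder for $X$ given the side information $(Y,M_1)$ whose decoding lists are exactly $\{\setL^y_{m_1,m_2}\}$, so the first part of Theorem~\ref{th:relGuessEnc} induces a guessing function $\guess 1{\cdot|Y,M_1}$ with $\bigEx{}{\guess 1{X|Y,M_1}^\rho}\le\card{\setM_2}^\rho\,\mathscr A^{(\textnormal l)}_{\textnormal B}(P_{X,Y})$; symmetrically there is a $\guess 2{\cdot|Y,M_2}$ with $\bigEx{}{\guess 2{X|Y,M_2}^\rho}\le\card{\setM_1}^\rho\,\mathscr A^{(\textnormal l)}_{\textnormal B}(P_{X,Y})$, and taking the minimum over $k\in\{1,2\}$ gives $\tilde{\mathscr A}_{\textnormal E}(P_{X,Y})\le(\card{\setM_1}\wedge\card{\setM_2})^\rho\,\mathscr A^{(\textnormal l)}_{\textnormal B}(P_{X,Y})$. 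Combining the two bounds on $\tilde{\mathscr A}_{\textnormal E}$ proves \eqref{eq:EveMomDistStorConv} and hence also \eqref{eq:EveMomDistStorConvAlternative}.

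\emph{Main obstacle.} Every step except Eve's achievability bound \eqref{eq:EveMomDistStorDir} is essentially a direct appeal to an earlier result. The delicate point is the ``pad removal'' behind that bound: since Eve's guessing function may depend on the pad component of the hint she is shown, and the genie's choice of hint can a priori correlate with Alice's randomization (the $\wedge$ is inside the expectation), one must argue with care that the effective side information available to Eve together with the genie really is a chance variable of cardinality $c_1+c_2$ rather than $c_{\textnormal s}(c_1+c_2)$---equivalently, that the one-time pad fully neutralises the factor $c_{\textnormal s}$ by which $\card{\setM_1}$ and $\card{\setM_2}$ are forced to exceed $c_1$ and $c_2$. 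This is precisely where the independence of $U$ and of $\widetilde V_{\textnormal s}$ from $(X,Y)$ must be used in full; once it is in place, \eqref{eq:EveMomDistStorDir} follows from the converse half of Corollary~\ref{co:impGuess} and Theorem~\ref{th:optGuessFun}.
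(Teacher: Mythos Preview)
Your converse argument and the achievability of Bob's bound \eqref{eq:BobMomDistStorDir} are correct and essentially match the paper (your use of Theorem~\ref{th:relGuessEnc} for the converse on Eve is a minor variant of the paper's use of Corollary~\ref{co:impGuess}, but it works).

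The gap is exactly where you flagged it: the ``pad removal'' behind \eqref{eq:EveMomDistStorDir}. You assert that because $\widetilde V_{\textnormal s}$ and $U$ are each independent of $(X,Y)$, Eve's effective observation collapses to $(I,V_I)$ with $I\in\{1,2\}$, a variable of cardinality $c_1+c_2$. But this does not follow. The optimal $G_1(\cdot\,|\,y,\widetilde v_{\textnormal s},v_1)$ and $G_2(\cdot\,|\,y,u,v_2)$ may depend on the pad components, and because the minimum is \emph{inside} the expectation, the genie's choice $I$ couples $\widetilde V_{\textnormal s}$ and $U$ to everything else (in particular, conditionally on $(X,Y)$ the pair $(\widetilde V_{\textnormal s},U)$ determines $V_{\textnormal s}$, so the two pad components are \emph{not} jointly independent of $(X,Y)$). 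There is no averaging argument that lets you simply discard $\hat U$ from the side information $(I,\hat U,\hat V)$, whose support has size $c_{\textnormal s}(c_1+c_2)$, not $c_1+c_2$.

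The paper does not remove the factor $c_{\textnormal s}$ from the side-information cardinality; it neutralises it on the \emph{entropy} side instead. The two key observations are: (i) for each $k$ the pad $U$ is a deterministic function of $(X,Y,M_k)$ (from $M_2$ directly, and from $M_1$ via $U=\widetilde V_{\textnormal s}\ominus_{c_{\textnormal s}}V_{\textnormal s}(X,Y)$), so any guessing function for $X$ given $(Y,M_k)$ can be upgraded to one for $(X,U)$ with the \emph{same} number of guesses; and (ii) since $U$ is uniform on $\setV_{\textnormal s}$ and independent of $(X,Y)$, one has $\renent{\tirho}{X,U\mid Y}=\renent{\tirho}{X\mid Y}+\log c_{\textnormal s}$. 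One then bundles the genie's choice and the full hint into $(I,\hat U,\hat V)$ of cardinality $c_{\textnormal s}(c_1+c_2)$, applies Corollary~\ref{co:equivBunteResultGuessing} to the pair $(X,U)$, and the extra $\log c_{\textnormal s}$ in the entropy cancels the extra $\log c_{\textnormal s}$ in the cardinality, yielding exactly \eqref{eq:EveMomDistStorDir}. This lifting to $(X,U)$ is the missing idea in your sketch.
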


\begin{proof}
See Section~\ref{sec:distStorProofsThmsFiniteBlockl}.
\end{proof}

We next present the finite-blocklength results (Theorems~\ref{th:distStorGuess} and \ref{th:distStor}) in a simplified and more accessible form:

\begin{corollary}[Simplified Finite-Blocklength Guessing-Version]\label{prop:intGuessing}
For any constant $\mathscr U_{\textnormal B}$ satisfying
\begin{IEEEeqnarray}{l}
\mathscr U_{\textnormal B} \geq 1 + 2^\rho \bigl( \card {\setM_1} \, \card {\setM_2} \bigr)^{-\rho} 2^{\rho \renent {\tirho}{X|Y} }, \label{eq:intUBLB}
\end{IEEEeqnarray}
there is a choice of the conditional PMF in \eqref{eq:aliceEncPMF} for which Bob's ambiguity about $X$ is upper-bounded by
\begin{IEEEeqnarray}{l}
\mathscr A^{(\textnormal g)}_{\textnormal B} ( P_{X,Y} ) < \mathscr U_{\textnormal B}, \label{eq:intBobAmbUB}
\end{IEEEeqnarray}
and Eve's ambiguity about $X$ is lower-bounded by
\begin{IEEEeqnarray}{l} 
\mathscr A_{\textnormal E} ( P_{X,Y} ) \geq 2^{-\rho} \bigl( 1 + \ln \card \setX \bigr)^{-\rho} \Bigl[ 2^{-4 \rho} \bigl( \card {\setM_1} \wedge \card {\setM_2} \bigr)^\rho ( \mathscr U_{\textnormal B} - 1 ) \wedge 2^{\rho \renent {\tirho}{X|Y}} \Bigr]. \label{eq:intEveAmbLB}
\end{IEEEeqnarray}
Conversely, \eqref{eq:intBobAmbUB} cannot hold for
\begin{IEEEeqnarray}{l} 
\mathscr U_{\textnormal B} < \bigl( 1 + \ln \card \setX \bigr)^{-\rho} \bigl( \card {\setM_1} \, \card {\setM_2} \bigr)^{-\rho} 2^{\rho \renent {\tirho}{X|Y} } \vee 1, \label{eq:intUBLBConv}
\end{IEEEeqnarray}
and if Bob's ambiguity satisfies \eqref{eq:intBobAmbUB} for some $\mathscr U_{\text B}$, then Eve's ambigutiy about $X$ is upper-bounded by
\begin{IEEEeqnarray}{l} 
\mathscr A_{\textnormal E} ( P_{X,Y} ) \leq \bigl( \card {\setM_1} \wedge \card {\setM_2} \bigr)^\rho \mathscr U_{\textnormal B} \wedge 2^{\rho \renent {\tirho}{X|Y}}. \label{eq:intEveAmbUB}
\end{IEEEeqnarray}
\end{corollary}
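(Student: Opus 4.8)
The plan is to read Corollary~\ref{prop:intGuessing} off Theorem~\ref{th:distStorGuess} by choosing the free parameters $(c_{\textnormal s},c_1,c_2)$ appropriately; the generous slack factors $2^{-\rho}$ and $2^{-4\rho}$ in~\eqref{eq:intEveAmbLB} are there precisely to absorb the integer rounding. For the direct part I would set $\beta\triangleq 2\cdot 2^{\renent{\tirho}{X|Y}}(\mathscr U_{\textnormal B}-1)^{-1/\rho}$, which is well defined since~\eqref{eq:intUBLB} forces $\mathscr U_{\textnormal B}>1$, and which is nothing but a restatement of~\eqref{eq:intUBLB} as $\beta\le\card{\setM_1}\,\card{\setM_2}$. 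The point of this $\beta$ is that any triple obeying~\eqref{bl:condCsC1C2Guess} with $c_{\textnormal s}c_1c_2\ge\beta$ turns~\eqref{eq:BobMomDistStorDirGuess} into $\mathscr A^{(\textnormal g)}_{\textnormal B}(P_{X,Y})<1+2^{\rho(\renent{\tirho}{X|Y}-\log(c_{\textnormal s}c_1c_2)+1)}\le 1+2^{\rho(\renent{\tirho}{X|Y}-\log\beta+1)}=\mathscr U_{\textnormal B}$, which is~\eqref{eq:intBobAmbUB}. So it remains to produce such a triple whose $c_1+c_2$ is as small as possible, and then read Eve's bound off~\eqref{eq:EveMomDistStorDirGuess}.

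Assume WLOG $\card{\setM_1}\le\card{\setM_2}$ and abbreviate $m\triangleq\card{\setM_1}=\card{\setM_1}\wedge\card{\setM_2}$. If $\beta\le m$, I would take $c_1=c_2=1$ and $c_{\textnormal s}=m\wedge(\lfloor\beta\rfloor+1)$; then~\eqref{bl:condCsC1C2Guess} holds, $c_{\textnormal s}c_1c_2=c_{\textnormal s}\ge\beta$, and $c_1+c_2=2$. If $\beta>m$, I would take $(c_{\textnormal s},c_1,c_2)$ of the rough sizes $c_{\textnormal s}\approx\card{\setM_1}\card{\setM_2}/\beta$, $c_1\approx\beta/\card{\setM_2}$, $c_2\approx\beta/m$, so that $c_{\textnormal s}c_1c_2\approx\beta$, $c_{\textnormal s}c_1\lesssim m\le\card{\setM_1}$, $c_{\textnormal s}c_2\lesssim\card{\setM_2}$, and $c_1+c_2\lesssim 2\beta/m$. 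Turning this into genuine integers is elementary — for instance one can take $c_{\textnormal s}=\lfloor m^2/(4\beta)\rfloor\vee 1$ and $c_1=c_2=\lceil\sqrt{\beta/c_{\textnormal s}}\,\rceil$ when $m<\beta\le m^2$, and $c_{\textnormal s}=1$, $c_1=m$, $c_2=\lceil\beta/m\rceil$ when $\beta>m^2$ — and a short case check on the size of $\beta$ relative to $m$ and $m^2$ shows that in every case~\eqref{bl:condCsC1C2Guess} holds, $c_{\textnormal s}c_1c_2\ge\beta$, and $c_1+c_2\le 16\beta/m$.

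For such a triple,~\eqref{eq:EveMomDistStorDirGuess} yields $\mathscr A_{\textnormal E}(P_{X,Y})\ge(1+\ln\card\setX)^{-\rho}2^{\rho\renent{\tirho}{X|Y}}(c_1+c_2)^{-\rho}$. When $\beta\le m$ this is $\ge 2^{-\rho}(1+\ln\card\setX)^{-\rho}2^{\rho\renent{\tirho}{X|Y}}$, which is already at least the right-hand side of~\eqref{eq:intEveAmbLB}, since the bracketed minimum there never exceeds $2^{\rho\renent{\tirho}{X|Y}}$. When $\beta>m$, substituting $c_1+c_2\le 16\beta/m$ together with the identity $\beta^{-\rho}=(\mathscr U_{\textnormal B}-1)\,2^{-\rho}2^{-\rho\renent{\tirho}{X|Y}}$ (merely the definition of $\beta$) gives $\mathscr A_{\textnormal E}(P_{X,Y})\ge 2^{-\rho}(1+\ln\card\setX)^{-\rho}2^{-4\rho}m^\rho(\mathscr U_{\textnormal B}-1)$, again at least the right-hand side of~\eqref{eq:intEveAmbLB}, since the bracketed minimum never exceeds $2^{-4\rho}m^\rho(\mathscr U_{\textnormal B}-1)$. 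This establishes~\eqref{eq:intEveAmbLB}. The converse is immediate from Theorem~\ref{th:distStorGuess}: \eqref{eq:BobMomDistStorConvGuess} shows that every conditional PMF satisfies $\mathscr A^{(\textnormal g)}_{\textnormal B}(P_{X,Y})\ge(1+\ln\card\setX)^{-\rho}(\card{\setM_1}\,\card{\setM_2})^{-\rho}2^{\rho\renent{\tirho}{X|Y}}\vee 1$, so~\eqref{eq:intBobAmbUB} cannot hold for any $\mathscr U_{\textnormal B}$ below this value, which is~\eqref{eq:intUBLBConv}; and if~\eqref{eq:intBobAmbUB} does hold, then~\eqref{eq:EveMomDistStorConvGuess} gives $\mathscr A_{\textnormal E}(P_{X,Y})\le(\card{\setM_1}\wedge\card{\setM_2})^\rho\mathscr A^{(\textnormal g)}_{\textnormal B}(P_{X,Y})\wedge 2^{\rho\renent{\tirho}{X|Y}}\le(\card{\setM_1}\wedge\card{\setM_2})^\rho\mathscr U_{\textnormal B}\wedge 2^{\rho\renent{\tirho}{X|Y}}$, which is~\eqref{eq:intEveAmbUB}.

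The one genuinely delicate step is the integer-parameter selection in the regime $\beta>m$: reconciling the per-hint constraints $c_{\textnormal s}c_1\le\card{\setM_1}$ and $c_{\textnormal s}c_2\le\card{\setM_2}$ with the product bound $c_{\textnormal s}c_1c_2\ge\beta$ while keeping $c_1+c_2$ within a constant factor of $\beta/m$ forces a split into a handful of subcases according to how $\beta$ compares with $m$ and $m^2$. Everything else is routine bookkeeping.
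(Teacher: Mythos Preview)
Your proposal is correct and follows the same overall strategy as the paper: both derive the corollary from Theorem~\ref{th:distStorGuess} by exhibiting a suitable triple $(c_{\textnormal s},c_1,c_2)$ through a case analysis, and both obtain the converse directly from \eqref{eq:BobMomDistStorConvGuess} and \eqref{eq:EveMomDistStorConvGuess}.

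The implementation details differ. The paper (assuming $|\setM_2|\le|\setM_1|$) splits into three cases according to how $\mathscr U_{\textnormal B}$ compares with thresholds involving \emph{both} hint sizes; in its middle case it takes an asymmetric choice $c_{\textnormal s}=|\setM_2|$, $c_2=1$, $c_1$ a ceiling, and in its last case it defines $c_{\textnormal s}$ implicitly as the largest integer $k$ satisfying a certain inequality and then sets $c_1=\lfloor|\setM_1|/k\rfloor$, $c_2=\lfloor|\setM_2|/k\rfloor$. Your route, via the auxiliary quantity $\beta$, splits only according to $\beta$ versus $m$ and $m^2$ (so only the \emph{smaller} hint size enters the case boundaries), uses a symmetric $c_1=c_2$ choice in the middle regime, and gives explicit closed forms throughout. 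Your choices do satisfy \eqref{bl:condCsC1C2Guess}, $c_{\textnormal s}c_1c_2\ge\beta$, and $c_1+c_2\le 16\beta/m$ in every subcase (the ``short case check'' you defer really is short), so the claimed constants $2^{-\rho}$ and $2^{-4\rho}$ come out exactly as in the statement. The two decompositions are thus different bookkeeping for the same idea; yours has the mild advantage of fully explicit parameters, the paper's of slightly cleaner intermediate constants in the middle case.
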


\begin{proof}
The result is a corollary to Theorem~\ref{th:distStorGuess} (see Appendix~\ref{app:pfIntGuessing} for a proof).
\end{proof}

\begin{corollary}[Simplified Finite-Blocklength List-Version]\label{prop:intList}
For $|\setM_1| \, |\setM_2| > \log | \setX | + 2$ and any constant $\mathscr U_{\textnormal B}$ satisfying
\begin{IEEEeqnarray}{l} 
\mathscr U_{\textnormal B} \geq 1 + 2^{\rho ( \renent {\tirho} { X | Y }  - \log ( | \setM_1 | \, | \setM_2 | - \log | \setX | - 2 ) + 2 ) }, \label{eq:intUBLBList}
\end{IEEEeqnarray} 
there is a choice of the conditional PMF in \eqref{eq:aliceEncPMF} for which Bob's ambiguity about $X$ is upper-bounded by
\begin{IEEEeqnarray}{l} 
\mathscr A^{(\textnormal l)}_{\textnormal B} ( P_{X,Y} ) < \mathscr U_{\textnormal B}, \label{eq:intBobAmbUBList}
\end{IEEEeqnarray}
and Eve's ambiguity about $X$ is lower-bounded by
\begin{IEEEeqnarray}{rcl} 
\mathscr A_{\textnormal E} ( P_{X,Y} ) \geq 2^{-\rho} \bigl( 1 + \ln \card \setX \bigr)^{-\rho} &\Bigl[ &2^{-6 \rho} \bigl( \card {\setM_1} \wedge \card {\setM_2} \bigr)^\rho ( \mathscr U_{\textnormal B} - 1 ) \nonumber \\
&&\wedge \, 2^{-4 \rho} \bigl( 2 + \log \card \setX \bigr)^{-\rho} \bigl( \card {\setM_1} \wedge \card {\setM_2} \bigr)^\rho 2^{\rho \renent {\tirho}{X|Y}} \nonumber \\
&&\wedge \, 2^{\rho \renent {\tirho}{X|Y}} \Bigr]. \label{eq:intEveAmbLBList}
\end{IEEEeqnarray}
Conversely, \eqref{eq:intBobAmbUBList} cannot hold for
\begin{IEEEeqnarray}{l} 
\mathscr U_{\textnormal B} < \bigl( \card {\setM_1} \, \card {\setM_2} \bigr)^{-\rho} 2^{\rho \renent {\tirho}{X|Y} } \vee 1, \label{eq:intUBLBConvList}
\end{IEEEeqnarray}
and if Bob's ambiguity satisfies \eqref{eq:intBobAmbUBList} for some $\mathscr U_{\textnormal B}$, then Eve's ambigutiy about $X$ is upper-bounded by
\begin{IEEEeqnarray}{l} 
\mathscr A_{\textnormal E} ( P_{X,Y} ) \leq \bigl( \card {\setM_1} \wedge \card {\setM_2} \bigr)^\rho \mathscr U_{\textnormal B} \wedge 2^{\rho \renent {\tirho} {X \left| Y \right.}}. \label{eq:intEveAmbUBList}
\end{IEEEeqnarray}
\end{corollary}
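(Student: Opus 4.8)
The plan is to deduce Corollary~\ref{prop:intList} from Theorem~\ref{th:distStor}, exactly the way Corollary~\ref{prop:intGuessing} is deduced from Theorem~\ref{th:distStorGuess} in Appendix~\ref{app:pfIntGuessing}. The two converse statements are essentially rewrites of the converse part of Theorem~\ref{th:distStor}: inequality~\eqref{eq:intUBLBConvList} is \eqref{eq:BobMomDistStorConv} solved for $\mathscr A^{(\textnormal l)}_{\textnormal B} ( P_{X,Y} )$, and \eqref{eq:intEveAmbUBList} follows from \eqref{eq:EveMomDistStorConv} upon replacing $\mathscr A^{(\textnormal l)}_{\textnormal B} ( P_{X,Y} )$ by the assumed upper bound $\mathscr U_{\textnormal B}$. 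Hence all the work is in the direct part: given $\mathscr U_{\textnormal B}$ obeying \eqref{eq:intUBLBList}, one must exhibit a triple $( c_{\textnormal s}, c_1, c_2 ) \in \naturals^3$ satisfying the admissibility conditions \eqref{bl:condCsC1C2} for which \eqref{eq:BobMomDistStorDir} yields $\mathscr A^{(\textnormal l)}_{\textnormal B} ( P_{X,Y} ) < \mathscr U_{\textnormal B}$ while \eqref{eq:EveMomDistStorDir} yields the three-way minimum in \eqref{eq:intEveAmbLBList}.

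The guiding principle for choosing the triple is that Eve's guarantee \eqref{eq:EveMomDistStorDir} improves as $c_1 + c_2$ decreases, whereas Bob's guarantee \eqref{eq:BobMomDistStorDir} improves as the product $c_{\textnormal s} c_1 c_2$ increases; since a prescribed product is attained with the smallest sum by loading as much as possible onto the shared coordinate $c_{\textnormal s}$, one should push $c_{\textnormal s}$ as high as the target on Bob permits. Set $N \triangleq \log \card \setX + 2 + 2^{\renent {\tirho}{X|Y} + 2} ( \mathscr U_{\textnormal B} - 1 )^{-1/\rho}$. Since the right-hand side of \eqref{eq:BobMomDistStorDir} is decreasing in $c_{\textnormal s} c_1 c_2$ and at $c_{\textnormal s} c_1 c_2 = N$ its exponent collapses to $\tfrac{1}{\rho}\log ( \mathscr U_{\textnormal B} - 1 )$, any admissible triple with $c_{\textnormal s} c_1 c_2 \geq N$ gives $\mathscr A^{(\textnormal l)}_{\textnormal B} ( P_{X,Y} ) < \mathscr U_{\textnormal B}$ (using the strictness in Theorem~\ref{th:distStor}); moreover hypothesis \eqref{eq:intUBLBList} is precisely the inequality $N \leq \card {\setM_1} \card {\setM_2}$, so such a triple exists, and $N > \log\card\setX + 2$ takes care of condition~\eqref{eq:condCsC1C2List}.

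I would then distinguish two cases. If $N \leq \card {\setM_1} \wedge \card {\setM_2}$, take $c_{\textnormal s} = \card {\setM_1} \wedge \card {\setM_2}$ and $c_1 = c_2 = 1$; then $c_1 + c_2 = 2$ and \eqref{eq:EveMomDistStorDir} gives $\mathscr A_{\textnormal E} ( P_{X,Y} ) \geq 2^{-\rho} ( 1 + \ln \card \setX )^{-\rho} 2^{\rho \renent {\tirho}{X|Y}}$, the third term inside the bracket of \eqref{eq:intEveAmbLBList}. If $N > \card {\setM_1} \wedge \card {\setM_2}$, assume w.l.o.g.\ $\card {\setM_1} \leq \card {\setM_2}$, take $c_{\textnormal s}$ to be the largest positive integer with $c_{\textnormal s} \lfloor \card {\setM_1} / c_{\textnormal s} \rfloor \lfloor \card {\setM_2} / c_{\textnormal s} \rfloor \geq N$, and $c_1 = \lfloor \card {\setM_1} / c_{\textnormal s} \rfloor$, $c_2 = \lfloor \card {\setM_2} / c_{\textnormal s} \rfloor$; using $\lfloor m/c \rfloor \geq m/(2c)$ for $1 \leq c \leq m$ one gets $c_{\textnormal s} \gtrsim \card {\setM_1} \card {\setM_2} / (4N)$ and hence $c_1 + c_2 \leq ( \card {\setM_1} + \card {\setM_2} ) / c_{\textnormal s} \lesssim 8 N / ( \card {\setM_1} \wedge \card {\setM_2} )$. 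Feeding this into \eqref{eq:EveMomDistStorDir} and splitting $N^{-\rho} \geq 2^{-\rho} \bigl[ ( \log \card \setX + 2 )^{-\rho} \wedge 2^{-\rho ( \renent {\tirho}{X|Y} + 2 )} ( \mathscr U_{\textnormal B} - 1 ) \bigr]$ (the elementary inequality $( a + b )^{-\rho} \geq 2^{-\rho} ( a^{-\rho} \wedge b^{-\rho} )$ applied to the two summands of $N$) produces the remaining two terms inside the bracket of \eqref{eq:intEveAmbLBList}: the $( 2 + \log \card \setX )^{-\rho} 2^{\rho \renent {\tirho}{X|Y}}$ term from the $\log \card \setX + 2$ summand, the $( \mathscr U_{\textnormal B} - 1 )$ term from the other. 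Since the bound actually obtained equals one of these three branch values and each of them is at least the minimum of all three, \eqref{eq:intEveAmbLBList} holds in every case.

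The only genuinely delicate part — and the source of the somewhat ungainly constants $2^{-\rho}$, $2^{-4\rho}$, $2^{-6\rho}$ — is the floor-function bookkeeping: one must control the rounding both in passing from $\card {\setM_1}, \card {\setM_2}$ to $c_1 = \lfloor \card {\setM_1}/c_{\textnormal s} \rfloor$ and $c_2 = \lfloor \card {\setM_2}/c_{\textnormal s} \rfloor$ and in selecting the best integer $c_{\textnormal s}$, invoke \eqref{eq:ceilApprox} at the right moments, and spell out the case boundary ($N$ versus $\card {\setM_1} \wedge \card {\setM_2}$, and which of the two summands of $N$ dominates — governed by how far $\mathscr U_{\textnormal B}$ exceeds the right-hand side of \eqref{eq:intUBLBList}). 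None of this is conceptually hard, and there is slack to spare: a tight accounting yields better constants than claimed, so the conservative powers of $2$ in \eqref{eq:intEveAmbLBList} comfortably absorb every loss. The full write-up runs entirely parallel to that of Corollary~\ref{prop:intGuessing} in Appendix~\ref{app:pfIntGuessing}, the extra $\log \card \setX$ penalty inherent in task-encoding accounting for the additional factor $( 2 + \log \card \setX )^{-\rho}$ and for the slightly larger constants relative to the guessing version.
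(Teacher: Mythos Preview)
Your converse and your first case ($N \le \card{\setM_1}\wedge\card{\setM_2}$) are fine and match the paper. The gap is in your second case: by always setting $c_1=\lfloor\card{\setM_1}/c_{\textnormal s}\rfloor$ and $c_2=\lfloor\card{\setM_2}/c_{\textnormal s}\rfloor$, you lose control of $c_1+c_2$ in the regime where $c_{\textnormal s}$ is capped at $\card{\setM_1}\wedge\card{\setM_2}$ but one of the $c_i$ could be taken much smaller than its floor maximum. Concretely, your estimate $c_{\textnormal s}\gtrsim\card{\setM_1}\card{\setM_2}/(4N)$ comes from applying $\lfloor m/c\rfloor\ge m/(2c)$ at $c=c_{\textnormal s}+1$, and this is only valid when $c_{\textnormal s}+1\le\card{\setM_1}\wedge\card{\setM_2}$; once $c_{\textnormal s}$ sits at the cap, $\lfloor(\card{\setM_1}\wedge\card{\setM_2})/(c_{\textnormal s}+1)\rfloor=0$ and the next-step argument gives nothing. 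In that regime (say $\card{\setM_2}\le\card{\setM_1}$, $\card{\setM_2}<N\le\card{\setM_2}\lfloor\card{\setM_1}/\card{\setM_2}\rfloor$) your choice yields $c_1+c_2\approx\card{\setM_1}/\card{\setM_2}$, which can be arbitrarily larger than the $N/\card{\setM_2}$ you claim: take $\rho=1$, $\card{\setM_2}=10$, $\card{\setM_1}=10^{6}$, $\renent{\tirho}{X|Y}=10$, $\log\card\setX=5$, $\mathscr U_{\textnormal B}-1\approx 44$ (so $N\approx100$); your $c_1+c_2\approx10^{5}$ gives an Eve bound of order $10^{-2}(1+\ln\card\setX)^{-1}$, whereas \eqref{eq:intEveAmbLBList} demands roughly $3.4\,(1+\ln\card\setX)^{-1}$.

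This is exactly why the paper's proof (Appendix~\ref{app:pfIntList}) uses \emph{three} cases rather than two. In the middle case one keeps $c_{\textnormal s}=\card{\setM_1}\wedge\card{\setM_2}$, sets the corresponding $c_i$ to $1$, but sets the other $c_j$ to $\lceil N/(\card{\setM_1}\wedge\card{\setM_2})\rceil$ rather than to its maximal floor value; this keeps $c_1+c_2\lesssim N/(\card{\setM_1}\wedge\card{\setM_2})$ as needed. Your two-case merge therefore requires an additional sub-case in Case~B handling the capped-$c_{\textnormal s}$ situation, at which point you recover precisely the paper's three-case structure.
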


\begin{proof}
The result is a corollary to Theorem~\ref{th:distStor} (see Appendix~\ref{app:pfIntList} for a proof).
\end{proof}

Note that the simplified achievability results (namely \eqref{eq:intUBLB}--\eqref{eq:intEveAmbLB} in the guessing version and \eqref{eq:intUBLBList}--\eqref{eq:intEveAmbLBList} in the list version) match the corresponding converse results (namely \eqref{eq:intUBLBConv}--\eqref{eq:intEveAmbUB} in the guessing version and \eqref{eq:intUBLBConvList}--\eqref{eq:intEveAmbUBList} in the list version) up to polylogarithmic factors of $\card \setX$.

\subsection{Asymptotic Results} \label{sec:asymptoticResults}

Suppose now that $( X,Y )$ is an $n$-tuple. We study the asymptotic regime where $n$ tends to infinity. Recall that in this regime we refer to both $\mathscr A^{(\textnormal g)}_{\textnormal B}$ and $\mathscr A^{(\textnormal l)}_{\textnormal B}$ by $\mathscr A_{\textnormal B}$, because the results are the same for both versions of the problem. Theorems~\ref{th:distStorGuess} and \ref{th:distStor} imply the following asymptotic result:

\begin{theorem}[Privacy-Exponent] \label{th:asympDistStor}
Let $\bigl\{ (X_i,Y_i) \bigr\}_{i \in \naturals}$ be a discrete-time stochastic process with finite alphabet $\setX \times \setY$, and suppose its conditional R\'enyi entropy-rate $\renent {\tirho}{ \rndvecX | \rndvecY }$ is well-defined. Given any positive rate-pair $( R_1, R_2 )$, the privacy-exponent is
\begin{IEEEeqnarray}{l} 
\overbar{E_{\textnormal E}} = \begin{cases} \rho \bigl( R_1 \wedge R_2 \wedge \renent {\tirho}{ \rndvecX | \rndvecY } \bigr) & R_1 + R_2 > \renent {\tirho}{ \rndvecX | \rndvecY }, \\ - \infty, & R_1 + R_2 < \renent {\tirho}{ \rndvecX | \rndvecY }. \end{cases} \label{eq:secStrongConst}
\end{IEEEeqnarray}
\end{theorem}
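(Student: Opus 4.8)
The plan is to obtain Theorem~\ref{th:asympDistStor} from the finite-blocklength results by letting $n\to\infty$. Write $H\triangleq\renent{\tirho}{\rndvecX|\rndvecY}$, so by hypothesis $\renent{\tirho}{X^n|Y^n}=nH+o(n)$; recall $\card{\setM_1}=\lfloor 2^{nR_1}\rfloor$, $\card{\setM_2}=\lfloor 2^{nR_2}\rfloor$ (hence $\frac1n\log(\card{\setM_1}\card{\setM_2})\to R_1+R_2$ and $\frac1n\log(\card{\setM_1}\wedge\card{\setM_2})\to R_1\wedge R_2$), and $\card{\setX^n}=\card\setX^{\,n}$, so every polylogarithmic-in-$\card{\setX^n}$ prefactor is polynomial in $n$ and vanishes under $\frac1n\log$. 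Since the asymptotic claim is the same in the guessing and the list version, I treat both, invoking Theorem~\ref{th:distStorGuess} and Corollary~\ref{prop:intGuessing} in the guessing case and Theorem~\ref{th:distStor} and Corollary~\ref{prop:intList} in the list case; for the latter note that $\card{\setM_1}\card{\setM_2}\ge 2^{n(R_1+R_2)}/4>\log\card{\setX^n}+2$ for all large $n$ because $R_1+R_2>0$.

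\emph{Converse.} If $R_1+R_2<H$, then for \emph{every} encoder the converse bound on Bob's ambiguity (\eqref{eq:BobMomDistStorConvGuess} in the guessing case, \eqref{eq:BobMomDistStorConv} in the list case), applied to $P_{X^n,Y^n}$, gives $\mathscr A_{\textnormal B}(P_{X^n,Y^n})\ge(1+\ln\card{\setX^n})^{-\rho}2^{\rho(\renent{\tirho}{X^n|Y^n}-\log(\card{\setM_1}\card{\setM_2}))}$, whose normalized logarithm tends to $\rho(H-R_1-R_2)>0$; hence $\mathscr A_{\textnormal B}(P_{X^n,Y^n})\to\infty$, so \eqref{eq:bobAmbTo1} fails for every sequence of encoders, and by Definition~\ref{de:EvesAmbig} the privacy-exponent equals $-\infty$. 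If $R_1+R_2>H$, let $E_{\textnormal E}$ be any achievable ambiguity-exponent, witnessed by a sequence of encoders with $\mathscr A_{\textnormal B}(P_{X^n,Y^n})\to 1$, so $\frac1n\log\mathscr A_{\textnormal B}(P_{X^n,Y^n})\to 0$ (as $\mathscr A_{\textnormal B}\ge 1$). The converse bound on Eve's ambiguity (\eqref{eq:EveMomDistStorConvGuess}, resp.\ \eqref{eq:EveMomDistStorConv}) reads $\mathscr A_{\textnormal E}(P_{X^n,Y^n})\le(\card{\setM_1}\wedge\card{\setM_2})^\rho\,\mathscr A_{\textnormal B}(P_{X^n,Y^n})\wedge 2^{\rho\renent{\tirho}{X^n|Y^n}}$; taking $\frac1n\log$ and $n\to\infty$ yields $\limsup_{n\to\infty}\frac1n\log\mathscr A_{\textnormal E}(P_{X^n,Y^n})\le\rho(R_1\wedge R_2)\wedge\rho H=\rho(R_1\wedge R_2\wedge H)$. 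Hence $E_{\textnormal E}\le\rho(R_1\wedge R_2\wedge H)$, so $\overbar{E_{\textnormal E}}\le\rho(R_1\wedge R_2\wedge H)$.

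\emph{Achievability (for $R_1+R_2>H$).} Fix $\epsilon$ with $0<\epsilon<R_1+R_2-H$ and set $\mathscr U_{\textnormal B}^{(n)}\triangleq 1+2^{-n\rho\epsilon}$. Since $H-R_1-R_2+\epsilon<0$ and $\renent{\tirho}{X^n|Y^n}=nH+o(n)$, the right-hand side of \eqref{eq:intUBLB} (resp.\ \eqref{eq:intUBLBList}) equals $1+2^{\rho(n(H-R_1-R_2)+o(n))}$, which is below $\mathscr U_{\textnormal B}^{(n)}$ for all large $n$; thus for all large $n$ Corollary~\ref{prop:intGuessing} (resp.\ \ref{prop:intList}) furnishes an encoder with $\mathscr A_{\textnormal B}(P_{X^n,Y^n})<\mathscr U_{\textnormal B}^{(n)}=1+2^{-n\rho\epsilon}\to 1$ and with Eve's ambiguity bounded below by \eqref{eq:intEveAmbLB} (resp.\ \eqref{eq:intEveAmbLBList}). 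In that bound the prefactors are polynomial in $n$, $\card{\setM_1}\wedge\card{\setM_2}$ has rate $R_1\wedge R_2$, $\mathscr U_{\textnormal B}^{(n)}-1=2^{-n\rho\epsilon}$, and $2^{\rho\renent{\tirho}{X^n|Y^n}}$ has rate $\rho H$, so the normalized logarithm of the bound converges to $\rho\bigl((R_1\wedge R_2-\epsilon)\wedge H\bigr)$ (the extra middle term in \eqref{eq:intEveAmbLBList} has rate $\rho(R_1\wedge R_2+H)$ and is never the minimum). Hence the ambiguity-exponent $\rho\bigl((R_1\wedge R_2-\epsilon)\wedge H\bigr)$ is achievable for each such $\epsilon$; letting $\epsilon\downarrow 0$ gives $\overbar{E_{\textnormal E}}\ge\rho(R_1\wedge R_2\wedge H)$. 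Together with the converse this establishes \eqref{eq:secStrongConst}.

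\emph{Where the difficulty lies.} Given the finite-blocklength Theorems~\ref{th:distStorGuess}--\ref{th:distStor} and their simplified corollaries, the asymptotic statement is mostly bookkeeping; the one genuine design choice is the \emph{rate} at which $\mathscr U_{\textnormal B}^{(n)}$ is driven to $1$. It must approach $1$ fast enough to meet \eqref{eq:bobAmbTo1}, but slowly enough (namely $\mathscr U_{\textnormal B}^{(n)}-1=2^{-n\rho\epsilon}$ with $\epsilon<R_1+R_2-H$) that the ``leakage budget'' $(\card{\setM_1}\wedge\card{\setM_2})^\rho(\mathscr U_{\textnormal B}^{(n)}-1)$ that governs Eve's guaranteed ambiguity still grows with exponent $\rho(R_1\wedge R_2-\epsilon)$, the second constraint on Eve (from $2^{\rho\renent{\tirho}{X^n|Y^n}}$) supplying the $H$ in the minimum. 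Equivalently, working directly from Theorems~\ref{th:distStorGuess}--\ref{th:distStor}, one picks $(c_{\textnormal s},c_1,c_2)$ so that $c_{\textnormal s}c_1c_2$ just exceeds $2^{nH}$ while $c_1\vee c_2$ is as small as \eqref{bl:condCsC1C2Guess} permits, which is of exponential order $2^{n[H-R_1\wedge R_2]^+}$; both routes give the same exponent. I expect no serious obstacle beyond carrying out these elementary rate allocations and checking that every hypothesis of the invoked finite-blocklength results (in particular the cardinality conditions in the list version) holds for all sufficiently large $n$.
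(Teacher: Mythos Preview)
Your proposal is correct and matches the paper's proof in spirit. The converse is identical to the paper's. For achievability, the paper works directly from Theorems~\ref{th:distStorGuess}--\ref{th:distStor} and constructs the rate triple $(R_{\textnormal s},\tilde R_1,\tilde R_2)$ via an explicit three-case analysis (depending on whether $R_1\wedge R_2$ lies below $H/2$, between $H/2$ and $H$, or above $H$), whereas you take the shortcut through the simplified Corollaries~\ref{prop:intGuessing}--\ref{prop:intList} with $\mathscr U_{\textnormal B}^{(n)}=1+2^{-n\rho\epsilon}$. Since those corollaries already encapsulate the optimal choice of $(c_{\textnormal s},c_1,c_2)$, the two routes are equivalent; your closing paragraph correctly identifies the paper's direct construction as the alternative and gives the right order for $c_1\vee c_2$.
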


\begin{proof}
See Section~\ref{sec:distStorProofsThmAsymp}.
\end{proof}

Suppose now that Bob's ambiguity need not tend to one but can grow exponentially with a given normalized (by $n$) exponent $E_{\textnormal B}$. For this case Theorems~\ref{th:distStorGuess} and \ref{th:distStor} imply the following asymptotic result:

\begin{theorem}[Modest Privacy-Exponent] \label{th:asympDistStorEB}
Let $\bigl\{ (X_i,Y_i) \bigr\}_{i \in \naturals}$ be a discrete-time stochastic process with finite alphabet $\setX \times \setY$, and suppose its conditional R\'enyi entropy-rate $\renent {\tirho}{ \rndvecX | \rndvecY }$ is well-defined. Given any positive rate-pair $( R_1, R_2 )$, the modest privacy-exponent for $E_{\textnormal B} \geq 0$ is
\begin{IEEEeqnarray}{l} 
\!\!\!\!\! \overbar{E^{\textnormal m}_{\textnormal E} ( E_{\textnormal B} )} = \begin{cases} \bigl( \rho ( R_1 \wedge R_2 ) + E_{\textnormal B} \bigr) \wedge \rho \renent {\tirho}{ \rndvecX | \rndvecY } & R_1 + R_2 \geq \renent {\tirho}{ \rndvecX | \rndvecY } - \rho^{-1} E_{\textnormal B}, \\ - \infty & R_1 + R_2 < \renent {\tirho}{ \rndvecX | \rndvecY } - \rho^{-1} E_{\textnormal B}. \end{cases} \label{eq:secModConst}
\end{IEEEeqnarray}
\end{theorem}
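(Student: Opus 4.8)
The plan is to deduce the theorem from the finite-blocklength Theorems~\ref{th:distStorGuess} and~\ref{th:distStor} applied to $(X^n,Y^n)\sim P_{X^n,Y^n}$ with $\card{\setM_1}=\lfloor 2^{nR_1}\rfloor$ and $\card{\setM_2}=\lfloor 2^{nR_2}\rfloor$, using that the hypothesis makes $n^{-1}\renent{\tirho}{X^n|Y^n}$ converge to $H\triangleq\renent{\tirho}{\rndvecX|\rndvecY}$, and then letting $n\to\infty$. Every prefactor appearing in those theorems---the $\bigl(1+\ln\card{\setX^n}\bigr)^{-\rho}=\bigl(1+n\ln\card\setX\bigr)^{-\rho}$, and the lower-order corrections inside the exponents (the additive constants, and in the list version the $-\log\card{\setX^n}-2$, which is negligible as soon as $c_{\textnormal s}c_1c_2\gg n$)---contributes only $o(n)$ to a logarithm and so disappears in the limit. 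Because, as remarked before the statement, $\mathscr A_{\textnormal B}^{(\textnormal g)}$ (controlled by Theorem~\ref{th:distStorGuess}) and $\mathscr A_{\textnormal B}^{(\textnormal l)}$ (controlled by Theorem~\ref{th:distStor}) have the same first-order behaviour, the argument below covers both versions; for the list version one additionally checks that the triples $(c_{\textnormal s},c_1,c_2)$ chosen below satisfy $c_{\textnormal s}c_1c_2>\log\card{\setX^n}+2$, which will be obvious since there $c_{\textnormal s}c_1c_2$ is exponentially large in $n$.

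\emph{Converse.} If $R_1+R_2<H-\rho^{-1}E_{\textnormal B}$, then Bob's converse bound \eqref{eq:BobMomDistStorConvGuess} (resp.\ \eqref{eq:BobMomDistStorConv}) forces $\liminf_n n^{-1}\log\mathscr A_{\textnormal B}(P_{X^n,Y^n})\ge\rho(H-R_1-R_2)>E_{\textnormal B}$, so \eqref{eq:bobAmbToEB} can never hold and the modest privacy-exponent is $-\infty$, as claimed. If instead $R_1+R_2\ge H-\rho^{-1}E_{\textnormal B}$, fix any encoder sequence satisfying \eqref{eq:bobAmbToEB}. The second term of Eve's converse bound \eqref{eq:EveMomDistStorConvGuess} (resp.\ \eqref{eq:EveMomDistStorConv}) gives $\mathscr A_{\textnormal E}(P_{X^n,Y^n})\le 2^{\rho\renent{\tirho}{X^n|Y^n}}$, whence $\liminf_n n^{-1}\log\mathscr A_{\textnormal E}\le\rho H$; the first term gives $\log\mathscr A_{\textnormal E}\le\rho\log(\card{\setM_1}\wedge\card{\setM_2})+\log\mathscr A_{\textnormal B}$, whence $\liminf_n n^{-1}\log\mathscr A_{\textnormal E}\le\rho(R_1\wedge R_2)+\limsup_n n^{-1}\log\mathscr A_{\textnormal B}\le\rho(R_1\wedge R_2)+E_{\textnormal B}$. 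Hence every achievable modest-ambiguity-exponent is at most $\bigl(\rho(R_1\wedge R_2)+E_{\textnormal B}\bigr)\wedge\rho H$.

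\emph{Achievability.} Suppose $R_1+R_2\ge H-\rho^{-1}E_{\textnormal B}$. By the direct parts of Theorems~\ref{th:distStorGuess} and~\ref{th:distStor}, any triple $(c_{\textnormal s},c_1,c_2)\in\naturals^3$ obeying \eqref{bl:condCsC1C2Guess} (resp.\ \eqref{bl:condCsC1C2}) produces an encoder with $\mathscr A_{\textnormal B}<1+2^{\rho(\renent{\tirho}{X^n|Y^n}-\log(c_{\textnormal s}c_1c_2)+O(1))}$ and $\mathscr A_{\textnormal E}\ge\bigl(1+n\ln\card\setX\bigr)^{-\rho}2^{\rho(\renent{\tirho}{X^n|Y^n}-\log(c_1+c_2))}$; note that Eve's bound sees the triple only through $c_1\vee c_2$. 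It therefore suffices to choose, for each $n$, an admissible triple with $\log(c_{\textnormal s}c_1c_2)\ge\renent{\tirho}{X^n|Y^n}-\rho^{-1}E_{\textnormal B}n$ (which forces $\limsup_n n^{-1}\log\mathscr A_{\textnormal B}\le E_{\textnormal B}$, i.e.\ \eqref{eq:bobAmbToEB}) that also minimizes $c_1\vee c_2$. Writing $r_\nu=n^{-1}\log c_\nu$ and $t=H-\rho^{-1}E_{\textnormal B}$, the constraints read, up to $o(1)$, $0\le r_{\textnormal s}\le R_1\wedge R_2$, $r_{\textnormal s}+r_1\le R_1$, $r_{\textnormal s}+r_2\le R_2$, $r_1,r_2\ge 0$, together with $r_{\textnormal s}+r_1+r_2\ge t$. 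A short linear-programming computation---load as much of the ``budget'' $t$ as the constraints allow onto $r_{\textnormal s}$, which is free of charge since $r_{\textnormal s}$ does not appear in Eve's bound, then split the rest between $r_1$ and $r_2$---shows that the minimum of $r_1\vee r_2$ over this polytope equals $\bigl((t\vee 0)-(R_1\wedge R_2)\bigr)\vee 0$, so that the best achievable Eve-exponent $\rho\bigl(H-\min(r_1\vee r_2)\bigr)$ comes out to exactly $\bigl(\rho(R_1\wedge R_2)+E_{\textnormal B}\bigr)\wedge\rho H$ for every $t\le R_1+R_2$. When this minimum is $0$ (equivalently $t\le R_1\wedge R_2$), simply take $c_1=c_2=1$ and $c_{\textnormal s}=\card{\setM_1}\wedge\card{\setM_2}$; otherwise $t>R_1\wedge R_2>0$, so $c_{\textnormal s}c_1c_2$ is exponentially large and the list-version size requirement holds for large $n$. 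Rounding the $r_\nu$ to integers (and shrinking $r_{\textnormal s}$ by $o(1)$ wherever a constraint is tight, so that the strict cardinality conditions of the theorems are met) changes the exponents by $o(n)$ only; passing to the limit yields an achievable modest-ambiguity-exponent equal to $\bigl(\rho(R_1\wedge R_2)+E_{\textnormal B}\bigr)\wedge\rho\renent{\tirho}{\rndvecX|\rndvecY}$, which matches the converse and establishes \eqref{eq:secModConst}.

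The converse is immediate from the converse halves of Theorems~\ref{th:distStorGuess} and~\ref{th:distStor}; the only step requiring real care is the achievability optimization over $(c_{\textnormal s},c_1,c_2)$---in particular recognizing that the ``secrecy'' cardinality $c_{\textnormal s}$ can be inflated to relax Bob's constraint at no cost to Eve---together with the bookkeeping of the boundary cases ($\min(r_1\vee r_2)=0$, tight constraints, integer rounding) against the integrality and cardinality hypotheses of those theorems.
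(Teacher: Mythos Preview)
Your proof is correct and follows the same strategy as the paper's: the converse comes from the converse halves of Theorems~\ref{th:distStorGuess} and~\ref{th:distStor}, and the achievability comes from their direct halves with a judicious triple $(c_{\textnormal s},c_1,c_2)$. The paper does not phrase the achievability as an LP; instead it exhibits the optimal rate-triple $(R_{\textnormal s},\tilde R_1,\tilde R_2)$ explicitly in three cases according to where $R_1\wedge R_2$ lies relative to $t/2$ and $t$ (with $t=\renent{\tirho}{\rndvecX|\rndvecY}-\rho^{-1}E_{\textnormal B}$), and then sets $(c_{\textnormal s},c_1,c_2)=(2^{nR_{\textnormal s}},2^{n\tilde R_1},2^{n\tilde R_2})$. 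One small caution: your heuristic ``load as much onto $r_{\textnormal s}$, then split'' is not literally a solution---taking $r_{\textnormal s}=R_1\wedge R_2$ forces $r_2=0$ and $r_1\le|R_1-R_2|$, which can fall short of $t$; indeed the paper's first case uses $r_{\textnormal s}=0$. Your claimed LP optimum $\bigl((t\vee 0)-(R_1\wedge R_2)\bigr)\vee 0$ is nonetheless correct: the lower bound follows because $r_{\textnormal s}+r_2\le R_1\wedge R_2$ forces $r_1\ge t-(R_1\wedge R_2)$, and the paper's explicit triples certify attainability.
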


\begin{proof}
See Section~\ref{sec:distStorProofsThmAsympEB}.
\end{proof}

\section{Discussion}\label{sec:discussion}

This section provides some intuition and discusses some of the models
and their underlying  assumptions.  We begin with some intuition as to why
the guessing and list-size criteria for Bob lead to similar
results. Then, we explain why we quantify Eve's ambiguity by
\eqref{eq:distEncSecrecyMeasure}. We show that if---rather than guessing---Eve
were required to form a list, then perfect secrecy would come almost
for free. Finally, we explain how our results change in the following
two scenarios: 1) Alice knows which hint Eve observes; or 2) Alice
describes $X$ using only one hint, but Alice and Bob see a secret key,
which is not revealed to Eve.

The following remark explains why the results for the guessing and the
list version differ only by polylogarithmic factors of $| \setX |$
(and are consequently the same in the asymptotic regime):

\begin{remark}[\textbf{Why Do the Two Criteria for Bob Lead to Similar Results?}] \label{remark:guessListClose}
{\em Consider any choice of the conditional PMF in \eqref{eq:aliceEncPMF}. In the guessing version Bob uses an optimal guessing function $\guessast {}{\cdot | Y,M_1,M_2 }$ (which minimizes $\bigEx {}{\guess {}{X | Y,M_1,M_2 }^\rho}$) to guess $X$ based on the side information $Y$ and the hints $M_1$ and $M_2$, and his ambiguity is $\bigEx {}{\guessast {}{X | Y,M_1,M_2 }^\rho}$. By Corollary~\ref{co:guessToList} we can construct from $\guessast {}{\cdot | Y,M_1,M_2 }$ an additional hint $M$ that takes values in a set of size at most $1 + \bigl\lfloor \log \card \setX \bigr\rfloor$ such that
\begin{equation}
\BigEx {}{ \bigl| \setL^Y_{M_1,M_2,M} \bigr|^\rho} \leq \bigEx {}{\guessast {}{ X | Y, M_1, M_2 }^\rho},
\end{equation}
where $\setL^Y_{M_1,M_2,M}$ is the smallest list that is guaranteed to contain $X$ given $( Y,M_1,M_2,M )$. Suppose now that Alice maps $X$ to the hints $M_1^\prime \triangleq (M_1,M)$ and $M_2^\prime \triangleq M_2$. This implies that Bob's ambiguity in the list version is $$\BigEx {}{\bigl| \setL^Y_{M_1^\prime,M_2^\prime} \bigr|^\rho} = \BigEx {}{ \bigl| \setL^Y_{M_1,M_2,M} \bigr|^\rho}$$ and consequently no larger than $\bigEx {}{\guessast {}{X | Y,M_1,M_2 }^\rho}$. Moreover, because $M$ takes values in a set of size at most $1 + \bigl\lfloor \log \card \setX \bigr\rfloor$, we can use Lemma~\ref{le:ImproveGuess} to show that---compared to the case where the hints are $M_1$ and $M_2$---Eve's ambiguity decreases by at most a polylogarithmic factor of $| \setX |$.}
\end{remark}

We next explain why we choose to quantify Eve's ambiguity by \eqref{eq:distEncSecrecyMeasure} and not by \eqref{eq:distEncSecrecyMeasureAlternative}. As we have seen, \eqref{eq:distEncSecrecyMeasure} is more conservative than \eqref{eq:distEncSecrecyMeasureAlternative} in the sense that \eqref{eq:eveAltAmb} holds. Consequently, it follows from \eqref{eq:EveMomDistStorConvGuessAlternative} and \eqref{eq:EveMomDistStorConvAlternative} that the results of Theorems~\ref{th:distStorGuess} and \ref{th:distStor} hold irrespective of whether we quantify Eve's ambiguity by \eqref{eq:distEncSecrecyMeasure} or by \eqref{eq:distEncSecrecyMeasureAlternative}. We prefer to quantify Eve's ambiguity by
\eqref{eq:distEncSecrecyMeasure}, because---as the following example shows---\eqref{eq:distEncSecrecyMeasureAlternative} leads to a weaker notion
of secrecy than \eqref{eq:distEncSecrecyMeasure}:

\begin{example} \label{ex:notionOfSecrecy} {\em Suppose that $Y$ is
    null, $X$ is uniform over $\setX$, and Alice produces the hints at
    random: they are equally likely to be $( M_1 = X, M_2 = \ast )$ or
    $( M_1 = \ast, M_2 = X )$, where the symbol~$\ast$ is not in
    $\setX$. Since Bob can recover $X$ from $( M_1, M_2 )$ (by
    producing the hint that is not $\ast$),
  $$\min_{\guess {}{\cdot | M_1, M_2}} \bigEx {}{\guess {}{X |
      M_1,M_2}^\rho} = \bigEx {}{\card {\setL_{M_1,M_2}}^\rho} = 1.$$
  The system is clearly insecure, because one of the hints always
  reveals $X$, and ${\mathscr A}_{\textnormal E} ( P_{X,Y} ) = 1$.
  However, as we next argue, this weakness is not captured by
  $\tilde {\mathscr A}_{\textnormal E} ( P_{X,Y} )$.  The probability
  of $M_1$ being $\ast$ is $1/2$, so the $\rho$-th moment of
  $\guess 1 {X | M_1}$ is at least
  $\min_{\guess {}{\cdot}} \bigEx {}{\guess {}{X}^\rho}/2$. Likewise,
  by symmetry, for $\guess 2 {X | M_2}$. Thus
  $\tilde {\mathscr A}_{\textnormal E} ( P_{X,Y} )$ differs from
  $\min_{\guess {}{\cdot}} \bigEx {}{\guess {}{X}^\rho}$ by a factor
  of at most $1/2$.}
\end{example}

So far, we have explained why we prefer
\eqref{eq:distEncSecrecyMeasure} over \eqref{eq:distEncSecrecyMeasureAlternative}. But
why do we allow Eve to guess even in the list version of our
problem? That is, why do we prefer \eqref{eq:distEncSecrecyMeasure}
over
\begin{equation} \label{eq:fairOpponentEveAmbList}
\mathscr A_{\textnormal E}^{(\textnormal l)} = \BigEx {}{\bigl|\setL_{M_1}^Y\bigr|^\rho \wedge \bigl|\setL_{M_2}^Y \bigr|^\rho}
 \end{equation}
even when Bob must form a list? 

We prefer \eqref{eq:distEncSecrecyMeasure} over
\eqref{eq:fairOpponentEveAmbList} because, as
Theorem~\ref{th:fairOpponent} ahead will show, forcing Eve to produce
a short list would severely handicap her and make it trivial to defeat
her: when Eve must form a list, perfect secrecy is almost free. 
\begin{theorem}[Eve Must Form a List] \label{th:fairOpponent}
If 
\begin{equation}
\label{eq:amosCOND}
| \setM_1 | \wedge | \setM_2 | \geq 1 + \bigl\lfloor \log |\setX|
\bigr\rfloor,
\end{equation}
then there exists a conditional PMF as in
\eqref{eq:aliceEncPMF} for which Bob's ambiguity about $X$ is
upper-bounded by
\begin{IEEEeqnarray}{l}
\mathscr A_{\textnormal B}^{(\textnormal l)} ( P_{X,Y} ) \leq 1 + 2^{\rho ( \renent {\tirho} {X | Y } - \log ( |\setM_1| \, |\setM_2| ) + 2 \log (1 + \lfloor \log |\setX \rfloor) + 3 )}, \label{eq:BobMomEveListsDir}
\end{IEEEeqnarray}
and Eve's ambiguity about $X$ is
\begin{IEEEeqnarray}{l}
\mathscr A_{\textnormal E}^{(\textnormal l)} ( P_{X,Y} ) = \bigEx {}{|\setL_Y|^\rho}, \label{eq:EveMomEveListsDir}
\end{IEEEeqnarray}
where
\begin{IEEEeqnarray}{l}
\bigEx {}{|\setL_Y|^\rho} = \sum_y P_Y (y) \, \bigl| \bigl\{ x \in \setX \colon P_{X|Y} (x|y) > 0 \bigr\} \bigr|^\rho.
\end{IEEEeqnarray}
Conversely, for every conditional PMF, Bob's ambiguity is lower-bounded by
\begin{IEEEeqnarray}{l}
\mathscr A_{\textnormal B}^{(\textnormal l)} ( P_{X,Y} ) \geq 2^{\rho ( \renent {\tirho} {X | Y } - \log ( \card {\setM_1} \, \card {\setM_2} ) )} \vee 1, \label{eq:BobMomEveListsConv}
\end{IEEEeqnarray}
and Eve's ambiguity is upper-bounded by
\begin{IEEEeqnarray}{l}
\mathscr A_{\textnormal E}^{(\textnormal l)} ( P_{X,Y} ) \leq \bigEx {}{|\setL_Y|^\rho}. \label{eq:EveMomEveListsConv}
\end{IEEEeqnarray}
\end{theorem}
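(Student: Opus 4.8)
\emph{Converse.} Both converse bounds are essentially free. Any conditional PMF as in \eqref{eq:aliceEncPMF}, viewed together with Bob's list $\setL^Y_{M_1,M_2}$, is a stochastic task-encoder for $X$ given $Y$ with description alphabet $\setM_1\times\setM_2$, so \eqref{eq:BobMomEveListsConv} is exactly the converse half of Theorem~\ref{th:optTaskEnc} with $\card\setZ=\card{\setM_1}\,\card{\setM_2}$. For \eqref{eq:EveMomEveListsConv}, note that conditioning on a hint can only shrink the support list, so $\setL^Y_{M_k}\subseteq\setL_Y$ for $k\in\{1,2\}$; hence $\card{\setL^Y_{M_1}}^\rho\wedge\card{\setL^Y_{M_2}}^\rho\le\card{\setL_Y}^\rho$ pointwise, and taking expectations yields the claim. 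So the work is in the direct part: one must produce an encoder under which \emph{each single hint} leaves Eve's list equal to $\setL_Y$ while the pair $(M_1,M_2)$ still lets Bob form a list about $\card{\setM_1}\card{\setM_2}$ times shorter than the a priori support.

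\emph{Direct part.} The plan is a ``private cell plus shared filler'' layout on the grid $\setM_1\times\setM_2$. First I would pick, via Theorem~\ref{th:optGuessFun} together with Corollary~\ref{co:guessToBestList} (or Theorem~\ref{th:optTaskEnc} when $\card\setX$ is large enough), a deterministic task-encoder $f\colon\setX\times\setY\to\setZ$ with $\card\setZ=\card{\setM_1}\,\card{\setM_2}-(\card{\setM_1}\vee\card{\setM_2})$ --- admissible because \eqref{eq:amosCOND} forces $\card\setZ\ge\tfrac12\card{\setM_1}\card{\setM_2}\ge 1+\lfloor\log\card\setX\rfloor$ --- whose decoding lists $\setL^y_z$ satisfy $\BigEx{}{\card{\setL^Y_{f(X,Y)}}^\rho}<1+2^{\rho(\renent{\tirho}{X|Y}-\log(\card{\setM_1}\card{\setM_2})+2\log(1+\lfloor\log\card\setX\rfloor)+3)}$ with strict room to spare below the right-hand side (a routine estimate, since $\card\setZ$ is within a constant factor of $\card{\setM_1}\card{\setM_2}$ while the extra polylogarithmic slack comes from the guessing-to-task-encoder conversion). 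Next, embed $\setZ$ injectively into the grid, $z\mapsto(\phi_1(z),\phi_2(z))$, so that the $\card{\setM_1}\vee\card{\setM_2}$ left-over cells form an edge cover $R$ of the complete bipartite graph $K_{\card{\setM_1},\card{\setM_2}}$ --- i.e.\ $R$ meets every row and every column; a minimum edge cover has exactly $\card{\setM_1}\vee\card{\setM_2}$ cells, so this is possible. Finally, fix a small $\varepsilon>0$ and define Alice's encoder: on $(X,Y)=(x,y)$ with $P_{X|Y}(x|y)>0$ and $z=f(x,y)$, output $(M_1,M_2)=(\phi_1(z),\phi_2(z))$ with probability $1-\varepsilon$ and output $(M_1,M_2)$ uniform on $R$ (independently of $z$) with probability $\varepsilon$; on $(x,y)$ with $P_{X|Y}(x|y)=0$ act arbitrarily.

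\emph{Analysis and the main obstacle.} Two checks remain. For Eve: since $R$ is an edge cover, the conditional support of $M_1$ given $(X,Y)=(x,y)$ is $\{\phi_1(z)\}$ together with all columns hit by $R$, i.e.\ all of $\setM_1$, for \emph{every} $x\in\setL_y$, and likewise the support of $M_2$ is all of $\setM_2$; hence $\setL^Y_{M_1}=\setL^Y_{M_2}=\setL_Y$ almost surely, which is \eqref{eq:EveMomEveListsDir}. For Bob: the cell $(\phi_1(z),\phi_2(z))$ lies in no other image cell and in no cell of $R$, so the only realizations of positive posterior there are the $x$ with $f(x,y)=z$, making Bob's list at that cell exactly $\setL^y_z$; at a cell of $R$ his list is all of $\setL_y$. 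Therefore $\mathscr A_{\textnormal B}^{(\textnormal l)}(P_{X,Y})=(1-\varepsilon)\BigEx{}{\card{\setL^Y_{f(X,Y)}}^\rho}+\varepsilon\,\BigEx{}{\card{\setL_Y}^\rho}$, and choosing $\varepsilon$ small enough (depending on $P_{X,Y}$) so that $\varepsilon\BigEx{}{\card{\setL_Y}^\rho}$ is swallowed by the slack left in the task-encoder bound delivers \eqref{eq:BobMomEveListsDir}. The one genuine difficulty is the grid design: the support requirement for Eve forces the per-symbol supports $J^y_x$ of the pair to meet every row and every column, so once $\card\setZ$ exceeds $\card{\setM_1}\wedge\card{\setM_2}$ these supports must overlap, and a clean partition of the grid into spanning blocks would buy only a factor $\card{\setM_1}\wedge\card{\setM_2}$; it is precisely the device of giving each description a \emph{private} cell and sharing a common filler $R$ among all descriptions --- while routing only an $\varepsilon$-fraction of the mass to $R$ --- that lets Bob extract $\log(\card{\setM_1}\card{\setM_2})$ useful bits, the point being that what must be, and can be, controlled is the filler \emph{probability}, not the (large) list size at the filler cells.
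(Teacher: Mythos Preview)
Your proposal is correct and the overall strategy---mix in a small-probability ``full-support'' component so that each single hint alone leaves Eve's posterior support equal to $\setL_Y$, while the pair still pins down a good task-description with high probability---is exactly the idea behind the paper's proof. The converse arguments coincide.

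The implementations differ, however. The paper does not build a single task-encoder into the grid directly. Instead it recycles the machinery of Theorems~\ref{th:distStorGuess}--\ref{th:distStor}: it factors $\setM_k \cong \setV_{\textnormal s}\times\setV_k$ with $|\setV_{\textnormal s}|=1+\lfloor\log|\setX|\rfloor$, picks (via Corollary~\ref{co:equivBunteResultGuessing}) a good guessing description $(V_1,V_2)$, \emph{adds uniform noise} to obtain $(V_1',V_2')$ with full support on $\setV_1\times\setV_2$, and only then converts guessing to a list description $V_{\textnormal s}$ via Corollary~\ref{co:guessToList}. The hints are $M_1=(V_{\textnormal s}\oplus U,V_1')$ and $M_2=(U,V_2')$ with $U$ a one-time pad; full marginal support of $M_k$ comes from the OTP on the $\setV_{\textnormal s}$-component together with the noise on the $\setV_k$-component, and the polylogarithmic loss $2\log(1+\lfloor\log|\setX|\rfloor)$ arises from the $|\setV_{\textnormal s}|$-factor eaten by the partition $\setM_k\cong\setV_{\textnormal s}\times\setV_k$.

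Your edge-cover construction is more elementary: it dispenses with the OTP layer entirely, works directly on the grid $\setM_1\times\setM_2$, and achieves full marginal support by the single device of routing an $\varepsilon$-fraction of the mass to a minimum edge cover $R$ of $K_{|\setM_1|,|\setM_2|}$. What your route buys is conceptual simplicity and---if you invoke Theorem~\ref{th:optTaskEnc} directly rather than Corollary~\ref{co:guessToBestList}---a potentially tighter constant (no polylog factor is intrinsically needed, since $|\setZ|$ is within a factor~$2$ of $|\setM_1|\,|\setM_2|$). What the paper's route buys is reuse of the same $(V_{\textnormal s},V_1,V_2)$/OTP template that drives all the other achievability proofs, at the cost of the polylog. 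One small caveat: your appeal to Theorem~\ref{th:optTaskEnc} needs $|\setZ|>\log|\setX|+2$, which can fail for $|\setX|\in\{2,3\}$ when both $|\setM_k|$ are at their minimum; these cases are trivial (an injection $\setX\hookrightarrow\setZ$ gives $\mathbb{E}[|\setL^Y_Z|^\rho]=1$) but should be mentioned.
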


\begin{proof}
See Appendix~\ref{app:pfThFairOpponent}.
\end{proof}

To see why perfect secrecy is almost free when Eve is required to form a list,
note that the RHS of~\eqref{eq:EveMomEveListsDir} would also be Eve's
list size if she only saw $Y$ and did not get to see \emph{any} hint,
so in this sense achieving~\eqref{eq:EveMomEveListsDir} is tantamount
to achieving perfect secrecy. And the cost is very small:
Condition~\eqref{eq:amosCOND} is satisfied in the large-blocklength
regime whenever the rates of the two hints are positive; and the RHS
of~\eqref{eq:BobMomEveListsDir} will tend to one in this regime
whenever the sum of the rates exceeds the conditional R\'enyi entropy
rate---a condition that is necessary even in the absence of an
adversay (Theorem~\ref{th:optTaskEnc}).


That perfect secrecy is (almost) free when we quantify Eve's
ambiguity by \eqref{eq:fairOpponentEveAmbList} is highly intuitive: By
forcing Eve to form a list that is guaranteed to contain $X$, we force
her to include in her list all the realizations of $X$ that have a
positive posterior probability, no matter how small. This implies
that, if Eve were to form a list, then perfect secrecy could be
attained by hiding very little information from Eve. The situation is
different in case Eve guesses $X$, because allowing Eve to guess $X$,
i.e., quantifying Eve's ambiguity by \eqref{eq:distEncSecrecyMeasure},
is tantamount to first indexing the elements of the list in
\eqref{eq:fairOpponentEveAmbList}---which she would otherwise have to
form---in decreasing order of their posterior probability, and to then
downweigh the large indices of the realizations at the bottom of the
list by their small posterior probabilities.

To conclude the discussion of how to quantify Eve's ambiguity, we relate Eve's ambiguity \eqref{eq:distEncSecrecyMeasure} to the concept of \emph{equivocation}. In the classical Shannon cipher system \cite{shannon49}, a popular way to measure imperfect secrecy is in terms of equivocation, i.e., in terms of the conditional entropy $H (X|Z)$, where $X$ denotes some sensitive information and $Z$ Eve's observation. In the settings where Bob is a list-decoder or a guessing decoder, R\'enyi entropy plays the role of Shannon entropy in the sense that the minimum required rate to encode an $n$-tuple $X = X^n$ is the R\'enyi entropy rate $\renent {\tirho}{\rndvecX}$ rather than the Shannon entropy rate $H (\rndvecX) = \renent {1} {\rndvecX}$ (this follows from Theorems~\ref{th:optTaskEnc} and Corollary~\ref{co:equivBunteResultGuessing}). Consequently, in these settings the conditional R\'enyi entropy $\renent {\tirho}{X|Z}$ qualifies as a ``natural'' equivalent for equivocation. But $\renent {\tirho}{X|Z}$ has a nice operational characterization: $2^{\rho \renent {\tirho}{X|Z}}$ is (up to polylogarithmic factors of $| X |$) the $\rho$-th moment of the number of guesses that Eve needs to guess $X$ from her observation $Z$ (see Theorem~\ref{th:optGuessFun}). This is another reason why it makes sense to quantify Eve's ambiguity in terms of the $\rho$-th moment of the number of guesses that she needs to guess $X$.\\

In the remainder of this section we briefly discuss how the results of
Theorems~\ref{th:distStorGuess} and \ref{th:distStor} change in the
following two scenarios: 1) Alice knows which hint Eve observes; or 2)
Alice describes $X$ using only one hint, but Alice and Bob share a
secret key, which is unknown to Eve. We begin with Scenario~1. In
this scenario Alice draws the public hint $M_{\textnormal p}$ and the
secret hint $M_{\textnormal s}$ from some finite set
$\setM_{\textnormal p} \times \setM_{\textnormal s}$ according to some
conditional PMF
\begin{equation}
\distof{M_{\textnormal p} = m_{\textnormal p}, M_{\textnormal s} = m_{\textnormal s} | X = x, Y = y }. \label{eq:aliceEncPMFSecMsg}
\end{equation}
Bob sees both hints. In the guessing version his ambiguity about $X$ is
\begin{IEEEeqnarray}{l}
\mathscr A^{(\textnormal g)}_{\textnormal B} ( P_{X,Y} ) = \min_{\guess {}{\cdot | Y, M_{\textnormal p}, M_{\textnormal s}}} \bigEx {}{\guess {}{X | Y,M_{\textnormal p},M_{\textnormal s}}^\rho} \label{eq:bobAmbiguityGuessingSecMsg}
\end{IEEEeqnarray}
and in the list version
\begin{IEEEeqnarray}{l} 
\mathscr A^{(\textnormal l)}_{\textnormal B} ( P_{X,Y} ) = \BigEx {}{\bigl| \setL^Y_{M_{\textnormal p},M_{\textnormal s}} \bigr|^\rho}. \label{eq:bobAmbiguityListSecMsg}
\end{IEEEeqnarray}
Eve sees only the public hint. In both versions her ambiguity about $X$ is
\begin{IEEEeqnarray}{l}
\mathscr A_{\textnormal E} ( P_{X,Y} ) = \min_{\guess {}{\cdot | Y, M_{\textnormal p}}} \bigEx {}{\guess {}{ X | Y, M_{\textnormal p} }^\rho }. \label{eq:distEncSecrecyMeasureSecMsg}
\end{IEEEeqnarray}

The next two theorems characterize the largest ambiguity that we can guarantee that Eve will have subject to a given upper bound on the ambiguity that Bob may have (see Appendix~\ref{app:pfThsSecMess} for a proof). As in the case where the hints are not secret and public, the guessing and the list version lead to similar results (cf.\ Remark~\ref{remark:guessListClose}). In the next two theorems $c$ is related to how much can be gleaned about $X$ from $M_{\textnormal p}$.

\begin{theorem}[Secret Hint Guessing-Version]\label{th:secMessGuess}
For every $c \in \naturals$ satisfying
\begin{equation}
c \leq | \setM_{\textnormal p} |, \label{eq:condCGuess}
\end{equation}
there is a $\{ 0,1 \}$-valued choice of the conditional PMF in \eqref{eq:aliceEncPMFSecMsg} for which Bob's ambiguity about $X$ is upper-bounded by
\begin{IEEEeqnarray}{l}
\mathscr A_{\textnormal B}^{(\textnormal g)} ( P_{X,Y} ) < 1 + 2^{\rho ( \renent {\tirho} {X | Y } - \log ( c \, | \setM_{\textnormal s} | ) + 1 )}, \label{eq:BobMomSecMsgDirGuess}
\end{IEEEeqnarray}
and Eve's ambiguity about $X$ is lower-bounded by
\begin{IEEEeqnarray}{l}
\mathscr A_{\textnormal E} ( P_{X,Y} ) \geq \bigl( 1 + \ln | \setX | \bigr)^{-\rho} 2^{\rho ( \renent {\tirho} { X | Y } - \log c )}. \label{eq:EveMomSecMsgDirGuess}
\end{IEEEeqnarray}
Conversely, for every conditional PMF, Bob's ambiguity is lower-bounded by
\begin{IEEEeqnarray}{l}
\mathscr A_{\textnormal B}^{(\textnormal g)} ( P_{X,Y} ) \geq \bigl( 1 + \ln | \setX | \bigr)^{-\rho} 2^{\rho ( \renent {\tirho} {X | Y } - \log ( | \setM_{\textnormal p} | \, | \setM_{\textnormal s} | ) )} \vee 1, \label{eq:BobMomSecMsgConvGuess}
\end{IEEEeqnarray}
and Eve's ambiguity is upper-bounded by
\begin{IEEEeqnarray}{l}
\mathscr A_{\textnormal E} ( P_{X,Y} ) \leq | \setM_{\textnormal s}|^\rho \mathscr A^{(\textnormal g)}_{\textnormal B} ( P_{X,Y} ) \wedge 2^{\rho \renent {\tirho} {X | Y }}. \label{eq:EveMomSecMsgConvGuess}
\end{IEEEeqnarray}
\end{theorem}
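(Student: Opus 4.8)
The plan is to treat Theorem~\ref{th:secMessGuess} as the secret-hint analogue of Theorem~\ref{th:distStorGuess}, with one simplification: since Alice knows that the hint revealed to Eve is the public one $M_{\textnormal p}$ (and not $M_{\textnormal s}$), no one-time pad is needed, and it suffices for Alice to deterministically describe $X$ by a two-level description whose coarse level is $M_{\textnormal p}$ and whose fine level is $M_{\textnormal s}$. The only machinery needed is Lemma~\ref{le:ImproveGuess} (applied twice, the second time with $(Y,M_{\textnormal p})$ in the role of the side information) and Theorem~\ref{th:optGuessFun}.

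For the direct part, fix an optimal guessing function $\guessast{}{\cdot|Y}$ for $X$ given $Y$, so by Theorem~\ref{th:optGuessFun} we have $\bigEx{}{\guessast{}{X|Y}^\rho}\leq 2^{\rho\renent{\tirho}{X|Y}}$ and, conversely, $\bigEx{}{\guessast{}{X|Y}^\rho}\geq(1+\ln\card\setX)^{-\rho}2^{\rho\renent{\tirho}{X|Y}}$. \emph{Step~1 (build $M_{\textnormal p}$).} Since $c\leq\card{\setM_{\textnormal p}}$, choose $f\colon\setX\times\setY\to\setM_{\textnormal p}$ that uses at most $c$ of the available values and has the property required in Lemma~\ref{le:ImproveGuess} for block size $c$ (concretely, letting $x_j^y$ be the $j$-th most probable realization of $X$ given $Y=y$, let $f(x_j^y,y)$ be the value of $\setM_{\textnormal p}$ indexed by the remainder of the Euclidean division of $j-1$ by $c$), and set $M_{\textnormal p}=f(X,Y)$. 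By Lemma~\ref{le:ImproveGuess} the optimal guessing function given $(Y,M_{\textnormal p})$ satisfies $\guessast{}{X|Y,M_{\textnormal p}}=\bigl\lceil\guessast{}{X|Y}/c\bigr\rceil$, and since Eve sees only $(Y,M_{\textnormal p})$ her ambiguity is $\mathscr A_{\textnormal E}(P_{X,Y})=\BigEx{}{\bigl\lceil\guessast{}{X|Y}/c\bigr\rceil^\rho}\geq c^{-\rho}\bigEx{}{\guessast{}{X|Y}^\rho}\geq(1+\ln\card\setX)^{-\rho}2^{\rho(\renent{\tirho}{X|Y}-\log c)}$, which is \eqref{eq:EveMomSecMsgDirGuess}. \emph{Step~2 (build $M_{\textnormal s}$).} Now apply Lemma~\ref{le:ImproveGuess} again with $(Y,M_{\textnormal p})$ as side information and $\setM_{\textnormal s}$ as the description alphabet: choose $g$ with the property required there for block size $\card{\setM_{\textnormal s}}$ (relative to $\guessast{}{\cdot|Y,M_{\textnormal p}}$) and set $M_{\textnormal s}=g(X,Y,M_{\textnormal p})$, so that $\guessast{}{X|Y,M_{\textnormal p},M_{\textnormal s}}=\bigl\lceil\guessast{}{X|Y,M_{\textnormal p}}/\card{\setM_{\textnormal s}}\bigr\rceil=\bigl\lceil\bigl\lceil\guessast{}{X|Y}/c\bigr\rceil/\card{\setM_{\textnormal s}}\bigr\rceil=\bigl\lceil\guessast{}{X|Y}/(c\card{\setM_{\textnormal s}})\bigr\rceil$, the last equality being the identity $\lceil\lceil a/b\rceil/k\rceil=\lceil a/(bk)\rceil$ for positive integers $b,k$ and $a\in\naturals$. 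Bob sees both hints, so $\mathscr A_{\textnormal B}^{(\textnormal g)}(P_{X,Y})\leq\BigEx{}{\bigl\lceil\guessast{}{X|Y}/(c\card{\setM_{\textnormal s}})\bigr\rceil^\rho}$, and applying \eqref{eq:ceilApprox} and then $\bigEx{}{\guessast{}{X|Y}^\rho}\leq2^{\rho\renent{\tirho}{X|Y}}$ gives $\mathscr A_{\textnormal B}^{(\textnormal g)}(P_{X,Y})<1+2^{\rho}(c\card{\setM_{\textnormal s}})^{-\rho}2^{\rho\renent{\tirho}{X|Y}}$, i.e.\ \eqref{eq:BobMomSecMsgDirGuess}. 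Since $M_{\textnormal p}$ and $M_{\textnormal s}$ are deterministic functions of $(X,Y)$, the conditional PMF \eqref{eq:aliceEncPMFSecMsg} is $\{0,1\}$-valued, as required.

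For the converse, Bob's side information $(Y,M_{\textnormal p},M_{\textnormal s})$ has its hint-pair ranging over a set of size $\card{\setM_{\textnormal p}}\card{\setM_{\textnormal s}}$, so applying \eqref{eq:ubGuessSIExp} of Corollary~\ref{co:equivBunteResultGuessing} with $Z=(M_{\textnormal p},M_{\textnormal s})$ yields \eqref{eq:BobMomSecMsgConvGuess}. For Eve's bound \eqref{eq:EveMomSecMsgConvGuess} I would combine two observations. First, any guessing function for $X$ given $Y$ is also one for $X$ given $(Y,M_{\textnormal p})$ (one that ignores $M_{\textnormal p}$), so by the direct part of Theorem~\ref{th:optGuessFun} $\mathscr A_{\textnormal E}(P_{X,Y})\leq2^{\rho\renent{\tirho}{X|Y}}$. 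Second, viewing $(Y,M_{\textnormal p})$ as side information and $M_{\textnormal s}$ as a chance variable taking values in $\setM_{\textnormal s}$, inequality \eqref{eq:ubGuessSI} of Corollary~\ref{co:impGuess} gives $\mathscr A_{\textnormal B}^{(\textnormal g)}(P_{X,Y})\geq\card{\setM_{\textnormal s}}^{-\rho}\mathscr A_{\textnormal E}(P_{X,Y})$, i.e.\ $\mathscr A_{\textnormal E}(P_{X,Y})\leq\card{\setM_{\textnormal s}}^{\rho}\mathscr A_{\textnormal B}^{(\textnormal g)}(P_{X,Y})$. Taking the minimum of the two bounds gives \eqref{eq:EveMomSecMsgConvGuess}.

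The step that I expect to need the most care is the composition in the direct part: one must verify that once $M_{\textnormal p}$ is revealed, the optimal guessing function given $(Y,M_{\textnormal p})$ is \emph{exactly} $\lceil\guessast{}{X|Y}/c\rceil$ (not merely bounded by it), since only then does the second application of Lemma~\ref{le:ImproveGuess} produce the nested ceiling and hence the clean exponent in \eqref{eq:BobMomSecMsgDirGuess}; this is ensured by the explicit construction of $f$ in the proof of Lemma~\ref{le:ImproveGuess} together with the hypothesis $c\leq\card{\setM_{\textnormal p}}$, which is what lets $f$ actually fit into $\setM_{\textnormal p}$. Beyond this, the argument is routine bookkeeping with \eqref{eq:ceilApprox} and Theorem~\ref{th:optGuessFun}; note in particular that, unlike the list version (Theorem~\ref{th:distStor}), no condition of the form $c\,\card{\setM_{\textnormal s}}>\log\card\setX+2$ is needed here.
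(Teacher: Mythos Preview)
Your proof is correct; the converse part matches the paper's exactly (Corollary~\ref{co:equivBunteResultGuessing} for Bob, Corollary~\ref{co:impGuess} for the factor $|\setM_{\textnormal s}|^\rho$, and Theorem~\ref{th:optGuessFun} for the $2^{\rho\renent{\tirho}{X|Y}}$ bound). The direct part, however, is organized differently. You build the encoder in two layers: first $M_{\textnormal p}$ via Lemma~\ref{le:ImproveGuess} with block size $c$, then $M_{\textnormal s}$ via a second application of Lemma~\ref{le:ImproveGuess} with $(Y,M_{\textnormal p})$ as side information, obtaining the exact nested-ceiling expression $\bigl\lceil \guessast{}{X|Y}/(c\,|\setM_{\textnormal s}|)\bigr\rceil$ for Bob and $\bigl\lceil \guessast{}{X|Y}/c\bigr\rceil$ for Eve. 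The paper instead applies the direct part of Corollary~\ref{co:equivBunteResultGuessing} once to the product set $\setM_{\textnormal p}\times\setM_{\textnormal s}$ (after reducing w.l.g.\ to $|\setM_{\textnormal p}|=c$) to get Bob's bound, and then observes that Eve's lower bound follows from the \emph{converse} part of Corollary~\ref{co:equivBunteResultGuessing}, which holds for \emph{any} $M_{\textnormal p}$ with support size $c$, irrespective of how the joint encoder was built. Your route is a bit longer but more informative: it makes the coarse/fine layering of the description explicit and shows that the specific $M_{\textnormal p}$ you choose actually \emph{achieves} Eve's optimal ambiguity (equality in Lemma~\ref{le:ImproveGuess}), not just the generic lower bound. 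The paper's route is shorter precisely because the needed bound on Eve is universal in $M_{\textnormal p}$, so no careful construction of the public hint is required. Your concern about the second application of Lemma~\ref{le:ImproveGuess} is well placed but resolvable: the discussion following Lemma~\ref{le:ImproveGuess} in the paper confirms that the guessing function produced in Step~1 is indeed optimal given $(Y,M_{\textnormal p})$, which is exactly what Step~2 needs.
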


\begin{theorem}[Secret Hint List-Version]\label{th:secMess} If $| \setM_{\textnormal p} | \, | \setM_{\textnormal s} | > \log | \setX | + 2$, then for every $c \in \naturals$ satisfying
\begin{equation}
c \leq | \setM_{\textnormal p} |, \quad c \, | \setM_{\textnormal s} | > \log | \setX | + 2, \label{eq:condCList}
\end{equation}
there is a $\{ 0,1 \}$-valued choice of the conditional PMF in \eqref{eq:aliceEncPMFSecMsg} for which Bob's ambiguity about $X$ is upper-bounded by
\begin{IEEEeqnarray}{l}
\mathscr A_{\textnormal B}^{(\textnormal l)} ( P_{X,Y} ) < 1 + 2^{\rho ( \renent {\tirho} {X | Y } - \log ( c \, | \setM_{\textnormal s} | - \log | \setX | - 2 ) + 2 )}, \label{eq:BobMomSecMsgDir}
\end{IEEEeqnarray}
and Eve's ambiguity about $X$ is lower-bounded by
\begin{IEEEeqnarray}{l}
\mathscr A_{\textnormal E} ( P_{X,Y} ) \geq \bigl( 1 + \ln | \setX | \bigr)^{-\rho} 2^{\rho ( \renent {\tirho} { X | Y } - \log c )}. \label{eq:EveMomSecMsgDir}
\end{IEEEeqnarray}
Conversely, for every conditional PMF, Bob's ambiguity is lower-bounded by
\begin{IEEEeqnarray}{l}
\mathscr A_{\textnormal B}^{(\textnormal l)} ( P_{X,Y} ) \geq 2^{\rho ( \renent {\tirho} {X | Y } - \log ( | \setM_{\textnormal p} | \, | \setM_{\textnormal s} | )} \vee 1, \label{eq:BobMomSecMsgConv}
\end{IEEEeqnarray}
and Eve's ambiguity is upper-bounded by
\begin{IEEEeqnarray}{l}
\mathscr A_{\textnormal E} ( P_{X,Y} ) \leq | \setM_{\textnormal s} |^\rho \mathscr A^{(\textnormal l)}_{\textnormal B}  ( P_{X,Y} ) \wedge 2^{\rho \renent {\tirho} {X | Y }}. \label{eq:EveMomSecMsgConv}
\end{IEEEeqnarray}
\end{theorem}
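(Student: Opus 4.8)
The plan is to obtain both directions by reducing to the earlier results of Sections~\ref{sec:notation} and~\ref{sec:listsAndGuesses}: the task-encoding bound (Theorem~\ref{th:optTaskEnc}), the guessing bound (Theorem~\ref{th:optGuessFun}), the side-information bound (Corollary~\ref{co:impGuess}), and the list-to-guess reduction (part~1 of Theorem~\ref{th:relGuessEnc}). The guiding idea is to treat the public hint $M_{\textnormal p}$ as a ``free'' $c$-valued side-observation for Eve and the pair $(M_{\textnormal p},M_{\textnormal s})$ as a single description for Bob.

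For achievability, fix $c\in\naturals$ obeying \eqref{eq:condCList} and let $\setV_{\textnormal p}$ be a set of size $c$, which by $c\le|\setM_{\textnormal p}|$ I embed into $\setM_{\textnormal p}$. I would apply Theorem~\ref{th:optTaskEnc} with description alphabet $\setZ=\setV_{\textnormal p}\times\setM_{\textnormal s}$ (whose size $c\,|\setM_{\textnormal s}|$ exceeds $\log|\setX|+2$ by \eqref{eq:condCList}) to get a deterministic task-encoder $\enc{\cdot|Y}\colon\setX\to\setV_{\textnormal p}\times\setM_{\textnormal s}$ whose list-moment lies below the right-hand side of \eqref{eq:BobMomSecMsgDir}, and let Alice set $(M_{\textnormal p},M_{\textnormal s})=\enc{X|Y}$ (a $\{0,1\}$-valued PMF in \eqref{eq:aliceEncPMFSecMsg}). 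Because the encoder is deterministic, Bob's posterior list $\setL^Y_{M_{\textnormal p},M_{\textnormal s}}$ coincides with the task-decoder's list, which gives \eqref{eq:BobMomSecMsgDir}. For Eve, the first coordinate $M_{\textnormal p}$ of $\enc{X|Y}$ takes only $c$ values, so \eqref{eq:ubGuessSI} in Corollary~\ref{co:impGuess} bounds $\mathscr A_{\textnormal E}(P_{X,Y})$ below by $c^{-\rho}\min_{\guess{}{\cdot|Y}}\bigEx{}{\guess{}{X|Y}^\rho}$, which by the converse \eqref{eq:lbOptGuessFun} of Theorem~\ref{th:optGuessFun} is at least $(1+\ln|\setX|)^{-\rho}2^{\rho(\renent{\tirho}{X|Y}-\log c)}$, i.e.\ \eqref{eq:EveMomSecMsgDir}.

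For the converse, I would first note that for any PMF \eqref{eq:aliceEncPMFSecMsg} the pair $(M_{\textnormal p},M_{\textnormal s})$ is a stochastic description of $X$ given $Y$ into an alphabet of size $|\setM_{\textnormal p}|\,|\setM_{\textnormal s}|$ whose associated decoding list is exactly Bob's list, so the converse of Theorem~\ref{th:optTaskEnc} yields \eqref{eq:BobMomSecMsgConv}. For Eve, the bound $\mathscr A_{\textnormal E}(P_{X,Y})\le 2^{\rho\renent{\tirho}{X|Y}}$ holds because discarding $M_{\textnormal p}$ leaves Eve guessing $X$ from $Y$ alone, to which the achievability in Theorem~\ref{th:optGuessFun} applies; and the bound $\mathscr A_{\textnormal E}(P_{X,Y})\le|\setM_{\textnormal s}|^\rho\,\mathscr A^{(\textnormal l)}_{\textnormal B}(P_{X,Y})$ is obtained by viewing the conditional PMF of $M_{\textnormal s}$ given $(X,Y,M_{\textnormal p})$ as a stochastic task-encoder for $X$ with common side information $(Y,M_{\textnormal p})$ and description alphabet $\setM_{\textnormal s}$---its decoding lists being Bob's lists $\setL^Y_{M_{\textnormal p},M_{\textnormal s}}$---and invoking \eqref{eq:listToGuess} in part~1 of Theorem~\ref{th:relGuessEnc}, then minimizing over guessing functions. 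Taking the minimum of the two bounds gives \eqref{eq:EveMomSecMsgConv}.

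Most of this is bookkeeping on top of Section~\ref{sec:listsAndGuesses}; the two points that will need care are (i) in the direct part, recognizing that the public hint may be an \emph{arbitrary} $c$-valued coordinate of an optimal task-encoder---it need not be designed to frustrate Eve---because the generic factor-$c^{-\rho}$ bound of Corollary~\ref{co:impGuess} already lower-bounds Eve's ambiguity; and (ii) in the converse, picking the right instance of Theorem~\ref{th:relGuessEnc}, namely reading ``describe $X$ by the secret hint $M_{\textnormal s}$ with $(Y,M_{\textnormal p})$ revealed'' as a task-encoding problem whose decoding lists are Bob's lists, and checking that $(X,(Y,M_{\textnormal p}))$ carries a legitimate joint PMF so that the conditional R\'enyi entropy $\renent{\tirho}{X|Y,M_{\textnormal p}}$ and the cited results apply. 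Once these identifications are in place the bounds follow at once, and, exactly as for Theorem~\ref{th:secMessGuess}, achievability and converse match up to the polylogarithmic factor $(1+\ln|\setX|)^\rho$.
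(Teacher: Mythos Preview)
Your proposal is correct and follows essentially the same route as the paper's proof in Appendix~\ref{app:pfThsSecMess}: both reduce the achievability to Theorem~\ref{th:optTaskEnc} on the product alphabet of size $c\,|\setM_{\textnormal s}|$ and bound Eve via the generic factor-$c^{-\rho}$ side-information estimate, and both obtain the converse on Bob from Theorem~\ref{th:optTaskEnc} and on Eve from ``ignore $M_{\textnormal p}$'' plus a factor-$|\setM_{\textnormal s}|^\rho$ comparison with Bob. The only cosmetic difference is that for the bound $\mathscr A_{\textnormal E}\le|\setM_{\textnormal s}|^\rho\mathscr A_{\textnormal B}^{(\textnormal l)}$ the paper chains Corollary~\ref{co:impGuess} with the trivial inequality $\min_G\bigEx{}{G(X|Y,M_{\textnormal p},M_{\textnormal s})^\rho}\le\bigEx{}{|\setL^Y_{M_{\textnormal p},M_{\textnormal s}}|^\rho}$, whereas you invoke Part~1 of Theorem~\ref{th:relGuessEnc} directly with side information $(Y,M_{\textnormal p})$; both yield the same inequality.
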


We next contrast Theorems~\ref{th:secMessGuess} and \ref{th:secMess} to their counterparts in the previous scenario, i.e., to Theorems~\ref{th:distStorGuess} and \ref{th:distStor}. By comparing the respective upper and lower bounds on Eve's ambiguity, we see that $c$ and $|\setM_{\textnormal s}|$ in the current scenario, which relate to how much information can be gleaned about $X$ from $M_{\textnormal p}$ and $M_{\textnormal s}$, play the roles of $c_1 + c_2 \approx c_1 \vee c_2$ and $|\setM_1| \wedge |\setM_2|$ in the previous scenario, which relate to how much information can be gleaned about $X$ from the hint that---among $M_1$ and $M_2$---reveals more information about $X$ and the one that---among $M_1$ and $M_2$---reveals less information about $X$. This reflects the fact that in the current scenario Eve always sees $M_{\textnormal p}$, whereas in the previous scenario she sees the hint that reveals more information about $X$ and hence minimizes her ambiguity.

Unlike Theorems~\ref{th:distStorGuess} and \ref{th:distStor}, Theorems~\ref{th:secMessGuess} and \ref{th:secMess} imply that in the current scenario Alice can describe $X$ deterministically by choosing a $\{ 0,1 \}$-valued conditional PMF \eqref{eq:aliceEncPMFSecMsg}. To see why, recall that in the current scenario Eve sees only the public hint $M_{\textnormal p}$, and hence there is no need to encrypt information that can be gleaned from the secret hint $M_{\textnormal s}$. Consequently, Alice need not draw a one-time-pad like random variable and ensure that some information can be gleaned about $X$ from $(M_{\textnormal p}, M_{\textnormal s})$ but not from one hint alone. Instead, she can store that information on $M_{\textnormal s}$ without prior encryption.\\

We now proceed to Scenario~2, where Alice describes $X$ using only one
hint, but Alice and Bob share a secret key, which is unknown to
Eve. The secret key $K$ is drawn independently of the pair $(X,Y)$ and
uniformly over some finite set $\setK$. Upon observing $(X,Y) = (x,y)$
and $K = k$, Alice draws the hint $M$ from some finite set $\setM$
according to some conditional PMF
\begin{equation}
\distof{M = m | X = x, Y = y, K = k }. \label{eq:aliceEncPMFKey}
\end{equation}
Throughout, we assume that $| \setK | \leq | \setM |$. Bob sees the secret key and the hint. In the guessing version his ambiguity about $X$ is
\begin{IEEEeqnarray}{l}
\mathscr A^{(\textnormal g)}_{\textnormal B} ( P_{X,Y} ) = \min_{\guess {}{\cdot | Y, K, M}} \bigEx {}{\guess {}{ X | Y,K,M }^\rho} \label{eq:bobAmbiguityGuessingKey}
\end{IEEEeqnarray}
and in the list version
\begin{IEEEeqnarray}{l}
\mathscr A^{(\textnormal l)}_{\textnormal B} ( P_{X,Y} ) = \BigEx {}{\bigl| \setL^{Y,K}_M \bigr|^\rho}. \label{eq:bobAmbiguityListKey}
\end{IEEEeqnarray}
Eve sees sees only the hint. In both versions her ambiguity about $X$ is
\begin{IEEEeqnarray}{l}
\mathscr A_{\textnormal E} ( P_{X,Y} ) = \min_{\guess {}{\cdot | Y, M}} \bigEx {}{\guess {}{ X | Y, M }^\rho }. \label{eq:distEncSecrecyMeasureKey}
\end{IEEEeqnarray}

The next two theorems characterize the largest ambiguity that we can guarantee that Eve will have subject to a given upper bound on the ambiguity that Bob may have (see Appendix~\ref{app:pfThsKey} for a proof). Again, the guessing and the list version lead to similar results. Here $|\setK|$ is related to how much information can be gleaned about $X$ from $(K,M)$ but not from $M$ alone, i.e., to the ``encrypted'' information stored on $M$, and $c$ is related to how much information can be gleaned about $X$ from $M$, i.e., to the ``unencrypted'' information stored on $M$.

\begin{theorem}[Secret Key Guessing-Version]\label{th:keyGuess}
For every $c \in \naturals$ satisfying
\begin{IEEEeqnarray}{l}
c \, | \setK | \leq | \setM |, \label{eq:condCGuessKey}
\end{IEEEeqnarray}
there is a $\{ 0,1 \}$-valued choice of the conditional PMF in \eqref{eq:aliceEncPMFKey} for which Bob's ambiguity about $X$ is upper-bounded by
\begin{IEEEeqnarray}{l}
\mathscr A_{\textnormal B}^{(\textnormal g)} ( P_{X,Y} ) < 1 + 2^{\rho ( \renent {\tirho} { X | Y } - \log ( c \, | \setK | ) + 1 )}, \label{eq:BobMomKeyDirGuess}
\end{IEEEeqnarray}
and Eve's ambiguity about $X$ is lower-bounded by
\begin{IEEEeqnarray}{l}
\mathscr A_{\textnormal E} ( P_{X,Y} ) \geq \bigl( 1 + \ln | \setX | \bigr)^{-\rho} 2^{\rho ( \renent {\tirho} { X | Y } - \log c )}. \label{eq:EveMomKeyDirGuess}
\end{IEEEeqnarray} 
Conversely, for every conditional PMF, Bob's ambiguity is lower-bounded by
\begin{IEEEeqnarray}{l}
\mathscr A_{\textnormal B}^{(\textnormal g)} ( P_{X,Y} ) \geq \bigl( 1 + \ln | \setX | \bigr)^{-\rho} 2^{\rho ( \renent {\tirho} {X | Y } - \log | \setM | )} \vee 1, \label{eq:BobMomKeyConvGuess}
\end{IEEEeqnarray}
and Eve's ambiguity is upper-bounded by
\begin{IEEEeqnarray}{l}
\mathscr A_{\textnormal E} ( P_{X,Y} ) \leq | \setK |^\rho \mathscr A^{(\textnormal g)}_{\textnormal B} ( P_{X,Y} ) \wedge 2^{\rho \renent {\tirho} { X | Y }}. \label{eq:EveMomKeyConvGuess}
\end{IEEEeqnarray}
\end{theorem}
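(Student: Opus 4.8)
Theorem~\ref{th:keyGuess} is a secret-key version of the distributed-storage problem, so the natural plan is to mimic the proof of Theorem~\ref{th:distStorGuess} but with the simplification that Eve never sees the one-time-pad part, so no XOR trick is needed. Here is the plan.

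\textbf{Achievability.} Given $c \in \naturals$ with $c \,|\setK| \le |\setM|$, I would first have Alice (deterministically) compute an optimal guessing function $\guessast {}{\cdot|Y}$ for $X$ given $Y$ and apply the encoder construction of Lemma~\ref{le:ImproveGuess}/Corollary~\ref{co:impGuess} at ``resolution'' $c$: set $V = f(X,Y)$ taking values in a set $\setV$ of size $c$ so that $\min_{\guess {}{\cdot|Y,V}}\bigEx {}{\guess {}{X|Y,V}^\rho} < 1 + 2^\rho c^{-\rho}\bigEx {}{\guessast {}{X|Y}^\rho}$, and so that there is a residual guessing function $\guessast {}{\cdot|Y,V}$ with $\guessast {}{x|y,f(x,y)} = \lceil \guessast {}{x|y}/c\rceil$. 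Then Alice forms the hint $M = (W, V)$, where $W$ is a one-time-pad encryption of (a further-resolved description of) $\guessast {}{X|Y,V}$ using the key $K$: concretely, describe $\guessast {}{X|Y,V}$ to within a set of size $|\setK|$ via the Euclidean-division/mod trick, call it $W_0 \in \setK$, and set $W = W_0 \oplus_{|\setK|} K$. Since $K$ is uniform and independent of $(X,Y)$, the marginal of $W$ is uniform and independent of $(X,Y)$, so Eve's observation $(Y,M) = (Y,W,V)$ is informationally equivalent to $(Y,V)$, giving Eve's lower bound $\mathscr A_{\textnormal E}(P_{X,Y}) \ge \min_{\guess {}{\cdot|Y,V}}\bigEx {}{\guess {}{X|Y,V}^\rho} \ge |\setV|^{-\rho}\,\min_{\guess {}{\cdot|Y}}\bigEx {}{\guess {}{X|Y}^\rho} = c^{-\rho}\mathscr A_{\textnormal E}^{(\textnormal{no hint})}$, and then invoking the converse half of Theorem~\ref{th:optGuessFun} yields \eqref{eq:EveMomKeyDirGuess}. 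For Bob, who sees $(Y,K,M)$ and hence recovers $W_0$ and $V$, his residual guessing problem is for $X$ given $(Y,V,W_0)$, and since $(V,W_0)$ is exactly a two-step ``$\omega = 1$''-style refinement of $\guessast {}{\cdot|Y}$ down to a set of size $c\,|\setK| \le |\setM|$, Corollary~\ref{co:guessToList}-type reasoning (or directly Corollary~\ref{co:impGuess} with $|\setZ| = c\,|\setK|$ followed by \eqref{eq:ceilApprox}) gives $\mathscr A_{\textnormal B}^{(\textnormal g)}(P_{X,Y}) < 1 + 2^\rho (c\,|\setK|)^{-\rho}\bigEx {}{\guessast {}{X|Y}^\rho} \le 1 + 2^{\rho(\renent {\tirho}{X|Y} - \log(c\,|\setK|) + 1)}$, which is \eqref{eq:BobMomKeyDirGuess}. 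Note the PMF in \eqref{eq:aliceEncPMFKey} is $\{0,1\}$-valued because $V$, $W_0$ are deterministic in $(X,Y)$ and $W = W_0 \oplus_{|\setK|} K$ is deterministic in $(X,Y,K)$.

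\textbf{Converse.} For Bob's lower bound \eqref{eq:BobMomKeyConvGuess}: for \emph{any} conditional PMF, Bob's observation $(Y,K,M)$ takes values in a set whose $M$- and $K$-coordinates have $|\setM|\,|\setK|$ possibilities, but since $K$ is independent of $(X,Y)$, conditioning on $K$ cannot help, so effectively Bob has a side description in a set of size $|\setM|$; thus by \eqref{eq:ubGuessSI} (or directly by \eqref{eq:lbOptGuessFun} together with Lemmas~\ref{le:uncertaintyIncreasesRenEnt} and \ref{le:chainRuleRenEnt} applied to $\renent {\tirho}{X|Y,M}$), $\mathscr A_{\textnormal B}^{(\textnormal g)}(P_{X,Y}) \ge (1+\ln|\setX|)^{-\rho}2^{\rho(\renent {\tirho}{X|Y} - \log|\setM|)}\vee 1$. (One should spell out that $\renent {\tirho}{X|Y,K,M} \ge \renent {\tirho}{X|Y,M} \ge \renent {\tirho}{X|Y} - \log|\setM|$, using that $K\perp (X,Y,M)$ forces $\renent {\tirho}{X|Y,K,M} = \renent {\tirho}{X|Y,M}$, or at least that the inequality goes the right way — this is the one spot that needs a small independence argument.) For Eve's upper bound \eqref{eq:EveMomKeyConvGuess}: the term $2^{\rho\renent {\tirho}{X|Y}}$ is just Eve guessing from $Y$ alone (the trivial bound, by the achievability half of Theorem~\ref{th:optGuessFun} — actually one wants $\mathscr A_{\textnormal E}(P_{X,Y}) \le \min_{\guess {}{\cdot|Y}}\bigEx {}{\guess {}{X|Y}^\rho} \le 2^{\rho\renent {\tirho}{X|Y}}$, since Eve can always ignore $M$); and for the term $|\setK|^\rho \mathscr A_{\textnormal B}^{(\textnormal g)}(P_{X,Y})$, note that $(Y,K,M)$ vs.\ $(Y,M)$ differ by the key $K$ which lies in a set of size $|\setK|$, so $\min_{\guess {}{\cdot|Y,M}}\bigEx {}{\cdot} \le |\setK|^\rho \min_{\guess {}{\cdot|Y,K,M}}\bigEx {}{\cdot}$ — this is precisely the reverse direction of \eqref{eq:impGuess}/\eqref{eq:ubGuessSI}: more side information ($K$) can help by at most a factor $|\setK|^{-\rho}$, i.e., removing $K$ costs at most $|\setK|^\rho$. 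Combining gives \eqref{eq:EveMomKeyConvGuess}.

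\textbf{Main obstacle.} The only genuinely delicate point is the treatment of the key's independence in the converse: one must argue cleanly that the uniform, independent key $K$ (i) does not help Bob beyond what $M$ already gives, in the sense needed for \eqref{eq:BobMomKeyConvGuess}, and yet (ii) its \emph{absence} costs Eve a factor of at most $|\setK|^\rho$, as needed for \eqref{eq:EveMomKeyConvGuess}. Both follow from Lemma~\ref{le:ImproveGuess}/Corollary~\ref{co:impGuess} applied with $\setZ = \setK$, but one has to be careful about the direction of the inequalities and about the fact that $K\perp(X,Y)$ — this is where I would spend the most care. The achievability side is essentially a bookkeeping exercise combining Lemma~\ref{le:ImproveGuess} (to split off the size-$c$ public part $V$) with a one-time-pad argument (to make the size-$|\setK|$ part $W$ independent of $(X,Y)$ as seen by Eve), and then Theorem~\ref{th:optGuessFun} to translate the number-of-guesses bounds into R\'enyi-entropy bounds.
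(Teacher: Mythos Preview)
Your approach is correct and close to the paper's, but the two differ in how Eve's lower bound \eqref{eq:EveMomKeyDirGuess} is obtained, and your converse parenthetical contains a slip.

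\textbf{Achievability.} The paper does not build the encoder in two explicit stages via Lemma~\ref{le:ImproveGuess}; it simply invokes Corollary~\ref{co:equivBunteResultGuessing} once with $|\setZ| = c\,|\setK|$ to get a deterministic description $(M_{\textnormal p},M_{\textnormal s}) \in \setM_{\textnormal p}\times\setK$ (with $|\setM_{\textnormal p}|=c$) achieving \eqref{eq:BobMomKeyDirGuess}, and then sets $M=(M_{\textnormal s}\oplus_{|\setK|}K,\,M_{\textnormal p})$. For Eve, the paper does \emph{not} argue via independence of the padded part from $(X,Y)$ as you do. Instead it observes that $K$ is a deterministic function of $(X,Y,M)$ (since $M_{\textnormal s}$ is determined by $(X,Y)$), so guessing $X$ from $(Y,M)$ is the same as guessing $(X,K)$ from $(Y,M)$; it then applies Corollary~\ref{co:equivBunteResultGuessing} to $(X,K)$ and uses $\renent{\tirho}{X,K|Y}=\renent{\tirho}{X|Y}+\log|\setK|$ together with $|\setM|=c\,|\setK|$. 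Your route---reducing Eve's observation from $(Y,W,V)$ to $(Y,V)$ by the one-time-pad independence, then applying \eqref{eq:ubGuessSI} and Theorem~\ref{th:optGuessFun}---is arguably more transparent and yields the $(1+\ln|\setX|)^{-\rho}$ factor directly without enlarging the guessed alphabet to $\setX\times\setK$.

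\textbf{Converse.} Your main argument is correct and matches the paper: apply \eqref{eq:ubGuessSI} with $Z=M$ on top of the side information $(Y,K)$, then use $K\perp(X,Y)$ to conclude $\min_G \bigEx{}{\guess{}{X|Y,K}^\rho}=\min_G \bigEx{}{\guess{}{X|Y}^\rho}$ (equivalently $\renent{\tirho}{X|Y,K}=\renent{\tirho}{X|Y}$, which the paper computes explicitly). However, your parenthetical claim that ``$K\perp(X,Y,M)$'' is false in general---$M$ is allowed to depend on $K$, and in the achievability scheme it certainly does. What you need, and what holds, is only $K\perp(X,Y)$; the chain goes $\renent{\tirho}{X|Y,K,M}\ge\renent{\tirho}{X|Y,K}-\log|\setM|=\renent{\tirho}{X|Y}-\log|\setM|$, not through $\renent{\tirho}{X|Y,M}$. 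Fix that one line and your converse is identical to the paper's.
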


\begin{theorem}[Secret Key List-Version]\label{th:key} If $\bigl\lfloor | \setM | / | \setK | \bigr\rfloor | \setK | > \log | \setX | + 2$, then for every $c \in \naturals$ satisfying
\begin{IEEEeqnarray}{l}
c \, | \setK | \leq | \setM |, \quad c \, | \setK | > \log | \setX | + 2, \label{eq:condCListKey}
\end{IEEEeqnarray}
there is a $\{ 0,1 \}$-valued choice of the conditional PMF in \eqref{eq:aliceEncPMFKey} for which Bob's ambiguity about $X$ is upper-bounded by
\begin{IEEEeqnarray}{l}
\mathscr A_{\textnormal B}^{(\textnormal l)} ( P_{X,Y} ) < 1 + 2^{\rho ( \renent {\tirho} { X | Y } - \log ( c \, | \setK | - \log | \setX | - 2 ) + 2 )}, \label{eq:BobMomKeyDir}
\end{IEEEeqnarray}
and Eve's ambiguity about $X$ is lower-bounded by
\begin{IEEEeqnarray}{l}
\mathscr A_{\textnormal E} ( P_{X,Y} ) \geq \bigl( 1 + \ln | \setX | \bigr)^{-\rho} 2^{\rho ( \renent {\tirho} { X | Y } - \log c )}. \label{eq:EveMomKeyDir}
\end{IEEEeqnarray}
Conversely, for every conditional PMF, Bob's ambiguity is lower-bounded by
\begin{IEEEeqnarray}{l}
\mathscr A_{\textnormal B}^{(\textnormal l)} ( P_{X,Y} ) \geq 2^{\rho ( \renent {\tirho} {X | Y } - \log | \setM | )} \vee 1, \label{eq:BobMomKeyConv}
\end{IEEEeqnarray}
and Eve's ambiguity is upper-bounded by
\begin{IEEEeqnarray}{l}
\mathscr A_{\textnormal E} ( P_{X,Y} ) \leq | \setK |^\rho \mathscr A^{(\textnormal l)}_{\textnormal B} ( P_{X,Y} ) \wedge 2^{\rho \renent {\tirho} { X | Y }}. \label{eq:EveMomKeyConv}
\end{IEEEeqnarray}
\end{theorem}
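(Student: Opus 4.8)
The goal is to prove Theorem~\ref{th:key} by the same template as its companions: a one-time-pad construction for the direct part, and the guessing/task-encoding correspondence of Section~\ref{sec:listsAndGuesses} for the converse.

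\textbf{Direct part.} I would fix $c\in\naturals$ obeying \eqref{eq:condCListKey}; such a $c$ exists because the theorem's hypothesis is $\lfloor\card{\setM}/\card{\setK}\rfloor\,\card{\setK}>\log\card{\setX}+2$. Since $c\,\card{\setK}>\log\card{\setX}+2$, Theorem~\ref{th:optTaskEnc} furnishes a deterministic task-encoder $\enc{\cdot|Y}$ onto a set $\setZ$ with $\card{\setZ}=c\,\card{\setK}$ whose decoding-list moment $\BigEx{}{\bigl|\setL^Y_Z\bigr|^\rho}$ is strictly below $1+2^{\rho(\renent{\tirho}{X|Y}-\log(c\,\card{\setK}-\log\card{\setX}-2)+2)}$. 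I would identify $\setZ$ with $\setK\times\setV$, where $\card{\setV}=c$ and $\setK$ carries the group addition $\oplus_{\card{\setK}}$, and write $\enc{x|y}=\bigl(V_{\textnormal s}(x,y),V(x,y)\bigr)$. Alice's encoder (deterministic, hence with a $\{0,1\}$-valued conditional PMF) stores $M=\bigl(V_{\textnormal s}(X,Y)\oplus_{\card{\setK}}K,\;V(X,Y)\bigr)$, which takes values in a set of size $c\,\card{\setK}\le\card{\setM}$ and may therefore be taken to range over $\setM$. Bob observes $(Y,K,M)$; knowing $K$ he undoes the pad and recovers $\enc{X|Y}$, and because $K$ is independent of $(X,Y)$ his posterior on $X$ given $(Y,K,M)$ is the one underlying that task-encoder's decoding list, so $\mathscr A_{\textnormal B}^{(\textnormal l)}(P_{X,Y})$ coincides with $\BigEx{}{\bigl|\setL^Y_Z\bigr|^\rho}$, giving \eqref{eq:BobMomKeyDir}. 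Eve observes only $M$; since $K$ is uniform over $\setK$ and independent of $(X,Y)$, the padded coordinate is uniform over $\setK$ and independent of $\bigl(X,Y,V(X,Y)\bigr)$, so Eve's posterior on $X$ given $(Y,M)$ equals her posterior given $\bigl(Y,V(X,Y)\bigr)$, whence $\mathscr A_{\textnormal E}(P_{X,Y})=\min_{\guess{}{\cdot|Y,V}}\bigEx{}{\guess{}{X|Y,V}^\rho}$ with $V=V(X,Y)$ ranging over a set of size $c$. The converse \eqref{eq:ubGuessSI} of Corollary~\ref{co:impGuess}, followed by the converse \eqref{eq:lbOptGuessFun} of Theorem~\ref{th:optGuessFun}, then yields $\mathscr A_{\textnormal E}(P_{X,Y})\ge c^{-\rho}(1+\ln\card{\setX})^{-\rho}2^{\rho\renent{\tirho}{X|Y}}$, which is \eqref{eq:EveMomKeyDir}.

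\textbf{Converse.} I would prove the three inequalities behind \eqref{eq:BobMomKeyConv} and \eqref{eq:EveMomKeyConv}. (i) $\mathscr A_{\textnormal E}(P_{X,Y})\le 2^{\rho\renent{\tirho}{X|Y}}$: by the direct part of Theorem~\ref{th:optGuessFun} there is a guessing function for $X$ given $Y$ attaining it, and Eve may use it while disregarding $M$. (ii) For \eqref{eq:BobMomKeyConv}, I would regard $M$ as the description of a stochastic task-encoder for $X$ whose \emph{side information is the pair $(Y,K)$}; the converse half of Theorem~\ref{th:optTaskEnc} then gives $\mathscr A_{\textnormal B}^{(\textnormal l)}(P_{X,Y})\ge 2^{\rho(\renent{\tirho}{X|Y,K}-\log\card{\setM})}\vee 1$, and a one-line computation from \eqref{eq:conditoinalRenaEntrDef}, using that $K$ is uniform and independent of $(X,Y)$, shows $\renent{\tirho}{X|Y,K}=\renent{\tirho}{X|Y}$. (iii) For $\mathscr A_{\textnormal E}(P_{X,Y})\le\card{\setK}^\rho\,\mathscr A_{\textnormal B}^{(\textnormal l)}(P_{X,Y})$, I would apply the first part of Theorem~\ref{th:relGuessEnc} with the roles of ``$Y$'', ``$Z$'', and ``$\setZ$'' played by $(Y,M)$, $K$, and $\setK$; the decoding lists arising there are precisely Bob's lists $\setL^{Y,K}_M$, so the induced guessing function $\guess{}{\cdot|Y,M}$ satisfies $\bigEx{}{\guess{}{X|Y,M}^\rho}\le\card{\setK}^\rho\BigEx{}{\bigl|\setL^{Y,K}_M\bigr|^\rho}$ and thus bounds $\mathscr A_{\textnormal E}(P_{X,Y})$ from above. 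Taking the minimum of (i) and (iii) gives \eqref{eq:EveMomKeyConv}.

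\textbf{Main obstacle.} Apart from one step everything is bookkeeping: the one-time-pad trick and the list/guess dictionary are already available. The delicate point is the Bob converse~(ii): a naive count of descriptions would produce $\log(\card{\setK}\,\card{\setM})$ rather than the required $\log\card{\setM}$, and the fix is to place the secret key into Bob's \emph{side information} rather than into the \emph{description} --- legitimate exactly because $K$ is drawn independently of $(X,Y)$, and harmless because doing so leaves the conditional R\'enyi entropy unchanged. (Alternatively, the direct part could be obtained verbatim from that of Theorem~\ref{th:secMess} by letting the public hint range over a set of size $\lfloor\card{\setM}/\card{\setK}\rfloor$ and the secret hint over $\setK$ and then padding the secret hint with $K$, but the Bob converse still requires the side-information observation above.)
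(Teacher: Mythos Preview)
Your proof is correct and follows the same architecture as the paper's (Appendix~\ref{app:pfThsKey}): a one-time-pad construction for the direct part and the Section~\ref{sec:listsAndGuesses} toolkit for the converse, including the key observation that $K$ belongs in Bob's side information so that $\renent{\tirho}{X|Y,K}=\renent{\tirho}{X|Y}$.

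Two tactical differences are worth noting. For Eve's \emph{lower} bound in the direct part, you argue that the padded coordinate $V_{\textnormal s}\oplus_{\card{\setK}}K$ is uniform and independent of $(X,Y,V)$, so Eve effectively sees only $(Y,V)$ with $\card{\setV}=c$, and then invoke Corollary~\ref{co:equivBunteResultGuessing}; the paper instead observes that $K$ is a deterministic function of $(X,Y,M)$, so guessing $X$ is equivalent to guessing $(X,K)$, and applies Corollary~\ref{co:equivBunteResultGuessing} to the pair with $\renent{\tirho}{X,K|Y}=\renent{\tirho}{X|Y}+\log\card{\setK}$ and $\card{\setM}=c\,\card{\setK}$. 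Your route is a shade more direct here; the paper's ``guess-the-key-too'' trick is the same device used in the proof of Theorems~\ref{th:distStorGuess}--\ref{th:distStor}, where the pad is Alice's private randomness rather than an external key. For Eve's \emph{upper} bound in the converse, you invoke Theorem~\ref{th:relGuessEnc}\,(1) with side information $(Y,M)$ and ``description'' $K$, whereas the paper uses Corollary~\ref{co:impGuess} (additional information $K$ reduces guessing moments by at most $\card{\setK}^{-\rho}$) together with the trivial bound $\min_G\bigEx{}{\guess{}{X|Y,K,M}^\rho}\le\BigEx{}{\bigl|\setL^{Y,K}_M\bigr|^\rho}$. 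Both paths yield the same inequality.
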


Theorems~\ref{th:keyGuess} and \ref{th:key} are reminiscent of their counterparts for the scenario with a public and a secret hint, i.e., of Theorems~\ref{th:secMessGuess} and \ref{th:secMess}. The main difference is that in the current scenario $c$ and $|\setK|$, which relate to the ``unencrypted'' and the ``encrypted'' information stored on $M$, respectively, play the roles of $c$ and $|\setM_{\textnormal s}|$, which in the previous scenario relate to the information stored on the public and the secret hint, respectively. Like Theorems~\ref{th:secMessGuess} and \ref{th:secMess}, Theorems~\ref{th:keyGuess} and \ref{th:key} imply that in the current scenario Alice can describe $X$ deterministically by choosing a $\{ 0,1 \}$-valued conditional PMF \eqref{eq:aliceEncPMFKey}; there is no need for Alice to draw a one-time-pad like random variable, because she can use the secret key $K$ as a one-time-pad.

\section{Proofs}\label{sec:distStorProofs}


\subsection{A Proof of Theorems~\ref{th:distStorGuess} and \ref{th:distStor}} \label{sec:distStorProofsThmsFiniteBlockl}

We first establish the achievability results, i.e., \eqref{eq:BobMomDistStorDirGuess}--\eqref{eq:EveMomDistStorDirGuess} in the guessing version and \eqref{eq:BobMomDistStorDir}--\eqref{eq:EveMomDistStorDir} in the list version. To this end fix $( c_{\textnormal s},c_1,c_2 ) \in \naturals^3$ satisfying \eqref{bl:condCsC1C2Guess} in the guessing version and \eqref{bl:condCsC1C2} in the list version. For every $\nu \in \{ \textnormal s,1,2 \}$ let $V_\nu$ be a chance variable taking values in the set $\setV_\nu = \{ 0, \ldots, c_\nu - 1 \}$. Corollary~\ref{co:equivBunteResultGuessing} implies that there exists some $\{ 0,1 \}$-valued conditional PMF $\bigdistof { ( V_{\textnormal s}, V_1, V_2 ) = ( v_{\textnormal s}, v_1, v_2 ) \bigl| X = x, Y = y }$ for which
\begin{IEEEeqnarray}{l}
\min_{\guess {}{\cdot | Y, V_{\textnormal s}, V_1, V_2}} \bigEx {}{\guess {}{X|Y, V_{\textnormal s}, V_1, V_2}^\rho}  < 1 + 2^{\rho ( \renent {\tirho} {X | Y } - \log ( c_{\textnormal s} c_1 c_2 ) + 1 )}. \label{eq:BobMomDistStorDir1Guess}
\end{IEEEeqnarray}
Moreover, Theorem~\ref{th:optTaskEnc} implies that there exists some deterministic task-encoder $\enc {\cdot | Y } \colon \setX \rightarrow \setV_{\textnormal s} \times \setV_1 \times \setV_2$ for which
\begin{IEEEeqnarray}{l}
\BigEx {}{\bigl| \setL^Y_{V_s,V_1,V_2} \bigr|^\rho} < 1 + 2^{\rho ( \renent {\tirho} {X | Y} - \log ( c_s c_1 c_2 - \log \card \setX - 2 ) + 2 )}, \label{eq:BobMomDistStorDir1}
\end{IEEEeqnarray}
where $(V_{\textnormal s},V_1,V_2) = \enc{ X | Y }$. Both \eqref{bl:condCsC1C2Guess} and \eqref{bl:condCsC1C2} imply that $\card {\setM_1} \geq c_{\textnormal s} c_1$ and $\card {\setM_2} \geq c_{\textnormal s} c_2$. It thus suffices to prove \eqref{eq:BobMomDistStorDirGuess}--\eqref{eq:EveMomDistStorDirGuess} and \eqref{eq:BobMomDistStorDir}--\eqref{eq:EveMomDistStorDir} for a conditional PMF \eqref{eq:aliceEncPMF} that assigns positive probability only to $c_{\textnormal s} c_1$ elements of $\setM_1$ and $c_{\textnormal s} c_2$ elements of $\setM_2$. Therefore, we can assume w.l.g.\ that $\setM_1 = \setV_{\textnormal s} \times \setV_1$ and $\setM_2 = \setV_{\textnormal s} \times \setV_2$. That is, we can choose $M_1 = ( V_{\textnormal s} \oplus_{c_{\textnormal s}} \! U, V_1 )$ and $M_2 = ( U, V_2 )$, where $(V_{\textnormal s},V_1,V_2)$ is drawn according to one of the above conditional PMFs depending on the version, and where $U$ is independent of $( X, Y, V_{\textnormal s}, V_1, V_2 )$ and uniform over $\setV_{\textnormal s}$. Bob observes both hints and can thus recover $(V_{\textnormal s},V_1,V_2)$. Hence, in the guessing version \eqref{eq:BobMomDistStorDirGuess} follows from \eqref{eq:BobMomDistStorDir1Guess} and in the list version \eqref{eq:BobMomDistStorDir} follows from \eqref{eq:BobMomDistStorDir1}.

The proof of \eqref{eq:EveMomDistStorDirGuess} and \eqref{eq:EveMomDistStorDir} is more involved. It builds on the following two intermediate claims, which we prove next:
\begin{enumerate}
\item We can assume w.l.g.\ that Eve must guess not only $X$ but the pair $(X,U)$.
\item Given any pair of guessing functions $\guess 1 { \cdot,\cdot | Y, M_1 }$ and $\guess 2 { \cdot,\cdot | Y, M_2 }$ for $(X,U)$, there exist a chance variable $Z$ that takes values in a set of size at most $c_{\textnormal s} ( c_1 + c_2 )$ and a guessing function $\guess {}{\cdot,\cdot | Y,Z}$ for $(X,U)$ for which
\begin{IEEEeqnarray}{l}
\guess {}{X,U | Y,Z} = \guess 1 { X,U | Y, M_1 } \wedge \guess 2 { X,U | Y, M_2 }. \label{eq:distStorFact2Fact}
\end{IEEEeqnarray}
\end{enumerate}

We first prove the first intermediate claim. To this end note that in both versions (guessing and list) there exist some mappings $g_1 \colon \setX \times \setY \times \setM_1 \rightarrow \setV_s$ and $g_2 \colon \setX \times \setY \times \setM_2 \rightarrow \setV_s$ for which
\begin{equation} \label{eq:pfDistStorUFun}
U = g_1 ( X,Y,M_1 ) = g_2 ( X,Y,M_2 ).
\end{equation}
Given any guessing functions $\guess 1 {\cdot | Y,M_1 }$ and $\guess 2 {\cdot | Y,M_2 }$ for $X$, introduce some guessing functions $\guess 1 {\cdot, \cdot | Y,M_1 }$ and $\guess 2 {\cdot, \cdot | Y,M_2 }$ for $( X, U )$ satisfying, for every $(x,y) \in \setX \times \setY$, $m_1 \in \setM_1$, and $m_2 \in \setM_2$, that 
\begin{equation}
\bigguess k {x,g_k (x,y,m_k ) \bigl| y, m_k} = \guess k {x | y, m_k}, \,\, \forall \, k \in \{ 1,2 \}.
\end{equation}
From \eqref{eq:pfDistStorUFun} it follows that
\begin{equation}
\guess k { X, U | Y, M_k } = \guess k { X| Y, M_k }, \,\, \forall \, k \in \set{1,2} .\label{eq:distStorFact1}
\end{equation}
Consequently, Eve can guess $X$ and the pair $(X,U)$ with the same number of guesses. This proves the first intermediate claim.

We next prove the second intermediate claim. Given any pair of guessing functions $\guess 1 {\cdot, \cdot | Y,M_1 }$ and $\guess 2 {\cdot, \cdot | Y,M_2 }$ for $(X,U)$, define the triple of chance variables
\begin{IEEEeqnarray}{l} \label{eq:distStorTripIHatUHatV}
( I, \hat U, \hat V ) \triangleq \begin{cases} ( 1, V_{\textnormal s} \oplus_{c_{\textnormal s}} \! U, V_1 ) &\text{if} \, \guess 1 { X, U | Y, M_1 } \leq \guess 2 { X, U | Y, M_2 }, \\ ( 2,U,V_2 ) &\text{otherwise} \end{cases}
\end{IEEEeqnarray}
over the alphabet $\setI \times \setV_{\textnormal s} \times \hat \setV$, where $\setI = \{ 1,2 \}$ and $\hat \setV = \{0,1,\ldots, c_1  \vee c_2 - 1 \}$. Observing $(Y,I,\hat U, \hat V)$, Eve can guess $(X,U)$ using either $G_1$ or $G_2$ depending on the value of $I$. That is, Eve can guess $(X,U)$ using some guessing function $\guess {}{\cdot, \cdot | Y, I, \hat U, \hat V }$ satisfying, for every $y \in \setY$, $i \in \setI$, $\hat u \in \setV_s$, and $\hat v \in \{ 0,1,\ldots, c_i - 1 \}$, that 
\begin{IEEEeqnarray}{l}
\guess {}{\cdot,\cdot | y, i, \hat u, \hat v} = \bigguess i {\cdot,\cdot | y, ( \hat u, \hat v )}. \label{eq:distStorDefGuess}
\end{IEEEeqnarray}
By \eqref{eq:distStorTripIHatUHatV} the number of guesses that she needs to do so is given by
\begin{IEEEeqnarray}{l}
\guess {}{ X, U | Y, I, \hat U, \hat V } \nonumber \\
\quad = \bigguess { I }{ X,U | Y, ( \hat U, \hat V ) } \\
\quad = \guess I {X,U | Y, M_I} \\
\quad = \guess 1 {X,U | Y, M_1} \wedge \guess 2 {X,U | Y, M_2}. \label{eq:distStorFact2}
\end{IEEEeqnarray}
Consequently, \eqref{eq:distStorFact2Fact} holds when we set $Z = ( I, \hat U, \hat V )$. To conclude the proof of the second intermediate claim, note that the triple $( I, \hat U, \hat V )$ takes values in the set
\begin{IEEEeqnarray*}{l}
\bigl\{ ( 1, \hat u, \hat v ) \colon ( \hat u, \, \hat v ) \in \setV_{\textnormal s} \times \setV_1 \bigr\} \cup \bigl\{ ( 2, \hat u, \hat v ) \colon ( \hat u, \, \hat v ) \in \setV_{\textnormal s} \times \setV_2 \bigr\},
\end{IEEEeqnarray*}
whose cardinality is given by $$| \setV_{\textnormal s} \times \setV_1 | + | \setV_{\textnormal s} \times \setV_2 | = c_{\textnormal s} ( c_1 + c_2 ).$$

We are now ready to prove \eqref{eq:EveMomDistStorDirGuess} and \eqref{eq:EveMomDistStorDir}:
\begin{IEEEeqnarray}{l}
\bigEx {}{\guess 1 { X | Y, M_1 }^\rho \wedge \guess 2 { X | Y, M_2 }^\rho} \nonumber \\
\quad \stackrel{(a)}= \bigEx {}{\guess 1 { X, U | Y, M_1 }^\rho \wedge \guess 2 { X, U | Y, M_2 }^\rho} \\
\quad \stackrel{(b)}= \bigEx {}{\guess {}{X, U | Y, I, \hat U, \hat V}^\rho} \\
\quad \stackrel{(c)}\geq \bigl( 1 + \ln | \setX | \bigr)^{-\rho} 2^{\rho ( \renent {\tirho}{X,U|Y} - \log ( c_{\textnormal s} ( c_1 + c_2 ) ) )} \\
\quad \stackrel{(d)}= \bigl( 1 + \ln | \setX | \bigr)^{-\rho} 2^{\rho ( \renent {\tirho}{X|Y} - \log ( c_1 + c_2 ) )},
\end{IEEEeqnarray}
where $(a)$ holds by \eqref{eq:distStorFact1}; $(b)$ holds by \eqref{eq:distStorFact2}; $(c)$ follows from Corollary~\ref{co:equivBunteResultGuessing} and the fact that $( I, \hat U, \hat V )$ takes values in a set of size $c_{\textnormal s} ( c_1 + c_2 )$; and $(d)$ holds because
\begin{IEEEeqnarray}{l}
\renent {\tirho}{ X, U | Y } \nonumber \\
\quad = \frac{1}{\rho} \log \sum_{y \in \setY} \Biggl( \sum_{x \in \setX} \sum_{u \in \setV_{\textnormal s}} \bigr( P_{X,Y} ( x,y ) / | \setV_{\textnormal s} | \bigr)^{\tirho} \Biggr)^{\!\! 1+\rho} \label{eq:reEntrXUCondY} \\
\quad = \frac{1}{\rho} \log \! \left( \sum_{y \in \setY} \Biggl( \sum_{x \in \setX} P_{X,Y} ( x,y )^{\tirho} \Biggr)^{\!\! 1+\rho} | \setV_{\textnormal s} |^{\rho} \right) \\
\quad = \renent {\tirho} { X | Y } + \log c_{\textnormal s}.
\end{IEEEeqnarray}
The equality in~\eqref{eq:reEntrXUCondY} holds because $U$ is independent of $( X, Y )$ and uniform over the set $\setV_{\textnormal s}$ of size $| \setV_{\textnormal s} | = c_{\textnormal s}$. This concludes the proof of the achievability results.\\

It remains to establish the converse results, i.e., \eqref{eq:BobMomDistStorConvGuess}--\eqref{eq:EveMomDistStorConvGuessAlternative} in the guessing version and \eqref{eq:BobMomDistStorConv}--\eqref{eq:EveMomDistStorConvAlternative} in the list version. In the guessing version \eqref{eq:BobMomDistStorConvGuess} follows from Corollary~\ref{co:equivBunteResultGuessing}, and in the list version \eqref{eq:BobMomDistStorConv} follows from Theorem~\ref{th:optTaskEnc}. From \eqref{eq:eveAltAmb} we see that \eqref{eq:EveMomDistStorConvGuess} and \eqref{eq:EveMomDistStorConv} follow from \eqref{eq:EveMomDistStorConvGuessAlternative} and \eqref{eq:EveMomDistStorConvAlternative}, respectively, and hence it only remains to establish \eqref{eq:EveMomDistStorConvGuessAlternative} and \eqref{eq:EveMomDistStorConvAlternative}. By Corollary~\ref{co:impGuess}, it holds for every $k \in \{ 1,2 \}$ and $l \in \{ 1, 2 \} \setminus \{ k \}$ that
\begin{IEEEeqnarray}{l}
\min_{\guess {}{\cdot|Y, M_1, M_2}} \bigEx {}{\guess {}{X|Y,M_1,M_2}^\rho} \geq | \setM_l |^{-\rho} \min_{\guess k {\cdot | Y, M_k}} \bigEx {}{\guess k{X|Y,M_k}^\rho}. \label{eq:boundEveAltAmb}
\end{IEEEeqnarray}
Since $$\min_{\guess {}{\cdot|Y, M_1, M_2}} \bigEx {}{\guess {}{X|Y,M_1,M_2}^\rho} \leq \BigEx {}{ \bigl| \setL^Y_{M_1,M_2} \bigr|^\rho},$$ \eqref{eq:boundEveAltAmb} implies that in both versions the ambiguity $\tilde {\mathscr A_{\textnormal E}} ( P_{X,Y} )$ exceeds Bob's ambiguity by at most a factor of $\bigl( | \setM_1 | \wedge | \setM_2 | \bigr)^\rho$. That is, $\tilde {\mathscr A}_{\textnormal E} ( P_{X,Y} ) \leq \bigl( | \setM_1 | \wedge | \setM_2 | \bigr)^\rho \mathscr A^{(\textnormal g)}_{\textnormal B} ( P_{X,Y} )$ and $\tilde {\mathscr A}_{\textnormal E} ( P_{X,Y} ) \leq \bigl( | \setM_1 | \wedge | \setM_2 | \bigr)^\rho \mathscr A^{(\textnormal l)}_{\textnormal B} ( P_{X,Y} )$. Another upper bound on $\tilde {\mathscr A_{\textnormal E}} ( P_{X,Y} )$ is obtained by considering the case where Eve ignores the hint that she observes and guesses $X$ based on $Y$ alone. In this case it follows from Theorem~\ref{th:optGuessFun} that
\begin{equation}
\min_{\guess k {\cdot | Y, M_k}} \bigEx {}{\guess k{X|Y,M_k}^\rho} \leq 2^{\rho \renent {\tirho}{X|Y}}, \,\, \forall \, k \in \{ 1,2 \}. \label{eq:boundEveAltAmb2}
\end{equation}
From \eqref{eq:boundEveAltAmb2} we obtain that in both versions the ambiguity $\tilde {\mathscr A_{\textnormal E}} ( P_{X,Y} )$ cannot exceed $2^{\rho \renent {\tirho}{X|Y}}$, i.e., $\tilde {\mathscr A}_{\textnormal E} ( P_{X,Y} ) \leq 2^{\rho \renent {\tirho} { X | Y }}$. This concludes the proof of \eqref{eq:EveMomDistStorConvGuessAlternative} and \eqref{eq:EveMomDistStorConvAlternative} and consequently that of the converse results.

\subsection{A Proof of Theorem~\ref{th:asympDistStor}} \label{sec:distStorProofsThmAsymp}

If $R_1 + R_2 < \renent {\tirho}{ \rndvecX | \rndvecY }$, then \eqref{eq:BobMomDistStorConvGuess} in the guessing version and \eqref{eq:BobMomDistStorConv} in the list version imply that the privacy-exponent is negative infinity. We hence assume that $R_1 + R_2 > \renent {\tirho}{ \rndvecX | \rndvecY }$.

We first show that the privacy-exponent cannot exceed the RHS of \eqref{eq:secStrongConst}. To this end suppose that \eqref{eq:bobAmbTo1} holds and consequently
\begin{equation}
\limsup_{n \rightarrow \infty} \frac{\log \bigl( \mathscr A_{\textnormal B} ( P_{X^n,Y^n} ) \bigr)}{n} = 0.
\end{equation}
This, combined with \eqref{eq:EveMomDistStorConvGuess} in the guessing version and \eqref{eq:EveMomDistStorConv} in the list version, implies that
\begin{IEEEeqnarray}{l}
\limsup_{n \rightarrow \infty} \frac{\log \bigl( \mathscr A_{\textnormal E} ( P_{X^n,Y^n} ) \bigr)}{n} \leq \rho \bigl( R_1 \wedge R_2 \wedge \renent {\tirho}{ \rndvecX | \rndvecY } \bigr).
\end{IEEEeqnarray}
Hence, the privacy-exponent cannot exceed the RHS of \eqref{eq:secStrongConst}.

We next show that the privacy-exponent cannot be smaller than the RHS of \eqref{eq:secStrongConst}. By possibly relabeling the hints, we can assume w.l.g.\ that $R_2 = R_1 \wedge R_2$. Fix some $\epsilon > 0$ satisfying
\begin{equation} \label{eq:choiceEpsilon}
\epsilon \leq R_1 + R_2 - \renent {\tirho}{ \rndvecX | \rndvecY }.
\end{equation}
Choose a nonnegative rate-triple $( R_{\textnormal s}, \tilde R_1, \tilde R_2 ) \in ( \reals_0^+ )^3$ as follows:
\begin{enumerate}
\item If $R_2 \leq \renent {\tirho}{ \rndvecX | \rndvecY } / 2$, then choose
\begin{IEEEeqnarray}{l}
R_{\textnormal s} = 0, \quad \tilde R_1 = \renent {\tirho}{ \rndvecX | \rndvecY } - R_2 + \epsilon, \quad \tilde R_2 = R_2.
\end{IEEEeqnarray}
\item Else if $\renent {\tirho}{ \rndvecX | \rndvecY } / 2 < R_2 \leq \renent {\tirho}{ \rndvecX | \rndvecY }$, then choose
\begin{IEEEeqnarray}{l}
R_{\textnormal s} = 2 R_2 - \renent {\tirho}{ \rndvecX | \rndvecY } - \epsilon, \quad \tilde R_1 = \tilde R_2 = \renent {\tirho}{ \rndvecX | \rndvecY } - R_2 + \epsilon.
\end{IEEEeqnarray}
(To guarantee that $R_{\textnormal s} \geq 0$, we assume in this case that $\epsilon > 0$ is sufficiently small so that, in addition to \eqref{eq:choiceEpsilon}, also
\begin{equation}
\epsilon < 2 R_2 - \renent {\tirho}{ \rndvecX | \rndvecY }
\end{equation}
holds.)
\item Else if $\renent {\tirho}{ \rndvecX | \rndvecY } < R_2$, then choose
\begin{IEEEeqnarray}{l}
R_{\textnormal s} = R_2, \quad \tilde R_1 = \tilde R_2 = 0.
\end{IEEEeqnarray}
\end{enumerate}
Having chosen $(R_{\textnormal s}, \tilde R_1, \tilde R_2)$, choose the triple $(c_{\textnormal s},c_1,c_2) \in \naturals^3$ to be $( 2^{n R_{\textnormal s}}, 2^{n \tilde R_1}, 2^{n \tilde R_2} )$. For every sufficiently-large $n$, this choice implies \eqref{bl:condCsC1C2Guess} and \eqref{bl:condCsC1C2}, and by Theorems~\ref{th:distStorGuess} and Theorem~\ref{th:distStor} we can thus guarantee \eqref{eq:BobMomDistStorDirGuess}--\eqref{eq:EveMomDistStorDirGuess} in the guessing version and \eqref{eq:BobMomDistStorDir}--\eqref{eq:EveMomDistStorDir} in the list version. Note that
\begin{equation}
R_{\textnormal s} + \tilde R_1 + \tilde R_2 > \renent {\tirho}{ \rndvecX | \rndvecY }. \label{eq:privExpSumRateLargerRenEnt}
\end{equation}
Combining \eqref{eq:privExpSumRateLargerRenEnt} with \eqref{eq:BobMomDistStorDirGuess} in the guessing version and with \eqref{eq:BobMomDistStorDir} in the list version yields \eqref{eq:bobAmbTo1}. Moreover, combining \eqref{eq:privExpSumRateLargerRenEnt} with \eqref{eq:EveMomDistStorDirGuess} in the guessing version and with \eqref{eq:EveMomDistStorDir} in the list version implies that
\begin{IEEEeqnarray}{rCl}
\liminf_{n \rightarrow \infty} \frac{\log \bigl( \mathscr A_{\textnormal E} ( P_{X^n,Y^n} ) \bigr)}{n} & \geq & \rho \bigl( \renent {\tirho}{ \rndvecX | \rndvecY } - ( \tilde R_1 \vee \tilde R_2 ) \bigr) \\
& \geq & \rho \bigl( \left( R_1 \wedge R_2 - \epsilon \right) \wedge \renent {\tirho}{ \rndvecX | \rndvecY } \bigr).
\end{IEEEeqnarray}
Letting $\epsilon$ tend to zero proves that the privacy-exponent cannot be smaller than the RHS of \eqref{eq:secStrongConst}.

\subsection{A Proof of Theorem~\ref{th:asympDistStorEB}} \label{sec:distStorProofsThmAsympEB}

If $R_1 + R_2 < \renent {\tirho}{ \rndvecX | \rndvecY } - \rho^{-1} E_{\textnormal B}$, then \eqref{eq:BobMomDistStorConvGuess} in the guessing version and \eqref{eq:BobMomDistStorConv} in the list version imply that the modest privacy-exponent is negative infinity. We hence assume that $R_1 + R_2 \geq \renent {\tirho}{ \rndvecX | \rndvecY } - \rho^{-1} E_{\textnormal B}$.

We first show that the modest privacy-exponent cannot exceed the RHS of \eqref{eq:secModConst}. To this end suppose that \eqref{eq:bobAmbToEB} holds. This, combined with \eqref{eq:EveMomDistStorConvGuess} in the guessing version and \eqref{eq:EveMomDistStorConv} in the list version, implies that
\begin{IEEEeqnarray}{l}
\limsup_{n \rightarrow \infty} \frac{\log \bigl( \mathscr A_{\textnormal E} ( P_{X^n,Y^n} ) \bigr)}{n} \leq \bigl( \rho ( R_1 \wedge R_2 ) + E_{\textnormal B} \bigr) \wedge \rho \renent {\tirho}{ \rndvecX | \rndvecY }.
\end{IEEEeqnarray}
Hence, the modest privacy-exponent cannot exceed the RHS of \eqref{eq:secModConst}.

We next show that the modest privacy-exponent cannot be smaller than the RHS of \eqref{eq:secModConst}. By possibly relabeling the hints, we can assume w.l.g.\ that $R_2 = R_1 \wedge R_2$. Choose a nonnegative rate-triple $( R_{\textnormal s}, \tilde R_1, \tilde R_2 ) \in ( \reals_0^+ )^3$ as follows:
\begin{enumerate}
\item If $R_2 \leq \bigl( \renent {\tirho}{ \rndvecX | \rndvecY } - \rho^{-1} E_{\textnormal B} \bigr) / 2$, then choose
\begin{IEEEeqnarray}{l}
R_{\textnormal s} = 0, \quad \tilde R_1 = \renent {\tirho}{ \rndvecX | \rndvecY } - \rho^{-1} E_{\textnormal B} - R_2, \quad \tilde R_2 = R_2.
\end{IEEEeqnarray}
\item Else if $\bigl( \renent {\tirho}{ \rndvecX | \rndvecY } - \rho^{-1} E_{\textnormal B} \bigr) / 2 < R_2 \leq \renent {\tirho}{ \rndvecX | \rndvecY } - \rho^{-1} E_{\textnormal B}$, then choose
\begin{IEEEeqnarray}{l}
R_{\textnormal s} = 2 R_2 - \renent {\tirho}{ \rndvecX | \rndvecY } + \rho^{-1} E_{\textnormal B}, \quad \tilde R_1 = \tilde R_2 = \renent {\tirho}{ \rndvecX | \rndvecY } - \rho^{-1} E_{\text B} - R_2.
\end{IEEEeqnarray}
\item Else if $\renent {\tirho}{ \rndvecX | \rndvecY } - \rho^{-1} E_{\textnormal B} < R_2$, then choose
\begin{IEEEeqnarray}{l}
R_{\textnormal s} = R_2, \quad \tilde R_1 = \tilde R_2 = 0.
\end{IEEEeqnarray}
\end{enumerate}
Having chosen $(R_{\textnormal s}, \tilde R_1, \tilde R_2)$, choose the triple $(c_{\textnormal s},c_1,c_2) \in \naturals^3$ to be $( 2^{n R_{\textnormal s}}, 2^{n \tilde R_1}, 2^{n \tilde R_2} )$. For every sufficiently-large $n$, this choice implies \eqref{bl:condCsC1C2Guess} and \eqref{bl:condCsC1C2}, and by Theorems~\ref{th:distStorGuess} and Theorem~\ref{th:distStor} we can thus guarantee \eqref{eq:BobMomDistStorDirGuess}--\eqref{eq:EveMomDistStorDirGuess} in the guessing version and \eqref{eq:BobMomDistStorDir}--\eqref{eq:EveMomDistStorDir} in the list version. Note that
\begin{equation}
R_{\textnormal s} + \tilde R_1 + \tilde R_2 \geq \renent {\tirho}{ \rndvecX | \rndvecY } - \rho^{-1} E_{\textnormal B}. \label{eq:privExpSumRateLargerRenEntEB}
\end{equation}
Combining \eqref{eq:privExpSumRateLargerRenEntEB} with \eqref{eq:BobMomDistStorDirGuess} in the guessing version and with \eqref{eq:BobMomDistStorDir} in the list version yields \eqref{eq:bobAmbToEB}. Moreover, combining \eqref{eq:privExpSumRateLargerRenEntEB} with \eqref{eq:EveMomDistStorDirGuess} in the guessing version and with \eqref{eq:EveMomDistStorDir} in the list version implies that
\begin{IEEEeqnarray}{rCl}
\liminf_{n \rightarrow \infty} \frac{\log \bigl( \mathscr A_{\textnormal E} ( P_{X^n,Y^n} ) \bigr)}{n} & \geq & \rho \bigl( \renent {\tirho}{ \rndvecX | \rndvecY } - ( \tilde R_1 \vee \tilde R_2 ) \bigr) \\
& \geq & \bigl( \rho ( R_1 \wedge R_2 ) + E_{\textnormal B} \bigr) \wedge \rho \renent {\tirho}{ \rndvecX | \rndvecY }.
\end{IEEEeqnarray}
Consequently, the modest privacy-exponent cannot be smaller than the RHS of \eqref{eq:secModConst}, which concludes the proof.

\section{Resilience against Disk Failures} \label{sec:extDiskFailures}

In this section we generalize the model of Section~\ref{sec:problemStatement} to allow for Alice to produce $\delta$ hints (not necessarily two) and store them on different disks, for Bob to see $\nu \leq \delta$ (not necessarily 2) of those hints, and for Eve to see $\eta < \nu$ (not necessarily one) of the hints. We assume that, after observing $X$ and $Y$, an adversarial ``genie'' reveals to Bob the $\nu$ hints that maximize his ambiguity and to Eve the $\eta$ hints that minimize her ambiguity. The former guarantees that the system be robust against $\delta - \nu$ disk failures, no matter which disks fail; and the latter guarantees that Eve's ambiguity be ``large'' no matter which $\eta$ hints she sees. We allow the genie to observe $(X,Y)$, because, as we have seen, not allowing the genie to observe $(X,Y)$ would lead to a weaker form of secrecy (see Example~\ref{ex:notionOfSecrecy}).

The current network can be described as follows. As in Section~\ref{sec:problemStatement}, we consider two problems, the ``guessing version'' and the ``list version,'' which differ in the definition of Bob's ambiguity. Upon observing $( X,Y ) = ( x,y )$, Alice draws the $\delta$-tuple $\rndvecM = ( M_1, \ldots, M_\delta )$ from the finite set $\mathbb F_{2^s}^\delta$ according to some conditional PMF
\begin{IEEEeqnarray}{l}
\distof{ \rndvecM = \vecm | X = x, Y = y }, \quad \vecm \in \mathbb F_{2^s}^\delta. \label{eq:aliceEncPMFDiscFail}
\end{IEEEeqnarray}
We assume here that each hint comprises $s$ bits (i.e., that $\rndvecM$ takes values in $\mathbb F_{2^s}^\delta$); why this assumption is reasonable will be explained shorty (see Theorem~\ref{th:SiEqSBest} and Remark~\ref{re:hintSize} ahead). Bob gets to see a size-$\nu$ set $\setB \subseteq \{ 1, \ldots, \delta \}$, the components $\rndvecM_{\setB}$ of $\rndvecM$ indexed by $\setB$, and the side information $Y$. As already mentioned, the index set $\setB$ is chosen by an adversary of his. In the guessing version Bob guesses $X$ using an optimal guessing function $G_{\text{\setB}} (\cdot | Y,\rndvecM_{\setB} )$, which minimizes the $\rho$-th moment of the number of guesses that he needs. (As indicated by the subscript, the guessing function $\guess {\setB}{\cdot|Y, \rndvecM_{\setB}}$ can depend on $\setB$.) His min-max ambiguity about $X$ is thus given by
\begin{IEEEeqnarray}{l}
\mathscr A^{(\textnormal g)}_{\textnormal B} ( P_{X,Y} ) = \min_{\guess {\setB}{\cdot|Y, \rndvecM_{\setB}}} \BigEx {}{\max_{\setB} \guess {\setB}{X | Y,\rndvecM_{\setB} }^\rho}. \label{eq:bobAmbiguityGuessingDiscFail}
\end{IEEEeqnarray}
In the list version Bob's ambiguity about $X$ is
\begin{IEEEeqnarray}{l} 
\mathscr A^{(\textnormal l)}_{\textnormal B} ( P_{X,Y} ) = \BigEx {}{\max_{ \setB } \, \bigl| \setL^Y_{\rndvecM_{\setB}} \bigr|^\rho}, \label{eq:bobAmbiguityListDiscFail}
\end{IEEEeqnarray}
where for all $y \in \setY$ and $\vecm_{\setB} \in \mathbb F_{2^s}^\delta$
\begin{IEEEeqnarray}{l} 
\setL^y_{\vecm_{\setB}} =  \bigl\{ x \colon \distof{ X = x | Y = y, \rndvecM_{\setB} = \vecm_{\setB} } > 0 \bigr\}
\end{IEEEeqnarray}
is the list of all the realizations of $X$ of positive posterior probability
\begin{IEEEeqnarray}{l} 
\distof{ X = x | Y = y, \rndvecM_{\setB} = \vecm_{\setB} } \nonumber \\
\quad = \frac{P_{X,Y} ( x,y ) \, \distof{ \rndvecM_{\setB} = \vecm_{\setB} | X = x, Y = y }}{\sum_{\tilde x} P_{X,Y} ( \tilde x,y ) \, \distof{\rndvecM_{\setB} = \vecm_{\setB} | X = \tilde x, Y = y }}.
\end{IEEEeqnarray}
Note that for $\setB^c \triangleq \{ 1, \ldots, \delta \} \setminus \setB$ we have
$$\distof{ \rndvecM_{\setB} = \vecm_{\setB} | X = x, Y = y } = \sum_{\vecm_{\setB^c}} \distof{ \rndvecM = \vecm | X = x, Y = y }.$$ Eve observes a size-$\eta$ set $\setE \subseteq \{ 1, \ldots, \delta \}$, the components $\rndvecM_{\setE}$ of $\rndvecM$ indexed by $\setE$, and the side information $Y$. The index set $\setE$ is chosen by an accomplice of hers. Eve guesses $X$ using an optimal guessing function $\guess {\setE}{\cdot | X, \rndvecM_{\setE}}$, which minimizes the $\rho$-th moment of the number of guesses that she needs. (The guessing function $\guess {\setE}{\cdot | X, \rndvecM_{\setE}}$ can depend on $\setE$.) In both versions her ambiguity about $X$ is thus given by
\begin{IEEEeqnarray}{l} 
\mathscr A_{\textnormal E} ( P_{X,Y} ) = \min_{\guess {\setE}{\cdot | X, \rndvecM_{\setE}}} \BigEx {}{ \min_{ \setE } \guess {\setE} { X | Y, \rndvecM_{\setE} }^\rho }. \label{eq:distEncSecrecyMeasureDiscFail}
\end{IEEEeqnarray}
Optimizing over Alice's choice of the conditional PMF in \eqref{eq:aliceEncPMFDiscFail}, we wish to characterize the largest ambiguity that we can guarantee that Eve will have subject to a given upper bound on the ambiguity that Bob may have.\\

Of special interest to us is the asymptotic regime where $(X,Y)$ is an $n$-tuple (not necessarily drawn IID), and where each hint stores $$s = n R_s$$ bits, where $R_s$ is nonnegative and corresponds to the per-hint storage-rate. (We assume that $\delta$, $\nu$, and $\eta$ are fixed.) For both versions of the problem, we shall characterize the largest exponential growth that we can guarantee for Eve's ambiguity subject to the constraint that Bob's ambiguity tend to one, i.e., we shall characterize the privacy-exponent $\overbar{E_{\textnormal E}}$ defined in Definition~\ref{de:EvesAmbig}. In addition, we shall also characterize the largest exponential growth that we can guarantee for Eve's ambiguity in case Bob's ambiguity is allowed to grow exponentially with a given normalized (by $n$) exponent $E_{\textnormal B} \geq 0$, i.e., we shall characterize the modest privacy-exponent $\overbar{E_{\text E}^{\textnormal m} ( E_{\textnormal B} )}$ defined in Definition~\ref{de:EvesAmbigModest}. As for the model of Section~\ref{sec:problemStatement}, the privacy-exponent and the modest privacy-exponent turn out not to depend on the version of the problem, and in the asymptotic analysis $\mathscr A_{\textnormal B}$ can thus stand for either $\mathscr A_{\textnormal B}^{(\textnormal g)}$ or $\mathscr A_{\textnormal B}^{(\textnormal l)}$.\\


\subsection{Finite-Blocklength Results} \label{sec:statDistStorFinBl}

In the next two theorems $(\nu - \eta) r$ should be viewed as the number of information-bits that can be gleaned about $X$ from $\nu$ but not from $\eta$ hints. Moreover, for every $\gamma \in \{ \eta, \nu \}$, $\gamma p$ should be viewed as the number of information-bits that any $\gamma$ hints reveal about $X$. By adapting the proof of Theorems~\ref{th:discFailGuess} and \ref{th:discFail} to the case at hand (see Appendix~\ref{app:pfThsDiscFail}), we obtain the following results:

\begin{theorem}[Finite-Blocklength Guessing-Version] \label{th:discFailGuess}
For every pair $(p, r) \in \{ 0, \ldots, s \}^2$ satisfying
\begin{subequations} \label{bl:condGuessDiscFail}
\begin{IEEEeqnarray}{l}
p + r = s, \\
p, \, r \in \{ 0 \} \cup \bigl\{ \lceil \log \delta \rceil, \lceil \log \delta \rceil + 1, \ldots \bigr\},
\end{IEEEeqnarray}
\end{subequations}
there is a choice of the conditional PMF in \eqref{eq:aliceEncPMFDiscFail} for which Bob's ambiguity about $X$ is upper-bounded by
\begin{IEEEeqnarray}{l}
\mathscr A_{\textnormal B}^{(\textnormal g)} ( P_{X,Y} ) < 1 + 2^{\rho ( \renent {\tirho} {X | Y } - \nu s + \eta r + 1 )}, \label{eq:BobMomDiscFailDirGuess}
\end{IEEEeqnarray}
and Eve's ambiguity about $X$ is lower-bounded by
\begin{IEEEeqnarray}{l}
\mathscr A_{\textnormal E} ( P_{X,Y} ) \geq 2^{\rho ( \renent {\tirho} { X | Y } - \eta ( s - r ) - \eta \log \delta - \log ( 1 + \ln | \setX | ) )}. \label{eq:EveMomDiscFailDirGuess}
\end{IEEEeqnarray}
Conversely, for every conditional PMF, Bob's ambiguity is lower-bounded by
\begin{IEEEeqnarray}{l}
\mathscr A_{\textnormal B}^{(\textnormal g)} ( P_{X,Y} ) \geq 2^{\rho ( \renent {\tirho} {X | Y } - \nu s - \log (1 + \ln | \setX |) )} \vee 1, \label{eq:BobMomDiscFailConvGuess}
\end{IEEEeqnarray}
and Eve's ambiguity is upper-bounded by
\begin{IEEEeqnarray}{l}
\mathscr A_{\textnormal E} ( P_{X,Y} ) \leq 2^{\rho (\nu - \eta) s} \mathscr A^{(\textnormal g)}_{\textnormal B} ( P_{X,Y} ) \wedge 2^{\rho \renent {\tirho} { X | Y }}. \label{eq:EveMomDiscFailConvGuess}
\end{IEEEeqnarray}
\end{theorem}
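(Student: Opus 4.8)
The plan is to mimic, and generalize, the proof of Theorems~\ref{th:distStorGuess} and \ref{th:distStor} given in Section~\ref{sec:distStorProofsThmsFiniteBlockl}, using the ingredients developed in Section~\ref{sec:listsAndGuesses} (Corollary~\ref{co:equivBunteResultGuessing}, Lemma~\ref{le:ImproveGuess}/Corollary~\ref{co:impGuess}, and the task-encoding bounds of Theorem~\ref{th:optTaskEnc}). For the achievability part, I would have Alice first split the information she wants to store into an ``encrypted'' part (worth $(\nu-\eta)r$ bits per hint, to be protected against any $\eta$ hints) and an ``unencrypted'' part (worth $\eta$-recoverable information), using the identification $p+r=s$. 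Concretely: apply Corollary~\ref{co:equivBunteResultGuessing} (guessing version) to produce a $\{0,1\}$-valued description of $X$ given $Y$ into a set of size $2^{\nu s - \eta r}$, so that an optimal guessing function given that description has $\rho$-th moment less than $1 + 2^{\rho(\renent{\tirho}{X|Y} - (\nu s - \eta r) + 1)}$. Then take this total description, which I think of as living in $\mathbb F_{2^r}^{\nu-\eta} \times \mathbb F_{2^s}^{?}$-type pieces, and distribute it across the $\delta$ hints. The ``secret'' part (the $(\nu-\eta)r$ encrypted bits) is protected by a one-time-pad/XOR construction analogous to the $(\widetilde V_{\textnormal s}, U)$ trick of Section~\ref{sec:problemStatement}, but now implemented as a Reed--Solomon-like MDS code over $\mathbb F_{2^r}$ (or an appropriate extension field): I would place on hint $j$ an evaluation $W_j = \sum_i V_i x_j^{\,i}$ of a degree-$(\nu-1)$ polynomial whose top $\nu-\eta$ coefficients carry the secret part and whose bottom $\eta$ coefficients are i.i.d.\ uniform pad symbols, so that any $\nu$ hints determine all coefficients (hence recover the secret) while any $\eta$ hints are independent of the secret part. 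The remaining $\eta$-recoverable information is replicated on every hint in the clear (each hint carries $p-$something bits worth of it); the bookkeeping that the total fits in $s$ bits per hint is exactly where the constraint \eqref{bl:condGuessDiscFail}, $p+r=s$ with $p,r$ either $0$ or at least $\lceil\log\delta\rceil$, comes in — the $\lceil\log\delta\rceil$ slack is needed so an MDS code on $\delta$ points exists over a field of the relevant size. Bob sees $\nu$ hints, recovers the full description, and inherits the Bob-bound \eqref{eq:BobMomDiscFailDirGuess}.

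For Eve's bound \eqref{eq:EveMomDiscFailDirGuess}, I would follow the two-step argument of Section~\ref{sec:distStorProofsThmsFiniteBlockl}: (i) argue w.l.g.\ that Eve must guess $X$ together with all the pad symbols (this is the analog of ``Eve must guess $(X,U)$''), using deterministic functions expressing the pads in terms of $(X,Y,\rndvecM_{\setE})$; and (ii) show that for any fixed $\setE$ and any choice of guessing functions, Eve's observation $(Y,\rndvecM_{\setE})$ — over all the $\binom{\delta}{\eta}$ choices of $\setE$ and the two hint-values on each — can be summarized by a single side-information variable $Z$ taking values in a set of controlled size, namely of order $\binom{\delta}{\eta}\cdot 2^{\eta p}$ (the $2^{\eta p}$ counts the $\eta$-recoverable cleartext content, and the $\binom{\delta}{\eta}$ an index for which set she saw; the $\eta r$ encrypted bits contribute \emph{nothing} to $Z$ because they are pad-masked). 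Then apply the converse half of Corollary~\ref{co:equivBunteResultGuessing} (or Lemma~\ref{le:ImproveGuess}) to get $\bigEx{}{\min_\setE \guess\setE{X|Y,\rndvecM_\setE}^\rho} \geq (1+\ln|\setX|)^{-\rho} 2^{\rho(\renent{\tirho}{X,\text{pads}|Y} - \log|\setZ|)}$, and use $\renent{\tirho}{X,\text{pads}|Y} = \renent{\tirho}{X|Y} + \eta r$ (the pads are uniform and independent of $(X,Y)$, $\eta r$ pad bits total) together with $\log|\setZ| \le \eta p + \eta\log\delta$ (absorbing $\binom{\delta}{\eta}\le\delta^\eta$). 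Collecting: $\renent{\tirho}{X|Y} + \eta r - \eta p - \eta\log\delta = \renent{\tirho}{X|Y} - \eta(s-r) - \eta\log\delta$, which is \eqref{eq:EveMomDiscFailDirGuess}.

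For the converse, \eqref{eq:BobMomDiscFailConvGuess} follows from the converse part of Corollary~\ref{co:equivBunteResultGuessing} applied with Bob's total observation $(Y,\rndvecM_\setB)$ having at most $2^{\nu s}$ values (and the $\max_\setB$ only makes Bob's ambiguity larger, so it suffices to lower bound it for one $\setB$). For \eqref{eq:EveMomDiscFailConvGuess}: one bound comes from Eve ignoring her hints and guessing from $Y$ alone (Theorem~\ref{th:optGuessFun}), giving $\le 2^{\rho\renent{\tirho}{X|Y}}$; the other comes from comparing Eve's best $\eta$-hint observation to Bob's worst $\nu$-hint observation — since Bob can always choose his $\nu$-set to contain Eve's $\eta$-set, Corollary~\ref{co:impGuess} (applied to the extra $\nu-\eta$ hints, i.e., $2^{(\nu-\eta)s}$ extra descriptions) gives $\mathscr A_{\textnormal E} \le 2^{\rho(\nu-\eta)s}\,\mathscr A_{\textnormal B}^{(\textnormal g)}$. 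Taking the minimum yields \eqref{eq:EveMomDiscFailConvGuess}.

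I expect the main obstacle to be Step~(ii) of Eve's achievability bound: pinning down the right summarizing variable $Z$ and verifying rigorously that the $\eta r$ pad-masked bits really contribute nothing to Eve's posterior — i.e., that for any $\setE$ of size $\eta$, the collection of secret-carrying evaluations $\{W_j\}_{j\in\setE}$ is, conditionally on $(X,Y)$ and the cleartext parts, uniform on $\mathbb F_{2^r}^\eta$ and independent of the secret coefficients. This is precisely the MDS/Shamir secret-sharing property and needs the polynomial's evaluation points to be distinct nonzero field elements, which forces a field of size $\ge \delta$ (hence the $\lceil\log\delta\rceil$ in \eqref{bl:condGuessDiscFail}) and some care when $r$ or $p$ is below that threshold — handled by the ``$\{0\}\cup\{\lceil\log\delta\rceil,\ldots\}$'' dichotomy, where the $0$ case degenerates (no secret part, or no cleartext part) and the construction simplifies. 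A secondary nuisance is the asymmetry between Bob's $\max_\setB$ and Eve's $\min_\setE$ inside the expectations, but since the construction is symmetric in the $\delta$ hints (every hint plays an interchangeable role), the $\max$ and $\min$ over $\setB$, $\setE$ are attained with the same bound regardless of which set is picked, so they can be dispensed with cleanly.
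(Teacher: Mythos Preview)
Your overall architecture mirrors the paper's: describe $X$ by $(V,W)\in\mathbb F_{2^p}^{\nu}\times\mathbb F_{2^r}^{\nu-\eta}$ via Corollary~\ref{co:equivBunteResultGuessing}, spread $V$ with a $(\delta,\nu)$ MDS code over $\mathbb F_{2^p}$, and protect $W$ with a nested-MDS / Shamir construction over $\mathbb F_{2^r}$ using $\eta$ uniform pad symbols $U\in\mathbb F_{2^r}^{\eta}$. The converse part is fine and matches the paper. Two points on the achievability, one cosmetic and one substantive.

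\textbf{Cosmetic.} The ``remaining $\eta$-recoverable information'' cannot be \emph{replicated} on every hint: it is $\nu p$ bits, and each hint only has $p$ cleartext bits. In the paper, $V$ is MDS-encoded too ($M_p=VG_{\setV}$), so any $\nu$ hints recover all of $V$ while any $\eta$ hints reveal $\eta p$ bits. Your later accounting ($2^{\eta p}$ cleartext values seen by Eve) is consistent with this MDS picture, not with replication, so I take ``replicated'' as a slip.

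\textbf{Substantive gap.} Your step~(ii) double-counts. Once you invoke step~(i) and force Eve to guess the pair $(X,U)$, you may \emph{not} drop the ``encrypted'' $\eta r$ bits $[M_r]_{\setE}$ from the summarizing variable $Z$: those bits are exactly what pin down $U$ given $(X,Y)$ (indeed $[UG_{\setU}]_{\setE}=[M_r]_{\setE}\ominus[WG_{\setW}]_{\setE}$ with $W$ a function of $(X,Y)$, and $G_{\setU}$ MDS makes any $\eta$ coordinates determine $U$). Throwing them away would only \emph{increase} Eve's optimal $(X,U)$-guessing moment, so a lower bound computed with the reduced $Z$ does not transfer back to Eve's actual observation. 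Concretely, your bookkeeping yields
\[
\renent{\tirho}{X|Y}+\eta r-\eta p-\eta\log\delta,
\]
which is $\eta r$ too strong; the closing identity ``$+\eta r-\eta p=-\eta(s-r)$'' is false (it asserts $\eta r=0$). The paper's fix is simply to keep all of Eve's observation: take $Z=(K,\rndvecM_{\setE(K)})$ with $\lvert\setZ\rvert\le\binom{\delta}{\eta}2^{\eta s}$, apply Corollary~\ref{co:equivBunteResultGuessing} to $(X,U)$ given $(Y,Z)$, and use $\renent{\tirho}{X,U|Y}=\renent{\tirho}{X|Y}+\eta r$. Then $+\eta r-\eta s=-\eta(s-r)=-\eta p$, which is the bound~\eqref{eq:EveMomDiscFailDirGuess}. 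Equivalently, you could skip step~(i) and argue directly that $[M_r]_{\setE}$ is independent of $X$ given $Y$ and the cleartext, lower-bounding $\min_G\Ex{}{G(X|Y,Z')^\rho}$ with $\lvert Z'\rvert\le\binom{\delta}{\eta}2^{\eta p}$; that also gives $-\eta p$, but without the spurious $+\eta r$. Either route works; mixing the two does not.
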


\begin{proof}
See Appendix~\ref{sec:pfThsDiscFail}.
\end{proof}

\begin{theorem}[Finite-Blocklength List-Version] \label{th:discFail}
If $2^{\nu s} > \log | \setX | + 2$, then for every pair $( p, r ) \in \{ 0, \ldots, s \}$ satisfying
\begin{subequations} \label{bl:condListDiscFail}
\begin{IEEEeqnarray}{l}
p + r = s, \\
p, \, r \in \{ 0 \} \cup \bigl\{ \lceil \log \delta \rceil, \lceil \log \delta \rceil + 1, \ldots \bigr\}, \\
2^{\nu s - \eta r} > \log | \setX | + 2,
\end{IEEEeqnarray}
\end{subequations}
there is a choice of the conditional PMF in \eqref{eq:aliceEncPMFDiscFail} for which Bob's ambiguity about $X$ is upper-bounded by
\begin{IEEEeqnarray}{l}
\mathscr A_{\textnormal B}^{(\textnormal l)} ( P_{X,Y} ) < 1 +  2^{\rho ( \renent {\tirho} { X | Y } - \log ( 2^{\nu s - \eta r} - \log | \setX | - 2 ) + 2 )}, \label{eq:BobMomDiscFailDir}
\end{IEEEeqnarray}
and Eve's ambiguity about $X$ is lower-bounded by
\begin{IEEEeqnarray}{l}
\mathscr A_{\textnormal E} ( P_{X,Y} ) \geq 2^{\rho ( \renent {\tirho} { X | Y } - \eta ( s - r ) - \eta \log \delta - \log ( 1 + \ln | \setX |) )}. \label{eq:EveMomDiscFailDir}
\end{IEEEeqnarray}
Conversely, for every conditional PMF, Bob's ambiguity is lower-bounded by
\begin{IEEEeqnarray}{l}
\mathscr A_{\textnormal B}^{(\textnormal l)} ( P_{X,Y} ) \geq 2^{\rho ( \renent {\tirho} { X | Y } - \nu s )} \vee 1, \label{eq:BobMomDiscFailConv}
\end{IEEEeqnarray}
and Eve's ambiguity is upper-bounded by
\begin{IEEEeqnarray}{l}
\mathscr A_{\textnormal E} ( P_{X,Y} ) \leq 2^{\rho (\nu - \eta) s} \mathscr A^{(\textnormal l)}_{\textnormal B} ( P_{X,Y} ) \wedge 2^{\rho \renent {\tirho} { X | Y }}. \label{eq:EveMomDiscFailConv}
\end{IEEEeqnarray}
\end{theorem}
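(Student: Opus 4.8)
The plan is to mimic the structure of the proof of Theorems~\ref{th:distStorGuess} and \ref{th:distStor} (Section~\ref{sec:distStorProofsThmsFiniteBlockl}), replacing the two-hint one-time-pad construction by a $\delta$-hint secret-sharing–type construction built out of linear maps over $\mathbb F_{2^s}$. The point of taking hints over $\mathbb F_{2^s}$ — rather than over arbitrary alphabets — is exactly that we then have a field structure available for encoding and for mixing secret and public parts.

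\textbf{Achievability (\eqref{eq:BobMomDiscFailDir}--\eqref{eq:EveMomDiscFailDir}).} Fix an admissible pair $(p,r)$ satisfying \eqref{bl:condListDiscFail}. The idea is to split each $s$-bit hint into a ``secret'' block of $r$ bits and a ``public'' block of $p$ bits. For the secret part: Alice forms a chance variable $V_{\textnormal s}$ over a set of size $2^{(\nu-\eta)r}$, draws i.i.d.\ uniform one-time-pad symbols, and produces $\delta$ shares each of $r$ bits such that any $\nu$ shares determine $V_{\textnormal s}$ while any $\eta$ shares are independent of $V_{\textnormal s}$ (and of $(X,Y)$). Such a scheme exists because $\nu r \geq (\nu-\eta)r + \eta r$; concretely one can use a linear ramp scheme over $\mathbb F_{2^r}$ (the conditions $p,r\in\{0\}\cup\{\lceil\log\delta\rceil,\dots\}$ guarantee $\mathbb F_{2^r}$ is large enough to host $\delta$ distinct evaluation points for a Reed–Solomon–style construction, which is why the $\lceil\log\delta\rceil$ threshold appears). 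For the public part: Alice uses a deterministic task-encoder (Theorem~\ref{th:optTaskEnc}) to describe $X$ given $Y$ into a set $\setV_{\textnormal p}$ of size $2^{(\nu-\eta)p}\cdot(\text{something})$ — actually the cleanest route is to bundle secret and public information together: let Alice apply Theorem~\ref{th:optTaskEnc} to get $(V_{\textnormal s},V_{\textnormal p})\triangleq f(X|Y)$ taking values in a set of size $2^{\nu s-\eta r}$, then spread $V_{\textnormal p}$ redundantly across the $\delta$ hints (each hint carrying the same $p$-bit public block, so that any $\eta$ hints already reveal all of $V_{\textnormal p}$ and exactly the $\eta r$ ``leaked'' bits of the secret shares) while spreading $V_{\textnormal s}$ via the ramp scheme. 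Bob, seeing $\nu$ hints, recovers $(V_{\textnormal s},V_{\textnormal p})$ exactly, so \eqref{eq:BobMomDiscFailDir} follows from the task-encoder bound of Theorem~\ref{th:optTaskEnc} with $|\setZ|=2^{\nu s-\eta r}$. For Eve: as in Section~\ref{sec:distStorProofsThmsFiniteBlockl}, we first argue (w.l.g.) that Eve must guess the pair $(X,U)$ where $U$ collects the one-time-pad symbols; then, exactly as in the two-hint proof, for any fixed guessing strategy of Eve we build a single side-information variable $Z$ — an index telling which $\eta$-subset $\setE$ Eve's accomplice picked, together with the $\eta$ hints' contents, but after quotienting out the one-time-pad — that takes values in a set of size at most $\binom{\delta}{\eta}2^{\eta p}$ (a constant times $2^{\eta p}$), and such that $G_Z(X,U|Y,Z)$ equals $\min_{\setE}G_{\setE}(X,U|Y,\rndvecM_{\setE})$. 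Applying Corollary~\ref{co:equivBunteResultGuessing} to this $Z$, together with $\renent{\tirho}{X,U|Y}=\renent{\tirho}{X|Y}+(\text{entropy of the pad})$ — the pad has $\eta r$ leaked bits plus the unleaked ones, but only the leaked $\eta r$ bits matter after the $Z$-reduction — yields \eqref{eq:EveMomDiscFailDir}; the $\eta\log\delta$ term in the exponent is precisely $\log\binom{\delta}{\eta}$'s contribution absorbed into $|\setZ|$.

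\textbf{Converse (\eqref{eq:BobMomDiscFailConv}--\eqref{eq:EveMomDiscFailConv}).} For \eqref{eq:BobMomDiscFailConv}: Bob's list given at most $\nu$ hints of $s$ bits each is at least as large as the optimal task-encoding list with $|\setZ|\le 2^{\nu s}$, so the converse half of Theorem~\ref{th:optTaskEnc} gives the bound directly (after noting $\max_{\setB}\ge$ any fixed $\setB$). For \eqref{eq:EveMomDiscFailConv}: fix any $\setB\supseteq\setE$ of size $\nu$; Eve seeing $\rndvecM_{\setE}$ versus Bob seeing $\rndvecM_{\setB}$ differ by $(\nu-\eta)$ hints, i.e.\ by a description in a set of size $2^{(\nu-\eta)s}$, so Corollary~\ref{co:impGuess} (in the form used to prove \eqref{eq:boundEveAltAmb}) gives $\mathscr A_{\textnormal E}\le 2^{\rho(\nu-\eta)s}\,\mathscr A_{\textnormal B}^{(\textnormal l)}$; and the alternative bound $\mathscr A_{\textnormal E}\le 2^{\rho\renent{\tirho}{X|Y}}$ comes from Eve ignoring her hints (Theorem~\ref{th:optGuessFun}), exactly as in \eqref{eq:boundEveAltAmb2}. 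Combining gives \eqref{eq:EveMomDiscFailConv}.

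\textbf{Main obstacle.} The delicate step is the Eve-side reduction in the achievability part: constructing the single variable $Z$ of bounded size that simultaneously simulates the best of all $\binom{\delta}{\eta}$ possible observation-subsets, \emph{and} verifying that after absorbing the one-time-pad, the residual uncertainty Eve has about $(X,U)$ has conditional R\'enyi entropy equal to $\renent{\tirho}{X|Y}-\eta(s-r)$ up to the $\eta\log\delta$ bookkeeping term. This requires care because the ramp-scheme shares are correlated, so one must check that for every $\eta$-subset $\setE$, the pair $(\rndvecM_{\setE},Y)$ leaks exactly $\eta p$ bits of public data plus $\eta r$ bits that are, jointly with the pad, independent of $V_{\textnormal s}$ — i.e.\ the ``$\eta$ shares reveal nothing about the secret'' property of the ramp scheme must be invoked in its strong (information-theoretic, not just entropy) form, so that the reduction to Corollary~\ref{co:equivBunteResultGuessing} goes through. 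Everything else is the same algebra as in Section~\ref{sec:distStorProofsThmsFiniteBlockl} and the computation \eqref{eq:reEntrXUCondY}.
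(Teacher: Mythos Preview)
Your converse argument is correct and matches the paper. The overall structure of your achievability---split each $s$-bit hint into a public $p$-bit block and a secret $r$-bit block, use Theorem~\ref{th:optTaskEnc} for Bob, and a guessing reduction for Eve---is also the paper's approach. But your concrete construction of the public block is wrong, and this breaks the rate accounting.

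You write that the public part $V_{\textnormal p}$ is spread ``redundantly across the $\delta$ hints (each hint carrying the same $p$-bit public block)''. That gives only $p$ bits of public information in total. But you also (correctly) want the task-encoder output $(V_{\textnormal s},V_{\textnormal p})$ to live in a set of size $2^{\nu s-\eta r}=2^{\nu p+(\nu-\eta)r}$, and you want $V_{\textnormal s}$ to have $(\nu-\eta)r$ bits so it fits a ramp scheme with $r$-bit shares. These three requirements are mutually inconsistent: $p+(\nu-\eta)r\neq\nu p+(\nu-\eta)r$ unless $\nu=1$. The fix, which is what the paper does, is to \emph{also} MDS-encode the public part: take $V\in\mathbb F_{2^p}^{\nu}$ (so $\nu p$ bits, not $p$), encode it with a $(\delta,\nu)$ MDS code over $\mathbb F_{2^p}$, and store one symbol per hint. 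Then any $\nu$ hints recover all of $V$, while any $\eta$ hints reveal exactly $\eta p$ bits of $V$. The secret part $W\in\mathbb F_{2^r}^{\nu-\eta}$ is padded with $U\in\mathbb F_{2^r}^{\eta}$ uniform and independent, and $(U,W)$ is MDS-encoded with a nested $(\delta,\nu)$ code whose first $\eta$ rows form a $(\delta,\eta)$ MDS generator; this is the ramp scheme you had in mind, and it makes any $\eta$ shares independent of $W$.

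Your Eve-side reduction is also muddled: you propose a $Z$ of size $\binom{\delta}{\eta}2^{\eta p}$ obtained by ``quotienting out the one-time-pad'', but Eve's $\eta$ shares contain $[M_r]_{\setE}=[UG_{\setU}]_{\setE}\oplus[WG_{\setW}]_{\setE}$, and you cannot strip out $U$ without knowing $W$ (which depends on $X$). The paper's route is cleaner and avoids this: take $Z=(\rndvecM_{\setE(K)},K)$ of size $2^{\eta s}\binom{\delta}{\eta}$, observe that $U$ is a deterministic function of $(X,Y,Z)$ (because $W$ is determined by $(X,Y)$ and $G_{\setU}$ is MDS of rank $\eta$), so guessing $X$ equals guessing $(X,U)$, and then apply Corollary~\ref{co:equivBunteResultGuessing} with $\renent{\tirho}{X,U|Y}=\renent{\tirho}{X|Y}+\eta r$. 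The $\eta r$ from the pad cancels $\eta r$ of the $\eta s$ in $|Z|$, leaving the exponent $\renent{\tirho}{X|Y}-\eta(s-r)-\eta\log\delta$ after bounding $\binom{\delta}{\eta}\le\delta^{\eta}$.
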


\begin{proof}
See Appendix~\ref{sec:pfThsDiscFail}.
\end{proof}

The bounds in Theorems~\ref{th:discFailGuess} and \ref{th:discFail} are tight in the sense that, with a judicious choice of $p$ and $r$, the achievability results (namely \eqref{eq:BobMomDiscFailDirGuess}--\eqref{eq:EveMomDiscFailDirGuess} in the guessing version and \eqref{eq:BobMomDiscFailDir}--\eqref{eq:EveMomDiscFailDir} in the list version) match the corresponding converse results (namely \eqref{eq:BobMomDiscFailConvGuess}--\eqref{eq:EveMomDiscFailConvGuess} in the guessing version and \eqref{eq:BobMomDiscFailConv}--\eqref{eq:EveMomDiscFailConv} in the list version) up to polynomial factors of $\delta^{\eta}$ and of $\ln | \setX |$. This can be seen from the following corollary to Theorems~\ref{th:discFailGuess} and \ref{th:discFail}, which states the achievability results in a simplified and more accessible form:

\begin{corollary}[Simplified Finite-Blocklength Achievability-Results] \label{co:discFailSimp}
In the guessing version, for any constant $\mathscr U_{\textnormal B}$ satisfying
\begin{IEEEeqnarray}{l}
\mathscr U_{\textnormal B} \geq 1 + 2^{\rho ( \renent {\tirho} {X | Y } - \nu s + 1 )}, \label{eq:discFailSimpUBGuess}
\end{IEEEeqnarray}
there is a choice of the conditional PMF in \eqref{eq:aliceEncPMFDiscFail} for which Bob's ambiguity about $X$ is upper-bounded by
\begin{IEEEeqnarray}{l}
\mathscr A_{\textnormal B}^{(\textnormal g)} ( P_{X,Y} ) < \mathscr U_{\textnormal B}, \label{eq:discFailSimpBobGuess}
\end{IEEEeqnarray}
and Eve's ambiguity about $X$ is lower-bounded by
\begin{IEEEeqnarray}{l}
\mathscr A_{\textnormal E} ( P_{X,Y} ) \geq \bigl( \delta^{\eta} ( 1 + \ln |\setX| ) \bigr)^{-\rho} \Bigl( \bigl( (2 \delta)^{-\rho \eta} 2^{\rho (\nu - \eta) s} (\mathscr U_{\textnormal B} - 1) \bigr) \wedge 2^{\rho \renent {\tirho} {X | Y }} \Bigr). \label{eq:discFailSimpEveGuess}
\end{IEEEeqnarray}

In the list version, for any constant $\mathscr U_{\textnormal B}$ satisfying
\begin{IEEEeqnarray}{l}
\mathscr U_{\textnormal B} \geq 1 + 2^{\rho ( \renent {\tirho} {X | Y } - \log ( 2^{\nu s} - \log | \setX | - 2 ) + 2 )}, \label{eq:discFailSimpUB}
\end{IEEEeqnarray}
there is a choice of the conditional PMF in \eqref{eq:aliceEncPMFDiscFail} for which Bob's ambiguity about $X$ is upper-bounded by
\ba 
\mathscr A_{\textnormal B}^{(\textnormal l)} ( P_{X,Y} ) < \mathscr U_{\textnormal B}, \label{eq:BobAmbSimpBList}
\ea
and Eve's ambiguity about $X$ is lower-bounded by
\begin{IEEEeqnarray}{rcl}
\!\!\!\!\!\!\!\! \mathscr A_{\textnormal E} ( P_{X,Y} ) \geq \bigl( \delta^{\eta} ( 1 + \ln |\setX| ) \bigr)^{-\rho} & \Biggl( & \bigl( 2^{-3 \rho} (2 \delta)^{-\rho \eta} 2^{\rho (\nu - \eta) s} (\mathscr U_{\textnormal B} - 1) \bigr) \nonumber \\
& & \wedge 2^{\rho \renent {\tirho} {X | Y }} \nonumber \\
& & \wedge \biggl( \Bigl( 2 (2 \delta)^{\eta} \bigl( 2 + \log |\setX| \bigr) \Bigr)^{-\rho} 2^{\rho ( (\nu - \eta) s + \renent {\tirho} { X | Y } ) } \biggr) \Biggr). \label{eq:EveAmbSimpBList}
\end{IEEEeqnarray}
\end{corollary}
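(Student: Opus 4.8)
The plan is to derive Corollary~\ref{co:discFailSimp} from Theorems~\ref{th:discFailGuess} and \ref{th:discFail} just as Corollaries~\ref{prop:intGuessing} and \ref{prop:intList} were derived from Theorems~\ref{th:distStorGuess} and \ref{th:distStor}: given the prescribed ceiling $\mathscr U_{\textnormal B}$ on Bob's ambiguity, one picks the pair $(p,r)$ — always with $p+r=s$, so that $\eta(s-r)=\eta p$ in \eqref{eq:EveMomDiscFailDirGuess} and \eqref{eq:EveMomDiscFailDir} — that makes the relevant Bob bound as large as $\mathscr U_{\textnormal B}$ permits, and then reads off Eve's guarantee. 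Since both Bob's bound and Eve's bound are increasing in $r$, one wants $r$ as large as possible. The only subtlety is that $(p,r)$ must lie on the admissible lattice $\{0\}\cup\{\lceil\log\delta\rceil,\lceil\log\delta\rceil+1,\ldots\}$ in both coordinates, i.e.\ the admissible $r$-values are $\{0\}$, the integer interval $[\lceil\log\delta\rceil,\,s-\lceil\log\delta\rceil]$, and $\{s\}$.

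For the guessing version I would solve ``RHS of \eqref{eq:BobMomDiscFailDirGuess} $\le\mathscr U_{\textnormal B}$'' for $r$ over the reals, getting $\eta r\le\frac1\rho\log(\mathscr U_{\textnormal B}-1)-\renent{\tirho}{X|Y}+\nu s-1=:\eta r^{\star}$, which by hypothesis \eqref{eq:discFailSimpUBGuess} has $r^{\star}\ge 0$; then take $r$ to be the largest admissible value with $r\le r^{\star}\wedge s$, which decreases $r^{\star}\wedge s$ by less than $\lceil\log\delta\rceil$ unless we are capped at $r=s$ (i.e.\ $p=0$). With this choice \eqref{eq:discFailSimpBobGuess} is immediate, and substituting into \eqref{eq:EveMomDiscFailDirGuess}, using $\lceil\log\delta\rceil<1+\log\delta$ (so $2^{\eta\lceil\log\delta\rceil}<(2\delta)^{\eta}$) and collecting the stray factors of $2$ and $\delta$, gives the first branch of the minimum in \eqref{eq:discFailSimpEveGuess}; the capped case $p=0$ gives the second branch $2^{\rho\renent{\tirho}{X|Y}}$; and the common prefactor $\bigl(\delta^{\eta}(1+\ln|\setX|)\bigr)^{-\rho}$ is exactly what \eqref{eq:EveMomDiscFailDirGuess} already carries.

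For the list version — where, as \eqref{eq:discFailSimpUB} tacitly requires, I assume $2^{\nu s}>\log|\setX|+2$ — the argument is the same but must in addition respect the feasibility condition $2^{\nu s-\eta r}>\log|\setX|+2$ from \eqref{bl:condListDiscFail}. Here I would first use that $2^{\nu s-\eta r}\ge 2(\log|\setX|+2)$ implies $2^{\nu s-\eta r}-\log|\setX|-2\ge 2^{\nu s-\eta r-1}$, hence $\log\bigl(2^{\nu s-\eta r}-\log|\setX|-2\bigr)\ge\nu s-\eta r-1$, to reduce \eqref{eq:BobMomDiscFailDir} to the same shape as before and obtain a Bob threshold $\eta r^{\star}=\nu s-3-\renent{\tirho}{X|Y}+\frac1\rho\log(\mathscr U_{\textnormal B}-1)$; the ``$-3$'' (two from the ``$+2$'' in \eqref{eq:BobMomDiscFailDir}, one from the halving) is the source of the $2^{-3\rho}$ in \eqref{eq:EveAmbSimpBList}. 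Now $r$ is capped by $r^{\star}$, by roughly $(\nu s-1-\log(\log|\setX|+2))/\eta$ (feasibility), and by $s$: when the Bob threshold binds one gets the first branch; when feasibility binds one substitutes $r\approx(\nu s-\log(\log|\setX|+2))/\eta$ into \eqref{eq:EveMomDiscFailDir} and absorbs the lattice rounding into the extra $(2\delta)^{\eta}$ to get the third branch $\bigl(2(2\delta)^{\eta}(2+\log|\setX|)\bigr)^{-\rho}2^{\rho((\nu-\eta)s+\renent{\tirho}{X|Y})}$; and $p=0$ again gives $2^{\rho\renent{\tirho}{X|Y}}$.

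The algebra in each branch is short. The part that needs care — and where I expect the real work to lie — is the bookkeeping of the three regimes (Bob-limited, feasibility-limited, $p=0$) together with the rounding of $(p,r)$ onto the lattice $\{0\}\cup\{\lceil\log\delta\rceil,\lceil\log\delta\rceil+1,\ldots\}$ under the constraint $p+r=s$; it is precisely this rounding, together with the conversion via \eqref{eq:ceilApprox} in the list version, that forces the deliberately generous constants $2^{-3\rho}$, $(2\delta)^{-\rho\eta}$, $\delta^{-\rho\eta}$ and $(2+\log|\setX|)^{-\rho}$ appearing in \eqref{eq:discFailSimpEveGuess} and \eqref{eq:EveAmbSimpBList}, which are chosen loose enough to swallow it.
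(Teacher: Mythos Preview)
Your proposal is correct and, for the guessing version, essentially identical to the paper's proof: the paper defines the same real target $\tilde r = r^{\star}$, rounds it onto the same lattice $\{0\}\cup[\lceil\log\delta\rceil,\,s-\lceil\log\delta\rceil]\cup\{s\}$ via the same four-case rule, and records the same rounding loss $\tilde r - r < \log\delta + 1$ (your ``less than $\lceil\log\delta\rceil$'' is a hair optimistic but the slack in $(2\delta)^{\eta}$ absorbs it).

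For the list version your organization differs slightly from the paper's, though the content is the same. You pre-simplify Bob's bound \eqref{eq:BobMomDiscFailDir} via $2^{\nu s-\eta r}-\log|\setX|-2\ge 2^{\nu s-\eta r-1}$, obtain a linear threshold $r^{\star}$, and then carry the feasibility constraint $2^{\nu s-\eta r}>\log|\setX|+2$ as a separate cap, yielding three explicit regimes. The paper instead solves \eqref{eq:BobMomDiscFailDir} for $\tilde r$ \emph{exactly} (so $\eta\tilde r = \nu s - \log\bigl(2^{\renent{\tirho}{X|Y}-\rho^{-1}\log(\mathscr U_{\textnormal B}-1)+2}+\log|\setX|+2\bigr)$), rounds $\tilde r$ onto the lattice, and only \emph{afterwards} splits Eve's bound into the first and third branches of \eqref{eq:EveAmbSimpBList} using the elementary inequality $(a+b)^{-1}\ge (2a)^{-1}\wedge(2b)^{-1}$. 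Your route makes the three regimes visible earlier; the paper's route keeps $r$ honest against Bob's exact bound and defers the case split. Either way the constants come out the same.
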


\begin{proof}
The result is a corollary to Theorems~\ref{th:discFailGuess} and \ref{th:discFail}. See Appendix~\ref{app:pfCoDiscFail} for a detailed proof.
\end{proof}

We conclude this section by explaining why it is a good idea to store an equal number of bits on each disk. This can be seen from the next theorem:

\begin{theorem}[Converse Results: Disk~$\ell$ stores $s_\ell$ Bits] \label{th:SiEqSBest}
Suppose that for every $\ell \in \{ 1, \ldots, \delta \}$ Disk~$\ell$ stores $s_\ell$ bits, where $s_1 \leq \ldots \leq s_\delta$. For every conditional PMF in \eqref{eq:aliceEncPMFDiscFail}, Bob's ambiguity about $X$ is---depending on the version of the problem---lower-bounded by
\begin{subequations} \label{bl:BobMomDistStorConvSi}
\begin{IEEEeqnarray}{rCl}
\mathscr A_{\textnormal B}^{(\textnormal g)} ( P_{X,Y} ) & \geq & 2^{\rho ( \renent {\tirho} {X | Y } - \sum_{\ell = 1}^\nu s_\ell - \log (1 + \ln | \setX |) )} \vee 1, \label{eq:BobMomDistStorConvSiGuess} \\
\mathscr A_{\textnormal B}^{(\textnormal l)} ( P_{X,Y} ) & \geq & 2^{\rho ( \renent {\tirho} {X | Y } - \sum_{\ell = 1}^\nu s_\ell )} \vee 1, \label{eq:BobMomDistStorSiConv}
\end{IEEEeqnarray}
\end{subequations}
and Eve's ambiguity about $X$ is upper-bounded by
\begin{subequations} \label{bl:EveMomDistStorConvSi}
\begin{IEEEeqnarray}{rCl}
\mathscr A_{\textnormal E} ( P_{X,Y} ) & \leq & 2^{\rho \sum_{\ell = 1}^{\nu - \eta} s_\ell} \mathscr A^{(\textnormal g)}_{\textnormal B} ( P_{X,Y} ) \wedge 2^{\rho \renent {\tirho} {X | Y }}, \label{eq:EveMomDistStorConvSiGuess} \\
\mathscr A_{\textnormal E} ( P_{X,Y} ) & \leq & 2^{\rho \sum_{\ell = 1}^{\nu - \eta} s_\ell} \mathscr A^{(\textnormal l)}_{\textnormal B} ( P_{X,Y} ) \wedge 2^{\rho \renent {\tirho} {X | Y }}. \label{eq:EveMomDistStorSiConv}
\end{IEEEeqnarray}
\end{subequations}
\end{theorem}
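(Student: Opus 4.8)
The plan is to reduce both families of inequalities to results already proved for a single ``effective'' description set, exploiting the min--max structure of the ambiguities in the present model together with the monotonicity $s_1 \le \ldots \le s_\delta$. Fix once and for all the index sets $\setB_0 = \{ 1, \ldots, \nu \}$ and $\setE_0 = \{ \nu - \eta + 1, \ldots, \nu \}$, so that $\setE_0 \subseteq \setB_0$ and $| \setB_0 \setminus \setE_0 | = \nu - \eta$. Since $M_\ell$ takes values in a set of size $2^{s_\ell}$, the tuple $\rndvecM_{\setB_0}$ takes values in a set of size $2^{\sum_{\ell = 1}^{\nu} s_\ell}$ and the tuple $\rndvecM_{\setB_0 \setminus \setE_0}$ in a set of size $2^{\sum_{\ell = 1}^{\nu - \eta} s_\ell}$; because the $s_\ell$ are nondecreasing, these are the smallest achievable alphabet sizes over all index sets of cardinality $\nu$, resp.\ $\nu - \eta$, which is exactly why $\setB_0$ and $\setE_0$ are the right choices.

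\textbf{Bob's lower bounds \eqref{bl:BobMomDistStorConvSi}.} Bob's adversary is free to hand him the index set $\setB_0$, so for every collection $\{ \guess {\setB}{\cdot | Y, \rndvecM_{\setB}} \}_{\setB}$ we have $\max_{\setB} \guess {\setB}{X | Y, \rndvecM_{\setB}}^\rho \ge \guess {\setB_0}{X | Y, \rndvecM_{\setB_0}}^\rho$; minimizing over the collection and noting that the right-hand side depends only on $\guess {\setB_0}{\cdot}$ yields $\mathscr A^{(\textnormal g)}_{\textnormal B} ( P_{X,Y} ) \ge \min_{\guess {}{\cdot | Y, \rndvecM_{\setB_0}}} \bigEx {}{\guess {}{X | Y, \rndvecM_{\setB_0}}^\rho}$. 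I would then apply the converse part of Corollary~\ref{co:equivBunteResultGuessing} (equivalently, \eqref{eq:lbOptGuessFun} of Theorem~\ref{th:optGuessFun} together with Lemma~\ref{le:chainRuleRenEnt}) with $Z = \rndvecM_{\setB_0}$ and $\card \setZ = 2^{\sum_{\ell=1}^\nu s_\ell}$ to obtain \eqref{eq:BobMomDistStorConvSiGuess}. The list version is the same argument with lists in place of guesses: $\mathscr A^{(\textnormal l)}_{\textnormal B} ( P_{X,Y} ) \ge \bigEx {}{| \setL^Y_{\rndvecM_{\setB_0}} |^\rho}$, and since the map $(X,Y) \mapsto \rndvecM_{\setB_0}$ is a stochastic task-encoder with $2^{\sum_{\ell=1}^\nu s_\ell}$ descriptions, the converse part of Theorem~\ref{th:optTaskEnc} gives \eqref{eq:BobMomDistStorSiConv}.

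\textbf{Eve's upper bounds \eqref{bl:EveMomDistStorConvSi}.} I would combine two estimates. First, Eve may discard all her hints and guess $X$ from $Y$ alone, which by the direct part of Theorem~\ref{th:optGuessFun} costs at most $2^{\rho \renent {\tirho}{X | Y}}$; hence $\mathscr A_{\textnormal E} ( P_{X,Y} ) \le 2^{\rho \renent {\tirho}{X | Y}}$. Second, Eve's accomplice may hand her $\setE_0$, so $\min_{\setE} \guess {\setE}{X | Y, \rndvecM_{\setE}}^\rho \le \guess {\setE_0}{X | Y, \rndvecM_{\setE_0}}^\rho$ for every collection, giving $\mathscr A_{\textnormal E} ( P_{X,Y} ) \le \min_{\guess {}{\cdot | Y, \rndvecM_{\setE_0}}} \bigEx {}{\guess {}{X | Y, \rndvecM_{\setE_0}}^\rho}$. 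Now I would invoke \eqref{eq:ubGuessSI} of Corollary~\ref{co:impGuess} (a consequence of Lemma~\ref{le:ImproveGuess}) with common side information $( Y, \rndvecM_{\setE_0} )$ and additional description $\rndvecM_{\setB_0 \setminus \setE_0}$, whose alphabet has size $2^{\sum_{\ell=1}^{\nu - \eta} s_\ell}$: since $( \rndvecM_{\setE_0}, \rndvecM_{\setB_0 \setminus \setE_0} )$ is just $\rndvecM_{\setB_0}$, this gives
\[ \min_{\guess {}{\cdot | Y, \rndvecM_{\setE_0}}} \bigEx {}{\guess {}{X | Y, \rndvecM_{\setE_0}}^\rho} \le 2^{\rho \sum_{\ell=1}^{\nu-\eta} s_\ell} \min_{\guess {}{\cdot | Y, \rndvecM_{\setB_0}}} \bigEx {}{\guess {}{X | Y, \rndvecM_{\setB_0}}^\rho} . \]
In the guessing version the last factor is at most $\mathscr A^{(\textnormal g)}_{\textnormal B} ( P_{X,Y} )$ by the argument used for Bob above; in the list version it is at most $\bigEx {}{| \setL^Y_{\rndvecM_{\setB_0}} |^\rho} \le \mathscr A^{(\textnormal l)}_{\textnormal B} ( P_{X,Y} )$, the first inequality holding because a guessing function that orders the realizations of $X$ in decreasing order of their posterior probability never needs more than $| \setL^Y_{\rndvecM_{\setB_0}} |$ guesses. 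Taking the minimum of the two estimates yields \eqref{eq:EveMomDistStorConvSiGuess} and \eqref{eq:EveMomDistStorSiConv}.

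The hard part is not computational but organizational: one must keep straight that the $\min$ and $\max$ over index sets point in opposite directions --- the adversary's freedom in the definition of $\mathscr A_{\textnormal B}$ lets us \emph{lower}-bound it after committing to the cheapest admissible $\setB_0$, while the accomplice's freedom in the definition of $\mathscr A_{\textnormal E}$ lets us \emph{upper}-bound it after committing to $\setE_0$ --- and one must verify the elementary counting fact that, because $s_1 \le \ldots \le s_\delta$, the sets $\{ 1, \ldots, \nu \}$ and $\{ 1, \ldots, \nu - \eta \}$ minimize $\sum_{\ell \in \cdot} s_\ell$ over index sets of cardinality $\nu$ and $\nu - \eta$, so that the exponents $\sum_{\ell=1}^\nu s_\ell$ and $\sum_{\ell=1}^{\nu-\eta} s_\ell$ appearing in the statement are the correct ones. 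Once these are settled, every remaining step is the verbatim specialization of the converse arguments already carried out for Theorems~\ref{th:discFailGuess} and \ref{th:discFail}, which in turn follow Section~\ref{sec:distStorProofsThmsFiniteBlockl}.
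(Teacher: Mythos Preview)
Your proposal is correct and follows essentially the same approach as the paper's proof: both fix $\setB = \{1,\ldots,\nu\}$ and $\setE = \{\nu-\eta+1,\ldots,\nu\}$, derive Bob's lower bounds by specializing the max over $\setB$ and invoking Corollary~\ref{co:equivBunteResultGuessing} (guessing) or Theorem~\ref{th:optTaskEnc} (list), and derive Eve's upper bounds by specializing the min over $\setE$ and then applying Corollary~\ref{co:impGuess} with the additional description $\rndvecM_{\setB\setminus\setE}$ of alphabet size $2^{\sum_{\ell=1}^{\nu-\eta} s_\ell}$, together with the trivial $2^{\rho \renent{\tirho}{X|Y}}$ bound from Theorem~\ref{th:optGuessFun}. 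Your explicit discussion of why the monotonicity $s_1\le\ldots\le s_\delta$ singles out these particular index sets is a helpful organizational addition that the paper leaves implicit.
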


\begin{proof}
See Appendix~\ref{app:pfThSiEqSBest}.
\end{proof} 

\begin{remark}[\textbf{Why Store $s$ Bits on Each Disk?}]\label{re:hintSize}
{\em Compare a scenario where for every $\ell \in \{ 1, \ldots, \delta \}$ Disk~$\ell$ stores $s_\ell$ bits, where $s_1 \leq \ldots \leq s_\delta$, with a scenario where each disk stores $\bigl\lfloor ( s_1 + \ldots + s_\delta ) / \delta \bigr\rfloor$ bits. Based on Theorem~\ref{th:SiEqSBest} and Corollary~\ref{co:discFailSimp}, neglecting polynomial factors of $\delta^\eta$ and of $\ln |\setX|$, every pair of ambiguities for Bob and Eve that is achievable in the former scenario is also achievable in the latter scenario.}
\end{remark}

\subsection{Asymptotic Results} \label{sec:statDistStorAsymp}

Suppose now that $(X,Y)$ is an $n$-tuple. We study the asymptotic regime where $n$ tends to infinity. Recall that in this regime we refer to both $\mathscr A_{\textnormal B}^{(\textnormal g)}$ and $\mathscr A_{\textnormal B}^{(\textnormal l)}$ by $\mathscr A_{\textnormal B}$, because the results are the same for both versions. As we prove in Appendix~\ref{app:pfThAsympDiscFail}, Theorems~\ref{th:discFailGuess} and \ref{th:discFail} and Corollary~\ref{co:discFailSimp} imply the following asymptotic result:

\begin{theorem}[Privacy-Exponent and Modest Privacy-Exponent] \label{th:asympDiscFail}
Let $\bigl\{ (X_i,Y_i) \bigr\}_{i \in \naturals}$ be a discrete-time stochastic process with finite alphabet $\setX \times \setY$, and suppose its conditional R\'enyi entropy-rate $\renent {\tirho}{ \rndvecX | \rndvecY }$ is well-defined. Given any nonnegative rate $R_s$, the privacy-exponent is
\begin{IEEEeqnarray}{l}
\overbar{ E_{\textnormal E} } = \begin{cases} \rho \bigl( R_s ( \nu - \eta ) \wedge \renent {\tirho}{ \rndvecX | \rndvecY } \bigr) & \nu R_s > \renent {\tirho}{ \rndvecX | \rndvecY }, \\ - \infty & \nu R_s < \renent {\tirho}{ \rndvecX | \rndvecY }, \end{cases} \label{eq:privExpDiscFail}
\end{IEEEeqnarray}
and the modest privacy-exponent for $E_{\textnormal B} \geq 0$ is
\begin{IEEEeqnarray}{l}
\overbar{E^{\textnormal m}_{\textnormal E} (E_{\textnormal B})} = \begin{cases} \bigl( \rho R_s ( \nu - \eta ) + E_{\textnormal B} \bigr) \wedge \rho \renent {\tirho}{ \rndvecX | \rndvecY } & \nu R_s \geq \renent {\tirho}{ \rndvecX | \rndvecY } - \rho^{-1} E_{\textnormal B}, \\ - \infty & \nu R_s < \renent {\tirho}{ \rndvecX | \rndvecY } - \rho^{-1} E_{\textnormal B}. \end{cases} \label{eq:privExpDiscFailEB}
\end{IEEEeqnarray}
\end{theorem}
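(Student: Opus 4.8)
The plan is to derive Theorem~\ref{th:asympDiscFail} from the finite-blocklength bounds in Theorems~\ref{th:discFailGuess} and \ref{th:discFail} (equivalently, from the simplified Corollary~\ref{co:discFailSimp}) in exactly the same way that Theorems~\ref{th:asympDistStor} and \ref{th:asympDistStorEB} were derived from Theorems~\ref{th:distStorGuess} and \ref{th:distStor} in Sections~\ref{sec:distStorProofsThmAsymp} and \ref{sec:distStorProofsThmAsympEB}. Throughout, recall that in the asymptotic regime $\mathscr A_{\textnormal B}$ may denote either $\mathscr A^{(\textnormal g)}_{\textnormal B}$ or $\mathscr A^{(\textnormal l)}_{\textnormal B}$, since the two versions give the same exponents; $\delta$, $\nu$, $\eta$ are held fixed while $s = n R_s$; and polynomial factors in $\delta^\eta$ and in $\ln|\setX^n| = \landauO(n)$ contribute nothing in the limit after normalizing by $n$.

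\textbf{Converse (upper bound on the exponents).} First I would treat the degenerate case: if $\nu R_s < \renent{\tirho}{\rndvecX|\rndvecY}$ (resp.\ $\nu R_s < \renent{\tirho}{\rndvecX|\rndvecY} - \rho^{-1}E_{\textnormal B}$), then the converse bounds \eqref{eq:BobMomDiscFailConvGuess}/\eqref{eq:BobMomDiscFailConv} on Bob's ambiguity force $\mathscr A_{\textnormal B}(P_{X^n,Y^n}) \to \infty$ exponentially, so \eqref{eq:bobAmbTo1} (resp.\ \eqref{eq:bobAmbToEB}) is unattainable and the privacy-exponent is $-\infty$ by definition. Otherwise, assume a sequence of encoders satisfying \eqref{eq:bobAmbTo1} (resp.\ \eqref{eq:bobAmbToEB}), so that $\limsup_n n^{-1}\log\mathscr A_{\textnormal B}(P_{X^n,Y^n})$ equals $0$ (resp.\ is $\le E_{\textnormal B}$). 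Plugging this into the converse bound on Eve's ambiguity \eqref{eq:EveMomDiscFailConvGuess}/\eqref{eq:EveMomDiscFailConv}, namely $\mathscr A_{\textnormal E} \le 2^{\rho(\nu-\eta)s}\,\mathscr A_{\textnormal B} \wedge 2^{\rho\renent{\tirho}{X|Y}}$, and dividing by $n$, yields
\begin{IEEEeqnarray*}{l}
\limsup_{n\to\infty} \frac{\log\bigl(\mathscr A_{\textnormal E}(P_{X^n,Y^n})\bigr)}{n} \le \bigl(\rho R_s(\nu-\eta) + E_{\textnormal B}\bigr) \wedge \rho\renent{\tirho}{\rndvecX|\rndvecY},
\end{IEEEeqnarray*}
with $E_{\textnormal B}=0$ in the privacy-exponent case. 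This shows the exponents cannot exceed the claimed right-hand sides.

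\textbf{Achievability (lower bound on the exponents).} Here I would mimic the rate-splitting construction of Section~\ref{sec:distStorProofsThmAsymp}, but now the degree of freedom is the split $s = p + r$ of the per-hint bit-budget rather than the triple $(c_{\textnormal s},c_1,c_2)$. Writing $R_p, R_r \ge 0$ with $R_p + R_r = R_s$ for the normalized versions ($p = nR_p$, $r = nR_r$), the relevant ``information rate from $\nu$ hints but not $\eta$'' is $(\nu-\eta)R_r$ and the ``leakage to $\eta$ hints'' is $\eta R_r$; what governs Bob is the effective description rate $\nu R_s - \eta R_r$. Assuming $\nu R_s \ge \renent{\tirho}{\rndvecX|\rndvecY} - \rho^{-1}E_{\textnormal B}$, I would choose $R_r$ (equivalently the split) so that, up to an arbitrarily small $\epsilon > 0$, the effective Bob-rate $\nu R_s - \eta R_r$ stays at least $\renent{\tirho}{\rndvecX|\rndvecY} - \rho^{-1}E_{\textnormal B} + \epsilon$ (this makes the RHS of \eqref{eq:BobMomDiscFailDirGuess}/\eqref{eq:BobMomDiscFailDir} satisfy \eqref{eq:bobAmbTo1}/\eqref{eq:bobAmbToEB}), while simultaneously maximizing Eve's guaranteed exponent $\rho\bigl(\renent{\tirho}{\rndvecX|\rndvecY} - \eta R_r\bigr)$ from \eqref{eq:EveMomDiscFailDirGuess}/\eqref{eq:EveMomDiscFailDir}, subject to $0 \le R_r \le R_s$. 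This is a one-parameter linear optimization: the unconstrained optimum wants $R_r$ as small as possible (to minimize leakage $\eta R_r$), but Bob's constraint pins $\eta R_r \le \nu R_s - \renent{\tirho}{\rndvecX|\rndvecY} + \rho^{-1}E_{\textnormal B} - \epsilon$ from one side, while the box constraint $R_r \le R_s$ binds when $\renent{\tirho}{\rndvecX|\rndvecY}$ is small relative to $\nu R_s$. Exactly as in the three-case analysis of Section~\ref{sec:distStorProofsThmAsymp}, I would split into the cases according to which constraint is active, pick the corresponding $(R_p,R_r)$, let $\epsilon \downarrow 0$, and read off that Eve's exponent is at least $\rho\bigl(R_s(\nu-\eta) + \rho^{-1}E_{\textnormal B}\bigr) \wedge \rho\renent{\tirho}{\rndvecX|\rndvecY} = \bigl(\rho R_s(\nu-\eta) + E_{\textnormal B}\bigr) \wedge \rho\renent{\tirho}{\rndvecX|\rndvecY}$, with $E_{\textnormal B} = 0$ for the privacy-exponent; throughout one checks that the integrality/feasibility side-conditions \eqref{bl:condGuessDiscFail}, \eqref{bl:condListDiscFail} (and, in the list version, $2^{\nu s - \eta r} > \log|\setX^n| + 2$) hold for all sufficiently large $n$, which is immediate because the exponents in question are strictly positive and $\log|\setX^n|$ grows only linearly.

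\textbf{Main obstacle.} The analytical content is entirely routine given the earlier theorems; the only place that needs care is the book-keeping of the one-parameter optimization over the split $p + r = s$ and the verification that the resulting $(R_p, R_r)$ simultaneously (i) drives Bob's bound to $1$ (resp.\ to exponent $\le E_{\textnormal B}$), (ii) keeps $R_r \in [0, R_s]$, and (iii) matches the converse. Concretely, I expect the case $\renent{\tirho}{\rndvecX|\rndvecY}/2 < R_s\nu\text{-type boundary}$ — i.e.\ the middle regime where one must nonzero-allocate to $R_p$ (the analog of choosing $R_{\textnormal s} > 0$ in Section~\ref{sec:distStorProofsThmAsymp}) — to be the fiddly one, since there the effective Bob-rate $\nu R_s - \eta R_r$ rather than $R_s$ itself is what must be compared against $\renent{\tirho}{\rndvecX|\rndvecY}$, and one must confirm the chosen split leaves enough ``encrypted'' budget. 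But this is bounded calculation, not a genuine difficulty, so the proof is a direct adaptation and I would simply refer to the argument of Sections~\ref{sec:distStorProofsThmAsymp} and \ref{sec:distStorProofsThmAsympEB}, spelling out only the modified rate-allocation.
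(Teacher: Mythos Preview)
Your converse is correct and matches the paper exactly. The achievability strategy---optimize the split $s=p+r$ in Theorems~\ref{th:discFailGuess}/\ref{th:discFail} to hit Bob's target while pushing Eve's lower bound as high as possible---is also the right idea, and is in fact how Corollary~\ref{co:discFailSimp} is proved. The paper's own proof of Theorem~\ref{th:asympDiscFail} then simply invokes Corollary~\ref{co:discFailSimp} with $\mathscr U_{\textnormal B}(n)=1+2^{-n\epsilon}$ (respectively $\mathscr U_{\textnormal B}(n)=2^{\rho n E_{\textnormal B}}$), so the optimization is packaged away; your plan to redo that optimization by hand is fine in principle.

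However, the bookkeeping you wrote down has the roles of $p$ and $r$ crossed in the Eve part. From \eqref{eq:EveMomDiscFailDirGuess} the leakage term is $\eta(s-r)=\eta p$, so Eve's normalized exponent is $\rho\bigl(\renent{\tirho}{\rndvecX|\rndvecY}-\eta R_p\bigr)=\rho\bigl(\renent{\tirho}{\rndvecX|\rndvecY}-\eta(R_s-R_r)\bigr)$, not $\rho\bigl(\renent{\tirho}{\rndvecX|\rndvecY}-\eta R_r\bigr)$. Consequently you want $R_r$ as \emph{large} as possible, not small; your upper bounds $\eta R_r\le \nu R_s-\renent{\tirho}{\rndvecX|\rndvecY}+\rho^{-1}E_{\textnormal B}$ and $R_r\le R_s$ then make sense as the two competing active constraints (and there are only two cases, not three---the three-case split in Section~\ref{sec:distStorProofsThmAsymp} came from having two free parameters, whereas here there is only one). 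With the sign fixed, setting $R_r$ to the smaller of these two bounds gives Eve's exponent $\rho(\nu-\eta)R_s+E_{\textnormal B}$ in the first case and $\rho\renent{\tirho}{\rndvecX|\rndvecY}$ in the second, matching the claim. As written, your ``minimize $R_r$'' would send $R_r\to 0$, which actually achieves only $\rho\bigl(\renent{\tirho}{\rndvecX|\rndvecY}-\eta R_s\bigr)$ and does not meet the converse.
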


By \eqref{eq:privExpDiscFail} we can achieve the maximum privacy-exponent $\rho \renent {\tirho}{ \rndvecX | \rndvecY }$ if the per-hint storage-rate satisfies $$R_s \geq \renent {\tirho}{ \rndvecX | \rndvecY } / ( \nu - \eta ),$$ where $\renent {\tirho}{ \rndvecX | \rndvecY }$ is the minimum rate that is necessary to describe the source for Bob. This agrees with the well-known result that the optimal share-size to share a $k$-bit secret so that any $\nu$ shares reveal $X$ and any $\eta$ shares provide no information about $X$ is $k /( \nu - \eta )$ (see, e.g., \cite{subramanianmclaughlin09}).

\section{Coding and Encryption under a Fidelity Criterion}\label{sec:rateDist}

In this section we study a rate-distortion version of the model of Section~\ref{sec:problemStatement}, where reconstructions are lossy but subject to a given fidelity criterion. We only treat the asymptotic regime where $(X,Y)$ is an $n$-tuple, and we shall assume that the $n$-tuple is drawn IID. Throughout this section, $\bigl\{ ( X_i,Y_i ) \bigr\}_{i \in \naturals}$ is thus a discrete-time stochastic process of IID pairs $( X_i,Y_i )$ that are drawn from the finite set $\setX \times \setY$ according to the PMF $P_{X,Y}$.

Consider some ``reconstruction alphabet'' $\hat \setX$ and some nonnegative ``distortion-function'' $d \colon \setX \times \hat \setX \rightarrow \reals^+_0$. We quantify the distortion between any pair of $n$-tuples $( \vecx, \hat \vecx ) \in \setX^n \times \hat \setX^n$ by their average distortion
\begin{equation}
d^{(n)} ( \vecx, \hat \vecx ) = \frac{1}{n} \sum^n_{i = 1} d ( x_i,\hat x_i ).
\end{equation}
The fidelity criterion we study is that any reconstruction $\hat \vecx \in \hat \setX^n$ of $X^n$ satisfy
\begin{equation}
d^{(n)} ( X^n, \hat \vecx ) \leq \Delta \label{eq:fidCrit}
\end{equation}
for some nonnegative ``distortion-level'' $\Delta \geq 0$. Following the convention of \cite{arikanmerhav98}, we assume that for every $x \in \setX$ there exists some $\hat x \in \hat \setX$ for which $d ( x, \hat x ) = 0$, i.e., that
\begin{equation}
\min_{\hat x \in \hat \setX} d ( x, \hat x ) = 0, \,\, \forall \, x \in \setX. \label{eq:minDistZero}
\end{equation}

To describe the results in this section, we denote by $\RDfun$ the classical rate-distortion function of $X$ given $Y$ under some fixed PMF $Q_{X,Y}$ on $\setX \times \setY$ \cite[Ch.~7]{csiszarkoerner11}
\begin{IEEEeqnarray}{l}
\RDfun = \min_{ \substack{ Q_{\hat X|X,Y} \colon \\ \smEx {}{ d (X, \hat X) } \leq \Delta } } I ( X, \hat X | Y ); \label{eq:defRDFun}
\end{IEEEeqnarray}
and we denote by $\relent {Q_{X,Y}}{P_{X,Y}}$ the Kullback-Leibler divergence between two PMFs $Q_{X,Y}$ and $P_{X,Y}$ on $\setX \times \setY$. By $\RDexp$ we refer to the functional
\begin{IEEEeqnarray}{l}
\RDexp = \sup_{Q_{X,Y}} \Bigl( \RDfun - \rho^{-1} \relent {Q_{X,Y}}{P_{X,Y}} \Bigr), \label{eq:defFunctional}
\end{IEEEeqnarray}
where the supremum is over all PMFs $Q_{X,Y}$ on $\setX \times \setY$.

The remainder of this section is structured as follows. Section~\ref{sec:optGuessTaskEncRD} summarizes some notions and results pertaining to the rate-distortion versions of the guessing and task-encoding problems. Section~\ref{sec:listsAndGuessesRD} extends the results on guessing and task-encoding of Section~\ref{sec:listsAndGuesses} to the case where the reconstruction is subject to the fidelity criterion~\eqref{eq:fidCrit}. Finally, Section~\ref{sec:distStorRD} studies a rate-distrotion version of the model of Section~\ref{sec:problemStatement}.

\subsection{Optimal Guessing Functions and Task-Encoders} \label{sec:optGuessTaskEncRD}

Suppose we want to guess a reconstruction $\hat \vecx \in \hat \setX^n$ of $X^n$ that satisfies the fidelity criterion \eqref{eq:fidCrit} with guesses of the form ``Is $d^{(n)} (X^n, \hat \vecx) \leq \Delta$?'' Similarly as in Section~\ref{sec:optGuessTaskEnc}, we call $\hatguessD {}{}{ \cdot | Y^n }$ a guessing function on $\hat \setX^n$ if for every $\vecy \in \setY^n$ the mapping $\hatguessD {}{}{ \cdot | \vecy } \colon \hat \setX^n \rightarrow \bigl\{ 1, \ldots, |\hat \setX|^n \bigr\}$ is one-to-one.\footnote{Unlike the guessing problem of Section~\ref{sec:optGuessTaskEnc}, where we guess over the source-sequence alphabet $\setX^n$, here we guess over the reconstruction-sequence alphabet $\hat \setX^n$.} The guessing function determines the guessing order: If we use $\hatguessD {}{}{ \cdot | Y^n }$ to guess a reconstruction of $X^n$ from the observation $Y^n$ and observe that $Y^n$ equals $\vecy$, then the question ``Is $d^{(n)} (X^n, \hat \vecx) \leq \Delta$?'' will be our $\hatguessD {}{}{ \hat \vecx | \vecy }$-th question.

Suppose we are given a guessing function $\hatguessD {}{}{ \cdot | Y^n }$. For every $\vecy \in \setY^n$ we define $$\guessD {\Delta}{}{ \cdot | \vecy } \colon \setX^n \rightarrow \bigl\{ 1, \ldots, |\hat \setX|^n \bigr\}$$ as the unique mapping satisfying that, if $(X^n,Y^n)$ equals $(\vecx,\vecy)$, then the first question that will be answered with ``Yes!'' will be our $\guessD {\Delta}{}{ \vecx | \vecy }$-th question.\footnote{By \eqref{eq:minDistZero} and because $\Delta \geq 0$, at least one question will be answered with ``Yes!''.} That is, for every pair $(\vecx,\vecy) \in \setX^n \times \setY^n$ we denote by $\guessD {\Delta}{}{ \vecx | \vecy }$ the smallest positive integer $j$ satisfying that $d^{(n)} ( \vecx, \hat \vecx ) \leq \Delta$ holds for the unique $n$-tuple $\hat \vecx \in \hat \setX^n$ for which $\hatguessD {}{}{ \hat \vecx | \vecy } = j$. The \emph{success function} corresponding to $\hatguessD {}{}{ \cdot | Y^n }$ is the collection $\bigl\{ \guessD {\Delta}{}{ \cdot | \vecy } \bigr\}_{\vecy \in \setY^n}$ and is denoted $\guessD {\Delta}{}{ \cdot | Y^n }$. For every $\vecy \in \setY^n$ we define $$\psifun{ \cdot | \vecy } \colon \setX^n \rightarrow \hat \setX^n$$ as the unique mapping satisfying that
\begin{IEEEeqnarray}{l} \label{eq:psiDefinition}
\Bigl( \psifun { \vecx | \vecy } = \hat \vecx \iff \guessD {\Delta}{} { \vecx | \vecy } = \hatguessD {}{} { \hat \vecx | \vecy } \Bigr), \,\, \forall \, (\vecx, \hat \vecx, \vecy) \in \setX^n \times \hat \setX^n \times \setY^n,
\end{IEEEeqnarray}
so if $(X^n,Y^n)$ equals $(\vecx,\vecy)$, then the question ``Is $d^{(n)} (X^n, \hat \vecx) \leq \Delta$?'' will be answered with ``Yes!'' for the first time when $\hat \vecx = \psifun { \vecx | \vecy }$. The \emph{ reconstruction function} corresponding to $\hatguessD {}{}{ \cdot | Y^n }$ is the collection $\bigl\{ \psifun { \cdot | \vecy } \bigr\}_{\vecy \in \setY^n}$ and is denoted $\psifun { \cdot | Y^n }$.

We assess the performance of a guessing function in terms of the $\rho$-th moment of the number of guesses that we need to guess a reconstruction $\hat \vecx$ that satisfies the fidelity criterion \eqref{eq:fidCrit}. That is, the performance of $\hatguessD {}{}{ \cdot | Y^n }$ is $\bigEx {}{ \guessD {\Delta}{}{ X^n|Y^n }^\rho }$, where $\guessD {\Delta}{}{ \cdot |Y^n }$ is the success function corresponding to $\hatguessD {}{}{ \cdot | Y^n }$. We say that a guessing function is optimal if its performance is optimal, i.e., $\hatguessD {}{}{ \cdot | Y^n }$ is optimal iff its corresponding success function minimizes $\bigEx {}{ \guessD {\Delta}{}{ X^n|Y^n }^\rho }$ among all success functions. We can use Arikan and Merhav's results in \cite{arikanmerhav98} to characterize the asymptotic performance of optimal guessing functions on $\hat \setX^n$:

\begin{theorem}[Asymptotic Performance of Optimal Guessing Functions on $\hat \setX^n$]\cite[Section~VI.~C.]{arikanmerhav98}\label{th:optGuessFunRD}
There exist guessing functions $\hatguessD {}{}{ \cdot | Y^n }$ whose corresponding success functions $\guessD {\Delta}{}{ \cdot |Y^n }$ satisfy
\begin{IEEEeqnarray}{l}
\lim_{n \rightarrow \infty} \frac{1}{n} \log \Bigl( \bigEx {}{\guessD \Delta {} { X^n | Y^n }^\rho} \Bigr) \leq \rho \RDexp.
\end{IEEEeqnarray}
Conversely, for every guessing functions $\hatguessD {}{}{ \cdot | Y^n }$ with corresponding success functions $\guessD {\Delta}{}{ \cdot |Y^n }$
\begin{IEEEeqnarray}{l}
\lim_{n \rightarrow \infty} \frac{1}{n} \log \Bigl( \bigEx {}{\guessD \Delta {} { X^n | Y^n }^\rho} \Bigr) \geq \rho \RDexp. \label{eq:lbOptGuessFunRD}
\end{IEEEeqnarray}
\end{theorem}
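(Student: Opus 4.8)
The plan is to reduce Theorem~\ref{th:optGuessFunRD} to Arikan and Merhav's results on guessing-with-distortion from \cite{arikanmerhav98} by first rewriting the problem in the conditional setting. The key observation is that a guessing function $\hatguessD {}{}{ \cdot | Y^n }$ together with its success function $\guessD {\Delta}{}{ \cdot | Y^n }$ is exactly what Arikan and Merhav call a guessing strategy under distortion for the source $X^n$ conditioned on side information $Y^n$, and that they characterize the optimal exponent $\lim_n \frac{1}{n} \log \bigEx {}{\guessD \Delta {} { X^n | Y^n }^\rho}$ precisely in terms of the functional $\RDexp$ defined in \eqref{eq:defFunctional}. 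So the real work is to set up the correspondence carefully and to invoke their theorem; essentially this is a ``translation'' statement, so I would keep the proof short and point to the relevant section of \cite{arikanmerhav98}.

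For the \textbf{converse} part \eqref{eq:lbOptGuessFunRD}, I would argue as follows. Fix any PMF $Q_{X,Y}$ on $\setX \times \setY$ and condition on the event that $(X^n,Y^n)$ lies in a strongly-typical set $\strtyp {Q_{X,Y}}{ \epsilon }$ with respect to $Q_{X,Y}$; by standard large-deviations estimates this event has probability roughly $2^{-n \relent {Q_{X,Y}}{P_{X,Y}}}$. On this typical set, for each typical $\vecy$, the number of reconstruction $n$-tuples $\hat \vecx$ within distortion $\Delta$ of a typical $\vecx$ is at most $2^{n ( \RDfun + \delta )}$ up to vanishing $\delta$, because the $\Delta$-ball around a $Q_{X,Y}$-typical source sequence is covered by at most that many typical reconstruction sequences. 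Hence any guessing function must place a $2^{-n\RDfun}$ fraction of the typical $\vecx$'s at guessing index exceeding a constant times $2^{n(\RDfun - \delta')}$ for each typical $\vecy$ — the Markov/counting argument that is the backbone of all guessing lower bounds. Taking the $\rho$-th moment and weighting by the probability of the typical set gives
\begin{equation}
\bigEx {}{\guessD \Delta {} { X^n | Y^n }^\rho} \geq 2^{-n \relent {Q_{X,Y}}{P_{X,Y}}} \cdot 2^{\rho n ( \RDfun - \delta'')} \cdot ( \text{poly} )^{-1},
\end{equation}
so that the normalized log-exponent is at least $\rho ( \RDfun - \rho^{-1} \relent {Q_{X,Y}}{P_{X,Y}} ) - o(1)$. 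Optimizing over $Q_{X,Y}$ and letting $\epsilon, \delta \to 0$ yields $\rho \RDexp$. This is exactly the computation carried out in \cite[Section~VI.C]{arikanmerhav98}, so I would cite it rather than reproduce it.

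For the \textbf{direct} part, the plan is to exhibit a guessing function achieving the exponent. Fix a fine grid of types $Q_{X,Y}$ on $\setX \times \setY$; for each type, design a conditional covering codebook of reconstruction sequences of rate slightly above $\RDfun$ that $\Delta$-covers all $Q_{X,Y}$-typical $(\vecx,\vecy)$ pairs (such codebooks exist by the conditional rate-distortion theorem). The guessing function, upon observing $\vecy$, first guesses the reconstruction sequences for the most-likely type (the empirical-type of $(\vecx,\vecy)$ that minimizes $\RDfun + \rho^{-1} \relent {Q_{X,Y}}{P_{X,Y}}$ among those consistent with $\vecy$), then for the next type, and so on. Because there are only polynomially many types, the overhead from enumerating types is subexponential, and the number of guesses needed when $(X^n,Y^n)$ has empirical type near $Q_{X,Y}$ is at most $2^{n(\RDfun + \delta)}$; weighting by the probability $\approx 2^{-n \relent {Q_{X,Y}}{P_{X,Y}}}$ of that type and summing over types, the $\rho$-th moment is dominated by $2^{\rho n \RDexp}$ up to subexponential factors. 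Again this is the achievability construction of \cite[Section~VI.C]{arikanmerhav98}, adapted verbatim to the presence of side information $Y^n$ available to the guesser.

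The \emph{main obstacle}, and the only place where one must be slightly careful rather than simply citing \cite{arikanmerhav98}, is handling the side information $Y^n$: Arikan and Merhav state their theorem for guessing $X^n$ (or a reconstruction of it) \emph{without} side information, whereas here the guesser observes $Y^n$. The clean way to deal with this is to note that conditioning on $Y^n = \vecy$ reduces the problem to an unconditional guessing-with-distortion problem for the conditional source $P_{X^n | Y^n = \vecy}$, and that averaging the resulting exponents over $\vecy$ is dominated (by Laplace's principle / the method of types) by the worst-case trade-off, which is exactly what the supremum over joint types $Q_{X,Y}$ in \eqref{eq:defFunctional} captures — the divergence term $\relent {Q_{X,Y}}{P_{X,Y}}$ already accounts for the cost of the joint type of $(\vecx,\vecy)$ being atypical. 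Making this reduction precise requires invoking the IID assumption and the method of types, but it introduces no genuinely new difficulty beyond bookkeeping; the substantive content is entirely in \cite{arikanmerhav98}.
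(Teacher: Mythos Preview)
The paper does not give its own proof of this theorem at all: it is stated as a direct citation of \cite[Section~VI.C]{arikanmerhav98} and used as a black box. Your sketch therefore goes beyond what the paper itself provides, and the type-covering achievability together with the type-restricted counting converse you outline is indeed the Arikan--Merhav argument.

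One correction: the ``main obstacle'' you flag---extending from guessing $X^n$ alone to guessing with side information $Y^n$---is not actually an obstacle here. Section~VI.C of \cite{arikanmerhav98} is precisely the section that treats guessing subject to distortion \emph{with} side information; Arikan and Merhav already carry out the joint-type analysis over $Q_{X,Y}$ and obtain the exponent $\RDexp$ in the form \eqref{eq:defFunctional}. So no additional reduction or bookkeeping beyond their paper is needed, and the citation stands on its own.
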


For task-encoders we adopt the terminology of~\cite[Section~7]{buntelapidoth14}. Given some finite set $\setZ$, a task-encoder $\enc{ \cdot | Y^n }$ for $X^n$ given side-information $Y^n$ is for every $\vecy \in \setY^n$ a mapping $\enc { \cdot | \vecy } \colon \setX^n \rightarrow \setZ$. A corresponding task-decoder $\decphi { \cdot | Y^n }$ is, for every $\vecy \in \setY^n$, a mapping $\decphi { \cdot | \vecy } \colon \setZ \rightarrow 2^{\hat \setX^n}$ for which
\begin{IEEEeqnarray}{l} \label{eq:decRD}
\forall \, \vecx \in \setX^n \textnormal{ s.t.\ } P_{X|Y}^n (\vecx|\vecy) > 0 \quad \exists \, \hat \vecx \in \bigdecphi{ \enc { \vecx | \vecy } \bigl| \vecy } \colon d^{(n)} ( \vecx, \hat \vecx ) \leq \Delta.
\end{IEEEeqnarray}
If, upon observing $Y^n$, the task-encoder describes $X^n$ by $Z = \enc {X^n|Y^n}$, then the corresponding decoder produces the list $\setL^{Y^n}_Z \triangleq \decphi{ Z | Y^n }$. By \eqref{eq:decRD} this list is guaranteed to contain a reconstruction $\hat \vecx \in \hat \setX^n$ of $X^n$ that satisfies the fidelity criterion \eqref{eq:fidCrit}.

As in Section~\ref{sec:optGuessTaskEnc}, a stochastic task-encoder associates with every realization $( \vecx, \vecy ) \in \setX^n \times \setY^n$ of the pair $( X^n, Y^n )$ a PMF on $\setZ$ and, upon observing the side information $\vecy$, describes $\vecx$ by drawing $Z$ from $\setZ$ according to the PMF associated with $(\vecx,\vecy)$, so conditonal on $(X,Y) = (\vecx,\vecy)$ the probability that $Z = z$ is
\begin{equation}
\distof { Z = z | X^n = \vecx, Y^n = \vecy }, \quad (\vecx,\vecy,z) \in \setX^n \times \setY^n \times \setZ. \label{eq:condPMFRelGuessEncRD}
\end{equation}
A corresponding task-decoder is a collection of lists $\{ \setL^{\vecy}_z \}$ for which
\begin{IEEEeqnarray}{l} \label{eq:decRDStoch}
\forall \, (\vecx,\vecy,z) \in \setX^n \times \setY^n \times \setZ \textnormal{ s.t.\ } P_{X,Y}^n (\vecx, \vecy) \, \distof{ Z = z | X^n = \vecx, Y^n = \vecy } > 0 \quad \exists \, \hat \vecx \in \setL^{\vecy}_z \colon \nonumber \\
\qquad d^{(n)} ( \vecx, \hat \vecx ) \leq \Delta.
\end{IEEEeqnarray}
If, upon observing $Y^n$, the task-encoder describes $X^n$ by $Z$, then the corresponding decoder produces the list $\setL^{Y^n}_Z \subseteq \hat \setX^n$. By \eqref{eq:decRDStoch} this list is guaranteed to contain a reconstruction $\hat \vecx \in \hat \setX^n$ of $X^n$ that satisfies the fidelity criterion \eqref{eq:fidCrit}.

We assess the performance of an encoder-decoder pair in terms of the $\rho$-th moment $\BigEx {}{ \bigl| \setL^{Y^n}_Z \bigr|^\rho }$ of the size of the list that the decoder produces. Bunte and Lapidoth characterized the asymptotic performance of optimal encoder-decoder pairs for the case where $Y^n$ is null and the task-encoder is deterministic \cite[Theorem~VII.1]{buntelapidoth14}. A generalization of the results in~\cite{buntelapidoth14} to the case at hand where $Y^n$ need not be null and the task-encoder may be stochastic is feasible but not carried out in this paper. Instead, we shall use the close connection between task-encoding and guessing to characterize the asymptotic performance of optimal encoder-decoder pairs. The performance guarantees for optimal encoder-decoder pairs are thus presented in Section~\ref{sec:listsAndGuessesRD} ahead (Corollary~\ref{co:optTaskEncRD} ahead).

\subsection{Lists and Guesses}\label{sec:listsAndGuessesRD}

This section extends the results of Section~\ref{sec:listsAndGuesses} to the case where the reconstruction $\hat \vecx \in \hat \setX^n$ of $X^n$ is subject to the fidelity criterion \eqref{eq:fidCrit}. We begin with the rate-distortion version of Lemma~\ref{le:ImproveGuess}, which quantifies how some additional informaiton $Z$ (e.g., some description produced by an encoder), can help guessing:

\begin{lemma}\label{le:ImproveGuessRD}  
Given a finite set $\setZ$, draw $Z$ from $\setZ$ according to some conditional PMF $P_{Z|X^n,Y^n}$, so $(X^n,Y^n,Z) \sim P^n_{X,Y} \times P_{Z|X^n,Y^n}$. For optimal guessing functions $\hatguessD {} \star { \cdot | Y^n, Z }$ and $\hatguessD {} \star { \cdot | Y^n }$ with corresponding success function $\guessD \Delta \star { \cdot | Y^n,Z }$ and $\guessD \Delta \star { \cdot | Y^n }$ (which minimize $\bigEx {}{\guessD \Delta {}{ X^n | Y^n, Z}^\rho}$ and $\bigEx {}{\guessD \Delta {}{ X^n | Y^n }^\rho}$, respectively)
\begin{IEEEeqnarray}{l}
\bigEx {}{\guessD \Delta \star { X^n | Y^n,Z }^\rho} \geq |\setZ|^{-\rho} \bigEx {}{\guessD \Delta {\ast}{ X^n | Y^n }^\rho}. \label{eq:impGuessMaxRD}
\end{IEEEeqnarray}
Conversely, if $\psifun { \cdot | Y^n }$ is the reconstruction function corresponding to $\hatguessD {} \star { \cdot | Y^n }$ (for which \eqref{eq:psiDefinition} holds when we substitute $\hatguessD {} \star { \hat \vecx | \vecy }$ for $\hatguessD {}{} { \hat \vecx | \vecy }$ and $\guessD \Delta \star { \vecx | \vecy }$ for $\guessD \Delta {} { \vecx | \vecy }$ in \eqref{eq:psiDefinition}) and $Z = f \bigl( \psifun {X^n|Y^n}, Y^n \bigr)$ for some mapping $f \colon \hat \setX^n \times \setY^n \rightarrow \setZ$ for which $f ( \hat \vecx, \vecy ) = f ( \hat \vecx^\prime, \vecy )$ implies either $\bigl\lceil \hatguessD {} \star { \hat \vecx | \vecy } / | \setZ | \bigr\rceil \neq \bigl\lceil \hatguessD {} \star { \hat \vecx^\prime | \vecy } / | \setZ | \bigr\rceil$ or $\hat \vecx = \hat \vecx^\prime$, then
\begin{IEEEeqnarray}{l}
\bigEx {}{\guessD \Delta \star {X^n|Y^n,Z}^\rho} \leq \BigEx {}{\bigl\lceil \guessD \Delta \star { X^n | Y^n } / | \setZ | \bigr\rceil^\rho}. \label{eq:impGuessMinRD}
\end{IEEEeqnarray}
Such a mapping $f$ always exists, because for all $l \in \naturals$ at most $|\setZ|$ different $\hat \vecx \in \hat \setX^n$ satisfy $\bigl\lceil \hatguessD {} \star { \hat \vecx | \vecy }  / | \setZ | \bigr\rceil = l$.
\end{lemma}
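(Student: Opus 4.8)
The plan is to mirror the proof of Lemma~\ref{le:ImproveGuess} as closely as possible, carefully tracking the one structural change: we no longer guess over $\setX^n$, but over the reconstruction alphabet $\hat \setX^n$, and the relevant ``number of guesses'' is not a guessing-function value directly but the value $\guessD \Delta {}{\cdot|\vecy}$ of the associated success function. The key observation is that, once the observation $Y^n=\vecy$ is fixed, a guessing function $\hatguessD {}{}{\cdot|\vecy}$ on $\hat \setX^n$ induces a partition of $\setX^n$ into \emph{fibers}: two source-sequences $\vecx,\vecx'$ have $\guessD \Delta {}{\vecx|\vecy}=\guessD \Delta {}{\vecx'|\vecy}$ iff they are ``first covered'' by the same reconstruction $\hat \vecx$, i.e.\ iff $\psifun {\vecx|\vecy}=\psifun {\vecx'|\vecy}$. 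This reduces the rate-distortion statement to the lossless one applied to the ``effective source'' whose outcomes are the reconstruction sequences that actually get used, with the caveat that several source-sequences may map to the same reconstruction.

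For the converse bound \eqref{eq:impGuessMaxRD} (the easy direction), I would first argue, exactly as in Lemma~\ref{le:ImproveGuess}, that $\bigEx {}{\guessD \Delta \star {X^n|Y^n,Z}^\rho}$ is minimized when $Z$ is a deterministic function of $(X^n,Y^n)$: given any stochastic $Z$, replacing it by $g(X^n,Y^n)\in\argmin_z \guessD \Delta \star{X^n|\vecy,z}$ can only help. Then, with $Z$ deterministic, for each $\vecy$ the map $\guessD \Delta \star{\cdot|\vecy,z}$ restricted to the set $\{\vecx: Z=z\}$ is ``almost injective'' in the same sense as before: if $\guessD \Delta{}{\vecx|\vecy,z}=\guessD \Delta{}{\vecx'|\vecy,z}$ then (because the corresponding reconstruction guessing function $\hatguessD {}{}{\cdot|\vecy,z}$ is a bijection) $\vecx$ and $\vecx'$ are covered-first by the same $\hat \vecx$, so $\psifun {\vecx|\vecy}=\psifun {\vecx'|\vecy}$ in the optimal-$Y^n$ guessing function. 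Counting: for each value $\ell$ and each $\hat \vecx$, at most $|\setZ|$ of the $\hat \vecx$'s can have $\hatguessD {}\star{\hat \vecx|\vecy}=\ell \cdot|\setZ|$-ish index, so the number of source-sequences with $\guessD \Delta\star{\vecx|\vecy,z}=\ell$ is at most $|\setZ|$ times larger than for the best unaided guesser — yielding a factor $|\setZ|^\rho$ on the RHS, i.e.\ $\bigEx {}{\guessD \Delta \star{X^n|Y^n,Z}^\rho}\ge|\setZ|^{-\rho}\bigEx {}{\guessD \Delta\star{X^n|Y^n}^\rho}$. (Note the bound stated in \eqref{eq:impGuessMaxRD} is slightly weaker than the ceiling bound; I would prove the cleaner $|\setZ|^{-\rho}$ version, which is all that is claimed.)

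For the achievability bound \eqref{eq:impGuessMinRD}, I would construct $Z$ explicitly from the optimal guessing function $\hatguessD {}\star{\cdot|Y^n}$ and its reconstruction function $\psifun {\cdot|Y^n}$, exactly along the lines of the paragraph following Lemma~\ref{le:ImproveGuess}: index $\setZ$ by $\{0,\dots,|\setZ|-1\}$ and set $Z=f(\psifun{X^n|Y^n},Y^n)$ where $f(\hat \vecx,\vecy)$ is (essentially) the remainder of $\hatguessD {}\star{\hat \vecx|\vecy}-1$ modulo $|\setZ|$. This $f$ has the stated ``collision implies ceilings differ'' property, which by Fact-type counting exists because at most $|\setZ|$ reconstructions share a given ceiling value. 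Now, given $(Y^n,Z)=(\vecy,z)$, the decoder knows $X^n$ is covered-first by some $\hat \vecx$ in the set $\{\hat \vecx : f(\hat \vecx,\vecy)=z\}$, and can run through those $\hat \vecx$'s in increasing order of $\hatguessD {}\star{\hat \vecx|\vecy}$; the number of reconstruction-guesses needed before success is exactly $\bigl\lceil \hatguessD {}\star{\psifun{X^n|\vecy}|\vecy}/|\setZ|\bigr\rceil=\bigl\lceil \guessD \Delta\star{X^n|\vecy}/|\setZ|\bigr\rceil$. Taking $\rho$-th moments gives \eqref{eq:impGuessMinRD}.

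The main obstacle — and the place where I would be most careful — is keeping the bookkeeping straight between the three objects $\hatguessD{}{}{\cdot|\vecy}$ (a bijection on $\hat \setX^n$), $\guessD \Delta{}{\cdot|\vecy}$ (the induced success function on $\setX^n$, generally non-injective), and $\psifun{\cdot|\vecy}$ (the reconstruction map $\setX^n\to\hat \setX^n$). In particular, one must verify that the decoder's search over $\{\hat \vecx: f(\hat \vecx,\vecy)=z\}$ in increasing order of $\hatguessD {}\star{\cdot|\vecy}$ does reach $\psifun{X^n|\vecy}$ at position $\lceil\hatguessD {}\star{\psifun{X^n|\vecy}|\vecy}/|\setZ|\rceil$ and that no earlier $\hat \vecx$ in that sub-list spuriously covers $X^n$ — this is exactly where the ``$f$-collision implies ceilings differ'' property is used, guaranteeing that the $\hat \vecx$'s with $f(\hat \vecx,\vecy)=z$ that precede $\psifun{X^n|\vecy}$ all have strictly smaller ceiling and hence (since $\psifun{X^n|\vecy}$ is the \emph{first} cover of $X^n$ under $\hatguessD {}\star{\cdot|\vecy}$) do not cover $X^n$. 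Once this is nailed down, the inequalities are the same ceiling/remainder estimates as in Lemma~\ref{le:ImproveGuess} and go through verbatim. The existence-of-$f$ claim is the rate-distortion analogue of the counting remark in Lemma~\ref{le:ImproveGuess} and follows because, $\hatguessD {}\star{\cdot|\vecy}$ being a bijection, for each $l$ there are at most $|\setZ|$ reconstructions $\hat \vecx$ with $\lceil\hatguessD {}\star{\hat \vecx|\vecy}/|\setZ|\rceil=l$.
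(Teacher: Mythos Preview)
Your treatment of \eqref{eq:impGuessMinRD} is correct and essentially coincides with the paper's: the paper defines $\hat X^n = \psifun{X^n|Y^n}$ and invokes the construction from the proof of Lemma~\ref{le:ImproveGuess} to obtain a guessing function with $\hatguessD{}{}{\hat X^n|Y^n,Z} = \bigl\lceil \hatguessD{}\star{\hat X^n|Y^n}/|\setZ|\bigr\rceil$, then observes $\guessD\Delta{}{X^n|Y^n,Z}\le \hatguessD{}{}{\psifun{X^n|Y^n}|Y^n,Z}$; you spell out the same construction explicitly. Your remark that no earlier $\hat\vecx$ in the $z$-class covers $X^n$ (because $\psifun{X^n|\vecy}$ is by definition the first cover under $\hatguessD{}\star{\cdot|\vecy}$) is exactly right.

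Your argument for \eqref{eq:impGuessMaxRD}, however, has a genuine gap. After reducing to deterministic $Z$, you claim that $\guessD\Delta\star{\vecx|\vecy,z}=\guessD\Delta\star{\vecx'|\vecy,z}$ implies $\psifun{\vecx|\vecy}=\psifun{\vecx'|\vecy}$ ``in the optimal-$Y^n$ guessing function.'' This does not follow: the equality of success values under the $(\vecy,z)$-optimal guesser only tells you that $\vecx,\vecx'$ share the same first cover \emph{under that guesser}; the first covers under the unrelated $\vecy$-only optimal guesser can differ. Without this link, your subsequent counting (``at most $|\setZ|$ of the $\hat\vecx$'s can have $\hatguessD{}\star{\hat\vecx|\vecy}=\ell\cdot|\setZ|$-ish index'') does not connect the two expectations. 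The difficulty is precisely that the lossless counting argument hinges on injectivity of $G(\cdot|y,z)$, whereas $\guessD\Delta\star{\cdot|\vecy,z}$ is not injective in $\vecx$, and the fibers of the two success functions need not be compatible.

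The paper sidesteps this entirely: it does \emph{not} reduce to deterministic $Z$ nor compare fibers. Instead, it constructs a concrete $Y^n$-only guessing function from $\hatguessD{}\star{\cdot|Y^n,Z}$ by ordering $\hat\setX^n$ according to $\min_{z}\hatguessD{}\star{\hat\vecx|\vecy,z}$. A simple count (for each $z$ at most $\min_{z'}\hatguessD{}\star{\hat\vecx|\vecy,z'}$ reconstructions precede $\hat\vecx$) gives $\hatguessD{}{}{\psifun{X^n|Y^n,Z}|Y^n}\le|\setZ|\,\hatguessD{}\star{\psifun{X^n|Y^n,Z}|Y^n,Z}$, and since $\psifun{X^n|Y^n,Z}$ covers $X^n$ this yields the pointwise bound $\guessD\Delta{}{X^n|Y^n}\le|\setZ|\,\guessD\Delta\star{X^n|Y^n,Z}$, whence \eqref{eq:impGuessMaxRD}. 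The key move is to work at the $\hat\vecx$-level throughout and only pass to $\vecx$ at the very end via the covering property---replacing your attempted counting with this interleaving construction would close the gap.
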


\begin{proof}
See Appendix~\ref{app:pfLeImproveGuessRD}.
\end{proof}

Lemma~\ref{le:ImproveGuessRD} and \eqref{eq:ceilApprox} imply the following rate-distortion version of Corollary~\ref{co:impGuess}:

\begin{corollary} \label{co:impGuessRD}
Given a finite set $\setZ$, there exists some mapping $f \colon \setX^n \times \setY^n \rightarrow \setZ$ such that $Z = f ( X^n, Y^n )$ satisfies
\begin{IEEEeqnarray}{l}
\min_{\hatguessD {}{} { \cdot | Y^n, Z }} \bigEx {}{\guessD \Delta {}{X^n|Y^n,Z}^\rho} < 1 + 2^{\rho} | \setZ |^{-\rho} \min_{\hatguessD {}{} { \cdot | Y^n }} \bigEx {}{\guessD \Delta {}{X^n | Y^n}^\rho}. \label{eq:lbGuessSIRD}
\end{IEEEeqnarray}
Conversely, for every chance variable $Z$ that takes values in $\setZ$
\begin{IEEEeqnarray}{l}
\min_{\hatguessD {}{} { \cdot | Y^n, Z }} \bigEx {}{\guessD \Delta {}{X^n|Y^n,Z}^\rho} \geq | \setZ |^{-\rho} \min_{\hatguessD {}{} { \cdot | Y^n }} \bigEx {}{\guessD \Delta {}{X^n | Y^n}^\rho} \vee 1. \label{eq:ubGuessSIRD}
\end{IEEEeqnarray}
(In \eqref{eq:lbGuessSIRD} and \eqref{eq:ubGuessSIRD} $\guessD \Delta {}{\cdot|Y^n,Z}$ and $\guessD \Delta {}{\cdot|Y^n}$ are the success functions corresponding to $\hatguessD {}{} { \cdot | Y^n, Z }$ and $\hatguessD {}{} { \cdot | Y^n }$, respectively.)
\end{corollary}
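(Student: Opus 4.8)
The plan is to derive Corollary~\ref{co:impGuessRD} from Lemma~\ref{le:ImproveGuessRD} and the elementary inequality \eqref{eq:ceilApprox}, exactly as Corollary~\ref{co:impGuess} was obtained from Lemma~\ref{le:ImproveGuess}. There are two assertions: the achievability bound \eqref{eq:lbGuessSIRD}, which I would get from the ``converse'' part of Lemma~\ref{le:ImproveGuessRD} (inequality \eqref{eq:impGuessMinRD}) after replacing the ceiling via \eqref{eq:ceilApprox}; and the converse bound \eqref{eq:ubGuessSIRD}, which I would get from the first part of Lemma~\ref{le:ImproveGuessRD} (inequality \eqref{eq:impGuessMaxRD}) together with the trivial fact that any success function takes values at least one.

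First I would treat the converse \eqref{eq:ubGuessSIRD}. Fix any chance variable $Z$ on $\setZ$ and let $\hatguessD {}\star{\cdot|Y^n,Z}$, $\hatguessD {}\star{\cdot|Y^n}$ be optimal guessing functions with corresponding success functions $\guessD \Delta\star{\cdot|Y^n,Z}$, $\guessD \Delta\star{\cdot|Y^n}$. By \eqref{eq:impGuessMaxRD}, $\bigEx {}{\guessD \Delta\star{X^n|Y^n,Z}^\rho} \geq |\setZ|^{-\rho}\bigEx {}{\guessD \Delta\star{X^n|Y^n}^\rho}$; and since $\guessD \Delta\star{X^n|Y^n,Z}$ is a positive integer, $\bigEx {}{\guessD \Delta\star{X^n|Y^n,Z}^\rho} \geq 1$. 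Because, by optimality, the two displayed expectations equal $\min_{\hatguessD {}{}{\cdot|Y^n,Z}}\bigEx {}{\guessD \Delta{}{X^n|Y^n,Z}^\rho}$ and $\min_{\hatguessD {}{}{\cdot|Y^n}}\bigEx {}{\guessD \Delta{}{X^n|Y^n}^\rho}$ respectively, combining the two lower bounds gives \eqref{eq:ubGuessSIRD}.

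Next I would treat the achievability \eqref{eq:lbGuessSIRD}. Let $\hatguessD {}\star{\cdot|Y^n}$ be an optimal guessing function with success function $\guessD \Delta\star{\cdot|Y^n}$ and reconstruction function $\psifun{\cdot|Y^n}$. Lemma~\ref{le:ImproveGuessRD} furnishes a mapping $f\colon \hat\setX^n \times \setY^n \to \setZ$ with the stated separation property such that $Z \triangleq f\bigl(\psifun{X^n|Y^n},Y^n\bigr)$ obeys $\bigEx {}{\guessD \Delta\star{X^n|Y^n,Z}^\rho} \leq \bigEx {}{\bigl\lceil \guessD \Delta\star{X^n|Y^n}/|\setZ|\bigr\rceil^\rho}$. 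The point to record is that this $Z$ is a deterministic function of $(X^n,Y^n)$, namely $Z = \tilde f(X^n,Y^n)$ with $\tilde f(\vecx,\vecy) \triangleq f\bigl(\psifun{\vecx|\vecy},\vecy\bigr)$, so $\tilde f$ is an admissible choice of the map in the corollary. Applying \eqref{eq:ceilApprox} pointwise with $\xi = \guessD \Delta\star{X^n|Y^n}/|\setZ| \in \reals^+_0$ and taking expectations gives $\bigEx {}{\bigl\lceil \guessD \Delta\star{X^n|Y^n}/|\setZ|\bigr\rceil^\rho} < 1 + 2^\rho|\setZ|^{-\rho}\bigEx {}{\guessD \Delta\star{X^n|Y^n}^\rho}$. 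Since $\min_{\hatguessD {}{}{\cdot|Y^n,Z}}\bigEx {}{\guessD \Delta{}{X^n|Y^n,Z}^\rho} \leq \bigEx {}{\guessD \Delta\star{X^n|Y^n,Z}^\rho}$ and $\bigEx {}{\guessD \Delta\star{X^n|Y^n}^\rho} = \min_{\hatguessD {}{}{\cdot|Y^n}}\bigEx {}{\guessD \Delta{}{X^n|Y^n}^\rho}$, chaining the inequalities yields \eqref{eq:lbGuessSIRD}.

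I do not anticipate a real obstacle here; both halves are a couple of lines of deduction. The only thing demanding attention is the bookkeeping with the reconstruction function: one must observe that although the description $Z$ built in the achievability step is defined through $\psifun{\cdot|Y^n}$, which maps into $\hat\setX^n$ rather than into $\setX^n$, the composite $Z = \tilde f(X^n,Y^n)$ is still a deterministic function of the source and side information, so it is a legitimate candidate for the map $f$ in the statement of Corollary~\ref{co:impGuessRD}. In the converse step the analogous subtlety is merely that the success function is everywhere $\geq 1$, which is what produces the ``$\vee\,1$'' in \eqref{eq:ubGuessSIRD}.
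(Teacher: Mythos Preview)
Your proposal is correct and follows precisely the route the paper indicates: the corollary is stated as an immediate consequence of Lemma~\ref{le:ImproveGuessRD} together with \eqref{eq:ceilApprox}, and you have spelled out that deduction accurately, including the observation that the composite $\tilde f(\vecx,\vecy)=f\bigl(\psifun{\vecx|\vecy},\vecy\bigr)$ is a legitimate deterministic map $\setX^n\times\setY^n\to\setZ$.
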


From Corollary~\ref{co:impGuessRD} and Theorem~\ref{th:optGuessFunRD}, which characterizes the asymptotic performance of optimal guessing functions $\hatguessD {} {} { \cdot | Y^n }$, we obtain the following asymptotic rate-distortion version of Corollary~\ref{co:equivBunteResultGuessing}:

\begin{corollary}\label{co:equivBunteResultGuessingRD}
Let $\hatguessD {}{} { \cdot | Y^n, Z }$ be guessing functions and let $\guessD \Delta {}{\cdot|Y^n,Z}$ be the corresponding success functions. Then, given a positive rate $R > 0$ and finite sets $\setZ_n$ satisfying
\begin{equation}
\lim_{n \rightarrow \infty} \frac{\log |\setZ_n|}{n} = R,
\end{equation}
there exist mappings $f_n \colon \setX^n \times \setY^n \rightarrow \setZ_n$ for which $Z_n = f_n ( X^n,Y^n )$ satisfy
\begin{IEEEeqnarray}{l}
\lim_{n \rightarrow \infty} \frac{1}{n} \log \biggl( \min_{\hatguessD {}{} { \cdot | Y^n, Z }} \bigEx {}{\guessD \Delta {}{X^n|Y^n,Z_n}^\rho} \biggr) \leq \rho \Bigl( \RDexp - R \Bigr) \vee 0. \label{eq:lbGuessSIExpRD}
\end{IEEEeqnarray}
Moreover, if $R > \RDexp$, then there exist mappings $f_n \colon \setX^n \times \setY^n \rightarrow \setZ_n$ for which $Z_n = f_n ( X^n,Y^n )$ satisfy
\begin{IEEEeqnarray}{l}
\lim_{n \rightarrow \infty} \min_{\hatguessD {}{} { \cdot | Y^n, Z }} \bigEx {}{\guessD \Delta {}{X^n|Y^n,Z_n}^\rho} = 1. \label{eq:lbGuessSIExpRD1}
\end{IEEEeqnarray}
Conversely, for all chance variables $Z_n$ taking values in $\setZ_n$
\begin{IEEEeqnarray}{l}
\lim_{n \rightarrow \infty} \frac{1}{n} \log \biggl( \min_{\hatguessD {}{} { \cdot | Y^n, Z }} \bigEx {}{\guessD \Delta {}{X^n|Y^n,Z_n}^\rho} \biggr) \geq \rho \Bigl( \RDexp - R \Bigr) \vee 0. \label{eq:ubGuessSIExpRD}
\end{IEEEeqnarray}
\end{corollary}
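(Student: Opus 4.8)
The plan is to combine Corollary~\ref{co:impGuessRD} (the rate-distortion analog of the ``side-information helps guessing by at most $|\setZ|^{-\rho}$'' bound) with Theorem~\ref{th:optGuessFunRD} (the asymptotic characterization of $\bigEx{}{\guessD \Delta {}{X^n|Y^n}^\rho}$ by $\rho\,\RDexp$), exactly mirroring the way Corollary~\ref{co:equivBunteResultGuessing} was obtained in the non-distortion case from Corollary~\ref{co:impGuess} and Theorem~\ref{th:optGuessFun}. The only genuinely new ingredient is bookkeeping the exponents, since now the ``base'' quantity grows exponentially (at rate $\rho\,\RDexp$) rather than being characterized by a single-shot R\'enyi entropy, and we must also handle the regime $R \geq \RDexp$ in which the guessing moment collapses to one.

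First I would establish \eqref{eq:lbGuessSIExpRD} and \eqref{eq:lbGuessSIExpRD1}. Fix $\epsilon>0$ and, for each $n$, apply Theorem~\ref{th:optGuessFunRD} to obtain guessing functions on $\hat\setX^n$ whose success functions satisfy $\bigEx{}{\guessD \Delta {}{X^n|Y^n}^\rho} \leq 2^{n\rho(\RDexp+\epsilon)}$ for all large $n$. Feed these into Corollary~\ref{co:impGuessRD}: there is a mapping $f_n\colon\setX^n\times\setY^n\to\setZ_n$ with $Z_n=f_n(X^n,Y^n)$ for which
\begin{equation}
\min_{\hatguessD {}{} { \cdot | Y^n, Z }} \bigEx {}{\guessD \Delta {}{X^n|Y^n,Z_n}^\rho} < 1 + 2^\rho\,|\setZ_n|^{-\rho}\,2^{n\rho(\RDexp+\epsilon)}. \nonumber
\end{equation}
Taking $\frac1n\log(\cdot)$, using $\frac1n\log|\setZ_n|\to R$, and letting $n\to\infty$ and then $\epsilon\downarrow 0$ gives $\limsup_n \frac1n\log(\cdot) \leq \rho(\RDexp - R)$; since the quantity is always at least one, the limit is at most $\rho(\RDexp-R)\vee 0$. (That a $\limsup$ can be promoted to a genuine $\lim$ follows once the matching converse below is in hand.) When $R>\RDexp$, choose $\epsilon$ small enough that $\RDexp+\epsilon<R$; then the right-hand side above tends to $1$, and since the guessing moment is at least $1$, \eqref{eq:lbGuessSIExpRD1} follows.

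Next I would prove the converse \eqref{eq:ubGuessSIExpRD}. For any chance variable $Z_n$ taking values in $\setZ_n$, the converse half of Corollary~\ref{co:impGuessRD} gives
\begin{equation}
\min_{\hatguessD {}{} { \cdot | Y^n, Z }} \bigEx {}{\guessD \Delta {}{X^n|Y^n,Z_n}^\rho} \geq |\setZ_n|^{-\rho}\,\min_{\hatguessD {}{} { \cdot | Y^n }}\bigEx {}{\guessD \Delta {}{X^n|Y^n}^\rho}\vee 1; \nonumber
\end{equation}
applying $\frac1n\log(\cdot)$, invoking the converse bound \eqref{eq:lbOptGuessFunRD} of Theorem~\ref{th:optGuessFunRD} ($\liminf_n \frac1n\log\bigEx{}{\guessD \Delta {}{X^n|Y^n}^\rho}\geq\rho\,\RDexp$), and using $\frac1n\log|\setZ_n|\to R$ yields $\liminf_n \frac1n\log(\cdot)\geq \rho(\RDexp-R)$, and the $\vee\,0$ comes from the quantity being $\geq 1$. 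Combined with the $\limsup$ estimate from the first part, this pins down the limit in \eqref{eq:lbGuessSIExpRD} and \eqref{eq:ubGuessSIExpRD}.

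I expect the main obstacle to be a purely technical one: being careful that the $\limsup$ in \eqref{eq:lbGuessSIExpRD} is actually a $\lim$ — i.e., interlacing the achievability and converse estimates correctly so that the two one-sided bounds coincide — and making sure Corollary~\ref{co:impGuessRD} is applied along the \emph{same} sequence $\{\setZ_n\}$ with cardinalities obeying the hypothesis $\frac1n\log|\setZ_n|\to R$. There is no deep content beyond that; once Lemma~\ref{le:ImproveGuessRD}/Corollary~\ref{co:impGuessRD} and Theorem~\ref{th:optGuessFunRD} are granted, the statement is a routine exponent computation, strictly parallel to the passage from Corollary~\ref{co:impGuess}/Theorem~\ref{th:optGuessFun} to Corollary~\ref{co:equivBunteResultGuessing}.
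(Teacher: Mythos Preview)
Your proposal is correct and matches the paper's approach exactly: the paper does not give a detailed proof but simply states that the corollary follows from Corollary~\ref{co:impGuessRD} and Theorem~\ref{th:optGuessFunRD}, which is precisely the combination you spell out. Your care about the $\limsup$ versus $\lim$ and the $\vee\,0$ clipping is appropriate and fills in the bookkeeping the paper leaves implicit.
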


Our next result is a rate-distortion version of Theorem~\ref{th:relGuessEnc}:

\begin{theorem} \label{th:relGuessEncRD}
Let $\setZ$ be a finite set.
\begin{enumerate}
\item Given any stochastic task-encoder \eqref{eq:condPMFRelGuessEncRD}, every decoder with lists $\{ \setL^{\vecy}_z \}$ \eqref{eq:decRDStoch} induces a guessing function $\hatguessD {}{}{\cdot|Y^n}$ whose corresponding success function $\guessD \Delta {}{\cdot | Y^n }$ satisfies
\begin{IEEEeqnarray}{l}
\bigEx {}{\guessD \Delta {}{X^n|Y^n}^\rho} \leq \card \setZ^\rho \BigEx {}{ \bigl| \setL^{Y^n}_Z \bigr|^\rho}. \label{eq:listToGuessRD}
\end{IEEEeqnarray}
\item Every guessing function $\hatguessD {}{}{\cdot|Y^n}$ with corresponding success function $\guessD \Delta {}{\cdot | Y^n }$ and every positive integer $\omega \leq | \hat \setX |^n$ satisfying
\begin{IEEEeqnarray}{l}
| \setZ | \geq \omega \biggl( 1 + \Bigl\lfloor \log \bigl\lceil | \hat \setX |^n / \omega \bigr\rceil \Bigr\rfloor \biggr) \label{eq:relCardMandVRD}
\end{IEEEeqnarray}
induce a deterministic task-encoder, i.e., a stochastic task-encoder whose conditional PMF \eqref{eq:condPMFRelGuessEncRD} is $\{ 0, 1 \}$-valued, and a decoder whose lists $\{ \setL^{\vecy}_z \}$ \eqref{eq:decRDStoch} satisfy
\begin{IEEEeqnarray}{l} 
\BigEx {}{\bigl| \setL^{Y^n}_Z \bigr|^\rho} \leq \BigEx {}{ \bigl\lceil \guessD \Delta {} { X^n | Y^n } / \omega \bigr\rceil^\rho }. \label{eq:guessToListRD}
\end{IEEEeqnarray}
\end{enumerate}
\end{theorem}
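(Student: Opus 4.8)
The plan is to mirror the proof of Theorem~\ref{th:relGuessEnc} almost verbatim, substituting the reconstruction-alphabet objects for their source-alphabet counterparts and invoking the definitions \eqref{eq:psiDefinition} and \eqref{eq:decRDStoch} wherever \eqref{eq:deterministicListDef} was used. The only conceptual point to keep in mind throughout is that success functions $\guessD \Delta {}{ \cdot | \vecy }$ are \emph{not} bijections on $\setX^n$; they are induced by a genuine guessing function $\hatguessD {}{}{ \cdot | \vecy }$ on $\hat \setX^n$, which \emph{is} a bijection. So every counting argument must be phrased on the $\hat \setX^n$-side first and then pushed through the reconstruction map.

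For the first part I would proceed exactly as in the original: given a stochastic task-encoder \eqref{eq:condPMFRelGuessEncRD} with decoder lists $\{ \setL^{\vecy}_z \}$ \eqref{eq:decRDStoch}, fix $\vecy$, order the lists $\{ \setL^{\vecy}_z \}_{z \in \setZ}$ by increasing cardinality, order the elements inside each list arbitrarily, and build a guessing function $\hatguessD {}{}{ \cdot | \vecy }$ on $\hat \setX^n$ that enumerates $\hat \setX^n$ list by list (skipping already-guessed reconstructions). Let $\guessD \Delta {}{ \cdot | Y^n }$ be its success function. By \eqref{eq:decRDStoch}, for every $(\vecx,\vecy)$ of positive probability some $\hat \vecx$ with $d^{(n)}(\vecx,\hat \vecx) \le \Delta$ lies in $\setL^{\vecy}_Z$ (the realized list), hence $\guessD \Delta {}{ \vecx | \vecy } \le \hatguessD {}{}{ \hat \vecx | \vecy }$ for that $\hat \vecx$. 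The same argument as in step $(a)$ of the original proof—a reconstruction guessed no later than all of $\setL^{\vecy}_Z$ must belong to one of the at most $|\setZ|$ lists whose size does not exceed $\min_{z \colon \hat \vecx \in \setL^{\vecy}_z} |\setL^{\vecy}_z|$—gives $\guessD \Delta {}{ \vecx | \vecy } \le |\setZ| \min_{z \colon \hat \vecx \in \setL^\vecy_z} |\setL^\vecy_z|$, and taking $\rho$-th moments yields \eqref{eq:listToGuessRD}.

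For the second part I would replicate the two-step construction in the proof of Theorem~\ref{th:relGuessEnc}, but applied to an \emph{optimal} guessing function $\hatguessD {} \star { \cdot | Y^n }$ on $\hat \setX^n$ with reconstruction function $\psifun { \cdot | Y^n }$, so that I actually control the induced encoder. Set $\setO = \{0,\ldots,\omega-1\}$ and $\setS = \{0,\ldots,\lfloor \log \lceil |\hat \setX|^n / \omega \rceil \rfloor\}$; by \eqref{eq:relCardMandVRD} it suffices to take $\setZ = \setO \times \setS$. In Step~1 the encoder computes $O$ as the remainder of the Euclidean division of $\hatguessD {} \star { \psifun{X^n|Y^n} | Y^n } - 1$ by $\omega$, exactly as in Lemma~\ref{le:ImproveGuessRD}, which forces the realized reconstruction into a residue class of size at most $\lceil |\hat \setX|^n / \omega \rceil$ and, by the reasoning behind \eqref{eq:guessToListOV}, reduces the effective guess-index of that reconstruction to $\lceil \hatguessD {} \star { \psifun{X^n|Y^n} | Y^n } / \omega \rceil$. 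In Step~2 the encoder sets $S = \lfloor \log(\text{that reduced index}) \rfloor$ and outputs $Z = (O,S)$. Fact~\ref{fa:1} then bounds the number of reconstructions sharing a given $(O,S)$ by the reduced index itself, and since $\guessD \Delta \star { X^n | Y^n } \le \hatguessD {} \star { \psifun{X^n|Y^n} | Y^n }$ (the reconstruction $\psifun{X^n|Y^n}$ is $\Delta$-close to $X^n$ by \eqref{eq:psiDefinition} and \eqref{eq:minDistZero}), I get that the decoder list $\setL^{Y^n}_Z$—which is contained in the set of reconstructions with that $(O,S)$ value and is a valid task-decoder list because it contains $\psifun{X^n|Y^n}$—has size at most $\lceil \guessD \Delta \star { X^n | Y^n } / \omega \rceil$, giving \eqref{eq:guessToListRD}. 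The decoder lists defined this way automatically satisfy \eqref{eq:decRDStoch}, and the conditional PMF is $\{0,1\}$-valued because $O$ and $S$ are deterministic given $(X^n,Y^n)$.

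The main obstacle—and really the only subtlety beyond bookkeeping—is the interplay between the two alphabets: one must carefully check that ``guess the $j$-th reconstruction'' translates correctly into ``the first $\hat \vecx$ that is $\Delta$-close to $X^n$ is guessed at step $\guessD \Delta {}{X^n|Y^n}$'', so that the counting on $\hat \setX^n$ (where bijectivity and Fact~\ref{fa:1} apply) transfers to a valid bound on the list size, which is a set of reconstructions. This is exactly what the reconstruction function $\psifun{\cdot|Y^n}$ and the success function $\guessD \Delta {}{\cdot|Y^n}$ are designed to do, so once those definitions are in hand the argument is a routine transcription of the proof of Theorem~\ref{th:relGuessEnc}; I would present it compactly, pointing to the earlier proof for the steps that are identical.
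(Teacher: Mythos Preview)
Your approach is correct and, for Part~1, essentially identical to the paper's. For Part~2 the paper takes a slightly cleaner route: rather than re-deriving the two-step $(O,S)$ construction by hand, it introduces the chance variable $\hat X^n \triangleq \psifun{X^n|Y^n}$ on $\hat\setX^n$ and applies Theorem~\ref{th:relGuessEnc} (the already-proved non-RD version) \emph{as a black box} to $(\hat X^n, Y^n)$, obtaining a deterministic encoder $\distof{Z=z \mid \hat X^n=\hat\vecx, Y^n=\vecy}$ with lists $\hat\setL^\vecy_z \subseteq \hat\setX^n$; it then pulls the encoder back to $\setX^n$ by composition with $\psifun$ and sets $\setL^\vecy_z = \hat\setL^\vecy_z$. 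This avoids repeating the $(O,S)$ bookkeeping and makes the two-alphabet translation a one-liner. Your direct transcription also works and yields the same bound.

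Two small corrections to your write-up: (i)~you restrict Part~2 to an \emph{optimal} guessing function $\hatguessD{}{\star}{\cdot|Y^n}$, but the theorem is stated for \emph{every} guessing function, and nothing in your construction actually uses optimality---drop the star throughout. (ii)~The relation $\guessD{\Delta}{}{X^n|Y^n} \le \hatguessD{}{}{\psifun{X^n|Y^n} \mid Y^n}$ is in fact an \emph{equality} by the very definition~\eqref{eq:psiDefinition} of the reconstruction function; this equality (not merely the inequality) is what closes the chain $\bigl|\setL^{Y^n}_Z\bigr| \le \bigl\lceil \hatguessD{}{}{\hat X^n|Y^n}/\omega \bigr\rceil = \bigl\lceil \guessD{\Delta}{}{X^n|Y^n}/\omega \bigr\rceil$.
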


\begin{proof}
See Appendix~\ref{app:pfThRelGuessEncRD}.
\end{proof}

The following rate-distortion version of Corollary~\ref{co:guessToBestList} results from Theorem~\ref{th:relGuessEncRD} and \eqref{eq:ceilApprox} by setting $$\omega = \biggl\lfloor | \setZ | / \Bigl( 1 + \bigl\lfloor \log | \hat \setX |^n \bigr\rfloor \Bigr) \biggr\rfloor$$ in Theorem~\ref{th:relGuessEncRD}.

\begin{corollary} \label{co:guessToBestListRD}
Given a set $\setZ$ of cardinality $|\setZ| \geq 1 + \bigl\lfloor \log |\hat \setX|^n \bigr\rfloor$, any guessing function $\hatguessD {}{}{\cdot|Y^n}$ with corresponding success function $\guessD \Delta {}{\cdot | Y^n }$ induces a deterministic task-encoder, i.e., a stochastic task-encoder whose conditional PMF \eqref{eq:condPMFRelGuessEncRD} is $\{ 0, 1 \}$-valued, and a decoder with lists $\{ \setL^{\vecy}_z \}$ \eqref{eq:decRDStoch} that satisfy
\begin{IEEEeqnarray}{l} 
\BigEx {}{\bigl| \setL^{Y^n}_Z \bigr|^\rho} \leq 1 + 2^{\rho} \bigEx {}{\guessD \Delta {}{ X^n | Y^n }^\rho} \biggl( \frac{ | \setZ | }{1 + \log | \hat \setX |^n} - 1 \biggr)^{-\rho}. \label{eq:guessToBestListRD}
\end{IEEEeqnarray}
\end{corollary}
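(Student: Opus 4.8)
The plan is to derive Corollary~\ref{co:guessToBestListRD} from Theorem~\ref{th:relGuessEncRD} exactly as Corollary~\ref{co:guessToBestList} was derived from Theorem~\ref{th:relGuessEnc}, only with $\setX$ replaced by $\hat\setX^n$ throughout. First I would set
\[
\omega = \biggl\lfloor \frac{|\setZ|}{1 + \bigl\lfloor \log |\hat \setX|^n \bigr\rfloor} \biggr\rfloor .
\]
The hypothesis $|\setZ| \geq 1 + \lfloor \log |\hat\setX|^n \rfloor$ guarantees $\omega \geq 1$, so $\omega$ is a positive integer, and one checks $\omega \leq |\hat\setX|^n$. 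I would then verify that this $\omega$ satisfies the cardinality condition~\eqref{eq:relCardMandVRD} of Theorem~\ref{th:relGuessEncRD}: since $\lceil |\hat\setX|^n/\omega \rceil \leq |\hat\setX|^n$ we have $1 + \lfloor \log \lceil |\hat\setX|^n/\omega \rceil \rfloor \leq 1 + \lfloor \log |\hat\setX|^n \rfloor$, and by the definition of the floor, $\omega \bigl( 1 + \lfloor \log |\hat\setX|^n \rfloor \bigr) \leq |\setZ|$; chaining these gives~\eqref{eq:relCardMandVRD}.

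Next I would invoke the second part of Theorem~\ref{th:relGuessEncRD}: for this $\omega$ any guessing function $\hatguessD {}{}{\cdot|Y^n}$ with success function $\guessD \Delta {}{\cdot|Y^n}$ induces a deterministic task-encoder and a decoder with lists $\{\setL^{\vecy}_z\}$ satisfying $\BigEx {}{|\setL^{Y^n}_Z|^\rho} \leq \BigEx {}{\lceil \guessD \Delta {}{X^n|Y^n}/\omega \rceil^\rho}$. To finish, I would bound the right-hand side using the elementary inequality~\eqref{eq:ceilApprox}, namely $\lceil \xi \rceil^\rho < 1 + 2^\rho \xi^\rho$ applied pointwise with $\xi = \guessD \Delta {}{\vecx|\vecy}/\omega$, and take expectations to get
\[
\BigEx {}{\bigl\lceil \guessD \Delta {}{X^n|Y^n}/\omega \bigr\rceil^\rho} < 1 + 2^\rho \omega^{-\rho} \bigEx {}{\guessD \Delta {}{X^n|Y^n}^\rho}.
\]
Finally I would lower-bound $\omega$ by $\omega > |\setZ|/(1+\log|\hat\setX|^n) - 1$ (using $\lfloor a \rfloor > a - 1$ and $\lfloor \log|\hat\setX|^n\rfloor \leq \log|\hat\setX|^n$, noting the right side is positive by the hypothesis on $|\setZ|$, modulo treating the degenerate boundary case separately), so that $\omega^{-\rho} < \bigl( |\setZ|/(1+\log|\hat\setX|^n) - 1 \bigr)^{-\rho}$, yielding~\eqref{eq:guessToBestListRD}.

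There is essentially no obstacle here: the entire content is packaged in Theorem~\ref{th:relGuessEncRD}, whose proof is deferred to Appendix~\ref{app:pfThRelGuessEncRD}, and in~\eqref{eq:ceilApprox}. The only points requiring mild care are the arithmetic with floors (ensuring $\omega$ is a genuine positive integer and that the denominator $|\setZ|/(1+\log|\hat\setX|^n) - 1$ is handled correctly when it is zero or negative, i.e.\ when $|\setZ| = 1 + \lfloor\log|\hat\setX|^n\rfloor$, in which case the bound is vacuous or one appeals directly to the $\omega=1$ instance), and confirming the monotonicity chain that gives~\eqref{eq:relCardMandVRD}. These are the same routine manipulations used for Corollary~\ref{co:guessToBestList}, transplanted verbatim with the alphabet $\setX$ replaced by the reconstruction alphabet $\hat\setX^n$.
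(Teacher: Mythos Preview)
Your proposal is correct and matches the paper's approach exactly: the paper states that the corollary ``results from Theorem~\ref{th:relGuessEncRD} and \eqref{eq:ceilApprox} by setting $\omega = \bigl\lfloor | \setZ | / \bigl( 1 + \lfloor \log | \hat \setX |^n \rfloor \bigr) \bigr\rfloor$ in Theorem~\ref{th:relGuessEncRD},'' which is precisely what you do, with the routine floor/ceiling verifications spelled out.
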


We can combine \eqref{eq:listToGuessRD} and \eqref{eq:guessToBestListRD} with Theorem~\ref{th:optGuessFunRD}, which characterizes the asymptotic performance of an optimal guessing function $\hatguessD {} {} { \cdot | Y^n }$, to characterize the asymptotic performance of optimal encoder-decoder pairs:

\begin{corollary}[Asymptotic Performance of Optimal Encoder-Decoder Pairs] \label{co:optTaskEncRD}
Given a positive rate $R > 0$ and finite sets satisfying
\begin{equation}
\lim_{n \rightarrow \infty} \frac{ \log |\setZ_n| }{n} = R,
\end{equation}
there exist deterministic task-encoders, i.e., stochastic task-encoders whose conditional PMFs \eqref{eq:condPMFRelGuessEncRD} (where we substitute $\setZ_n$ for $\setZ$ in \eqref{eq:condPMFRelGuessEncRD}) are $\{ 0, 1 \}$-valued, and decoders whose lists $\bigl\{ \setL^{\vecy}_{z_n} \bigr\}$ satisfy \eqref{eq:decRDStoch} (when we substitute $\setZ_n$ for $\setZ$ in \eqref{eq:decRDStoch}) for which
\begin{IEEEeqnarray}{l}
\lim_{n \rightarrow \infty} \frac{1}{n} \log \BigEx {}{\bigl| \setL^{Y^n}_{Z_n} \bigr|^\rho} \leq \rho \Bigl( \RDexp - R \Bigr) \vee 0;
\end{IEEEeqnarray}
and if, moreover, $R > \RDexp$, then there exist encoder-decoder pairs for which
\begin{IEEEeqnarray}{l} 
\lim_{n \rightarrow \infty} \BigEx {}{\bigl| \setL^{Y^n}_{M_n} \bigr|^\rho} = 1.
\end{IEEEeqnarray}
Conversely, for any stochastic task-encoders \eqref{eq:condPMFRelGuessEncRD} (where we substitute $\setZ_n$ for $\setZ$ in \eqref{eq:condPMFRelGuessEncRD}) and decoders whose lists $\bigl\{ \setL^{\vecy}_{z_n} \bigr\}$ satisfy \eqref{eq:decRDStoch} (when we substitute $Z_n$ for $Z$, $z_n$ for $z$, and $\setZ_n$ for $\setZ$ in \eqref{eq:decRDStoch})
\begin{IEEEeqnarray}{l}
\lim_{n \rightarrow \infty} \frac{1}{n} \log \BigEx {}{\bigl| \setL^{Y^n}_{Z_n} \bigr|^\rho} \geq \rho \Bigl( \RDexp - R \Bigr) \vee 0. \label{eq:lbOptTaskEncRD}
\end{IEEEeqnarray}
\end{corollary}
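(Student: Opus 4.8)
The plan is to obtain both halves of Corollary~\ref{co:optTaskEncRD} from the rate-distortion guessing bounds of Theorem~\ref{th:optGuessFunRD} by feeding them through the guess-to-list and list-to-guess reductions of Theorem~\ref{th:relGuessEncRD} (and its specialisation Corollary~\ref{co:guessToBestListRD}), exactly parallel to the way the non-distortion statement is recovered from Theorem~\ref{th:optGuessFun} via Theorem~\ref{th:relGuessEnc}. The only recurring simplification is that any factor subexponential in $n$ --- in particular the $1+\log|\hat\setX|^n$ occurring in Corollary~\ref{co:guessToBestListRD} and the additive ``$1+$'' --- contributes $0$ to $\lim_n\frac1n\log(\cdot)$ and may therefore be dropped on the exponential scale.

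\emph{Achievability.} Fix guessing functions $\hatguessD{}{}{\cdot|Y^n}$ as supplied by the direct part of Theorem~\ref{th:optGuessFunRD}, so that their success functions obey $\limsup_n\frac1n\log\bigl(\bigEx{}{\guessD\Delta{}{X^n|Y^n}^\rho}\bigr)\le\rho\RDexp$. Since $\frac1n\log|\setZ_n|\to R>0$, we have $|\setZ_n|\ge 1+\bigl\lfloor\log|\hat\setX|^n\bigr\rfloor$ for all large $n$, so Corollary~\ref{co:guessToBestListRD} applies: it yields deterministic task-encoders with decoding lists $\{\setL^{\vecy}_z\}$ obeying the fidelity criterion \eqref{eq:decRDStoch} and the bound \eqref{eq:guessToBestListRD}. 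Taking $\frac1n\log$ of \eqref{eq:guessToBestListRD}, using $|\setZ_n|/\bigl(1+\log|\hat\setX|^n\bigr)\ge 2^{n(R-o(1))}$ and discarding the subexponential prefactor, gives $\limsup_n\frac1n\log\BigEx{}{|\setL^{Y^n}_{Z_n}|^\rho}\le\rho(\RDexp-R)\vee 0$. When in addition $R>\RDexp$, the second summand on the right-hand side of \eqref{eq:guessToBestListRD} tends to $0$; since $\BigEx{}{|\setL^{Y^n}_{Z_n}|^\rho}\ge 1$ always, this forces $\BigEx{}{|\setL^{Y^n}_{Z_n}|^\rho}\to 1$.

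\emph{Converse.} Let $Z_n\in\setZ_n$ be the description produced by an arbitrary stochastic task-encoder \eqref{eq:condPMFRelGuessEncRD}, and let $\{\setL^{\vecy}_{z_n}\}$ be any associated decoding lists satisfying \eqref{eq:decRDStoch}. Part~1 of Theorem~\ref{th:relGuessEncRD}, i.e.\ \eqref{eq:listToGuessRD}, produces guessing functions whose success functions satisfy $\bigEx{}{\guessD\Delta{}{X^n|Y^n}^\rho}\le|\setZ_n|^\rho\,\BigEx{}{|\setL^{Y^n}_{Z_n}|^\rho}$. The converse part of Theorem~\ref{th:optGuessFunRD} holds for \emph{every} guessing function, in particular for this (possibly suboptimal) one, so $\rho\RDexp\le\liminf_n\frac1n\log\bigl(\bigEx{}{\guessD\Delta{}{X^n|Y^n}^\rho}\bigr)\le\rho R+\liminf_n\frac1n\log\BigEx{}{|\setL^{Y^n}_{Z_n}|^\rho}$. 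Hence $\liminf_n\frac1n\log\BigEx{}{|\setL^{Y^n}_{Z_n}|^\rho}\ge\rho(\RDexp-R)$, and combining with the trivial bound $\BigEx{}{|\setL^{Y^n}_{Z_n}|^\rho}\ge 1$ gives the claimed lower bound $\rho(\RDexp-R)\vee 0$. Pairing this converse $\liminf$ with the achievability $\limsup$ shows that for the encoder-decoder pairs constructed above the limit exists and equals $\rho(\RDexp-R)\vee 0$.

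\emph{Main obstacle.} The genuinely substantive rate-distortion content is imported: Theorem~\ref{th:optGuessFunRD} already supplies the exponent $\rho\RDexp$, and Theorem~\ref{th:relGuessEncRD} transfers it between guessing and task-encoding. What remains is bookkeeping, and the only point deserving explicit verification is that the logarithmic-alphabet ``$S$-part'' of the description in Corollary~\ref{co:guessToBestListRD} (equivalently, the $1+\log|\hat\setX|^n$ overhead) is negligible at the exponential scale, and that the lower bound of Theorem~\ref{th:optGuessFunRD} is legitimately invoked for the \emph{induced}, rather than the optimal, guessing function; both are immediate.
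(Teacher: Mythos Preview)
Your proposal is correct and follows exactly the approach the paper indicates: combine \eqref{eq:listToGuessRD} and \eqref{eq:guessToBestListRD} (i.e., Part~1 of Theorem~\ref{th:relGuessEncRD} and Corollary~\ref{co:guessToBestListRD}) with Theorem~\ref{th:optGuessFunRD}. Your handling of the subexponential $1+\log|\hat\setX|^n$ overhead and your explicit remark that the converse in Theorem~\ref{th:optGuessFunRD} applies to the \emph{induced} (not necessarily optimal) guessing function are precisely the bookkeeping points the paper leaves implicit.
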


Note that for the special case where $Y^n$ is null Corollary~\ref{co:optTaskEncRD} specializes to \cite[Theorem~VII.~1]{buntelapidoth14}.

Another interesting corollary to Theorem~\ref{th:relGuessEncRD}, that is to say a rate-distortion version of Corollary~\ref{co:guessToList}, results from the choice $\omega = 1$ in Theorem~\ref{th:relGuessEncRD}:

\begin{corollary}\label{co:guessToListRD}
Given a set $\setZ$ of cardinality $|\setZ| = 1 + \bigl\lfloor \log |\hat \setX|^n \bigr\rfloor$, any guessing function $\hatguessD {}{}{\cdot|Y^n}$ with corresponding success function $\guessD \Delta {}{\cdot | Y^n }$ induces a deterministic task-encoder, i.e., a stochastic task-encoder whose conditional PMF \eqref{eq:condPMFRelGuessEncRD} is $\{ 0, 1 \}$-valued, and a decoder with lists $\{ \setL^{\vecy}_z \}$ \eqref{eq:decRDStoch} that satisfy
\begin{equation}
\BigEx {}{\bigl| \setL^{Y^n}_Z \bigr|^\rho} \leq \bigEx {}{\guessD \Delta {} { X^n | Y^n }^\rho}. \label{eq:guessToListV1RD}
\end{equation}
\end{corollary}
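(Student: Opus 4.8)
The plan is to derive Corollary~\ref{co:guessToListRD} from the second part of Theorem~\ref{th:relGuessEncRD} by taking $\omega = 1$, in exact parallel with the way Corollary~\ref{co:guessToList} is obtained from Theorem~\ref{th:relGuessEnc}. First I would verify that the hypothesis of the theorem is met: with $\omega = 1$, condition~\eqref{eq:relCardMandVRD} reads $|\setZ| \geq 1 + \bigl\lfloor \log \bigl\lceil |\hat\setX|^n \bigr\rceil \bigr\rfloor$, and since $|\hat\setX|^n$ is itself a positive integer we have $\bigl\lceil |\hat\setX|^n \bigr\rceil = |\hat\setX|^n$, so this is precisely $|\setZ| \geq 1 + \bigl\lfloor \log |\hat\setX|^n \bigr\rfloor$, which holds with equality under the stated assumption $|\setZ| = 1 + \bigl\lfloor \log |\hat\setX|^n \bigr\rfloor$.

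Given this, Theorem~\ref{th:relGuessEncRD} supplies, for any guessing function $\hatguessD {}{}{\cdot|Y^n}$ with corresponding success function $\guessD \Delta {}{\cdot | Y^n }$, a deterministic task-encoder --- i.e.\ a stochastic task-encoder whose conditional PMF~\eqref{eq:condPMFRelGuessEncRD} is $\{0,1\}$-valued --- together with a decoder whose lists $\{\setL^{\vecy}_z\}$ satisfy~\eqref{eq:decRDStoch} and for which~\eqref{eq:guessToListRD} holds. It then remains only to simplify the right-hand side of~\eqref{eq:guessToListRD} at $\omega = 1$: it equals $\BigEx {}{\bigl\lceil \guessD \Delta {}{ X^n | Y^n } \bigr\rceil^\rho}$, and since the success function takes values in $\bigl\{ 1, \ldots, |\hat\setX|^n \bigr\}$ the ceiling acts as the identity, so this is $\bigEx {}{\guessD \Delta {}{ X^n | Y^n }^\rho}$. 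This is exactly~\eqref{eq:guessToListV1RD}.

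I do not anticipate a genuine obstacle: the corollary is a cosmetic specialization, and what little substance there is lies entirely in Theorem~\ref{th:relGuessEncRD} (proved in the appendix), whose $\omega=1$ instance we are invoking. If one prefers an explicit construction to citing the theorem, one can unwind its $\omega = 1$ case directly: the first component of the description there lives in a singleton and can be discarded, leaving the encoder $\vecx \mapsto \bigl\lfloor \log \hatguessD {}{}{ \psifun {\vecx|\vecy} | \vecy } \bigr\rfloor$, where $\psifun{\cdot|Y^n}$ is the reconstruction function associated with $\hatguessD {}{}{\cdot|Y^n}$. The list assigned to a value $s$ consists of all $\hat\vecx$ with $\bigl\lfloor \log \hatguessD {}{}{\hat\vecx|\vecy} \bigr\rfloor = s$; it contains the admissible reconstruction $\psifun{\vecx|\vecy}$ by~\eqref{eq:psiDefinition} (so~\eqref{eq:decRDStoch} holds), and Fact~\ref{fa:1} together with the bijectivity of $\hatguessD {}{}{\cdot|\vecy}$ bounds its size by $\hatguessD {}{}{\psifun{\vecx|\vecy}|\vecy} = \guessD \Delta {}{\vecx|\vecy}$, whence~\eqref{eq:guessToListV1RD} upon taking $\rho$-th moments. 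Either route is a few lines.
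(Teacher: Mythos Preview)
Your proposal is correct and matches the paper's approach exactly: the paper states that this corollary ``results from the choice $\omega = 1$ in Theorem~\ref{th:relGuessEncRD},'' which is precisely your derivation. Your verification of condition~\eqref{eq:relCardMandVRD} and the simplification of the ceiling are spelled out in more detail than the paper bothers with, but the argument is the same.
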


\subsection{Distributed-Storage Systems} \label{sec:distStorRD}

We consider the following rate-distortion version of the model in Section~\ref{sec:problemStatement}. Upon observing $( X^n,Y^n ) = ( \vecx,\vecy )$, Alice draws the hints $M_1$ and $M_2$ from the finite set $\setM_1 \times \setM_2$ according to some conditional PMF
\begin{equation}
\distof{M_1 = m_1, M_1 = m_1 | X^n = \vecx, Y^n = \vecy }. \label{eq:aliceEncPMFRD}
\end{equation}
We assume here that $$\setM_1 = \{ 1, \ldots, 2^{n R_1} \}, \quad \setM_2 = \{ 1, \ldots, 2^{n R_2} \},$$ where $( R_1, R_2 )$ is a nonnegative pair corresponding to the rate. Bob sees both hints. In the guessing version he guesses a reconstruction of $X^n$ that satisfies \eqref{eq:fidCrit} based on the hints and the side information $Y^n$, and Bob's ambiguity about $X^n$ is thus
\begin{equation}
\mathscr A_{\textnormal B}^{(\textnormal g)} ( P^n_{X,Y}, \Delta ) = \min_{\hatguessD {}{}{\cdot|Y^n,M_1,M_2}} \bigEx {}{\guessD \Delta {}{X^n|Y^n,M_1,M_2}^\rho}, \label{eq:bobAmbiguityGuessingRD}
\end{equation}
where $\guessD \Delta {}{\cdot|Y^n,M_1,M_2}$ is the success function corresponding to the guessing function $\hatguessD {}{}{\cdot|Y^n,M_1,M_2}$. In the list version Bob's ambiguity about $X^n$ is
\begin{equation}
\mathscr A_{\textnormal B}^{(\textnormal l)} ( P^n_{X,Y}, \Delta ) = \BigEx {}{\bigl| \setL^{Y^n}_{M_1,M_1} \bigr|^\rho}, \label{eq:bobAmbiguityListRD}
\end{equation}
where $\bigl\{ \setL^{\vecy}_{m_1,m_1} \bigr\}$ are the lists of a decoder corresponding to the stochastic encoder \eqref{eq:aliceEncPMFRD} and thus satisfy \eqref{eq:decRDStoch} (when we substitute $(M_1,M_2)$ for $Z$, $(m_1,m_2)$ for $z$, and $\setM_1 \times \setM_2$ for $\setZ$ in \eqref{eq:decRDStoch}), so
\begin{IEEEeqnarray}{l}
P_{X,Y}^n ( \vecx, \vecy ) \, \distof{M_1 = m_1, M_1 = m_1 | X^n = \vecx, Y^n = \vecy } > 0 \nonumber \\
\quad \implies \exists \, \hat \vecx \in \setL^{\vecy}_{m_1,m_2} \colon d^{(n)} (\vecx, \hat \vecx) \leq \Delta.
\end{IEEEeqnarray}
Eve sees one of the hints and guesses a reconstruction of $X^n$ that satisfies \eqref{eq:fidCrit} based on this hint and the side information $Y$. We assume that an accomplice of hers chooses the hint so that her guessing efforts are minimum. In both versions Eve's ambiguity about $X$ is thus
\begin{IEEEeqnarray}{l}
\!\!\!\!\!\!\!\!\!\!\!\! \mathscr A_{\textnormal E} ( P^n_{X,Y}, \Delta ) = \min_{\hatguessD {}{(1)}{\cdot|Y^n,M_1}, \, \hatguessD {}{(2)}{\cdot|Y^n,M_2}} \BigEx {}{\guessD \Delta {(1)}{ X^n | Y^n, M_1 }^\rho \wedge \guessD \Delta {(2)}{ X^n | Y^n, M_2 }^\rho}, \label{eq:distEncSecrecyMeasureRD}
\end{IEEEeqnarray}
where $\guessD \Delta {(1)}{\cdot|Y^n,M_1}$ and $\guessD \Delta {(2)}{\cdot|Y^n,M_2}$ are the success functions corresponding to the guessing functions $\hatguessD {}{(1)}{\cdot|Y^n,M_1}$ and $\hatguessD {}{(2)}{\cdot|Y^n,M_2}$, respectively.

For both versions of the problem, we shall characterize the largest exponential growth that we can guarantee for Eve's ambiguity subject to the constraint that Bob's ambiguity tend to one, i.e., we shall characterize the privacy-exponent $\overbar{E_{\textnormal E}}$ defined in Definition~\ref{de:EvesAmbig}. In addition, we shall also characterize the largest exponential growth that we can guarantee for Eve's ambiguity in case Bob's ambiguity is allowed to grow exponentially with a given normalized (by $n$) exponent $E_{\textnormal B} \geq 0$, i.e., we shall characterize the modest privacy-exponent $\overbar{E_{\text E}^{\textnormal m} ( E_{\textnormal B} )}$ defined in Definition~\ref{de:EvesAmbigModest}. Like the model studied in Section~\ref{sec:problemStatement}, the privacy-exponent and the modest privacy-exponent turn out not to depend on the version of the problem, and $\mathscr A_{\textnormal B}$ can thus stand for either $\mathscr A_{\textnormal B}^{( \textnormal g )}$ or $\mathscr A_{\textnormal B}^{( \textnormal l )}$.

Our results are presented in the following theorem, which generalizes Theorems~\ref{th:asympDistStor} and \ref{th:asympDistStorEB}. To prove the theorem, we combine the proofs of Theorems~\ref{th:distStorGuess} and \ref{th:distStor} with the proofs of Theorems~\ref{th:asympDistStor} and \ref{th:asympDistStorEB}. Thereby, we replace the results of Section~\ref{sec:listsAndGuesses} with their rate-distortion versions, i.e., with the results of Section~\ref{sec:listsAndGuessesRD}. The main difficulty in adapting the proofs to the rate-distortion version of the problem is that Claim~1 in the proof of Theorems~\ref{th:distStorGuess} and \ref{th:distStor} need not hold, because Eve need not guess $X^n$ but only a reconstruction of it that satisfies \eqref{eq:fidCrit}.

\begin{theorem} \label{th:secrecyRD}
Given any nonnegative rate-pair $( R_1, R_2 )$ and distortion-level $\Delta \geq 0$, the privacy exponent is
\begin{IEEEeqnarray}{l}
\overbar{E_{\textnormal E}} = \begin{cases} \rho \Bigl( R_1 \wedge R_2 \wedge \RDexp \Bigr) & R_1 + R_2 > \RDexp, \\ - \infty & R_1 + R_2 < \RDexp; \end{cases} \label{eq:privExpBob1RD}
\end{IEEEeqnarray}
and the modest privacy exponent for $E_{\textnormal B} \geq 0$ is
\begin{IEEEeqnarray}{l}
\overbar{E_{\textnormal E}^{\textnormal m} ( E_{\textnormal B} )} = \begin{cases} \bigl( \rho ( R_1 \wedge R_2 ) + E_{\textnormal B} \bigr) \wedge \rho \RDexp &\\ &\hspace{-2cm} R_1 + R_2 \geq \RDexp - \rho^{-1} E_{\textnormal B}, \\ - \infty &\hspace{-2cm} R_1 + R_2 < \RDexp - \rho^{-1} E_{\textnormal B}. \end{cases} \label{eq:privExpBobEBRD}
\end{IEEEeqnarray}
\end{theorem}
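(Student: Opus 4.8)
The plan is to run in parallel the argument behind the proofs of Theorems~\ref{th:distStorGuess} and \ref{th:distStor} and the rate‑allocation argument behind the proofs of Theorems~\ref{th:asympDistStor} and \ref{th:asympDistStorEB}, systematically replacing the guessing/task‑encoding results of Section~\ref{sec:listsAndGuesses} by their rate‑distortion counterparts from Section~\ref{sec:listsAndGuessesRD}. Since $\RDexp$ plays in the lossy regime the role played by $\renent{\tirho}{\rndvecX|\rndvecY}$ in the lossless one (compare Corollary~\ref{co:equivBunteResultGuessingRD} and Corollary~\ref{co:optTaskEncRD} with Corollary~\ref{co:equivBunteResultGuessing} and Theorem~\ref{th:optTaskEnc}), the target exponents are exactly \eqref{eq:secStrongConst} and \eqref{eq:secModConst} with that substitution.

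\emph{Converse.} If $R_1+R_2<\RDexp$ (resp.\ $R_1+R_2<\RDexp-\rho^{-1}E_{\textnormal B}$), then Bob's observation $(Y^n,M_1,M_2)$ is a description of $X^n$ of rate at most $R_1+R_2$, so by the converse part of Corollary~\ref{co:equivBunteResultGuessingRD} (guessing version) and of Corollary~\ref{co:optTaskEncRD} (list version, invoking \eqref{eq:lbOptTaskEncRD}), $\mathscr A_{\textnormal B}(P^n_{X,Y},\Delta)$ grows like $2^{n\rho(\RDexp-R_1-R_2)}$; hence \eqref{eq:bobAmbTo1} (resp.\ \eqref{eq:bobAmbToEB}) fails and the (modest) privacy‑exponent is $-\infty$. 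When $R_1+R_2$ is large, first note that, ignoring her hint and guessing a reconstruction of $X^n$ from $Y^n$ alone, Eve achieves $\limsup_n \frac1n\log\mathscr A_{\textnormal E}\leq\rho\RDexp$ by Theorem~\ref{th:optGuessFunRD}; second, applying the converse part of Corollary~\ref{co:impGuessRD} with side information $(Y^n,M_k)$ and additional description the hint $M_l$ that Eve does \emph{not} see shows $\mathscr A_{\textnormal E}\leq(\card{\setM_1}\wedge\card{\setM_2})^\rho\,\mathscr A_{\textnormal B}$ in both versions (for the list version one first passes from $\mathscr A_{\textnormal B}^{(\textnormal l)}$ to Bob's guessing ambiguity by running through his list). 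Combining the two bounds, taking $\frac1n\log(\cdot)$, and using $\limsup_n\frac1n\log\mathscr A_{\textnormal B}\leq0$ (resp.\ $\leq E_{\textnormal B}$) gives the upper bounds on $\overbar{E_{\textnormal E}}$ and $\overbar{E_{\textnormal E}^{\textnormal m}(E_{\textnormal B})}$.

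\emph{Achievability.} Pick a nonnegative triple $(R_{\textnormal s},\tilde R_1,\tilde R_2)$ exactly as in the proofs of Theorems~\ref{th:asympDistStor}/\ref{th:asympDistStorEB} (three cases according to the size of $R_1\wedge R_2$), with $\renent{\tirho}{\rndvecX|\rndvecY}$ replaced throughout by $\RDexp$; as there, $R_{\textnormal s}+\tilde R_1+\tilde R_2>\RDexp$ (resp.\ $\geq\RDexp-\rho^{-1}E_{\textnormal B}$), $R_{\textnormal s}+\tilde R_k\leq R_k$, and $\tilde R_1\vee\tilde R_2$ equals $\RDexp-(R_1\wedge R_2\wedge\RDexp)$ up to $\epsilon$. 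Put $c_{\textnormal s}=2^{nR_{\textnormal s}}$, $c_k=2^{n\tilde R_k}$, and use the direct part of Corollary~\ref{co:equivBunteResultGuessingRD} (guessing) or Corollary~\ref{co:optTaskEncRD} (list) with $\setZ_n=\setV_{\textnormal s}\times\setV_1\times\setV_2$, $\card{\setV_{\textnormal s}}=c_{\textnormal s}$, $\card{\setV_k}=c_k$, to have Alice map $(X^n,Y^n)$ deterministically to $(V_{\textnormal s},V_1,V_2)$ so that the corresponding decoder, given $(Y^n,V_{\textnormal s},V_1,V_2)$, drives Bob's guessing ambiguity to one (resp.\ his list‑size moment below $2^{nE_{\textnormal B}}$). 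Draw $U$ uniformly over $\setV_{\textnormal s}$, independently of $(X^n,Y^n,V_{\textnormal s},V_1,V_2)$, and set $M_1=(V_{\textnormal s}\oplus_{c_{\textnormal s}}U,\,V_1)$, $M_2=(U,\,V_2)$. Bob recovers $(V_{\textnormal s},V_1,V_2)$ and is fine; it remains to lower‑bound $\mathscr A_{\textnormal E}(P^n_{X,Y},\Delta)$.

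\emph{Main obstacle.} In the lossless proof one first argues (Claim~1) that forcing Eve to guess $(X^n,U)$ rather than $X^n$ leaves her ambiguity unchanged, because $U$ is a deterministic function of $(X^n,Y^n,M_k)$; this inflates the target R\'enyi entropy by $\log c_{\textnormal s}$, which then cancels the factor $c_{\textnormal s}$ picked up when the two single‑hint guessing functions are merged into one (Claim~2). Here Claim~1 fails: a reconstruction obeying \eqref{eq:fidCrit} need not pin down $X^n$, hence need not determine $U$, and guessing a reconstruction \emph{together with} $U$ may be strictly harder than guessing a reconstruction alone. The way around it is to observe that the padded component $\widetilde V_{\textnormal s}\triangleq V_{\textnormal s}\oplus_{c_{\textnormal s}}U$ is independent of $(X^n,Y^n,V_1)$ and that $U$ is independent of $(X^n,Y^n,V_2)$, so these components are worthless to Eve for guessing a reconstruction of $X^n$: the optimal guessing function given $(Y^n,M_1)$ (resp.\ $(Y^n,M_2)$) coincides with the one given $(Y^n,V_1)$ (resp.\ $(Y^n,V_2)$). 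Consequently the rate‑distortion merging argument can be run with only a selector $I\in\{1,2\}$ and $V_I$ — a deterministic function of $(X^n,Y^n)$ taking at most $c_1+c_2$ values — so the factor $c_{\textnormal s}$ is never incurred and need not be recovered; the converse part of Corollary~\ref{co:equivBunteResultGuessingRD} then yields $\mathscr A_{\textnormal E}(P^n_{X,Y},\Delta)\gtrsim 2^{n\rho(\RDexp-\tilde R_1\vee\tilde R_2)}$, which by the rate choice is $2^{n\rho((R_1\wedge R_2\wedge\RDexp)-\epsilon)}$ (resp.\ $2^{n(((\rho(R_1\wedge R_2)+E_{\textnormal B})\wedge\rho\RDexp)-\epsilon)}$); letting $\epsilon\downarrow0$ finishes. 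The delicate point, which I expect to be the real work, is to make this merging rigorous for \emph{arbitrary} single‑hint guessing functions under the pointwise $\wedge$ in \eqref{eq:distEncSecrecyMeasureRD}: one must ensure the selector $I$ can be taken to depend on $(X^n,Y^n)$ alone (equivalently, that the padded parts can be discarded before merging) without Eve losing the ability to realise the pointwise minimum of her two guessing counts.
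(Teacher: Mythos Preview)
Your converse and rate-allocation steps are fine and match the paper. You have also correctly located the obstacle: Claim~1 from the lossless proof (that Eve can be forced to guess $(X^n,U)$ at no extra cost) breaks in the lossy setting.

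However, your proposed workaround---discarding the padded parts and merging only via $(I,V_I)$---does not work. It is true that $\widetilde V_{\textnormal s}$ is independent of $(X^n,Y^n,V_1)$, so the \emph{marginally} optimal $\hatguessD{}{(1)}{\cdot|Y^n,M_1}$ ignores $\widetilde V_{\textnormal s}$; but $\mathscr A_{\textnormal E}$ in \eqref{eq:distEncSecrecyMeasureRD} is a minimum over \emph{all} pairs $(G_1,G_2)$ under a pointwise $\wedge$, and for that joint problem Eve can exploit the padded parts. A small counterexample: take $n=1$, $Y$ null, $\Delta=0$, $X$ uniform on $\{0,1\}$, $\tilde R_1=\tilde R_2=0$ (so $V_1,V_2$ are null), $V_{\textnormal s}=X$, and $U$ uniform on $\{0,1\}$ independent of $X$. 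Then $M_1=X\oplus U$ and $M_2=U$ are each independent of $X$, yet with $G_1(\cdot|m_1)$ guessing $m_1$ first and $G_2(\cdot|m_2)$ guessing $m_2$ first one computes $\Ex{}{G_1(X|M_1)\wedge G_2(X|M_2)}=5/4$, whereas any pair $(G_1',G_2')$ that ignores the padded parts is a fixed ordering and gives $3/2$. So restricting to $(I,V_I)$ with $I$ a function of $(X^n,Y^n)$ yields a lower bound on $\mathscr A_{\textnormal E}$ that is in general false; the ``delicate point'' you flagged is an actual failure, not just a technicality.

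The paper's fix is different and does not try to discard the padded parts. It keeps the full merged description $(I,\hat U,\hat V)$ of size $c_{\textnormal s}(c_1+c_2)$, lets $\hat X^n=\psifun{X^n|Y^n,I,\hat U,\hat V}$ be Eve's reconstruction, and lower-bounds $\mathscr A_{\textnormal E}$ by $\min_{G}\Ex{}{\guess{}{\hat X^n,I,\hat U|Y^n,I,\hat U,\hat V}^\rho}$, to which the \emph{lossless} Corollary~\ref{co:impGuess} and Theorem~\ref{th:optGuessFun} apply. The factor $c_{\textnormal s}$ is then recovered not via Claim~1 but by lower-bounding $\renent{\tirho}{\hat X^n,I,\hat U|Y^n}$ through the variational formula for conditional R\'enyi entropy: for any tilted law $Q_{X,Y}$ one gets $H$ minus $\rho^{-1}D$, and inside $H$ the key step is that $U$ is deterministic given $(X^n,Y^n,I,\hat U)$ (since $V_{\textnormal s}$ is a function of $(X^n,Y^n)$), which yields an additive $\log c_{\textnormal s}$; the remaining mutual information is lower-bounded by $n\,R_{X|Y}(Q_{X,Y},\Delta)$ via the conditional rate-distortion theorem, because $\hat X^n$ satisfies the fidelity criterion. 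Optimizing over $Q_{X,Y}$ produces $n\,\RDexp+\log c_{\textnormal s}$, and the $\log c_{\textnormal s}$ cancels the one you were worried about.
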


\begin{proof}
See Appendix~\ref{app:pfSecrecyRD}.
\end{proof}

\section{Summary} \label{sec:conclusion}

This paper studies a distributed-storage system whose encoder, Alice,
observes some sensitive information $X$ (e.g., a password) that takes
values in a finite set $\setX$ and describes it using two hints, which
she stores in different locations. The legitimate receiver, Bob, sees
both hints, and---depending on the version of the problem---must
either guess $X$ (the guessing version) or must form a list that is
guaranteed to contain $X$ (the list version). The eavesdropper, Alice,
sees only one of the hints; an accomplice of hers controls
which. Based on her observation, Eve wishes to guess $X$. For an
arbitrary $\rho > 0$, Bob's and Eve's ambiguity about $X$ are
quantified as follows: In the guessing version we quantify Bob's
ambiguity by the $\rho$-th moment of the number of guesses that he
needs to guess $X$, and in the list version we quantify Bob's
ambiguity by the $\rho$-th moment of the size of the list that he must
form. In both versions we quantify Eve's ambiguity by the $\rho$-th
moment of the number of guesses that she needs to guess $X$. For each
version this paper characterizes---up to polylogarithmic factors of
$|\setX|$---the largest ambiguity that we can guarantee that Eve will
have subject to a given upper bound on the ambiguity that Bob may
have. Our results imply that, if the hint that is available to Bob but
not to Eve can assume $\sigma$ realizations, then---up to
polylogarithmic factors of $| \setX |$---the ambiguity that we can
guarantee that Eve will have either exceeds the ambiguity that Bob may
have by a factor of $\sigma^\rho$ or---in case the hint that Eve
observes reveals no information about $X$---is as large as it can
be. This holds even if we require that---up to polylogarithmic factors
of $| \setX |$---Bob's ambiguity be as small as it can be. The paper
also discusses extensions to a distributed-storage system that is
robust against disk failures and a rate-distortion version of the
problem.

The results for the guessing and the list version are remarkably similar: every pair of ambiguities for Bob and Eve that is achievable in the guessing version is---up to polylogarithmic factors of $\card \setX$---also achievable in the list version and vice versa. This can be explained by the close relation between Arikan's guessing problem \cite{arikan96} and Bunte and Lapidoth's task-encoding problem \cite{buntelapidoth14} that this paper reveals. The relation can be used to give alternative proofs of \cite[Theorems~I.2 and~VI.2]{buntelapidoth14} as well as the direct part of \cite[Theorem~I.1]{buntelapidoth14fb}. It holds also for the rate-distortion versions of the guessing and task-encoding problems, which were introduced in \cite{arikanmerhav98, buntelapidoth14}; and in this case it can be used to give an alternative proof of \cite[Theorem~VII.1]{buntelapidoth14}.


\begin{appendix}

\section{A Proof of Corollary~\ref{prop:intGuessing}}\label{app:pfIntGuessing}

\begin{proof}
The converse results readily follow from the converse results of Theorem~\ref{th:distStorGuess}: \eqref{eq:BobMomDistStorConvGuess}  implies \eqref{eq:intUBLBConv} and \eqref{eq:EveMomDistStorConvGuess} implies \eqref{eq:intEveAmbUB}. The proof of the achievability results \eqref{eq:intBobAmbUB}--\eqref{eq:intEveAmbLB} is more involved. Suppose that \eqref{eq:intUBLB} holds. To show that there is a choice of the conditional PMF in \eqref{eq:aliceEncPMF} for which \eqref{eq:intBobAmbUB}--\eqref{eq:intEveAmbLB} hold, we will exhibit a judicious choice of the triple $( c_{\textnormal s}, c_1, c_2 ) \in \naturals^3$ for which \eqref{eq:intBobAmbUB} follows from \eqref{eq:BobMomDistStorDirGuess} and \eqref{eq:intEveAmbLB} from \eqref{eq:EveMomDistStorDirGuess}. By possibly relabeling the hints, we can assume w.l.g.\ that $| \setM_2 | = | \setM_1 | \wedge | \setM_2 |$. Our choice of $( c_{\textnormal s}, c_1, c_2 )$ depends on the constant $\mathscr U_{\textnormal B}$ and the cardinalities $| \setM_1 |$ and $| \setM_2 |$. Specifically, we distinguish between three different cases.

The first case is the case where
\begin{IEEEeqnarray}{l}
\mathscr U_{\text B} \geq 1 + 2^{\rho ( \renent {\tirho}{X|Y} - \log |\setM_2| + 1 )}. \label{eq:proofPrIntGuessingCase1}
\end{IEEEeqnarray}
In this case we choose
\begin{IEEEeqnarray}{l}
c_{\textnormal s} = | \setM_2 | \textrm{ and }  c_1 = c_2 = 1.
\end{IEEEeqnarray}
Note that this choice satisfies \eqref{bl:condCsC1C2Guess}. Consequently, \eqref{eq:BobMomDistStorDirGuess} implies that Bob's ambiguity satisfies \eqref{eq:intBobAmbUB}:
\begin{IEEEeqnarray}{rCl}
\mathscr A^{(\textnormal g)}_{\textnormal B} ( P_{X,Y} ) & < & 1 + 2^{\rho ( \renent {\tirho}{X|Y} - \log | \setM_2 | + 1 )} \\
& \leq & \mathscr U_{\textnormal B},
\end{IEEEeqnarray}
where the second inequality holds by \eqref{eq:proofPrIntGuessingCase1}. Moreover, it follows from \eqref{eq:EveMomDistStorDirGuess} that Eve's ambiguity satisfies \eqref{eq:intEveAmbLB}:
\begin{IEEEeqnarray}{rCl}
\mathscr A_{\textnormal E} ( P_{X,Y} ) & \geq & ( 1 + \ln | \setX | )^{-\rho} 2^{\rho ( \renent {\tirho}{X|Y} - \log 2 )} \\
& = & 2^{-\rho} ( 1 + \ln | \setX | )^{-\rho} 2^{\rho \renent {\tirho}{X|Y}}.
\end{IEEEeqnarray}

The second case is the case where
\begin{subequations} \label{bl:proofPrIntGuessingCase2}
\begin{IEEEeqnarray}{rCl}
\mathscr U_{\textnormal B} & \geq & 1 + \bigl\lfloor | \setM_1 | / | \setM_2 | \bigr\rfloor^{-\rho} 2^{\rho (\renent {\tirho}{X|Y} - \log |\setM_2| + 1)} \label{eq:proofPrIntGuessingCase2LB}
\end{IEEEeqnarray}
and
\begin{IEEEeqnarray}{rCl}
\mathscr U_{\textnormal B} & < & 1 + 2^{\rho ( \renent {\tirho}{X|Y} - \log |\setM_2| + 1 ) }.  \label{eq:proofPrIntGuessingCase2UB}
\end{IEEEeqnarray}
\end{subequations}
In this case we choose
\begin{IEEEeqnarray}{l}
c_{\textnormal s} = | \setM_2 |, \quad c_1 = \Bigl\lceil 2^{\renent {\tirho}{X|Y} - \log | \setM_2 | + 1 - \rho^{-1} \log ( \mathscr U_{\text B} - 1 )} \Bigr\rceil, \quad c_2 = 1. \label{eq:proofPrIntGuessingCase2DefTriple}
\end{IEEEeqnarray}
By \eqref{eq:proofPrIntGuessingCase2LB}, this choice satisfies \eqref{bl:condCsC1C2Guess}. Moreover, note that
\begin{IEEEeqnarray}{rCl}
c_{\textnormal s} c_1 c_2 & \geq & | \setM_2 | \, 2^{\renent {\tirho}{X|Y} - \log | \setM_2 | + 1 - \rho^{-1} \log ( \mathscr U_{\text B} - 1 )} \\
& = & 2^{\renent {\tirho}{X|Y} + 1 - \rho^{-1} \log ( \mathscr U_{\text B} - 1 )}. \label{eq:proofPrIntGuessingCase2ProdTriple}
\end{IEEEeqnarray}
Consequently, it follows from \eqref{eq:BobMomDistStorDirGuess} that Bob's ambiguity satisfies \eqref{eq:intBobAmbUB}:
\begin{IEEEeqnarray}{rCl}
\mathscr A^{(\textnormal g)}_{\textnormal B} ( P_{X,Y} ) & < & 1 + 2^{\rho ( \renent {\tirho}{X|Y} - ( \renent {\tirho}{X|Y} + 1 - \rho^{-1} \log ( \mathscr U_{\text B} - 1 ) ) + 1 )} \\
& = & \mathscr U_{\textnormal B}.
\end{IEEEeqnarray}
From \eqref{eq:proofPrIntGuessingCase2UB} it follows that
\begin{IEEEeqnarray}{l} \label{eq:boundCase2M2}
1 < 2^{\renent {\tirho}{X|Y} - \log | \setM_2 | + 1 - \rho^{-1} \log ( \mathscr U_{\textnormal B} - 1 )}.
\end{IEEEeqnarray}
Note that, for every $\xi > 1$, it holds that $\lceil \xi \rceil < 2 \xi$. Consequently, \eqref{eq:proofPrIntGuessingCase2DefTriple} and \eqref{eq:boundCase2M2} imply that 
\begin{IEEEeqnarray}{rCl}
c_1 + c_2 & = & c_1 + 1 \\
& < & 2 c_1 \\ \label{eq:bounfCase2C1}
& < & 2^{\renent {\tirho}{X|Y} - \log | \setM_2 | + 3 - \rho^{-1} \log ( \mathscr U_{\textnormal B} - 1 ) }.
\end{IEEEeqnarray}	 
Eve's ambiguity satisfies \eqref{eq:intEveAmbLB}, because from~\eqref{eq:EveMomDistStorDirGuess} and~\eqref{eq:bounfCase2C1} it follows that:
\begin{IEEEeqnarray}{rCl}
\mathscr A_{\textnormal E} ( P_{X,Y} ) & > & \bigl( 1 + \ln | \setX | \bigr)^{-\rho} 2^{\rho ( \renent {\tirho}{X|Y} - ( \renent {\tirho}{X|Y} - \log | \setM_2 | + 3 - \rho^{-1} \log ( \mathscr U_{\textnormal B} - 1 ) ) )} \\
& = & 2^{-3 \rho} \bigl( 1 + \ln | \setX | \bigr)^{-\rho} |\setM_2|^{\rho} ( \mathscr U_{\textnormal B} - 1 ) \\
& = & 2^{-3 \rho} \bigl( 1 + \ln | \setX | \bigr)^{-\rho} \bigl( |\setM_1| \wedge | \setM_2 | \bigr)^{\rho} ( \mathscr U_{\textnormal B} - 1 ),
\end{IEEEeqnarray}
where the last equality holds by the assumption that $|\setM_2| = |\setM_1| \wedge | \setM_2 |$.

The third and last case is the case where 
\begin{IEEEeqnarray}{l}
\mathscr U_{\text B} < 1 + \bigl\lfloor |\setM_1| / |\setM_2| \bigr\rfloor^{-\rho} 2^{\rho ( \renent {\tirho}{X|Y} - \log |\setM_2| + 1 ) }. \label{eq:proofPrIntGuessingCase3}
\end{IEEEeqnarray}
In this case we let $k^\star \in \naturals$ be the largest positive integer $k$ for which
\begin{IEEEeqnarray}{l}
1 + 2^\rho k^{-\rho} \bigl\lfloor |\setM_1| / k \bigr\rfloor^{-\rho} \bigl\lfloor |\setM_2| / k \bigr\rfloor^{-\rho} 2^{\rho \renent {\tirho}{X|Y} } \leq \mathscr U_{\textnormal B}, \label{eq:intkForcs}
\end{IEEEeqnarray}
and we choose
\begin{IEEEeqnarray}{l}
c_{\textnormal s} = k^\star, \quad c_1 = \bigl\lfloor |\setM_1| / k^\star \bigr\rfloor, \quad c_2 = \bigl\lfloor |\setM_2| / k^\star \bigr\rfloor. \label{eq:proofPrIntGuessingCase3DefTriple}
\end{IEEEeqnarray}
The existence of such a $k^\star$ follows from \eqref{eq:intUBLB}, which implies that \eqref{eq:intkForcs} holds when we substitute $1$ for $k$. The choice in~\eqref{eq:proofPrIntGuessingCase3DefTriple} satisfies \eqref{bl:condCsC1C2Guess}. Consequently, \eqref{eq:BobMomDistStorDirGuess} implies that Bob's ambiguity satisfies \eqref{eq:intBobAmbUB}:
\begin{IEEEeqnarray}{rCl}
\mathscr A^{(\textnormal g)}_{\textnormal B} ( P_{X,Y} ) & < & 1 + 2^{\rho ( \renent {\tirho}{X|Y} - \log ( c_{\textnormal s} \lfloor |\setM_1| / c_{\textnormal s} \rfloor \lfloor |\setM_2| / c_{\textnormal s} \rfloor ) + 1 )} \\
& \leq & \mathscr U_{\textnormal B},
\end{IEEEeqnarray}
where in the second inequality we used that \eqref{eq:intkForcs} holds when we substitute $c_{\textnormal s}$ for $k$. By the choice of $c_{\textnormal s}$ in \eqref{eq:proofPrIntGuessingCase3DefTriple} we also have
\begin{IEEEeqnarray}{rCl}
2^{-\rho \renent {\tirho}{X|Y}} ( \mathscr U_{\textnormal B} - 1 ) & \stackrel{(a)}< & 2^{\rho} ( c_{\textnormal s} + 1 )^{-\rho} \biggl\lfloor \frac{|\setM_1|}{c_{\textnormal s} + 1} \biggr\rfloor^{-\rho} \biggl\lfloor \frac{| \setM_2 |}{c_{\textnormal s} + 1} \biggr\rfloor^{-\rho} \\
& \stackrel{(b)}< & 2^{3 \rho} \biggl( \frac{c_{\textnormal s} + 1}{|\setM_1| \, |\setM_2|} \biggr)^{\!\! \rho}\\
& \stackrel{(c)}\leq & 2^{4 \rho} \biggl( \frac{c_{\textnormal s}}{|\setM_1| \, |\setM_2|} \biggr)^{\!\! \rho}, \label{eq:boundForC1}
\end{IEEEeqnarray}
where $(a)$ holds because $c_{\textnormal s}$ is the largest positive integer $k$ for which \eqref{eq:intkForcs} holds and consequently $$\mathscr U_{\textnormal B} < 1 + 2^\rho (c_{\textnormal s} + 1)^{-\rho} \biggl\lfloor \frac{|\setM_1|}{c_{\textnormal s} + 1} \biggr\rfloor^{-\rho} \biggl\lfloor \frac{| \setM_2 |}{c_{\textnormal s} + 1} \biggr\rfloor^{-\rho} 2^{\rho \renent {\tirho}{X|Y} };$$ $(b)$ holds because \eqref{eq:proofPrIntGuessingCase3} and the fact that \eqref{eq:intkForcs} holds for every positive integer $k < c_{\textnormal s} + 1$ imply that $|\setM_2| \geq c_{\textnormal s} + 1$ and consequently that $|\setM_1| \wedge |\setM_2| \geq c_{\textnormal s} + 1$, and because $$\xi / 2 < \lfloor \xi \rfloor, \quad \xi \geq 1;$$ and $(c)$ holds because $c_{\textnormal s} \geq 1$ and consequently $c_{\textnormal s} + 1 \leq 2 c_{\textnormal s}$. From \eqref{eq:boundForC1} we obtain that
\begin{IEEEeqnarray}{rCl}
( c_1 + c_2 )^{-\rho} & \stackrel{(a)}= & \Bigl( \bigl\lfloor |\setM_1| / c_{\textnormal s} \bigr\rfloor + \bigl\lfloor |\setM_2| / c_{\textnormal s} \bigr\rfloor \Bigr)^{\! -\rho} \\
& \stackrel{(b)}\geq & 2^{-\rho} \biggl( \frac{c_{\textnormal s}}{|\setM_1|} \biggr)^{\!\! \rho} \\
& \stackrel{(c)}> & 2^{-5 \rho} |\setM_2|^\rho \, 2^{-\rho \renent {\tirho}{X|Y}} ( \mathscr U_{\textnormal B} - 1 ), \label{eq:boundForC1PlusC2}
\end{IEEEeqnarray}
where $(a)$ holds by \eqref{eq:proofPrIntGuessingCase3DefTriple}; $(b)$ holds by the assumption that $|\setM_2| \leq |\setM_1|$; and $(c)$ holds by \eqref{eq:boundForC1}. From \eqref{eq:boundForC1PlusC2} and \eqref{eq:EveMomDistStorDirGuess} we obtain that Eve's ambiguity satisfies \eqref{eq:intEveAmbLB}:
\begin{IEEEeqnarray}{rCl}
\mathscr A_{\textnormal E} ( P_{X,Y} ) & > & 2^{-5 \rho} \bigl( 1 + \ln | \setX | \bigr)^{-\rho} | \setM_2 |^{\rho} ( \mathscr U_{\textnormal B} - 1 ) \\
& = & 2^{-5 \rho} \bigl( 1 + \ln | \setX | \bigr)^{-\rho} \bigl( | \setM_1 | \wedge | \setM_2 | \bigr)^{\rho} ( \mathscr U_{\textnormal B} - 1 ),
\end{IEEEeqnarray}
where the last equality holds by the assumption that $| \setM_2 | = | \setM_1 | \wedge | \setM_2 |$.
\end{proof}

\section{A Proof of Corollary~\ref{prop:intList}}\label{app:pfIntList}

\begin{proof}
The converse results readily follow from the converse results of Theorem~\ref{th:distStor}: \eqref{eq:BobMomDistStorConv} implies \eqref{eq:intUBLBConvList}, and \eqref{eq:EveMomDistStorConv} implies \eqref{eq:intEveAmbUBList}. The proof of the achievability results \eqref{eq:intBobAmbUBList}--\eqref{eq:intEveAmbLBList} is more involved. Suppose that $|\setM_1| \, |\setM_2| > \log | \setX | + 2$ and that \eqref{eq:intUBLBList} holds. To show that there is a choice of the conditional PMF in \eqref{eq:aliceEncPMF} for which \eqref{eq:intBobAmbUBList}--\eqref{eq:intEveAmbLBList} hold, we will exhibit a judicious choice of the triple $( c_{\textnormal s}, c_1, c_2 ) \in \naturals^3$ for which \eqref{eq:intBobAmbUBList} follows from \eqref{eq:BobMomDistStorDir} and \eqref{eq:intEveAmbLBList} from \eqref{eq:EveMomDistStorDir}. By possibly relabeling the hints, we can assume w.l.g.\ that $| \setM_2 | = | \setM_1 | \wedge | \setM_2 |$. Our choice of $( c_{\textnormal s}, c_1, c_2 )$ depends on $\mathscr U_{\textnormal B}$, $| \setM_1 |$, and $| \setM_2 |$; specifically, we distinguish three different cases.

The first case is the case where 
\begin{IEEEeqnarray}{l}
\mathscr U_{\textnormal B} \geq 1 + 2^{\rho ( \renent {\tirho}{X|Y} - \log ( | \setM_2 | - \log | \setX | - 2 ) + 2 ) }. \label{eq:proofPrIntListCase1}
\end{IEEEeqnarray}
In this case we choose
\begin{IEEEeqnarray}{l}
c_{\textnormal s} = | \setM_2 |, \quad c_1 = c_2 = 1.
\end{IEEEeqnarray}
Note that this choice satisfies \eqref{bl:condCsC1C2}. Consequently, \eqref{eq:BobMomDistStorDir} implies that Bob's ambiguity satisfies \eqref{eq:intBobAmbUBList}, because
\begin{IEEEeqnarray}{rCl}
\mathscr A^{(\textnormal l)}_{\textnormal B} ( P_{X,Y} ) & < & 1 + 2^{\rho ( \renent {\tirho}{X|Y} - \log ( | \setM_2 | - \log | \setX | - 2 ) + 2 )} \\
& \leq & \mathscr U_{\textnormal B},
\end{IEEEeqnarray}
where the second inequality holds by \eqref{eq:proofPrIntListCase1}. Moreover, from \eqref{eq:EveMomDistStorDir} it follows that Eve's ambiguity satisfies \eqref{eq:intEveAmbLBList}:
\begin{IEEEeqnarray}{rCl}
\mathscr A_{\textnormal E} ( P_{X,Y} ) & \geq & \bigl( 1 + \ln | \setX | \bigr)^{-\rho} 2^{\rho ( \renent {\tirho}{X|Y} - \log 2 )} \\
& = & 2^{-\rho} \bigl( 1 + \ln | \setX | \bigr)^{-\rho} 2^{\rho \renent {\tirho}{X|Y}}.
\end{IEEEeqnarray}

The second case is the case where 
\begin{subequations} \label{bl:proofPrIntListCase2}
\begin{IEEEeqnarray}{rCl}
\mathscr U_{\textnormal B} & \geq & 1 + 2^{\rho ( \renent {\tirho}{X|Y} - \log ( | \setM_2 | \, \lfloor | \setM_1 | / | \setM_2 | \rfloor - \log | \setX | - 2 ) + 2 ) } \label{eq:proofPrIntListCase2LB}
\end{IEEEeqnarray}
and
\begin{IEEEeqnarray}{rCl}
\mathscr U_{\textnormal B} & < & 1 + 2^{\rho ( \renent {\tirho}{X|Y} - \log ( | \setM_2 | - \log | \setX | - 2 ) + 2 ) }.  \label{eq:proofPrIntListCase2UB}
\end{IEEEeqnarray}
\end{subequations}
In this case we choose
\begin{IEEEeqnarray}{l}
c_{\textnormal s} = | \setM_2 |, \quad c_1 = \Bigl\lceil \bigl( 2^{\renent {\tirho}{X|Y} + 2 - \rho^{-1} \log ( \mathscr U_{\text B} - 1 )} + \log | \setX | + 2 \bigr) / | \setM_2 | \Bigr\rceil, \quad c_2 = 1. \label{eq:proofPrIntListCase2DefTriple}
\end{IEEEeqnarray}
By \eqref{eq:proofPrIntListCase2LB}, this choice satisfies \eqref{bl:condCsC1C2}. Moreover, note that
\begin{IEEEeqnarray}{rCl}
c_{\textnormal s} c_1 c_2 & \geq & 2^{\renent {\tirho}{X|Y} + 2 - \rho^{-1} \log ( \mathscr U_{\text B} - 1 )} + \log | \setX | + 2. \label{eq:proofPrIntListCase2ProdTriple}
\end{IEEEeqnarray}
Consequently, \eqref{eq:BobMomDistStorDir} implies that Bob's ambiguity satisfies \eqref{eq:intBobAmbUBList}, because 
\begin{IEEEeqnarray}{rCl}
\mathscr A^{(\textnormal l)}_{\textnormal B} ( P_{X,Y} ) & < & 1 + 2^{\rho \bigl( \renent {\tirho}{X|Y} - \log \bigl( 2^{\renent {\tirho}{X|Y} + 2 - \rho^{-1} \log ( \mathscr U_{\text B} - 1 ) } \bigr) + 2 \bigr)} \\
& = & \mathscr U_{\textnormal B}.
\end{IEEEeqnarray}
From \eqref{eq:proofPrIntListCase2UB} it follows that
\begin{IEEEeqnarray}{l} \label{eq:secondCaseBobsAmbigIntermdBound}
1 < \Bigl( 2^{\renent {\tirho}{X|Y} + 2 - \rho^{-1} \log ( \mathscr U_{\text B} - 1 )} + \log | \setX | + 2 \Bigr) / | \setM_2 |.
\end{IEEEeqnarray}
Note that, for every $\xi > 1$, it holds that $\lceil \xi \rceil < 2 \xi$. Consequently,~\eqref{eq:proofPrIntListCase2DefTriple} and~\eqref{eq:secondCaseBobsAmbigIntermdBound} imply that
\begin{IEEEeqnarray}{rCl}
c_1 + c_2 & = & c_1 + 1 \\
& < & 2 c_1 \\ \label{eq:secondCaseBobsAmibgC1Bound}
& < & 4 \Bigl( 2^{\renent {\tirho}{X|Y} + 2 - \rho^{-1} \log ( \mathscr U_{\text B} - 1 )} + \log | \setX | + 2 \Bigr) / | \setM_2 |.
\end{IEEEeqnarray}
From \eqref{eq:EveMomDistStorDir} and~\eqref{eq:secondCaseBobsAmibgC1Bound} it follows that Eve's ambiguity satisfies \eqref{eq:intEveAmbLBList}:
\begin{IEEEeqnarray}{rCl}
\mathscr A_{\textnormal E} ( P_{X,Y} ) & > & 2^{-2 \rho} \bigl( 1 + \ln | \setX | \bigr)^{-\rho} |\setM_2|^{\rho} \nonumber \\
& & \times 2^{\rho \bigl(\renent {\tirho}{X|Y} - \log \bigl( 2^{\renent {\tirho}{X|Y} + 2 - \rho^{-1} \log ( \mathscr U_{\text B} - 1 )} + \log | \setX | + 2 \bigr) \bigr)} \\
& = & 2^{-2 \rho} \bigl( 1 + \ln | \setX | \bigr)^{-\rho} |\setM_2|^{\rho} \, 2^{\rho \renent {\tirho}{X|Y}} \nonumber \\
& & \times \bigl( 2^{\renent {\tirho}{X|Y} + 2 - \rho^{-1} \log ( \mathscr U_{\textnormal B} - 1 )} + \log | \setX | + 2 \bigr)^{-\rho} \\
& \stackrel{(a)}\geq & 2^{-5 \rho} \bigl( 1 + \ln | \setX | \bigr)^{-\rho} | \setM_2 |^{\rho} ( \mathscr U_{\textnormal B} - 1 ) \nonumber \\
& & \wedge 2^{-3 \rho} \bigl( 1 + \ln | \setX | \bigr)^{-\rho} ( 2 + \log | \setX | )^{-\rho} | \setM_2 |^\rho \, 2^{\rho \renent {\tirho}{X|Y}} \\
& \stackrel{(b)}= & 2^{-5 \rho} \bigl( 1 + \ln | \setX | \bigr)^{-\rho} \bigl( |\setM_1| \wedge | \setM_2 | \bigr)^{\rho} ( \mathscr U_{\textnormal B} - 1 ) \nonumber \\
& & \wedge 2^{-3 \rho} \bigl( 1 + \ln | \setX | \bigr)^{-\rho} ( 2 + \log | \setX | )^{-\rho} \bigl( |\setM_1| \wedge | \setM_2 | \bigr)^\rho 2^{\rho \renent {\tirho}{X|Y}},
\end{IEEEeqnarray}
where $(a)$ holds because $$\frac{1}{a + b} \geq \frac{1}{2a} \wedge \frac{1}{2b}, \quad a, \, b > 0;$$ and $(b)$ holds by the assumption that $|\setM_2| = |\setM_1| \wedge | \setM_2 |$.

The third and last case is the case where
\begin{IEEEeqnarray}{l}
\mathscr U_{\text B} < 1 + 2^{\rho ( \renent {\tirho}{X|Y} - \log ( | \setM_2 | \, \lfloor | \setM_1 | / | \setM_2 | \rfloor - \log | \setX | - 2 ) + 2 ) }. \label{eq:proofPrIntListCase3}
\end{IEEEeqnarray}
In this case we let $k^\star \in \naturals$ be the largest positive integer $k$ for which
\begin{IEEEeqnarray}{l}
1 + 2^{\rho ( \renent {\tirho}{X|Y} - \log ( k \lfloor | \setM_1 | / k \rfloor \lfloor | \setM_2 | / k \rfloor - \log | \setX | - 2 ) + 2 ) } \leq \mathscr U_{\textnormal B}, \label{eq:intkForcsList}
\end{IEEEeqnarray}
and we choose
\begin{IEEEeqnarray}{l}
c_{\textnormal s} = k^\star, \quad c_1 = \bigl\lfloor |\setM_1| / k^\star \bigr\rfloor, \quad c_2 = \bigl\lfloor |\setM_2| / k^\star \bigr\rfloor. \label{eq:proofPrIntListCase3DefTriple}
\end{IEEEeqnarray}
The existence of such a $k^\star$ follows from \eqref{eq:intUBLBList}, which implies that \eqref{eq:intkForcsList} holds when we substitute $1$ for $k$. Note that the choice in~\eqref{eq:proofPrIntListCase3DefTriple} satisfies \eqref{bl:condCsC1C2}. Consequently, \eqref{eq:BobMomDistStorDir} implies that Bob's ambiguity satisfies \eqref{eq:intBobAmbUBList}, because
\begin{IEEEeqnarray}{rCl}
\mathscr A^{(\textnormal l)}_{\textnormal B} ( P_{X,Y} ) & < & 1 + 2^{\rho ( \renent {\tirho}{X|Y} - \log ( c_{\textnormal s} \lfloor |\setM_1| / c_{\textnormal s} \rfloor \lfloor |\setM_2| / c_{\textnormal s} \rfloor  - \log | \setX | - 2 ) + 2 )} \\
& \leq & \mathscr U_{\textnormal B},
\end{IEEEeqnarray}
where in the second inequality we used that \eqref{eq:intkForcsList} holds when we substitute $c_{\textnormal s}$ for $k$. By the choice of $c_{\textnormal s}$ in \eqref{eq:proofPrIntListCase3DefTriple} we also have
\begin{IEEEeqnarray}{rCl}
2^{-\rho ( \renent {\tirho}{X|Y} + 2 )} ( \mathscr U_{\textnormal B} - 1 ) & \stackrel{(a)}< & \Biggl( ( c_{\textnormal s} + 1 ) \biggl\lfloor \frac{|\setM_1|}{c_{\textnormal s} + 1} \biggr\rfloor \biggl\lfloor \frac{| \setM_2 |}{c_{\textnormal s} + 1} \biggr\rfloor - \log |\setX| - 2 \Biggr)^{\!\! -\rho} \\
& \stackrel{(b)}< & \biggl( \frac{| \setM_1 | \, | \setM_2 |}{4 ( c_{\textnormal s} + 1 )} - \log | \setX | - 2 \biggr)^{\!\! -\rho} \\
& \stackrel{(c)}\leq & \biggl( \frac{| \setM_1 | \, | \setM_2 |}{8 c_{\textnormal s}} - \log | \setX | - 2 \biggr)^{\!\! -\rho}, \label{eq:boundForC1List}
\end{IEEEeqnarray}
where $(a)$ holds because $c_{\textnormal s}$ is the largest positive integer $k$ for which \eqref{eq:intkForcsList} holds and consequently
\begin{IEEEeqnarray*}{l}
\mathscr U_{\textnormal B} < 1 + 2^{\rho ( \renent {\tirho}{X|Y} - \log ( (c_{\textnormal s} + 1) \lfloor | \setM_1 | / (c_{\textnormal s} + 1) \rfloor \lfloor | \setM_2 | / (c_{\textnormal s} + 1) \rfloor - \log | \setX | - 2 ) + 2 ) };
\end{IEEEeqnarray*}
$(b)$ holds because \eqref{eq:proofPrIntListCase3} and the fact that \eqref{eq:intkForcsList} holds for every positive integer $k < c_{\textnormal s} + 1$ imply that $|\setM_2| \geq c_{\textnormal s} + 1$ and consequently that $|\setM_1| \wedge |\setM_2| \geq c_{\textnormal s} + 1$, and because $$\xi / 2 < \lfloor \xi \rfloor, \quad \xi \geq 1;$$ and $(c)$ holds because $c_{\textnormal s} \geq 1$ and consequently $c_{\textnormal s} + 1 \leq 2 c_{\textnormal s}$. From \eqref{eq:boundForC1List} we obtain that
\begin{IEEEeqnarray}{l}
\biggl( \frac{c_{\textnormal s}}{| \setM_1 |} \biggr)^{\!\! \rho} > 2^{-3 \rho} | \setM_2 |^\rho \Bigl( ( \mathscr U_{\textnormal B} - 1 )^{-1 / \rho} 2^{\renent {\tirho}{X|Y} + 2} + \log | \setX | + 2 \Bigr)^{\! -\rho}, \label{eq:boundForC1CsList}
\end{IEEEeqnarray}
and consequently that
\begin{IEEEeqnarray}{rCl}
( c_1 + c_2 )^{-\rho} & \stackrel{(a)}= & \Bigl( \bigl\lfloor |\setM_1| / c_{\textnormal s} \bigr\rfloor + \bigl\lfloor |\setM_2| / c_{\textnormal s} \bigr\rfloor \Bigr)^{\! -\rho} \\
& \stackrel{(b)}\geq & 2^{-\rho} \biggl( \frac{c_{\textnormal s}}{|\setM_1|} \biggr)^{\!\! \rho} \\
& \stackrel{(c)}> & 2^{-4 \rho} |\setM_2|^\rho \Bigl( ( \mathscr U_{\textnormal B} - 1 )^{-1 / \rho} 2^{\renent {\tirho}{X|Y} + 2} + \log | \setX | + 2 \Bigr)^{\! -\rho} \\
& \stackrel{(d)}\geq & 2^{-7 \rho} | \setM_2 |^\rho ( \mathscr U_{\textnormal B} - 1 ) 2^{-\rho \renent {\tirho}{X|Y}} \wedge 2^{-5 \rho} ( 2 + \log | \setX | )^{-\rho} | \setM_2 |^\rho, \label{eq:boundForC1PlusC2List}
\end{IEEEeqnarray}
where $(a)$ holds by \eqref{eq:proofPrIntListCase3DefTriple}; $(b)$ holds by the assumption that $|\setM_2| \leq |\setM_1|$; $(c)$ holds by \eqref{eq:boundForC1CsList}; and $(d)$ holds because $$\frac{1}{a + b} \geq \frac{1}{2a} \wedge \frac{1}{2b}, \quad a, \, b > 0.$$ From \eqref{eq:boundForC1PlusC2List} and \eqref{eq:EveMomDistStorDir} we obtain that Eve's ambiguity satisfies \eqref{eq:intEveAmbLBList}:
\begin{IEEEeqnarray}{rCl}
\mathscr A_{\textnormal E} ( P_{X,Y} ) & > & 2^{-5 \rho} \bigl( 1 + \ln | \setX | \bigr)^{-\rho} | \setM_2 |^{\rho} \Bigl( 2^{-2 \rho} ( \mathscr U_{\textnormal B} - 1 ) \nonumber \\
& & \wedge \bigl( 2 + \log | \setX | \bigr)^{-\rho} 2^{\rho \renent {\tirho}{X|Y} } \Bigr) \\
& = & 2^{-5 \rho} \bigl( 1 + \ln | \setX | \bigr)^{-\rho} \bigl( |\setM_1| \wedge | \setM_2 | \bigr)^{\rho} \Bigl( 2^{-2 \rho} ( \mathscr U_{\textnormal B} - 1 ) \nonumber \\
& & \wedge \bigl( 2 + \log | \setX | \bigr)^{-\rho} 2^{\rho \renent {\tirho}{X|Y} } \Bigr),
\end{IEEEeqnarray}
where the last equality holds by the assumption that $| \setM_2 | = | \setM_1 | \wedge | \setM_2 |$.
\end{proof}

\section{A Proof of Theorem~\ref{th:fairOpponent}}\label{app:pfThFairOpponent}

\begin{proof}
We first establish the achievability results, i.e., \eqref{eq:BobMomEveListsDir}--\eqref{eq:EveMomEveListsDir}. To this end suppose that $|\setM_1| \wedge |\setM_2| \geq 1 + \bigl\lfloor \log |\setX| \bigr\rfloor$. Let
\begin{IEEEeqnarray}{l}
c_{\textnormal s} = 1 + \bigl\lfloor \log |\setX| \bigr\rfloor, \quad c_1 = \biggl\lfloor \frac{|\setM_1|}{c_{\textnormal s}} \biggr\rfloor, \quad c_2 = \biggl\lfloor \frac{|\setM_2|}{c_{\textnormal s}} \biggr\rfloor, \label{eq:defCsC1C2EveLists}
\end{IEEEeqnarray}
and for each $\nu \in \{ c_{\textnormal s}, c_1, c_2 \}$ let $V_\nu$ be a chance variable taking values in the set $\setV_\nu = \{ 0, \ldots, c_\nu - 1 \}$. Corollary~\ref{co:equivBunteResultGuessing} implies that there exists some $\{ 0, 1 \}$-valued conditional PMF $\bigdistof { ( V_1, V_2 ) = ( v_1, v_2 ) \bigl| X = x, Y = y}$ for which
\begin{IEEEeqnarray}{l}
\min_{\guess {}{\cdot | Y, V_1, V_2}} \bigEx {}{\guess {}{X|Y, V_1, V_2}^\rho} < 1 + 2^{\rho ( \renent {\tirho} { X | Y } - \log ( c_1 c_2 ) + 1 )}. \label{eq:BobMomEveListsDir1Guess}
\end{IEEEeqnarray}
Draw $(V_1,V_2)$ from $\setV_1 \times \setV_2$ according to the above conditional PMF. Fix $\epsilon > 0$ and draw $(V_1^\prime,V_2^\prime)$ from $\setV_1 \times \setV_2$ according to the conditional PMF
\begin{IEEEeqnarray}{l}
\bigdistof { ( V_1^\prime, V_2^\prime ) = ( v_1^\prime, v_2^\prime ) \bigl| ( V_1, V_2 ) = (v_1,v_2)} \nonumber \\
\quad = \biggl( 1 - 2^{-\epsilon} - \frac{2^{-\epsilon}}{|\setV_1| \, |\setV_2|} \biggr) \ind {\{ ( v_1^\prime, v_2^\prime ) = (v_1,v_2) \}} + \frac{2^{-\epsilon}}{|\setV_1| \, |\setV_2|}.
\end{IEEEeqnarray}
Note that, irrespective of the realization $(v_1,v_2)$ of $( V_1^\prime, V_2^\prime )$, the probability that $(V_1^\prime,V_2^\prime)$ equals $(v_1, v_2)$ is $1 - 2^{-\epsilon}$. Let $\guess \star { \cdot | Y, V_1, V_2}$ be an optimal guessing function, which minimizes $\bigEx {}{\guess {}{X|Y, V_1, V_2}^\rho}$. Define the guessing function $\guess {}{ \cdot | Y, V_1^\prime, V_2^\prime }$ by
\begin{IEEEeqnarray}{l}
\guess {}{ x | y, v_1^\prime, v_2^\prime } = \guess \star { x | y, v_1^\prime, v_2^\prime }, \,\, \forall \, (x,y,v_1^\prime,v_2^\prime) \in \setX \times \setY \times \setV_1 \times \setV_2.
\end{IEEEeqnarray}
Using the trivial bound $$\guess {}{x|y, v_1^\prime, v_2^\prime} \leq |\setX|, \,\, \forall \, (x,y,v_1^\prime,v_2^\prime) \in \setX \times \setY \times \setV_1 \times \setV_2,$$ we obtain that
\begin{equation}
\bigEx {}{\guess {}{X|Y, V_1^\prime, V_2^\prime}^\rho} \leq (1 - 2^{-\epsilon}) \bigEx {}{\guess \star {X|Y, V_1, V_2}^\rho} + 2^{-\epsilon} |\setX|^\rho.
\end{equation}
Consequently,
\begin{IEEEeqnarray}{l}
\min_{\guess {}{ \cdot | Y, V_1^\prime, V_2^\prime }} \bigEx {}{\guess {}{X|Y, V_1^\prime, V_2^\prime}^\rho} \nonumber \\
\quad \leq ( 1 - 2^{-\epsilon} ) \min_{\guess {}{\cdot | Y, V_1, V_2}} \bigEx {}{\guess {}{X|Y, V_1, V_2}^\rho} + 2^{-\epsilon} |\setX|^\rho \\
\quad < 1 + 2^{-(\epsilon - \rho \log \! |\setX|)} + 2^{\rho ( \renent {\tirho} {X | Y } - \log ( c_1 c_2 ) + 1 )}, \label{eq:BobMomEveListsDir2Guess}
\end{IEEEeqnarray}
where~\eqref{eq:BobMomEveListsDir2Guess} follows from \eqref{eq:BobMomEveListsDir1Guess}. Corollary~\ref{co:guessToList} and \eqref{eq:defCsC1C2EveLists} imply that there exists some $\{ 0, 1 \}$-valued conditional PMF $$\distof { V_{\textnormal s} = v_{\textnormal s} | X = x, Y = y, V_1^\prime = v_1, V_2^\prime = v_2 }$$ for which
\begin{IEEEeqnarray}{rCl}
\BigEx {}{ \bigl| \setL^Y_{V_{\textnormal s}, V_1^\prime, V_2^\prime} \bigr|^\rho } & \leq & \min_{\guess {}{ \cdot | Y, V_1^\prime, V_2^\prime }} \bigEx {}{ \guess {}{ X | Y, V_1^\prime, V_2^\prime }^\rho } \\
& < & 1 + 2^{-(\epsilon - \rho \log \! |\setX|)} + 2^{\rho ( \renent {\tirho} {X | Y } - \log ( c_1 c_2 ) + 1 )}. \label{eq:BobMomEveListsDir1}
\end{IEEEeqnarray}
Draw $V_{\textnormal s}$ from $\setV_{\textnormal s}$ according to the above conditional PMF. Using the assumption that $|\setM_1| \wedge |\setM_2| \geq 1 + \bigl\lfloor \log |\setX| \bigr\rfloor$ and \eqref{eq:defCsC1C2EveLists}, we obtain that
\begin{equation} \label{eq:lowerBoundOnCk}
c_k > \frac{ |\setM_k| }{2 \bigl( 1 + \bigl\lfloor \log |\setX| \bigr\rfloor \bigr)}, \quad k \in \{ 1, 2 \}.
\end{equation}
From~\eqref{eq:BobMomEveListsDir1} and~\eqref{eq:lowerBoundOnCk} it follows that
\begin{IEEEeqnarray}{rCl}
\!\!\!\! \BigEx {}{ \bigl| \setL^Y_{V_{\textnormal s}, V_1^\prime, V_2^\prime} \bigr|^\rho } & < & 1 + 2^{-(\epsilon - \rho \log |\setX|)} + 2^{\rho ( \renent {\tirho} {X | Y} - \log ( |\setM_1| \, |\setM_2| ) + 2 \log (1 + \lfloor \log |\setX| \rfloor) + 3 )}. \label{eq:BobMomEveListsDir2}
\end{IEEEeqnarray}
By \eqref{eq:defCsC1C2EveLists} $| \setM_1 | \geq c_{\textnormal s} c_1$ and $| \setM_2 | \geq c_{\textnormal s} c_2$, and hence it suffices to prove \eqref{eq:BobMomEveListsDir}--\eqref{eq:EveMomEveListsDir} for a conditional PMF \eqref{eq:aliceEncPMF} that assigns positive probability only to $c_{\textnormal s} c_1$ elements of $\setM_1$ and $c_{\textnormal s} c_2$ elements of $\setM_2$, and we thus assume w.l.g.\ that $\setM_1 = \setV_{\textnormal s} \times \setV_1$ and $\setM_2 = \setV_{\textnormal s} \times \setV_2$. That is, we can choose $M_1 = ( V_{\textnormal s} \oplus_{c_{\textnormal s}} \! U, V_1^\prime )$ and $M_2 = ( U, V_2^\prime )$, where $U$ is independent of $( X, Y, V_{\textnormal s}, V_1^\prime, V_2^\prime )$ and uniform over $\setV_{\textnormal s}$. For this choice it follows from \eqref{eq:BobMomEveListsDir2} that
\begin{IEEEeqnarray}{l}
\mathscr A_{\textnormal B}^{(\textnormal l)} ( P_{X,Y} ) < 1 + 2^{-(\epsilon - \rho \log |\setX|)} + 2^{\rho ( \renent {\tirho} {X | Y} - \log ( |\setM_1| \, |\setM_2| ) + 2 \log (1 + \lfloor \log |\setX| \rfloor) + 3 )}.
\end{IEEEeqnarray}
This proves that \eqref{eq:BobMomEveListsDir} holds for every sufficiently-large $\epsilon$. As to \eqref{eq:EveMomDistStorDir}, note that for every $\epsilon > 0$
\begin{equation}
\setL_{M_1}^Y = \setL_{M_2}^Y = \setL^Y,
\end{equation}
because
\begin{IEEEeqnarray}{l}
\distof {M_1 = m_1, M_2 = m_2 | X = x, Y = y} > 0, \,\, \forall \, (x,y,m_1,m_2) \in \setX \times \setY \times \setM_1 \times \setM_2.
\end{IEEEeqnarray}

We next conclude by establishing the converse results \eqref{eq:BobMomEveListsConv}--\eqref{eq:EveMomEveListsConv}. Theorem~\ref{th:optTaskEnc} implies \eqref{eq:BobMomEveListsConv}; and \eqref{eq:EveMomEveListsConv} trivially holds, because the list that Eve forms based on $Y$ and the hint that she observes cannot be larger than the list that she would have to form if she were to observe only $Y$.
\end{proof}

\section{A Proof of Theorems~\ref{th:secMessGuess} and \ref{th:secMess}}\label{app:pfThsSecMess}

\begin{proof}
We first establish the achievability results, i.e., \eqref{eq:BobMomSecMsgDirGuess}--\eqref{eq:EveMomSecMsgDirGuess} in the guessing version and \eqref{eq:BobMomSecMsgDir}--\eqref{eq:EveMomSecMsgDir} in the list version. To this end, fix $c \in \naturals$ satisfying \eqref{eq:condCGuess} in the guessing version and \eqref{eq:condCList} in the list version. Both \eqref{eq:condCGuess} and \eqref{eq:condCList} imply that $c \leq | \setM_{\textnormal p} |$. Hence it suffices to prove \eqref{eq:BobMomSecMsgDirGuess}--\eqref{eq:EveMomSecMsgDirGuess} and \eqref{eq:BobMomSecMsgDir}--\eqref{eq:EveMomSecMsgDir} for a $\{ 0,1 \}$-valued conditional PMF as in~\eqref{eq:aliceEncPMFSecMsg} that assigns positive probability only to $c$ elements of $\setM_{\textnormal p}$. We can thus assume w.l.g.\ that $| \setM_{\textnormal p} | = c$. Corollary~\ref{co:equivBunteResultGuessing} implies that there exists some $\{ 0, 1 \}$-valued conditional PMF $$\distof {M_{\textnormal p} = m_{\textnormal p}, M_{\textnormal s} = m_{\textnormal s} | X = x, Y = y}$$ for which
\begin{IEEEeqnarray}{rCl}
\min_{\guess {}{ \cdot | Y, M_{\textnormal p}, M_{\textnormal s} }} \bigEx {}{ \guess {}{X|Y,M_{\textnormal p},M_s}^\rho } & < & 1 + 2^{\rho ( \renent {\tirho} { X | Y } - \log ( | \setM_{\textnormal p} | \, | \setM_{\textnormal s} | ) + 1 )} \\
& = & 1 + 2^{\rho ( \renent {\tirho} {X | Y } - \log ( c \, | \setM_{\textnormal s} | ) + 1 )}. \label{eq:BobMomSecMsgDirGuessPf}
\end{IEEEeqnarray}
In addition, Theorem~\ref{th:optTaskEnc} implies that there exists some deterministic task-encoder $\enc { \cdot | Y } \colon \setX \rightarrow \setM_{\textnormal p} \times \setM_{\textnormal s}$ for which
\begin{IEEEeqnarray}{rCl}
\BigEx {}{\bigl|\setL^Y_{M_{\textnormal p},M_{\textnormal s}}\bigr|^\rho} & < & 1 + 2^{\rho ( \renent {\tirho} { X | Y } - \log ( |\setM_{\textnormal p} \times \setM_{\textnormal s}| - \log | \setX | - 2 ) + 2 )} \\
& = & 1 + 2^{\rho ( \renent {\tirho} { X | Y } - \log ( c \, | \setM_{\textnormal s} | - \log | \setX | - 2) + 2 )}, \label{eq:BobMomSecMsgDirPf}
\end{IEEEeqnarray}
where $(M_{\textnormal p}, M_{\textnormal s}) = \enc { X | Y }$. Accordingly, in the guessing version \eqref{eq:BobMomSecMsgDirGuess} follows from \eqref{eq:BobMomSecMsgDirGuessPf} and in the list version \eqref{eq:BobMomSecMsgDir} follows from \eqref{eq:BobMomSecMsgDirPf}. Moreover, Corollary~\ref{co:equivBunteResultGuessing} implies \eqref{eq:EveMomSecMsgDirGuess} in the guessing version and \eqref{eq:EveMomSecMsgDir} in the list version:
\begin{IEEEeqnarray}{rCl}
\min_{\guess {\cdot}{X | Y, M_{\textnormal p} }} \Ex {}{\guess {}{X | Y, M_{\textnormal p} }^\rho} & \geq & \bigl( 1 + \ln | \setX | \bigr)^{-\rho} 2^{\rho ( \renent{\tirho}{X|Y} - \log | \setM_{\textnormal p} | )} \\
& = & \bigl( 1 + \ln | \setX | \bigr)^{-\rho} 2^{\rho ( \renent{\tirho}{ X | Y } - \log c )}.
\end{IEEEeqnarray}

It remains to establish the converse results, i.e., \eqref{eq:BobMomSecMsgConvGuess}--\eqref{eq:EveMomSecMsgConvGuess} in the guessing version and \eqref{eq:BobMomSecMsgConv}--\eqref{eq:EveMomSecMsgConv} in the list version. In the guessing version \eqref{eq:BobMomSecMsgConvGuess} follows from Corollary~\ref{co:equivBunteResultGuessing}, and in the list version \eqref{eq:BobMomSecMsgConv} follows from Theorem~\ref{th:optTaskEnc}. To prove \eqref{eq:EveMomSecMsgConvGuess} and \eqref{eq:EveMomSecMsgConv}, we first note from Corollary~\ref{co:impGuess} that
\begin{IEEEeqnarray}{l}
\min_{\guess {}{\cdot|Y,M_{\textnormal p},M_{\textnormal s}}} \bigEx {}{\guess {}{X|Y,M_{\textnormal p},M_{\textnormal s}}^\rho} \geq | \setM_{\textnormal s} |^{-\rho} \min_{\guess {}{\cdot|Y,M_{\textnormal p}}} \bigEx {}{\guess {}{X|Y,M_{\textnormal p}}^\rho}. \label{eq:pfSecMsgConvEve}
\end{IEEEeqnarray}
Moreover, we also note that
\begin{IEEEeqnarray}{l}
\min_{\guess {}{\cdot|Y,M_{\textnormal p},M_{\textnormal s}}} \bigEx {}{\guess {}{X|Y,M_{\textnormal p},M_{\textnormal s}}^\rho} \leq \BigEx {}{\bigl| \setL^Y_{M_{\textnormal p},M_{\textnormal s}} \bigr|^\rho}. \label{eq:pfSecMsgConvEve2ndPart}
\end{IEEEeqnarray}
From \eqref{eq:pfSecMsgConvEve} and \eqref{eq:pfSecMsgConvEve2ndPart} it follows that in both versions Eve's ambiguity exceeds Bob's by at most a factor of $| \setM_{\textnormal s} |^\rho$, i.e., $\mathscr A_{\textnormal E} ( P_{X,Y} ) \leq | \setM_{\textnormal s} |^\rho \mathscr A^{(\textnormal g)}_{\textnormal B} ( P_{X,Y} )$ and $\mathscr A_{\textnormal E} ( P_{X,Y} ) \leq | \setM_{\textnormal s} |^\rho \mathscr A^{(\textnormal l)}_{\textnormal B} ( P_{X,Y} )$. Since Eve can ignore $M_{\textnormal p}$ and guess $X$ based on $Y$ alone, we obtain from Theorem~\ref{th:optGuessFun} that in both versions Eve's ambiguity cannot exceed $2^{\rho \renent {\tirho} {X | Y }}$. That is,
\begin{equation}
\mathscr A_{\textnormal E} ( P_{X,Y} ) = \min_{\guess {}{ \cdot | Y, M_{\textnormal p}} } \bigEx {}{\guess {}{X | Y, M_{\textnormal p} }^\rho} \leq 2^{\rho \renent {\tirho} {X | Y }}.
\end{equation}
This concludes the proof of \eqref{eq:EveMomSecMsgConvGuess} and \eqref{eq:EveMomSecMsgConv} and consequently that of the converse results.
\end{proof}

\section{A Proof of Theorems~\ref{th:keyGuess} and \ref{th:key}}\label{app:pfThsKey}

\begin{proof}
We first establish the achievability results, i.e., \eqref{eq:BobMomKeyDirGuess}--\eqref{eq:EveMomKeyDirGuess} in the guessing version and \eqref{eq:BobMomKeyDir}--\eqref{eq:EveMomKeyDir} in the list version. To this end fix $c \in \naturals$ satisfying \eqref{eq:condCGuessKey} in the guessing version and \eqref{eq:condCListKey} in the list version. Let $M_{\textnormal p}$ be a chance variable that takes values in the set $\setM_{\textnormal p}$, and let $M_{\textnormal s}$ be a chance variable that takes values in the set $\setK$. Corollary~\ref{co:equivBunteResultGuessing} implies that there exists some $\{ 0, 1 \}$-valued conditional PMF $\distof {M_{\textnormal p} = m_{\textnormal p}, M_{\textnormal s} = m_{\textnormal s} | X = x, Y = y}$ for which
\begin{IEEEeqnarray}{rCl}
\min_{\guess {}{\cdot|Y,M_{\textnormal p}, M_{\textnormal s}}} \bigEx {}{\guess {}{X|Y,M_{\textnormal p},M_{\textnormal s}}^\rho } & < & 1 + 2^{\rho ( \renent {\tirho} {X | Y } - \log ( | \setM_{\textnormal p} | \, | \setM_{\textnormal s} | ) - 1 )} \\
& = & 1 + 2^{\rho ( \renent {\tirho} {X | Y } - \log ( c \, | \setK | ) - 1 )}. \label{eq:keyMpMsGuess}
\end{IEEEeqnarray}
Theorem~\ref{th:optTaskEnc} implies that there exists some deterministic task-encoder $\enc {\cdot | Y } \colon \setX \rightarrow \setM_{\textnormal p} \times \setM_{\textnormal s}$ for which
\begin{IEEEeqnarray}{rCl}
\BigEx {}{ \bigl| \setL^Y_{M_{\textnormal p}, M_{\textnormal s}} \bigr|^\rho } & < & 1 + 2^{\rho ( \renent {\tirho} { X | Y } - \log ( |\setM_{\textnormal p} | \, | \setM_{\textnormal s} | - \log | \setX | - 2 ) + 2 )} \\
& = & 1 + 2^{\rho ( \renent {\tirho} { X | Y } - \log ( c \, | \setK | - \log | \setX | - 2 ) + 2 )}, \label{eq:keyMpMsList}
\end{IEEEeqnarray}
where $(M_{\textnormal p}, M_{\textnormal s}) = \enc { X | Y }$. Both \eqref{eq:condCGuessKey} and \eqref{eq:condCListKey} imply that $c \, | \setK | \leq | \setM |$. Hence it suffices to prove \eqref{eq:BobMomKeyDirGuess}--\eqref{eq:EveMomKeyDirGuess} and \eqref{eq:BobMomKeyDir}--\eqref{eq:EveMomKeyDir} for a $\{ 0, 1 \}$-valued conditional PMF as in~\eqref{eq:aliceEncPMFKey} that assigns positive probability only to $c \, | \setK |$ elements of $\setM$. We can thus assume w.l.g.\ that $\setM = \setK \times \setM_{\textnormal p}$, where $\setM_{\textnormal p}$ is a set of cardinality $c$, and $\setK = \bigl\{ 0, \ldots, | \setK | - 1 \bigr\}$. That is, we can choose $M = ( M_{\textnormal s} \oplus_{|\setK|} \! K, M_{\textnormal p} )$, where $(M_{\textnormal s}, M_{\textnormal p})$ is drawn according to one of the above conditional PMFs depending on the version. Bob observes the hint $M$ and the secret key $K$ and can thus recover the pair $(M_{\textnormal s}, M_{\textnormal p})$. Hence, in the guessing version \eqref{eq:BobMomKeyDirGuess} follows from \eqref{eq:keyMpMsGuess}, and in the list version \eqref{eq:BobMomKeyDir} follows from \eqref{eq:keyMpMsList}.

The proof of \eqref{eq:EveMomKeyDirGuess} and \eqref{eq:EveMomKeyDir} is more involved. Note that in both versions (guessing and list) there exists some mapping $g \colon \setX \times \setY \times \setM \rightarrow \setK$ for which
\begin{equation}
K = g ( X, Y, M ). \label{eq:pfKeyKeyFuncXYM}
\end{equation}
Given any guessing function $\guess {}{ \cdot | Y, M }$ for $X$, introduce some guessing function $\guess {}{\cdot, \cdot | Y, M }$ for $( X, K )$ satisfying that
\begin{equation}
\bigguess {}{x, g (x,y,m) \bigl| y, m} = \guess {}{x | y, m}, \,\, \forall \, (x,y,m) \in \setX \times \setY \times \setM.
\end{equation}
From \eqref{eq:pfKeyKeyFuncXYM} it then follows that
\begin{equation}
\guess {}{ X, K | Y, M } = \guess {}{ X | Y, M },
\end{equation}
and consequently that Eve can guess $X$ and the pair $(X,K)$ with the same number of guesses. In particular,
\begin{equation}
\bigEx {}{ \guess {}{ X | Y, M }^\rho } = \bigEx {}{ \guess {}{ X, K | Y, M }^\rho }. \label{eq:EveMomKeyDir1}
\end{equation}
Corollary~\ref{co:equivBunteResultGuessing} implies that
\begin{IEEEeqnarray}{rCl}
\min_{\guess {}{ \cdot, \cdot | Y,M }} \bigEx {}{ \guess {}{ X, K | Y, M }^\rho } & \geq & \bigl( 1 + \ln | \setX | \bigr)^{-\rho} 2^{\rho ( \renent {\tirho} { X, K | Y } - \log | \setM | )} \\
& = & \bigl( 1 + \ln | \setX | \bigr)^{-\rho} 2^{\rho ( \renent{\tirho}{ X, K | Y } - \log ( c \, | \setK | ) )}. \label{eq:EveMomKeyDir2}
\end{IEEEeqnarray}

Note, that 
\begin{IEEEeqnarray}{rCl}
\renent {\tirho}{ X, K | Y } & = & \frac{1}{\rho} \log \sum_{y \in \setY} \Biggl( \sum_{x \in \setX} \sum_{k \in \setK} \biggl( \frac{P_{X,Y} ( x,y )}{| \setK |} \biggr)^{\!\! \tirho} \Biggr)^{\!\! 1+\rho} \\
& = & \frac{1}{\rho} \log \! \left( \sum_{y \in \setY} \Biggl( \sum_{x \in \setX} P_{X,Y} ( x,y )^{\tirho} \Biggr)^{\!\! 1+\rho} | \setK |^{\rho} \right) \\
& = & \renent {\tirho} { X | Y } + \log | \setK |,
\end{IEEEeqnarray}
where the first equality holds because $K$ is independent of $(X,Y)$ and uniform over the set $\setK$. Consequently, \eqref{eq:EveMomKeyDir1} and \eqref{eq:EveMomKeyDir2} imply \eqref{eq:EveMomKeyDirGuess} in the guessing version and \eqref{eq:EveMomKeyDir} in the list version.
 
It remains to establish the converse results, i.e., \eqref{eq:BobMomKeyConvGuess}--\eqref{eq:EveMomKeyConvGuess} in the guessing version and \eqref{eq:BobMomKeyConv}--\eqref{eq:EveMomKeyConv} in the list version. To this end we first note that
\ba
\renent {\tirho}{ X | Y, K } &= \frac{\alpha}{1-\alpha} \log \sum_{y \in \setY} \sum_{k \in \setK} \Biggl( \sum_{x \in \setX} \biggl( \frac{P_{X,Y} ( x,y )}{| \setK |} \biggl)^{\!\! \alpha} \Biggr)^{\!\! \frac{1}{\alpha}} \\
&= \frac{\alpha}{1-\alpha} \log \sum_{y \in \setY} \biggl( \sum_{x \in \setX} P_{X,Y} ( x,y )^\alpha \biggr)^{\!\! \frac{1}{\alpha}} \\
&= \renent {\tirho} { X | Y }, \label{eq:pfKeyRenentXGivYK}
\ea
where the first equality holds because $K$ is independent of $(X,Y)$ and uniform over the set $\setK$. In the guessing version \eqref{eq:BobMomKeyConvGuess} follows from Corollary~\ref{co:equivBunteResultGuessing} and \eqref{eq:pfKeyRenentXGivYK}, and in the list version \eqref{eq:BobMomKeyConv} follows from Theorem~\ref{th:optTaskEnc} and \eqref{eq:pfKeyRenentXGivYK}. To prove \eqref{eq:EveMomKeyConvGuess} and \eqref{eq:EveMomKeyConv}, we first note that by Corollary~\ref{co:impGuess}
\begin{IEEEeqnarray}{l}
\min_{\guess {}{ \cdot | Y, K, M }} \bigEx {}{\guess {}{ X | Y, K, M }^\rho} \geq | \setK |^{-\rho} \min_{\guess {}{ \cdot | Y, M }} \bigEx {}{\guess {}{ X | Y, M }^\rho}. \label{eq:pfKeyConvEve}
\end{IEEEeqnarray}
Because $$\min_{\guess {}{ \cdot | Y, K, M }} \bigEx {}{ \guess {}{ X | Y, K, M }^\rho } \leq \BigEx {}{ \bigl| \setL^{Y,K}_M \bigr|^\rho },$$ \eqref{eq:pfKeyConvEve} implies that in both versions Eve's ambiguity exceeds Bob's by at most a factor of $| \setK |^\rho$, i.e., $\mathscr A_{\textnormal E} ( P_{X,Y} ) \leq | \setK |^\rho \mathscr A^{(\textnormal g)}_{\textnormal B} ( P_{X,Y} )$ and $\mathscr A_{\textnormal E} ( P_{X,Y} ) \leq | \setK |^\rho \mathscr A^{(\textnormal l)}_{\textnormal B} ( P_{X,Y} )$. Since Eve can ignore $M$ and guess $X$ based on $Y$ alone, we obtain from Theorem~\ref{th:optGuessFun} that in both versions Eve's ambiguity cannot exceed $2^{\rho \renent {\tirho} {X | Y }}$:
\begin{IEEEeqnarray}{l}
A_{\textnormal E} ( P_{X,Y} ) = \min_{\guess {}{ \cdot | Y, M }} \bigEx {}{\guess {}{ X | Y, M }^\rho} \leq 2^{\rho \renent {\tirho} {X | Y }}.
\end{IEEEeqnarray}
This concludes the proof of \eqref{eq:EveMomKeyConvGuess} and \eqref{eq:EveMomKeyConv} and consequently that of the converse results.
\end{proof}

\section{A Proof of Theorems~\ref{th:discFailGuess} and \ref{th:discFail}} \label{app:pfThsDiscFail}

In Section~\ref{sec:MDS} we summarize the results on maximum-distance separable (MDS) codes that we shall use in the proof of Theorems~\ref{th:discFailGuess} and \ref{th:discFail}. Theorems~\ref{th:discFailGuess} and \ref{th:discFail} are proved in Section~\ref{sec:pfThsDiscFail}.

\subsection{Properties of MDS Codes} \label{sec:MDS}

The following results on maximum-distance separable (MDS) codes can be found, e.g., in \cite{roth06}. An $(n,k)$ linear code $\mathscr C$ over a finite field $\mathbb F_q$ is a $k$-dimensional linear subspace of the vector space $\mathbb F_q^n$ of all $n$-tuples over $\mathbb F_q$. An $(n,k,d)$ linear code is an $(n,k)$ linear code satisfying that the minimum Hamming distance between any two codewords (or, equivalently, the minimum Hamming weight of any nonzero codeword) is $d$. By the Singleton bound $k \leq n - d + 1$, where equality is achieved iff the following holds for every size-$k$ set $\setC \subseteq [1:n]$, where $k = n - d + 1$: if we reduce all $q^k$ codewords to the components indexed by $\setC$, then we obtain all $q^k$ $k$-tuples over $\mathbb F_q$. An MDS code is a linear code that satisfies the Singleton bound with equality.

In this paper we are interested in the case where $q = 2^{\ell}, \,\, \ell \in \naturals$, and we denote by $\alpha$ a primitive element of $\mathbb F_q$. If $n = q$, then for every $k \in \{ 1, \ldots, n \}$
\begin{IEEEeqnarray}{l}
G_{k,q} =
\begin{pmatrix}
1 & 1 & 1 & \ldots & 1 \\ 0 & 1 & \alpha & \ldots & \alpha^{-1} \\ \vdots & \vdots & \vdots & & \vdots \\ 0 & 1 & \alpha^{k-1} & \ldots & \alpha^{-(k-1)}
\end{pmatrix} \in \mathbb F_q^{k \times q} \label{eq:genMatrixMDS}
\end{IEEEeqnarray}
is a generator matrix of a $(q,k)$ MDS code. (More precisely, $G_{k,q}$ is a generator matrix of a Reed-Solomon (RS) code.) To see this, note that
\begin{IEEEeqnarray}{l}
\vecu G_{k,q} = \bigl( u (0), u (\alpha^0), u (\alpha), \ldots, u (\alpha^{-1}) \bigr), \quad \vecu \in \mathbb F_q^k,
\end{IEEEeqnarray}
where $u (\beta) = \sum_{j = 0}^{k-1} u_j \beta^j, \,\, \beta \in \mathbb F_q$ is computed in the field $\mathbb F_{2^\ell}$. Hence, the first component of $\vecu G_{k,q}$ is zero iff zero is a root of $u (z)$, and for every $i \in \{ 2, \ldots, q \}$ the $i$-th component of $\vecu G_{k,q}$ is zero iff $\alpha^{i-2}$ is a root of $u (z)$. Since $\alpha$ is a primitive element of $\mathbb F_q$, we know that $0, \, 1, \, \alpha, \ldots, \, \alpha^{q-1}$ are distinct elements of $\mathbb F_q$. Moreover, the polynomial $u (z)$ has degree at most $k - 1$, and hence the fundamental theorem of algebra asserts that if $u (z) \neq 0$, then $u (z)$ can have at most $k-1$ roots in $\mathbb F_q$. Consequently, at most $k-1$ components of any nonzero codeword can be zero, and hence every nonzero codeword has Hamming weight at least $n - k + 1$. This and the Singleton bound imply that $d = n - k + 1$ and consequently that the code with generator matrix \eqref{eq:genMatrixMDS} is a $(q,k)$ MDS code.

If $k \leq n \leq q$, then the matrix $G_{k,n} \in \mathbb F_q^{k \times n}$ that we obtain by taking the first $n$ columns of $G_{k,q}$ is a generator matrix of an $(n,k)$ MDS code. To see this, note that reducing $G_{k,q}$ to its first $n$ columns is tantamount to reducing each codeword to its first $n$ components. This implies that the Hamming weight of any codeword or, equivalently, the Hamming distance between any two codewords can decrease by at most $q - n$, and consequently that the minimum Hamming distance between any two codewords can decrease by at most $q - n$. Consequently, the new code is an $(n,k,d)$ linear code with $d \geq q - k + 1 - (q - n) = n - k + 1$. This and the Singleton bound imply that $d = n - k + 1$ and consequently that the new code is an MDS code.

We also note here that, for any generator matrix $G_{k,n}$ of an $(n,k)$ MDS code over $\mathbb F_q$, where $k \leq n \leq q$, and any $k^\prime < k$, the matrix $G_{k^\prime,n}$ that we obtain by taking the first $k^\prime$ rows of $G_{k,n}$ is a generator matrix of an $(n,k^\prime)$ MDS code.

\subsection{A Proof of Theorems~\ref{th:discFailGuess} and \ref{th:discFail}} \label{sec:pfThsDiscFail}

\begin{proof}
We first establish the achievability results, i.e., \eqref{eq:BobMomDiscFailDirGuess}--\eqref{eq:EveMomDiscFailDirGuess} in the guessing version and \eqref{eq:BobMomDiscFailDir}--\eqref{eq:EveMomDiscFailDir} in the list version. We begin with an outline of the proof ideas. We shall use the following coding scheme. Upon observing $( X,Y )$, Alice describes $X$ deterministically by a tuple $( V,W )$, where $V$ takes values in the finite field $\mathbb F^\nu_{2^p}$ and $W$ in $\mathbb F^{\nu - \eta}_{2^r}$. Depending on the version, she chooses the description $( V,W )$ so that, if Bob's observation were $( V,W )$, then his ambiguity about $X$ would satisfy \eqref{eq:BobMomDiscFailDirGuess} in the guessing version and \eqref{eq:BobMomDiscFailDir} in the list version. Then, she maps $V$ to a length-$\delta$ codeword of a $( \delta,\nu,\delta-\nu+1 )$ MDS code over $\mathbb F_{2^p}$ and stores each codeword symbol on a different disc. Since the code is MDS, any $\gamma \leq \nu$ hints reveal $\gamma p$ bits of $V$. Independently of $( X,Y )$, Alice draws a random variable $U$ uniformly over the field $\mathbb F^{\eta}_{2^r}$, maps $( W,U )$ to a length-$\delta$ codeword of a $( \delta,\nu,\delta-\nu+1 )$ MDS code over the field $\mathbb F_{2^r}$, and stores each codeword symbol on a different disc. She chooses the mapping so that any $\eta$ codeword symbols are independent of $W$ or, equivalently, that given $W$ it is possible to reconstruct $U$ from any $\eta$ codeword symbols. (As in \cite{subramanianmclaughlin09}, this is accomplished using nested MDS codes.) As a consequence, $W$ can be recovered from any $\nu$ hints, while any $\eta$ hints reveal no information about $W$.

Summing up, the outlined coding scheme guarantees that, upon observing $\nu$ hints, Bob can reconstruct the tuple $( V,W )$. Hence, his ambiguity about $X$ satisfies \eqref{eq:BobMomDiscFailDirGuess} in the guessing version and \eqref{eq:BobMomDiscFailDir} in the list version. Observing $\eta$ hints enables Eve to recover $\eta p$ bits of $V$, but it does not enable her to recover any information about $W$. Using the results of Section~\ref{sec:listsAndGuesses}, we can thus show that observing $\eta$ hints can decrease Eve's guessing efforts by at most a factor of $2^{- \rho \nu p}$.\footnote{The coding scheme is reminiscent of the coding scheme in the proof of Theorem~\ref{th:distStorGuess} and \ref{th:distStor}, where after describing $X$ Alice stores part of the description (insecurely) on the first hint, another part (insecurely) on the second hint, and the remaining portion (securely) so that it can only be computed from both hints.} Since we quantify Eve's ambiguity by \eqref{eq:distEncSecrecyMeasureDiscFail}, we assume that---upon observing all the hints and $( X,Y )$---an adversarial genie reveals to Eve the $\eta$ hints that minimize her ambiguity. In doing so, the genie can decrease Eve's ambiguity by an additional factor of at most $\delta^{-\rho \eta}$ (this is due to Corollary~\ref{co:impGuess} and the fact that there are ${\delta \choose \eta} \leq \delta^\eta$ size-$\eta$ subsets of $\{ 1, \ldots, \delta \}$).

The described MDS codes exist if each nonnegative integer $p$ and $r$ is either zero or at least $\log \delta$ (see Appendix~\ref{sec:MDS}). Recalling that each disc stores up to $s$ bits, we can thus construct the MSD codes whenever $p$ and $r$ satisfy \eqref{bl:condGuessDiscFail}. In the list version the stronger requirement~\eqref{bl:condListDiscFail}---in addition to guaranteeing the existence of the described MDS codes---allows us to use Theorem~\ref{th:optTaskEnc} in order to guarantee that Bob's ambiguity satisfy \eqref{eq:BobMomDiscFailDir}.

We are now ready to give a formal proof of the achievability results, i.e., \eqref{eq:BobMomDiscFailDirGuess}--\eqref{eq:EveMomDiscFailDirGuess} in the guessing version and \eqref{eq:BobMomDiscFailDir}--\eqref{eq:EveMomDiscFailDir} in the list version. To this end fix $p, \, r \in \{ 1, \ldots, s \}$ satisfying \eqref{bl:condGuessDiscFail} in the guessing version and \eqref{bl:condListDiscFail} in the list version, and let $V$ and $W$ be chance variables taking values in $\setV = \mathbb {F}_{2^{p}}^{ \nu }$ and $\setW = \mathbb {F}_{2^{r}}^{ \nu - \eta }$, respectively. Corollary~\ref{co:equivBunteResultGuessing} implies that there exists some $\{ 0, 1 \}$-valued conditional PMF $\bigdistof { (V, W) = (v, w) \bigl| X = x, Y = y}$ for which
\begin{IEEEeqnarray}{l}
\min_{\guess {}{\cdot|Y,V,W}} \bigEx {}{\guess {}{X|Y,V,W}^\rho} <  1 + 2^{\rho ( \renent {\tirho} { X | Y } - \nu s + \eta r + 1 )}. \label{eq:BobMomDiscFailDir1Guess}
\end{IEEEeqnarray}
Theorem~\ref{th:optTaskEnc} implies that there exists some deterministic task-encoder $\enc { \cdot | Y } \colon \setX \rightarrow \setV \times \setW$ for which
\begin{IEEEeqnarray}{l}
\BigEx {}{\bigl| \setL^Y_{V,W} \bigr|^\rho} < 1 + 2^{\rho ( \renent {\tirho} {X | Y} - \log ( 2^{ \nu s - \eta r } - \log | \setX | - 2 ) + 2 )}, \label{eq:BobMomDiscFailDir1}
\end{IEEEeqnarray}
where $( V,W ) = \enc {X|Y}$. Draw $U$ independently of $(X,Y)$ and uniformly over $\mathbb {F}_{2^r}^\eta$. Choose $G_{\setV} \in \mathbb {F}_{2^{p}}^{ \nu \times \delta }$, $G_{\setW} \in \mathbb {F}_{2^{r}}^{ ( \nu - \eta ) \times \delta }$, and $G_{\setU} \in \mathbb {F}_{2^{r}}^{ \eta \times \delta }$ so that $$G_{\setV}, \quad \begin{pmatrix} G_{\setU} \\ G_{\setW} \end{pmatrix}, \quad G_{\setU}$$ are generator matrices of MDS codes. (This is possible, because both \eqref{bl:condGuessDiscFail} and \eqref{bl:condListDiscFail} imply that
\begin{subequations}
\begin{IEEEeqnarray}{l}
p > 0 \implies 2^p \geq \delta, \\
r > 0 \implies 2^r \geq \delta;
\end{IEEEeqnarray}
\end{subequations}
if $p = 0$, then $V$ can assume but one value, and hence we do not need $G_{\setV}$; and if $r = 0$, then $( W,U )$ can assume but one value, and hence we do not need $G_{\setW}$ and $G_{\setU}$.) Define the chance variables
\begin{subequations} \label{bl:discFailDefMpMr}
\begin{IEEEeqnarray}{rCl}
M_p & = & V \, G_{\setV}, \\
M_r & = & U \, G_{\setU} \oplus W \, G_{\setW} = \begin{pmatrix} U \,\, W \end{pmatrix} \! \begin{pmatrix} G_{\setU} \\ G_{\setW} \end{pmatrix}, \label{eq:discFailDefMr}
\end{IEEEeqnarray}
\end{subequations}
where $M_p$ is computed in the field $\mathbb {F}_{2^{p}}$ and $M_r$ in $\mathbb {F}_{2^{r}}$. Note that $M_p \in \mathbb {F}_{2^{p}}^\delta$ and $M_r \in \mathbb {F}_{2^{r}}^\delta$. Since both \eqref{bl:condGuessDiscFail} in the guessing version and \eqref{bl:condListDiscFail} in the list version imply that $s = p + r$, Alice can choose the $\ell$-th hint to comprise the $\ell$-th components of $M_p$ and $M_r$, so
\begin{equation}
M_{\ell} = \bigl( [ M_p ]_{\ell}, [ M_r ]_{\ell} \bigr), \quad \ell \in \{ 1, \ldots, \delta \}.
\end{equation}
For this choice of the hints Bob can recover $( V,W,U )$ no matter which $\nu$ hints he observes, because $$G_{\setV}, \quad \begin{pmatrix} G_{\setU} \\ G_{\setW} \end{pmatrix}$$ are generator matrices of MDS codes. Hence, in the guessing version \eqref{eq:BobMomDiscFailDirGuess} follows from \eqref{eq:BobMomDiscFailDir1Guess}, and in the list version \eqref{eq:BobMomDiscFailDir} follows from \eqref{eq:BobMomDiscFailDir1}.

The proof of \eqref{eq:EveMomDiscFailDirGuess} and \eqref{eq:EveMomDiscFailDir} is more involved. Recall that Eve observes a size-$\eta$ set $\setE \subset \{ 1, \ldots, \delta \}$ and the components $\rndvecM_\setE$ of $\rndvecM$ indexed by $\setE$. Index the possible sets that $\setE$ could denote by the elements of some size-${\delta \choose \eta}$ set $\setK$, and denote by $\setE (k)$ the set that is indexed by~$k$. The proof of \eqref{eq:EveMomDiscFailDirGuess} and \eqref{eq:EveMomDiscFailDir} builds on the following two intermediate claims, which we prove next:
\begin{enumerate}
\item Eve's ambiguity can be alternatively expressed as
\begin{IEEEeqnarray}{l}
\mathscr A_{\textnormal E} (P_{X,Y}) = \min_{ K, \, \guess {} { \cdot | Y, \rndvecM_{\setE (K)}, K }} \bigEx {}{ \guess {} { X | Y, \rndvecM_{\setE (K)}, K }^\rho },
\end{IEEEeqnarray}
where $K$ is a chance variable of support $\setK$, and where the minimization is over all conditional PMFs of $K$ given $( X, Y, \rndvecM )$ and all guessing functions $\guess {} { \cdot | Y, \rndvecM_{\setE (K)}, K }$.
\item We can assume w.l.g.\ that Eve must guess not only $X$ but the pair $(X,U)$.
\end{enumerate}

We first prove Claim~1, i.e., that
\begin{IEEEeqnarray}{l}
\min_{\guess {\setE} { \cdot | Y, \rndvecM_{\setE} }} \BigEx {}{ \min_{ \setE } \guess {\setE} { X | Y, \rndvecM_{\setE} }^\rho } \nonumber \\
\quad = \min_{ K, \, \guess {} { \cdot | Y, \rndvecM_{\setE (K)}, K }} \bigEx {}{ \guess {} { X | Y, \rndvecM_{\setE (K)}, K }^\rho }. \label{eq:equivExpEveAmbDiscFail}
\end{IEEEeqnarray}
Note that
\begin{IEEEeqnarray}{l}
\min_{ \setE } \guess {\setE} { X | Y, \rndvecM_{\setE} } = \min_{ k } \guess {\setE (k)} { X | Y, \rndvecM_{\setE (k)} };
\end{IEEEeqnarray}
and for any given $\guess {\setE (k)} { \cdot | Y, \rndvecM_{\setE (k)} }, \,\, k \in \setK$, define
\begin{IEEEeqnarray}{l}
K = \argmin_k \guess {\setE (k)} { X | Y, \rndvecM_{\setE (k)} }, \label{eq:defKDiscFail}
\end{IEEEeqnarray}
and introduce the guessing function $\guess {} { \cdot | Y, \rndvecM_{\setE (K)}, K }$ satisfying that, for every $(x,y) \in \setX \times \setY$, $\vecm_{\setE (k)} \in \mathbb F_{2^s}^\eta$, and $k \in \setK$, 
\begin{IEEEeqnarray}{l}
\guess {} { x | y, \vecm_{\setE (k)}, k } = \guess {\setE (k)} { x | y, \vecm_{\setE (k)} }.
\end{IEEEeqnarray}
We then obtain that
\begin{IEEEeqnarray}{l} 
\bigEx {}{ \guess {} { X | Y, \rndvecM_{\setE (K)}, K }^\rho } = \BigEx {}{ \min_{ \setE } \guess {\setE} { X | Y, \rndvecM_{\setE} }^\rho },
\end{IEEEeqnarray}
and consequently that
\begin{IEEEeqnarray}{l} 
\min_{\guess {\setE} { \cdot | Y, \rndvecM_{\setE} }} \BigEx {}{ \min_{ \setE } \guess {\setE} { X | Y, \rndvecM_{\setE} }^\rho } \nonumber \\
\quad \geq \min_{ K, \, \guess {} { \cdot | Y, \rndvecM_{\setE}, K }} \bigEx {}{ \guess {} { X | Y, \rndvecM_{\setE (K)}, K }^\rho }. \label{eq:equivExpEveAmb1DiscFail}
\end{IEEEeqnarray}
To see that equality holds, note that, irrespective of $K$ and $\guess {} { \cdot | Y, \rndvecM_{\setE (K)}, K }$,
\begin{IEEEeqnarray}{l} 
\bigEx {}{ \guess {} { X | Y, \rndvecM_{\setE (K)}, K }^\rho } \geq \BigEx {}{ \min_k \guess {} { X | Y, \rndvecM_{\setE (k)}, k }^\rho }. \label{eq:equivExpEveAmbFor2DiscFail}
\end{IEEEeqnarray}
For any given $\guess {} { \cdot | Y, \rndvecM_{\setE (K)}, K }$ introduce the collection of guessing functions $\guess {\setE (k)} { \cdot | Y, \rndvecM_{\setE (K)} }, \,\, k \in \setK$ that, for every $(x,y) \in \setX \times \setY$ and $\vecm_{\setE (k)} \in \mathbb F_{2^s}^\eta$, satisfy
\begin{IEEEeqnarray}{l} 
\guess {\setE (k)} { x | y, \vecm_{\setE (k)} } = \guess {} { x | y, \vecm_{\setE (k)}, k }.
\end{IEEEeqnarray}
We then obtain from \eqref{eq:equivExpEveAmbFor2DiscFail} that
\begin{IEEEeqnarray}{l} 
\bigEx {}{ \guess {} { X | Y, \rndvecM_{\setE (K)}, K }^\rho } \geq \BigEx {}{ \min_{k} \guess {\setE (k)} { X | Y, \rndvecM_{\setE (k)} }^\rho },
\end{IEEEeqnarray}
and consequently that
\begin{IEEEeqnarray}{l}
\min_{\guess {\setE} { \cdot | Y, \rndvecM_{\setE} }} \BigEx {}{ \min_{ \setE } \guess {\setE} { X | Y, \rndvecM_{\setE} }^\rho } \nonumber \\
\quad \leq \min_{ K, \, \guess {} { \cdot | Y, \rndvecM_{\setE}, K }} \bigEx {}{ \guess {} { X | Y, \rndvecM_{\setE (K)}, K }^\rho }. \label{eq:equivExpEveAmb2DiscFail}
\end{IEEEeqnarray}
From \eqref{eq:equivExpEveAmb1DiscFail} and \eqref{eq:equivExpEveAmb2DiscFail} we conclude that \eqref{eq:equivExpEveAmbDiscFail} holds.

We next prove Claim~2. To this end we shall use Claim~1. Let $K$ be any chance variable of finite support $\setK$, and note that $W$ is deterministic given $(X,Y)$. By \eqref{eq:discFailDefMr}
\begin{IEEEeqnarray}{l}
[ U \, G_{\setU} ]_{\setE (K)} = [M_r]_{\setE (K)} \ominus [ W \, G_{\setV} ]_{\setE (K)},
\end{IEEEeqnarray}
where the computation is in the field $\mathbb {F}_{2^{r}}$. Consequently, $[ U \, G_{\setU} ]_{\setE (K)}$ is deterministic given $( X,Y,\rndvecM_{\setE (K)},K )$. Because $G_{\setU}$ is a generator matrix of an MDS code, and because $| \setE (K) | = \eta$, it follows that $U$ is deterministic given $(X,Y,\rndvecM_{\setE (K)}, K)$, i.e., that there exists some mapping $$g \colon \setX \times \setY \times \mathbb {F}_{2^r}^\eta \times \setK \rightarrow \setU$$ for which
\begin{equation}
U = g ( X,Y,\rndvecM_{\setE (K)},K ). \label{eq:uFunEveObsDiscFail}
\end{equation}
Given any guessing function $\guess {} { \cdot | Y, \rndvecM_{\setE (K)}, K }$ for $X$, introduce some guessing function $\guess {}{\cdot,\cdot|Y,\rndvecM_{\setE (K)},K}$ for $(X,U)$ satisfying that
\begin{IEEEeqnarray}{l}
\bigguess {}{ X, g (X,Y,\rndvecM_{\setE (K)},K) \bigl| Y, \rndvecM_{\setE (K)}, K } = \guess {}{ X | Y, \rndvecM_{\setE (K)}, K },
\end{IEEEeqnarray}
and note that
\begin{IEEEeqnarray}{l}
\guess {}{X,U|Y,\rndvecM_{\setE (K)},K} = \guess {}{X|Y,\rndvecM_{\setE (K)},K}. \label{eq:guessXUDiscFail}
\end{IEEEeqnarray}
This proves Claim~2.

Having established Claims~1 and 2, we are now ready to prove \eqref{eq:EveMomDiscFailDirGuess} and \eqref{eq:EveMomDiscFailDir}:
\begin{IEEEeqnarray}{l}
\min_{\guess {\setE} { \cdot | Y, \rndvecM_{\setE} }} \BigEx {}{ \min_{ \setE } \guess {\setE} { X \left| Y, \rndvecM_{\setE} \right. }^\rho } \nonumber \\
\quad \stackrel{(a)}= \min_{ K, \, \guess {} { \cdot | Y, \rndvecM_{\setE}, K }} \bigEx {}{ \guess {} { X | Y, \rndvecM_{\setE (K)}, K }^\rho } \\
\quad \stackrel{(b)}= \min_{ K, \, \guess {\setE} { \cdot | Y, \rndvecM_{\setE}, K }} \bigEx {}{ \guess {} { X, U | Y, \rndvecM_{\setE (K)}, K }^\rho } \\
\quad \stackrel{(c)}\geq 2^{\rho ( \renent{\tirho}{X,U|Y} - \eta s - \log {\delta \choose \eta} - \log ( 1 + \ln |\setX| ) ) } \\
\quad \stackrel{(d)}\geq 2^{\rho ( \renent{\tirho}{X|Y} - \eta ( s - r ) - \eta \log \delta - \log ( 1 + \ln |\setX| ) ) },
\end{IEEEeqnarray}
where $(a)$ holds by \eqref{eq:equivExpEveAmbDiscFail}; $(b)$ holds by \eqref{eq:guessXUDiscFail}; $(c)$ follows from Corollary~\ref{co:equivBunteResultGuessing} and the fact that $(\rndvecM_{\setE (K)}, K)$ takes values in a set of size $2^{\eta s} \, { \delta \choose \eta }$; and $(d)$ holds because ${ \delta \choose \eta } \leq \delta^\eta$ and
\begin{IEEEeqnarray}{l}
\renent {\tirho}{X, U | Y } \nonumber \\
\quad \stackrel{(e)}= \frac{1}{\rho} \log \sum_{y \in \setY} \Biggl( \sum_{x \in \setX} \sum_{u \in \mathbb {F}_{2^r}^\eta} \bigl( P_{X,Y} ( x,y ) / 2^{\eta r} \bigr)^{\tirho} \Biggr)^{\!\! 1+\rho} \nonumber \\
\quad = \frac{1}{\rho} \log \! \left( \sum_{y \in \setY} \Biggl( \sum_{x \in \setX} P_{X,Y} ( x,y )^{\tirho} \Biggr)^{\!\! 1+\rho} 2^{\rho \eta r} \right) \nonumber \\
\quad = \renent {\tirho} { X | Y } + \eta r, \label{eq:reEntrXUCondYDiscFail}
\end{IEEEeqnarray}
where $(e)$ holds because $U$ is independent of $( X,Y )$ and uniform over the set $\mathbb {F}_{2^r}^\eta$ of size $2^{\eta r}$. This concludes the proof of the achievability results.\\

It remains to establish the converse results, i.e., \eqref{eq:BobMomDiscFailConvGuess}--\eqref{eq:EveMomDiscFailConvGuess} in the guessing version and \eqref{eq:BobMomDiscFailConv}--\eqref{eq:EveMomDiscFailConv} in the list version. To this end we first note that
\begin{subequations} \label{bl:BobAltAmbDiscFail}
\begin{IEEEeqnarray}{rCl}
\mathscr A^{(g)}_{\textnormal B} (P_{X,Y}) & = & \min_{\guess {\setB} { \cdot | Y, \rndvecM_{\setB} }} \BigEx {}{\max_{\setB} \guess {\setB}{ X | Y,\rndvecM_{\setB} }^\rho} \nonumber \\
& \geq & \min_{\guess {\setB} { \cdot | Y, \rndvecM_{\setB} }} \max_{\setB} \bigEx {}{ \guess {\setB}{X | Y,\rndvecM_{\setB} }^\rho}, \label{eq:BobAltAmbDiscFailGuess} \\
\mathscr A^{(l)}_{\textnormal B} (P_{X,Y}) & = & \BigEx {}{\max_{ \setB } \bigl| \setL^Y_{\rndvecM_{\setB}} \bigr|^\rho} \nonumber \\
& \geq & \max_{ \setB } \BigEx {}{\bigl| \setL^Y_{\rndvecM_{\setB}} \bigr|^\rho}. \label{eq:BobAltAmbDiscFail}
\end{IEEEeqnarray}
\end{subequations}
Because $\setB \subseteq  \{ 1, \ldots, \delta \}$ is a size-$\nu$ set, in the guessing version \eqref{eq:BobMomDiscFailConvGuess} follows from \eqref{eq:BobAltAmbDiscFailGuess} and Corollary~\ref{co:equivBunteResultGuessing}, and in the list version \eqref{eq:BobMomDiscFailConv} follows from \eqref{eq:BobAltAmbDiscFail} and Theorem~\ref{th:optTaskEnc}. To prove \eqref{eq:EveMomDiscFailConvGuess} and \eqref{eq:EveMomDiscFailConv}, we first note that
\begin{IEEEeqnarray}{rCl}
A_{\textnormal E} (P_{X,Y}) & = & \min_{\guess {\setE}{ \cdot | Y,\rndvecM_{\setE} }} \BigEx {}{\min_{\setE} \guess {\setE}{ X | Y,\rndvecM_{\setE} }^\rho} \\
& \leq & \min_{\setE, \, \guess {\setE}{ \cdot | Y,\rndvecM_{\setE} }} \bigEx {}{ \guess {\setE}{ X | Y, \rndvecM_{\setE} }^\rho}. \label{eq:EveAltAmbDiscFail}
\end{IEEEeqnarray}
Corollary~\ref{co:impGuess} implies that, for every size-$\nu$ set $\setB \subseteq \{ 1, \ldots, \delta \}$ and every size-$\eta$ set $\setE \subset \setB$,
\begin{IEEEeqnarray}{l}
\min_{\guess {\setB} { \cdot | Y, \rndvecM_{\setB} }} \bigEx {}{\guess {\setB}{ X | Y,\rndvecM_{\setB} }^\rho} \geq 2^{ - \rho (\nu-\eta) s} \min_{\guess {\setE}{ \cdot | Y,\rndvecM_{\setE} }} \bigEx {}{\guess {\setE}{X | Y,\rndvecM_{\setE} }^\rho}; \label{eq:EveAltAmbDiscFailBoundBobAmb}
\end{IEEEeqnarray}
and, because $$\min_{\guess {\setB} { \cdot | Y, \rndvecM_{\setB} }} \bigEx {}{\guess {\setB}{ X | Y,\rndvecM_{\setB} }^\rho} \leq \BigEx {}{\bigl| \setL^Y_{\rndvecM_{\setB}} \bigr|},$$ \eqref{eq:EveAltAmbDiscFail} and \eqref{eq:EveAltAmbDiscFailBoundBobAmb} imply that in both versions Eve's ambiguity exceeds Bob's by at most a factor of $2^{\rho (\nu-\eta) s}$, i.e., $\mathscr A_{\textnormal E} (P_{X,Y}) \leq 2^{\rho (\nu-\eta) s} \mathscr A^{(g)}_{\textnormal B} (P_{X,Y})$ and $\mathscr A_{\textnormal E} (P_{X,Y}) \leq 2^{\rho (\nu-\eta) s} \mathscr A^{(l)}_{\textnormal B} (P_{X,Y})$. Since Eve can ignore the hints that she observes and guess $X$ based on $Y$ alone, we obtain from Theorem~\ref{th:optGuessFun} that, for every size-$\eta$ set $\setE \subset \{ 1, \ldots, \delta \}$,
\begin{IEEEeqnarray}{l}
\min_{\guess {\setE}{ \cdot | Y,\rndvecM_{\setE} }} \bigEx {}{\guess {\setE}{X | Y,\rndvecM_{\setE} }^\rho} \leq 2^{\rho \renent {\tirho}{X|Y}}; \label{eq:EveAltAmbDiscFailBoundOnlyY}
\end{IEEEeqnarray}
and \eqref{eq:EveAltAmbDiscFail} and \eqref{eq:EveAltAmbDiscFailBoundOnlyY} imply that in both versions Eve's ambiguity cannot exceed $2^{\rho \renent {\tirho}{X|Y}}$, i.e., $A_{\textnormal E} (P_{X,Y}) \leq 2^{\rho \renent {\tirho}{X|Y}}$. This concludes the proof of \eqref{eq:EveMomDiscFailConvGuess} and \eqref{eq:EveMomDiscFailConv} and consequently that of the converse results.
\end{proof}

\section{A Proof of Corollary~\ref{co:discFailSimp}} \label{app:pfCoDiscFail}

\begin{proof}
For the guessing version, the results in~\eqref{eq:discFailSimpBobGuess}--\eqref{eq:discFailSimpEveGuess} follow from Theorem~\ref{th:discFailGuess} if we let
\begin{IEEEeqnarray}{rCl}
\tilde r & = & \frac{\nu s + \rho^{-1} \log (\mathscr U_{\textnormal B} - 1) - \renent {\tirho} { X | Y } - 1}{\eta}, \\
r & = & \begin{cases} 0 &\lfloor \tilde r \rfloor \in ( -\infty, \log \delta ), \\ \lfloor \tilde r \rfloor &\lfloor \tilde r \rfloor \in [ \log \delta, s - \log \delta ), \\ s - \lceil \log \delta \rceil &\lfloor \tilde r \rfloor \in [ s - \log \delta, s), \\ s &\lfloor \tilde r \rfloor \in [s, \infty), \end{cases} \label{eq:choiceRSimpVDiscFail} \\
p & = & s - r,
\end{IEEEeqnarray}
and note that $$r \neq s \implies \tilde r - r < \log \delta + 1.$$

To obtain the results in~\eqref{eq:BobAmbSimpBList}--\eqref{eq:EveAmbSimpBList} for the list version, let
\begin{IEEEeqnarray}{l}
\tilde r = \frac{\nu s - \log \Bigl( 2^{ \renent {\tirho} { X | Y } - \frac{1}{\rho} \log ( \mathscr U_{\textnormal B} - 1 ) + 2 } + \log | \setX | + 2 \Bigr) }{\eta},
\end{IEEEeqnarray}
and choose $r$ as in \eqref{eq:choiceRSimpVDiscFail}. Then, \eqref{eq:BobMomDiscFailDir} implies that Bob's ambiguity satisfies \eqref{eq:BobAmbSimpBList}. Since $$r \neq s \implies \tilde r - r < \log \delta + 1,$$ we obtain from \eqref{eq:EveMomDiscFailDir} that, if $r \neq s$, then
\begin{IEEEeqnarray}{rCl}
\mathscr A_{\text{E}} ( P_{X,Y} ) & > & 2^{\rho ( \renent {\tirho} { X | Y } + (\nu - \eta) s - 2 \eta \log \delta - \eta - \log ( 1 + \ln |\setX| ) ) } \nonumber \\
& & \times \Bigl( 2^{ \renent {\tirho} { X | Y } - \frac{1}{\rho} \log ( \mathscr U_{\textnormal B} - 1 ) + 2 } + \log | \setX | + 2 \Bigr)^{-\rho}. \label{eq:EbeAmbSimpDiscFail1}
\end{IEEEeqnarray}
Because $$\frac{1}{a + b} \geq \frac{1}{2 a} \wedge \frac{1}{2 b}, \quad a, \, b > 0, $$ the second factor satisfies the lower bound
\begin{IEEEeqnarray}{l}
\Bigl( 2^{ \renent {\tirho} { X | Y } - \frac{1}{\rho} \log ( \mathscr U_{\textnormal B} - 1 ) + 2 } + \log | \setX | + 2 \Bigr)^{-\rho} \nonumber \\
\quad \geq 2^{ -\rho ( \renent {\tirho} { X | Y } - \frac{1}{\rho} \log ( \mathscr U_{\textnormal B} - 1 ) + 3 )} \wedge \bigl( 2 ( \log | \setX | + 2 ) \bigr)^{-\rho}. \label{eq:EbeAmbSimpDiscFail2}
\end{IEEEeqnarray}
We are now ready to conclude the proof of \eqref{eq:EveAmbSimpBList}: if $r \neq s$, then \eqref{eq:EveAmbSimpBList} follows from \eqref{eq:EbeAmbSimpDiscFail1} and \eqref{eq:EbeAmbSimpDiscFail2}; and if $r = s$, then \eqref{eq:EveMomDiscFailDir} implies that
\begin{IEEEeqnarray}{l}
\mathscr A_{\textnormal E} ( P_{X,Y} ) \geq 2^{\rho ( \renent {\tirho} { X | Y } - \eta \log \delta - \log ( 1 + \ln | \setX | ) )}
\end{IEEEeqnarray}
and consequently that \eqref{eq:EveAmbSimpBList} holds.
\end{proof}

\section{A Proof of Theorem~\ref{th:SiEqSBest}} \label{app:pfThSiEqSBest}

\begin{proof}
If we choose $\setB =  \{ 1, \ldots, \nu \}$, then in the guessing version \eqref{eq:BobMomDistStorConvSiGuess} follows from \eqref{eq:BobAltAmbDiscFailGuess} and Corollary~\ref{co:equivBunteResultGuessing}, and in the list version \eqref{eq:BobMomDistStorSiConv} follows from \eqref{eq:BobAltAmbDiscFail} and Theorem~\ref{th:optTaskEnc}. For $\setB = \{ 1, \ldots, \nu \}$ and $\setE = \{ \nu - \eta + 1, \ldots, \nu \}$, Corollary~\ref{co:impGuess} implies that, 
\begin{IEEEeqnarray}{l}
\min_{\guess {\setB} { \cdot | Y, \rndvecM_{\setB} }} \bigEx {}{\guess {\setB}{ X | Y,\rndvecM_{\setB} }^\rho} \geq 2^{ - \rho \sum^{\eta - \nu}_{\ell = 1} s_\ell} \min_{\guess {\setE}{ \cdot | Y,\rndvecM_{\setE} }} \bigEx {}{\guess {\setE}{X | Y,\rndvecM_{\setE} }^\rho}. \label{eq:EveAltAmbDiscFailBoundBobAmbSi}
\end{IEEEeqnarray}
Since $$\min_{\guess {\setB} { \cdot | Y, \rndvecM_{\setB} }} \bigEx {}{\guess {\setB}{ X | Y,\rndvecM_{\setB} }^\rho} \leq \BigEx {}{\bigl| \setL^Y_{\rndvecM_{\setB}} \bigr|},$$ \eqref{eq:EveAltAmbDiscFail} and \eqref{eq:EveAltAmbDiscFailBoundBobAmbSi} imply that in both versions Eve's ambiguity exceeds Bob's by at most a factor of $2^{ \rho \sum^{\eta - \nu}_{\ell = 1} s_\ell}$. That is, $$\mathscr A_{\textnormal E} (P_{X,Y}) \leq 2^{ \rho \sum^{\eta - \nu}_{\ell = 1} s_\ell} \mathscr A^{(g)}_{\textnormal B} (P_{X,Y})$$ and $$\mathscr A_{\textnormal E} (P_{X,Y}) \leq 2^{ \rho \sum^{\eta - \nu}_{\ell = 1} s_\ell} \mathscr A^{(l)}_{\textnormal B} (P_{X,Y}).$$ Moreover, \eqref{eq:EveAltAmbDiscFail} and \eqref{eq:EveAltAmbDiscFailBoundOnlyY} imply that in both versions Eve's ambiguity cannot exceed $2^{\rho \renent {\tirho}{X|Y}}$. That is, $$A_{\textnormal E} (P_{X,Y}) \leq 2^{\rho \renent {\tirho}{X|Y}},$$ which concludes the proof of \eqref{bl:EveMomDistStorConvSi}.
\end{proof}

\section{A Proof of Theorem~\ref{th:asympDiscFail}} \label{app:pfThAsympDiscFail}

\begin{proof}
We first prove \eqref{eq:privExpDiscFail}. If $\nu R_s < \renent {\tirho}{ \rndvecX | \rndvecY }$, then \eqref{eq:BobMomDiscFailConvGuess} in the guessing version and \eqref{eq:BobMomDiscFailConv} in the list version imply that the privacy-exponent is negative infinity. We hence assume that $\nu R_s > \renent {\tirho}{ \rndvecX | \rndvecY }$.

We start by showing that the privacy-exponent cannot exceed the RHS of \eqref{eq:privExpDiscFail}. To this end, suppose that \eqref{eq:bobAmbTo1} holds and consequently
\begin{IEEEeqnarray}{l} \label{eq:amibExponBob}
\limsup_{n \rightarrow \infty} \frac{\log \bigl( \mathscr A_{\textnormal B} (P_{X^n,Y^n}) \bigr)}{n} = 0.
\end{IEEEeqnarray}
Combining~\eqref{eq:EveMomDiscFailConvGuess} with~\eqref{eq:amibExponBob} in the guessing version and \eqref{eq:EveMomDiscFailConv} in the list version implies that
\begin{IEEEeqnarray}{l}
\limsup_{n \rightarrow \infty} \frac{\log \bigl( \mathscr A_{\textnormal E} ( P_{X^n,Y^n} ) \bigr) }{n} \leq \rho \bigl( R_s ( \nu - \eta ) \wedge \renent {\tirho}{ \rndvecX | \rndvecY } \bigr).
\end{IEEEeqnarray}
Hence, the privacy-exponent cannot exceed the RHS of \eqref{eq:privExpDiscFail}.

We next show that the privacy-exponent cannot be smaller than the RHS of \eqref{eq:privExpDiscFail}. To this end fix $0 < \epsilon < \nu R_s - \renent {\tirho}{ \rndvecX | \rndvecY }$ and let
\begin{equation}
\mathscr U_{\textnormal B} (n) = 1 + 2^{-n \epsilon}.
\end{equation}
Note that $\mathscr U_{\text B} (n)$ converges to one as $n$ tends to infinity. By Corollary~\ref{co:discFailSimp} we can guarantee that Bob's ambiguity not exceed $\mathscr U_{\textnormal B} (n)$ whenever $n$ is sufficiently large and that
\begin{IEEEeqnarray}{l}
\liminf_{n \rightarrow \infty} \frac{\log \bigl( A_{\textnormal E} ( P_{X^n,Y^n} ) \bigr)}{n} \geq \rho \Bigl( \bigl( R_s ( \nu - \eta ) - \epsilon \bigr) \wedge \renent {\tirho}{ \rndvecX | \rndvecY } \Bigr).
\end{IEEEeqnarray}
By letting $\epsilon$ tend to zero we thus find that the privacy-exponent cannot be smaller than the RHS of \eqref{eq:privExpDiscFail}.\\

To prove \eqref{eq:privExpDiscFailEB}, we first note that if $\nu R_s < \renent {\tirho}{ \rndvecX | \rndvecY } - \rho^{-1} E_{\textnormal B}$, then \eqref{eq:BobMomDiscFailConvGuess} in the guessing version and \eqref{eq:BobMomDiscFailConv} in the list version imply that the modest privacy-exponent is negative infinity. We hence assume that $\nu R_s \geq - \rho^{-1} E_{\textnormal B}$.

We start by showing that the modest privacy-exponent cannot exceed the RHS of \eqref{eq:privExpDiscFailEB}. To this end, suppose that \eqref{eq:bobAmbToEB} holds. Due to~\eqref{eq:EveMomDiscFailConvGuess} in the guessing version and~\eqref{eq:EveMomDiscFailConv} in the list version, it follows that
\begin{IEEEeqnarray}{l}
\limsup_{n \rightarrow \infty} \frac{\log \bigl( \mathscr A_{\textnormal E} ( P_{X^n,Y^n} ) \bigr)}{n} \leq \bigl( \rho R_s ( \nu - \eta ) + E_{\textnormal B} \bigr) \wedge \rho \renent {\tirho}{ \rndvecX | \rndvecY }.
\end{IEEEeqnarray}
Hence, the privacy-exponent cannot exceed the RHS of \eqref{eq:privExpDiscFailEB}.

We next show that the privacy-exponent cannot be smaller than the RHS of \eqref{eq:privExpDiscFailEB}. To this end let
\begin{equation}
\mathscr U_{\textnormal B} (n) = 2^{\rho n E_{\textnormal B}}.
\end{equation}
By Corollary~\ref{co:discFailSimp} we can guarantee that Bob's ambiguity not exceed $\mathscr U_{\textnormal B} (n)$ whenever $n$ is sufficiently large and that
\begin{IEEEeqnarray}{l}
\liminf_{n \rightarrow \infty} \frac{\log \bigl( A_{\textnormal E} ( P_{X^n,Y^n} ) \bigr)}{n} \geq \bigl( \rho R_s ( \nu - \eta ) + E_{\textnormal B} \bigr) \wedge \rho \renent {\tirho}{ \rndvecX | \rndvecY }.
\end{IEEEeqnarray}
This proves that the modest privacy-exponent cannot be smaller than the RHS of \eqref{eq:privExpDiscFailEB}.
\end{proof}

\section{A Proof of Lemma~\ref{le:ImproveGuessRD}} \label{app:pfLeImproveGuessRD}

\begin{proof}
To prove \eqref{eq:impGuessMaxRD}, fix some optimal guessing function $\hatguessD {} \star { \cdot | Y^n, Z}$ with corresponding success function $\guessD \Delta \star { \cdot | Y^n, Z}$. The success function $\guessD \Delta \star { \cdot | Y^n, Z}$ minimizes $\bigEx {}{\guessD \Delta \star { X | Y^n, Z}^\rho}$. Let $\psifun { \cdot | Y^n, Z }$ be the corresponding reconstruction function, i.e., the unique mapping satisfying that
\begin{equation}
 \psifun { \vecx | \vecy, z } = \hat \vecx \iff \guessD \Delta \star { \vecx | \vecy, z } = \hatguessD {} \star { \hat \vecx | \vecy, z }, \,\, \forall \, ( \vecx, \hat \vecx, \vecy, z ) \in \setX^n \times \hat \setX^n \times \setY^n \times \setZ. \label{eq:psiFunStarGivYZ}
\end{equation}
For every $\vecy \in \setY^n$ consider a guessing order on $\hat \setX^n$ where we first guess the elements of the set $$\Bigl\{ \hat \vecx \in \hat \setX^n \colon \min_{z \in \setZ} \hatguessD {} \star { \hat \vecx | \vecy, z } = 1 \Bigr\}$$ in some arbitrary order followed by the elements of the set $$\Bigl\{ \hat \vecx \in \hat \setX^n \colon \min_{z \in \setZ} \hatguessD {} \star { \hat \vecx | \vecy, z } = 2 \Bigr\},$$ and where we continue until concluding by guessing the elements of $\hat \setX^n$ for which $\min_{z \in \setZ} \hatguessD {} \star { \hat \vecx | \vecy, z }$ is maximum. Let $\hatguessD {} {} { \cdot | Y^n }$ be the corresponding guessing function. For every $\hat \vecx, \, \hat \vecx^\prime \in \hat \setX^n$ and $\vecy \in \setY^n$ a necessary condition for $\hatguessD {} {}{ \hat \vecx^\prime \bigl| \vecy } \leq \hatguessD {} {}{ \hat \vecx \bigl| \vecy }$ is that $$\min_{z \in \setZ} \hatguessD {} \star { \hat \vecx^\prime \bigl| \vecy, z } \leq \min_{z \in \setZ} \hatguessD {} \star { \hat \vecx \bigl| \vecy, z }.$$ In addition, for every $z^\prime \in \setZ$ the mapping $\hatguessD {} \star { \cdot \bigl| \vecy, z^\prime } \colon \hat \setX^n \rightarrow \bigl[ 1 : |\hat \setX|^n \bigr]$ is one-to-one, and consequently the number of $\hat \vecx^\prime \in \hat \setX^n$ satisfying $$\hatguessD {} \star { \hat \vecx^\prime \bigl| \vecy, z^\prime } \leq \min_{z \in \setZ} \hatguessD {} \star { \hat \vecx \bigl| \vecy, z }$$ is $\min_{z \in \setZ} \hatguessD {} \star { \hat \vecx \bigl| \vecy, z }$. Consequently, 
\begin{IEEEeqnarray}{rCl}
\bighatguessD {}{}{ \psifun{ X^n | Y^n, Z } \bigl| Y^n } & \leq & | \setZ | \min_{z \in \setZ} \bighatguessD {} \star { \psifun{ X^n | Y^n, Z } \bigl| Y^n, z } \\
& \leq & | \setZ | \, \bighatguessD {} \star { \psifun{ X^n | Y^n, Z } \bigl| Y^n, Z }. \label{eq:impHatGuessMaxRD}
\end{IEEEeqnarray}
From \eqref{eq:impHatGuessMaxRD} it follows that the success function $\guessD \Delta {}{ \cdot | Y^n }$ corresponding to $\hatguessD {} {} { \cdot | Y^n }$ satisfies
\begin{IEEEeqnarray}{rCl}
\guessD \Delta {}{ X^n | Y^n } & \stackrel{(a)}\leq & \bighatguessD {}{}{ \psifun{ X^n | Y^n, Z } \bigl| Y^n } \\
& \stackrel{(b)}\leq & | \setZ | \, \bighatguessD {} \star { \psifun{ X^n | Y^n, Z } \bigl| Y^n, Z } \\
& \stackrel{(c)}= & | \setZ | \, \guessD \Delta \star { X^n | Y^n, Z },
\end{IEEEeqnarray}
where $(a)$ holds because $d^{(n)} \bigl( X^n, \psifun{X^n|Y^n,Z} \bigr) \leq \Delta$; $(b)$ holds by \eqref{eq:impHatGuessMaxRD}; and $(c)$ holds because $\psifun{ \cdot | Y^n, Z }$ satisfies \eqref{eq:psiFunStarGivYZ}. Since $\hatguessD {} \star { \cdot | Y^n, Z}$ is an optimal guessing function, this concludes the proof of \eqref{eq:impGuessMaxRD}.

To prove \eqref{eq:impGuessMinRD}, fix some optimal guessing function $\hatguessD {} \star { \cdot | Y^n }$ with a corresponding success function $\guessD \Delta \star { \cdot | Y^n }$. The success function $\guessD \Delta \star { \cdot | Y^n }$ minimizes $\bigEx {}{\guessD \Delta \star { X | Y^n }^\rho}$. Let $\psifun { \cdot | Y^n }$ be the corresponding reconstruction function for which \eqref{eq:psiDefinition} holds when we substitute $\hatguessD {} \star { \hat \vecx | \vecy }$ for $\hatguessD {}{}{ \hat \vecx | \vecy }$ and $\guessD \Delta \star { \vecx | \vecy }$ for $\guessD \Delta {} { \vecx | \vecy }$ in \eqref{eq:psiDefinition}. Let $f \colon \hat \setX^n \times \setY^n \rightarrow \setZ$ be some mapping for which $f ( \hat \vecx, \vecy ) = f ( \hat \vecx^\prime, \vecy )$ implies either $\bigl\lceil \hatguessD \Delta \star { \hat \vecx | \vecy } / | \setZ | \bigr\rceil \neq \bigl\lceil \hatguessD \Delta \star { \hat \vecx^\prime | \vecy } / | \setZ | \bigr\rceil$ or $\hat \vecx = \hat \vecx^\prime$. The mapping $f$ could be any mapping for which, for every $(\hat \vecx, \vecy) \in \hat \setX^n \times \setY^n$, $f ( \hat \vecx, \vecy )$ is---up to relabeling the elements of $\setZ$---the remainder of the Euclidean division of $\hatguessD {} \star {\hat \vecx | \vecy} - 1$ by $| \setZ |$. Define the chance variable $\hat X^n = \psifun { X^n | Y^n }$, which takes values in $\hat \setX^n$. Lemma~\ref{le:ImproveGuess} implies that for $Z = f ( \hat X^n, Y^n )$ there exists some guessing function $\hatguessD {}{}{ \cdot | Y^n, Z }$ for $\hat X^n$ for which
\begin{IEEEeqnarray}{l}
\bigEx {}{\hatguessD {}{}{ \hat X^n | Y^n, Z }^\rho} = \BigEx {}{ \bigl\lceil \hatguessD {}{}{ \hat X^n | Y^n } / |\setZ| \bigr\rceil^\rho }.
\end{IEEEeqnarray}
In fact, in the proof of Lemma~\ref{le:ImproveGuess} it is shown that there exists some guessing function $\hatguessD {}{}{ \cdot | Y^n, Z }$ for $\hat X^n$ for which
\begin{IEEEeqnarray}{l}
\hatguessD {}{}{ \hat X^n | Y^n, Z } =  \bigl\lceil \hatguessD {}{}{ \hat X^n | Y^n } / |\setZ| \bigr\rceil. \label{eq:hatFunRelWWZRD}
\end{IEEEeqnarray}
Let $\hatguessD {}{}{ \cdot | Y^n, Z }$ be a guessing function as in \eqref{eq:hatFunRelWWZRD} with corresponding success function $\guessD \Delta {} { \cdot | Y^n, Z }$. Note that
\begin{IEEEeqnarray}{rCl}
\guessD \Delta {} { X^n | Y^n, Z } & \stackrel{(a)}\leq & \bighatguessD {} {} { \psifun { X^n | Y^n } \bigl| Y^n, Z } \\
& \stackrel{(b)}= & \Bigl\lceil \bighatguessD {} \star { \psifun { X^n | Y^n } \bigl| Y^n } / | \setZ | \Bigr\rceil \\
& \stackrel{(c)}= & \bigl\lceil \guessD \Delta \star { X^n | Y^n } / | \setZ | \bigr\rceil,
\end{IEEEeqnarray}
where $(a)$ holds because $d^{(n)} \bigl( X^n, \psifun { X^n | Y^n } \bigr) \leq \Delta$; $(b)$ holds because $\hat X^n = \psifun { X^n | Y^n }$ and by \eqref{eq:hatFunRelWWZRD}; and $(c)$ holds because $\psifun { \cdot | Y^n }$ satisfies \eqref{eq:psiDefinition} when we substitute $\hatguessD {} \star { \hat \vecx | \vecy }$ for $\hatguessD {}{}{ \hat \vecx | \vecy }$ and $\guessD \Delta \star { \vecx | \vecy }$ for $\guessD \Delta {} { \vecx | \vecy }$ in \eqref{eq:psiDefinition}. Since $\hatguessD {} \star { \cdot | Y^n }$ is an optimal guessing function, this concludes the proof of \eqref{eq:impGuessMinRD}.
\end{proof}

\section{A Proof of Theorem~\ref{th:relGuessEncRD}} \label{app:pfThRelGuessEncRD}

\begin{proof}
As to the first part, suppose we are given a stochastic task-encoder \eqref{eq:condPMFRelGuessEncRD} and a decoder with lists $\{ \setL^\vecy_z \}$ satisfying \eqref{eq:decRDStoch}. For every $\vecy \in \setY^n$ order the lists $\set{\setL^\vecy_z}_{z \in \setZ}$ in increasing order of their cardinalities, and order the elements in each list in some arbitrary way. Now consider the guessing order where we first guess the elements of the first (and smallest) list in their respective order followed by those elements in the second list that have not yet been guessed (i.e., that are not contained in the first list). We continue until concluding by guessing those elements of the last (and longest) list that have not been previously guessed. Let $\hatguessD {}{}{ \cdot | Y^n }$ be the corresponding guessing function, let $\guessD \Delta {}{ \cdot | Y^n }$ be its success function, and let $\psifun{ \cdot | Y^n }$ be its reconstruction function (which satisfies \eqref{eq:psiDefinition}). Observe that
\begin{IEEEeqnarray}{rCl} 
\bigEx {}{ \guessD \Delta {}{ X^n | Y^n }^\rho } & \stackrel{(a)}= & \BigEx {}{ \bighatguessD {} {} {\psifun { X^n | Y^n } \bigl| Y^n }^\rho } \\
& = & \sum_{\vecx,\vecy} P^n_{X,Y} ( \vecx,\vecy ) \, \Bigl| \Bigl\{ \hat \vecx \colon \hatguessD {} {}{ \hat \vecx | \vecy } \leq \bighatguessD {} {}{ \psifun{ \vecx | \vecy } \bigl| \vecy } \Bigr\} \Bigr|^\rho \\
& \stackrel{(b)}\leq & \sum_{\vecx,\vecy} P^n_{X,Y} ( \vecx,\vecy ) \, | \setZ |^\rho \min_{z \colon \psifun{ \vecx | \vecy } \in \setL^\vecy_z } | \setL^\vecy_z |^\rho \\
&\stackrel{(c)}\leq & | \setZ |^\rho \BigEx {}{\bigl| \setL^{Y^n}_Z \bigr|^\rho},
\end{IEEEeqnarray}
where $(a)$ holds because $\psifun { \cdot | Y^n }$ satisfies \eqref{eq:psiDefinition}; $(b)$ holds because for every $\vecx \in \setX^n$, $\hat \vecx \in \hat \setX^n$, and $\vecy \in \setY^n$, a necessary condition for $\hatguessD {} {}{ \hat \vecx | \vecy } \leq \bighatguessD {} {}{ \psifun{ \vecx | \vecy }\bigr| \vecy }$ is that $\hat \vecx \in \setL^\vecy_{\tilde z}$ for some $\tilde z \in \setZ$ satisfying $| \setL^\vecy_{\tilde z} | \leq \min_{z \colon \psifun{ \vecx | \vecy } \in \setL^\vecy_z} | \setL^\vecy_z |$, and because the number of lists whose size does not exceed $\min_{z \colon \psifun{ \vecx | \vecy } \in \setL^\vecy_z} | \setL^\vecy_z |$ is at most $| \setZ |$; and $(c)$ is true because by \eqref{eq:decRDStoch} the list $\setL^{Y^n}_Z$ contains a reconstruction $\hat \vecx \in \hat \setX^n$ of $X^n$ that satisfies the fidelity criterion \eqref{eq:fidCrit}, and because \eqref{eq:psiDefinition} implies that
\begin{equation}
\bighatguessD {} {} {\psifun { \vecx | \vecy } \bigl| \vecy } \leq \hatguessD {} {} { \hat \vecx | \vecy }, \,\, \forall \, \hat \vecx \textnormal{ s.t.\ } d^{(n)} ( X^n, \hat \vecx ) \leq \Delta,
\end{equation}
and consequently that
\begin{equation}
\min_{z \colon \psifun{ \vecx | \vecy } \in \setL^\vecy_z } | \setL^\vecy_z | \leq \min_{z \colon \hat \vecx \in \setL^\vecy_z } | \setL^\vecy_z |, \,\, \forall \, \hat \vecx \textnormal{ s.t.\ } d^{(n)} ( X^n, \hat \vecx ) \leq \Delta.
\end{equation}
This concludes the proof of \eqref{eq:listToGuessRD}.

As to the second part, suppose we are given a positive integer $\omega \leq | \hat \setX |^n$ satisfying \eqref{eq:relCardMandVRD} and a guessing function $\hatguessD {} {} { \cdot | Y^n }$ with corresponding success function $\guessD \Delta {} { \cdot | Y^n }$ and reconstruction function $\psifun { \cdot | Y^n }$ satisfying \eqref{eq:psiDefinition}. Define the chance variable $\hat X^n = \psifun { X^n | Y^n }$, which takes values in $\hat \setX^n$. Theorem~\ref{th:relGuessEnc} implies that $\hatguessD {} {} { \cdot | Y^n }$ and $\omega$ induce a $\{ 0, 1 \}$-valued conditional PMF
\begin{IEEEeqnarray}{l}
\distof { Z = z | \hat X^n = \hat \vecx, Y^n = \vecy }, \,\, \forall \, ( \hat \vecx, \vecy, z ) \in \hat \setX^n \times \setY^n \times \setZ,
\end{IEEEeqnarray}
whose associated decoding lists
 \begin{IEEEeqnarray}{l}
\hat \setL^\vecy_z = \bigl\{ \hat \vecx \in \hat \setX^n \colon \distof { \hat X^n = \hat \vecx | Y^n = \vecy, Z = z } > 0 \bigr\}, \,\, \forall \, (\vecy,z) \in \setY^n \times \setZ \label{eq:pfThRelGuessEncRDDefListsHatVecX}
\end{IEEEeqnarray}
satisfy
\begin{IEEEeqnarray}{l}
\BigEx {}{\bigl| \hat \setL^{Y^n}_Z \bigr|^\rho} \leq \BigEx {}{ \bigl\lceil \hatguessD {} {} { \hat X^n | Y^n } / \omega \bigr\rceil^\rho }. \label{eq:pfThRelGuessEncRDHatListsSatisfy}
\end{IEEEeqnarray}
Define the $\{ 0,1 \}$-valued conditional PMF
\begin{IEEEeqnarray}{l}
\distof {Z = z |X^n = \vecx, Y^n = \vecy} \nonumber \\
\quad = \bigdistof { Z = z \bigl| \hat X^n = \psifun { \vecx | \vecy }, Y^n = \vecy }, \,\, \forall \, (\vecx,\vecy,z) \in \setX^n \times \setY^n \times \setZ, \label{eq:pfThRelGuessEncRDDefCondPMF}
\end{IEEEeqnarray}
and the lists
\begin{IEEEeqnarray}{l}
\setL^\vecy_z = \hat \setL^\vecy_z, \,\, \forall \, (\vecy,z) \in \setY^n \times \setZ. \label{eq:pfThRelGuessEncRDDefLists}
\end{IEEEeqnarray}
Because $\hat X^n = \psifun { X^n | Y^n }$, \eqref{eq:pfThRelGuessEncRDDefListsHatVecX}, \eqref{eq:pfThRelGuessEncRDDefCondPMF}, and \eqref{eq:pfThRelGuessEncRDDefLists} imply that
\begin{equation}
\psifun { X^n | Y^n } \in \setL^{Y^n}_Z. \label{eq:pfThRelGuessEncRDPsifunInList}
\end{equation}
Since $$d^{(n)} \bigl( \vecx, \psifun { \vecx | \vecy } \bigr) \leq \Delta, \,\, \forall \, (\vecx, \vecy) \in \setX^n \times \setY^n,$$ this implies that the decoding lists $\{ \setL^\vecy_z \}$ satisfy \eqref{eq:decRDStoch}. Hence, \eqref{eq:pfThRelGuessEncRDDefCondPMF} is a deterministic task-encoder  (whose conditional PMF \eqref{eq:condPMFRelGuessEncRD} is $\{ 0,1 \}$-valued) for which the decoder with lists \eqref{eq:pfThRelGuessEncRDDefLists} satisfies \eqref{eq:decRDStoch}. We are now ready to conclude the proof of \eqref{eq:guessToListRD}:
\begin{IEEEeqnarray}{rCl}
\BigEx {}{\bigl| \setL^{Y^n}_Z \bigr|^\rho} & \stackrel{(a)}= & \BigEx {}{\bigl| \hat \setL^{Y^n}_Z \bigr|^\rho} \\
& \stackrel{(b)}\leq & \BigEx {}{\bigl\lceil \hatguessD {} {} { \hat X^n | Y^n } / \omega \bigr\rceil^\rho} \\
& \stackrel{(c)}= & \BigEx {}{\bigl\lceil \guessD \Delta {} { X^n | Y^n } / \omega \bigr\rceil^\rho},
\end{IEEEeqnarray}
where $(a)$ holds by \eqref{eq:pfThRelGuessEncRDDefLists}; $(b)$ holds by \eqref{eq:pfThRelGuessEncRDHatListsSatisfy}; and $(c)$ holds because $\hat X^n = \psifun { X^n | Y^n }$, where $\psifun { \cdot | Y^n }$ satisfies \eqref{eq:psiDefinition}.
\end{proof}

\section{A Proof of Theorem~\ref{th:secrecyRD}}\label{app:pfSecrecyRD}

\begin{proof}
We first prove \eqref{eq:privExpBob1RD}. If $R_1 + R_2 < \RDexp$, then Corollary~\ref{co:equivBunteResultGuessingRD} in the guessing version and Corollary~\ref{co:optTaskEncRD} in the list version imply that the privacy-exponent is negative infinity. We hence assume that $R_1 + R_2 > \RDexp$. In this case Corollary~\ref{co:equivBunteResultGuessingRD} in the guessing version and Corollary~\ref{co:optTaskEncRD} in the list version imply that the constraint
\begin{equation}
\lim_{n \rightarrow \infty} \mathscr A_{\textnormal B} ( P^n_{X,Y}, \Delta ) = 1 \label{eq:bobAmbTo1RD}
\end{equation}
can be met.

We first show that the privacy-exponent cannot exceed the RHS of \eqref{eq:privExpBob1RD}. To this end we note that it holds for every $n \in \naturals$ that
\begin{IEEEeqnarray}{rCl}
\!\!\!\!\!\!\!\! \mathscr A_{\textnormal E} ( P^n_{X,Y} ) & = & \min_{ \hatguessD {}{(1)} { \cdot | Y^n, M_1 }, \, \hatguessD {}{(2)} { \cdot | Y^n, M_2 } } \BigEx {}{\guessD \Delta {(1)}{X^n|Y^n,M_1}^\rho \wedge \guessD \Delta {(2)}{X^n|Y^n,M_2}^\rho} \\
& \leq & \min_{k \in \{1,2\}} \biggl( \min_{ \hatguessD {}{(k)} { \cdot | Y^n, M_k } } \BigEx {}{\guessD \Delta {(k)}{X^n|Y^n,M_k}^\rho} \biggr). \label{eq:eveAltAmbRD}
\end{IEEEeqnarray}
By Corollary~\ref{co:impGuessRD} it holds for every $k \in \{1,2\}$ and $l \in \{ 1, 2 \} \setminus \{ k \}$ that
\begin{IEEEeqnarray}{l}
\min_{ \hatguessD {}{} { \cdot | Y^n, M_1, M_2 }} \bigEx {}{\guessD \Delta {}{X^n|Y^n,M_1,M_2}^\rho} \nonumber \\
\quad \geq | \setM_l |^{-\rho} \min_{\hatguessD {}{(k)} { \cdot | Y^n, M_k }} \BigEx {}{\guessD \Delta {(k)}{X^n|Y^n,M_k}^\rho}. \label{eq:eveAltAmbRDBoundBob}
\end{IEEEeqnarray}
Because $$\min_{ \hatguessD {}{} { \cdot | Y^n, M_1, M_2 } } \bigEx {}{\guessD \Delta{}{X^n|Y^n,M_1,M_2}^\rho} \leq \BigEx {}{ \bigl| \setL^{Y^n}_{M_1,M_2} \bigr|^\rho},$$ \eqref{eq:eveAltAmbRD} and \eqref{eq:eveAltAmbRDBoundBob} imply that in both versions Eve's ambiguity exceeds Bob's by at most a factor of $| \setM_1 |^\rho \wedge | \setM_2 |^\rho$. That is,
\begin{IEEEeqnarray}{l}
\mathscr A_{\textnormal E} ( P^n_{X,Y}, \Delta ) \leq \bigl( |\setM_1| \wedge |\setM_2| \bigr)^\rho \mathscr A_{\textnormal B} ( P^n_{X,Y}, \Delta ). \label{eq:eveAltAmbRDBoundBob1}
\end{IEEEeqnarray}
Suppose that \eqref{eq:bobAmbTo1RD} holds and consequently
\begin{IEEEeqnarray}{l}
\limsup_{n \rightarrow \infty} \frac{ \log \bigl( \mathscr A_{\textnormal B} ( P^n_{X,Y}, \Delta ) \bigr) }{n} = 0. \label{eq:bobAmbTo1EBIs0}
\end{IEEEeqnarray}
From~\eqref{eq:eveAltAmbRDBoundBob1} and~\eqref{eq:bobAmbTo1EBIs0} it follows that
\begin{IEEEeqnarray}{l}
\limsup_{n \rightarrow \infty} \frac{\log \bigl( \mathscr A_{\textnormal E} ( P^n_{X,Y}, \Delta ) \bigr)}{n} \leq \rho (R_1 \wedge R_2). \label{eq:eveAltAmbRDBound1}
\end{IEEEeqnarray}
Eve can ignore the hint that she observes and guess a reconstruction for $X^n$ based on $Y^n$ alone. Hence, we obtain from Theorem~\ref{th:optGuessFunRD} that
\begin{IEEEeqnarray}{l}
\limsup_{n \rightarrow \infty} \frac{\log \bigl( \mathscr A_{\textnormal E} ( P^n_{X,Y}, \Delta ) \bigr)}{n} \leq \rho \RDexp. \label{eq:eveAltAmbRDBound2}
\end{IEEEeqnarray}
From \eqref{eq:eveAltAmbRDBound1} and \eqref{eq:eveAltAmbRDBound2} we conclude that the privacy-exponent cannot exceed the RHS of \eqref{eq:privExpBob1RD}:
\begin{IEEEeqnarray}{l}
\limsup_{n \rightarrow \infty} \frac{\log \bigl( \mathscr A_{\textnormal E} ( P^n_{X,Y}, \Delta ) \bigr)}{n} \leq \rho \Bigl( R_1 \wedge R_2 \wedge \RDexp \Bigr). \label{eq:eveAltAmbRDBound}
\end{IEEEeqnarray}

We next show that the privacy-exponent cannot be smaller than the RHS of \eqref{eq:privExpBob1RD}. By possibly relabeling the hints, we can assume w.l.g.\ that $R_2 = R_1 \wedge R_2$. Fix some $\epsilon > 0$ satisfying
\begin{equation}
\epsilon \leq R_1 + R_2 - \RDexp. \label{eq:epsChoiceRDPf}
\end{equation}
Choose a nonnegative rate-triple $(R_{\textnormal s},\tilde R_1,\tilde R_2)$ as follows:
\begin{enumerate}
\item If $R_2 \leq \RDexp / 2$, then choose
\begin{IEEEeqnarray}{l}
R_{\textnormal s} = 0, \quad \tilde R_1 = \RDexp - R_2 + \epsilon, \quad \tilde R_2 = R_2.
\end{IEEEeqnarray}
\item Else if $\RDexp / 2 < R_2 \leq \RDexp$, then choose
\begin{IEEEeqnarray}{l}
R_{\textnormal s} = 2 R_2 - \RDexp - \epsilon, \quad \tilde R_1 = \tilde R_2 = \RDexp - R_2 + \epsilon.
\end{IEEEeqnarray}
(To guarantee that $R_{\textnormal s} \geq 0$, we assume in this case that $\epsilon > 0$ is sufficiently small so that, in addition to \eqref{eq:epsChoiceRDPf}, also
\begin{equation}
\epsilon < 2 R_2 - \RDexp
\end{equation}
holds.)
\item Else if $\RDexp < R_2$, then choose
\begin{IEEEeqnarray}{l}
R_{\textnormal s} = R_2, \quad \tilde R_1 = \tilde R_2 = 0.
\end{IEEEeqnarray}
\end{enumerate}
Having chosen $(R_{\textnormal s},\tilde R_1,\tilde R_2)$, choose the triple $(c_{\textnormal s}, c_1, c_2) \in \naturals^3$ to be
\begin{IEEEeqnarray}{l}
(c_{\textnormal s}, c_1, c_2) = (2^{n R_{\textnormal s}}, 2^{n \tilde R_1}, 2^{n \tilde R_2}).
\end{IEEEeqnarray}
For each $\nu \in \{ \textnormal s, 1, 2 \}$, let $V_\nu$ be a chance variable taking values in the set $\setV_\nu = \{ 0, \ldots, c_\nu - 1 \}$. Because our choice of $(R_{\textnormal s},\tilde R_1,\tilde R_2)$ satisfies
\begin{equation}
R_{\textnormal s} + \tilde R_1 + \tilde R_2 >  \RDexp,
\end{equation}
Corollary~\ref{co:optTaskEncRD} implies that there exist $\{ 0, 1 \}$-valued conditional PMFs $$\bigdistof { ( V_{\textnormal s}, V_1, V_2 ) = ( v_{\textnormal s}, v_1, v_2 ) \bigl| X^n = \vecx, Y^n = \vecy}$$ and decoders, whose lists $$\bigl\{ \setL^{\vecy}_{v_{\textnormal s}, v_1, v_2} \bigr\}_{ ( \vecy,v_{\textnormal s}, v_1, v_2 ) \in \setY^n \times \setV_{\textnormal s} \times \setV_1 \times \setV_2 }$$ satsify
\begin{IEEEeqnarray}{l}
\exists \, \hat \vecx \in \setL^{Y^n}_{V_{\textnormal s}, V_1, V_2} \textnormal { s.t.\ } d^{(n)} (X^n, \hat \vecx) \leq \Delta,
\end{IEEEeqnarray}
for which
\begin{equation}
\lim_{n \rightarrow \infty} \BigEx {}{\bigl| \setL^{Y^n}_{V_{\textnormal s}, V_1, V_2} \bigr|^\rho} = 1. \label{eq:BobMomDistStorDir1RD}
\end{equation}
Because $$\min_{\hatguessD {}{} {\cdot | Y^n,V_{\textnormal s}, V_1, V_2 }} \bigEx {}{\guessD \Delta {} {X^n | Y^n,V_{\textnormal s}, V_1, V_2 }^\rho} \leq \BigEx {}{\bigl| \setL^{Y^n}_{V_{\textnormal s}, V_1, V_2} \bigr|^\rho},$$ \eqref{eq:BobMomDistStorDir1RD} implies that
\begin{equation}
\lim_{n \rightarrow \infty} \min_{\hatguessD {}{} {\cdot | Y^n,V_{\textnormal s}, V_1, V_2 }} \bigEx {}{\guessD \Delta {} {X^n \bigl| Y^n,V_{\textnormal s}, V_1, V_2 }^\rho} = 1. \label{eq:BobMomDistStorDir1GuessRD}
\end{equation}
Our choice of $(R_{\textnormal s},\tilde R_1,\tilde R_2)$ satisfies
\begin{equation}
R_1 \geq R_{\textnormal s} + \tilde R_1, \quad R_2 \geq R_{\textnormal s} + \tilde R_2,
\end{equation}
and hence we can for every blocklength~$n$ choose some conditional PMF  \eqref{eq:aliceEncPMFRD} that assigns positive probability only to $c_{\textnormal s} c_1$ elements of $\setM_1$ and $c_{\textnormal s} c_2$ elements of $\setM_2$. Therefore, we can assume w.l.g.\ that $\setM_1 = \setV_{\textnormal s} \times \setV_1$ and $\setM_2 = \setV_{\textnormal s} \times \setV_2$ and choose $M_1 = ( V_{\textnormal s} \oplus_{c_{\textnormal s}} \! U, V_1 )$ and $M_2 = ( U, V_2 )$, where $(V_{\textnormal s},V_1,V_2)$ is drawn according to the above conditional PMF, and where $U$ is independent of $( X^n, Y^n, V_{\textnormal s}, V_1, V_2 )$ and uniform over $\setV_{\textnormal s}$. For this choice \eqref{eq:bobAmbTo1RD} follows from \eqref{eq:BobMomDistStorDir1RD} in the list version and from \eqref{eq:BobMomDistStorDir1GuessRD} in the guessing version.

It remains to show that for the above choice of the conditional PMFs \eqref{eq:aliceEncPMFRD}
\begin{IEEEeqnarray}{l}
\liminf_{n \rightarrow \infty} \frac{\log \bigl( \mathscr A_{\textnormal E} ( P^n_{X,Y}, \Delta ) \bigr)}{n} \geq \rho \Bigl( R_1 \wedge R_2 \wedge \RDexp \Bigr). \label{eq:eveAmbRDLB}
\end{IEEEeqnarray}
Define the triple of chance variables
\begin{IEEEeqnarray}{l}
( I, \hat U, \hat V ) \triangleq \begin{cases} ( 1, V_s \oplus_{c_{\textnormal s}} \! U, V_1 ) &\text{if } \guessD \Delta {(1)} { X^n | Y^n, M_1 } \leq \guessD \Delta {(2)} { X^n | Y^n, M_2 }, \\ ( 2,U,V_2 ) &\text{otherwise} \end{cases} \label{eq:pfEveAmbRDTripRVs}
\end{IEEEeqnarray}
with alphabet $\setI \times \setV_{\textnormal s} \times \hat \setV$, where $\setI = \{ 1,2 \}$ and $\hat \setV = \{ 0, \ldots, c_1 \vee c_2 - 1 \}$. From $(Y^n, I, U, \hat V)$ Eve can guess a reconstruction $\hat \vecx \in \hat \setX^n$ of $X^n$ using either $\hatguessD {}{(1)}{ \cdot | Y^n, M_1 }$ or $\hatguessD {}{(2)}{ \cdot | Y^n, M_2 }$ depending on the value of $I$. That is, Eve can use some guessing function $\hatguessD {}{}{ \cdot | Y^n, I, \hat U, \hat V }$ satisfying that, for every $\vecy \in \setY^n$, $i \in \setI$, $\hat u \in \setV_{\textnormal s}$, and $\hat v \in \{ 0, \ldots, c_i - 1 \}$,
\begin{equation}
\hatguessD {}{}{ \hat \vecx | \vecy, i, \hat u, \hat v } = \bighatguessD {}{(i)}{ \hat \vecx \bigl| \vecy, (\hat u, \hat v) },
\end{equation}
where by \eqref{eq:pfEveAmbRDTripRVs} the success function $\guessD \Delta {}{ \cdot | Y^n, I, \hat U, \hat V }$ corresponding to $\hatguessD {}{}{ \cdot | Y^n, I, \hat U, \hat V }$ satisfies
\begin{IEEEeqnarray}{l}
\guessD \Delta {}{ X^n | Y^n, I, \hat U, \hat V } \nonumber \\
\quad = \bigguessD \Delta {(I)}{ X^n \bigl| Y^n, (\hat U, \hat V) } \\
\quad = \guessD \Delta {(I)}{ X^n | Y^n, M_I } \\
\quad = \guessD \Delta {(1)}{ X^n | Y^n, M_1 }^\rho \wedge \guessD \Delta {(2)}{ X^n | Y^n, M_2 }.
\end{IEEEeqnarray}
Let $\psifun { \cdot | Y^n, I, \hat U, \hat V }$ be the reconstruction function corresponding to $\hatguessD {}{}{ \cdot | Y^n, I, \hat U, \hat V }$, i.e., the unique mapping satisfying that
\begin{IEEEeqnarray}{l}
\Bigl( \psifun { \vecx | \vecy, i, \hat u, \hat v } = \hat \vecx \iff \guessD \Delta {}{ \vecx | \vecy, i, \hat u, \hat v } = \hatguessD {}{}{ \hat \vecx | \vecy, i, \hat u, \hat v } \Bigr), \nonumber \\
\quad \,\, \forall \, (\vecx, \hat \vecx, \vecy, i, \hat u, \hat v) \in \setX^n \times \hat \setX^n \times \setY^n \times \setI \times \setV_{\textnormal s} \times \hat \setV,
\end{IEEEeqnarray}
and define the chance variable $\hat X^n = \psifun { X^n | Y^n, I, \hat U, \hat V }$. Note that
\begin{IEEEeqnarray}{l}
\bigEx {}{\hatguessD {} {} { \hat X^n | Y^n, I, \hat U, \hat V }^\rho} \geq \min_{\guess {} { \cdot, \cdot, \cdot | Y^n, I, \hat U, \hat V } } \bigEx {}{\guess {} { \hat X^n, I, \hat U | Y^n, I, \hat U, \hat V }^\rho}.
\end{IEEEeqnarray}
This implies that
\begin{IEEEeqnarray}{rCl}
\mathscr A_{\textnormal E} ( P^n_{X,Y}, \Delta ) & \geq & \min_{\guess {} { \cdot, \cdot, \cdot | Y^n, I, \hat U, \hat V } } \bigEx {}{\guess {} { \hat X^n, I, \hat U | Y^n, I, \hat U, \hat V }^\rho} \\
& \stackrel{(a)}\geq & \bigl( |\setI| \, | \setV_{\textnormal s} | \, |\hat \setV| \bigr)^{-\rho} \min_{\guess {} { \cdot, \cdot, \cdot | Y^n } } \bigEx {}{\guess {} { \hat X^n, I, \hat U | Y^n }^\rho} \\
& \geq & 2^{- \rho - n \rho ( R_{\textnormal s} + \tilde R_1 \vee \tilde R_2 )} \min_{ \guess {} { \cdot, \cdot, \cdot | Y^n } } \bigEx {}{\guess {} { \hat X^n, I, \hat U | Y^n }^\rho}, \label{eq:pfEveAmbRDFirstExpr}
\end{IEEEeqnarray}
where $(a)$ follows from Corollary~\ref{co:impGuess} and the fact that $( I, \hat U, \hat V )$ takes values in the set
\begin{IEEEeqnarray*}{l}
\bigl\{ ( 1, \hat u, \hat v ) \colon ( \hat u, \hat v ) \in \setV_{\textnormal s} \times \setV_1 \bigr\} \cup \bigl\{ ( 2, \hat u, \hat v ) \colon ( \hat u, \hat v ) \in \setV_{\textnormal s} \times \setV_2 \bigr\},
\end{IEEEeqnarray*}
which is of size
\begin{IEEEeqnarray*}{l}
| \setV_{\textnormal s} \times \setV_1 | + | \setV_{\textnormal s} \times \setV_2 | = c_{\textnormal s} ( c_1 + c_2 ).
\end{IEEEeqnarray*}
From \eqref{eq:pfEveAmbRDFirstExpr} it follows that
\begin{IEEEeqnarray}{l}
\liminf_{n \rightarrow \infty} \frac{\log \bigl( \mathscr A_{\textnormal E} ( P^n_{X,Y}, \Delta ) \bigr)}{n} \nonumber \\
\quad \geq \liminf_{n \rightarrow \infty} \min_{ \guess {} { \cdot, \cdot, \cdot | Y^n } } \frac{\log \Bigl( \bigEx {}{ \guess {}{\hat X^n, I, \hat U \bigl| Y^n }^\rho } \Bigr)}{n} - \rho ( R_{\textnormal s} + \tilde R_1 \vee \tilde R_2 ). \label{eq:secrecyRD1}
\end{IEEEeqnarray}
Therefore, if we can show that
\begin{IEEEeqnarray}{l}
\liminf_{n \rightarrow \infty} \min_{ \guess {} { \cdot, \cdot, \cdot | Y^n } } \frac{\log \Bigl( \bigEx {}{ \guess {}{\hat X^n, I, \hat U | Y^n }^\rho } \Bigr)}{n} \geq \rho \Bigl( \RDexp + R_{\textnormal s} \Bigr), \label{eq:secrecyRD2}
\end{IEEEeqnarray}
then we can let $\epsilon$ tend to zero to conclude from \eqref{eq:secrecyRD1} that \eqref{eq:eveAmbRDLB} holds:
\begin{IEEEeqnarray}{rCl}
\liminf_{n \rightarrow \infty} \frac{\log \bigl( \mathscr A_{\textnormal E} ( P^n_{X,Y}, \Delta ) \bigr)}{n} & \geq & \rho \Bigl( R_2 \wedge \RDexp \Bigr) \\
& \geq & \rho \Bigl( R_1 \wedge R_2 \wedge \RDexp \Bigr).
\end{IEEEeqnarray}

We next conclude the proof of \eqref{eq:eveAmbRDLB} by establishing \eqref{eq:secrecyRD2}. By Theorem~\ref{th:optGuessFun}
\begin{IEEEeqnarray}{l}
\liminf_{n \rightarrow \infty} \min_{ \guess {} { \cdot, \cdot, \cdot | Y^n } } \frac{\log \Bigl( \bigEx {}{ \guess {}{\hat X^n, I, \hat U | Y^n }^\rho } \Bigr)}{n} \geq \rho \renent {\tirho}{\hat X^n, I, \hat U | Y^n }.  \label{eq:secrecyRD3}
\end{IEEEeqnarray}
In \cite[Appendix~B]{buntelapidoth14} it is shown that for every pair of chance variables $( A, B )$ taking values in some finite set $\setA \times \setB$ according to som PMF $P_{A,B}$
\begin{IEEEeqnarray}{l}
\renent {\tirho}{ A | B } = \max_{\substack{Q \in \mathscr P ( \setB ), \\ V \in \mathscr P ( \setA | \setB )}} \ent { V | Q } - \rho^{-1} \relent { Q \times V }{P_{A,B}}, \label{eq:condRenEntAltExp}
\end{IEEEeqnarray}
where $\mathscr P ( \setB )$ denotes the set of PMFs on $\setB$, and $\mathscr P ( \setA | \setB )$ denotes the set of transition laws from $\setB$ to $\setA$. We shall use \eqref{eq:condRenEntAltExp} to lower-bound the RHS of \eqref{eq:secrecyRD3}, where we will substitute $( \hat X^n, I, \hat U )$ for $A$ and $Y^n$ for $B$ in \eqref{eq:condRenEntAltExp}. To that end denote by $V_n$ the conditional PMF of $( \hat X^n, I, \hat U )$ given $( X^n, Y^n, U )$, and denote by $\tilde V_n$ the conditional PMF of $( Y^n, \hat X^n, I, \hat U )$ given $( X^n, Y^n, U )$. Note that $V_n$ and $\tilde V_n$ are both $\{ 0,1 \}$-valued. Fix any PMF $Q_{X,Y}$ on $\setX \times \setY$, let $P_U$ denote the uniform distribution on $\setV_{\textnormal s}$, and define the PMF on $\setX^n \times \setY^n \times \setU \times \hat \setX^n \times \setI \times \setU$ $$Q_{X^n,Y^n,U,\hat X^n,I,\hat U} = \bigl( Q_{X,Y}^n \times P_U \bigr) \times V_n.$$ As to $\relent { Q \times V }{P_{A,B}}$, we then find that
\begin{IEEEeqnarray}{l}
\Bigrelent {Q_Y^n \times \bigl( ( Q_{X | Y }^n \times P_U ) \, V_n \bigr)}{P_{Y}^n \times \bigl( ( P_{ X | Y }^n \times P_U ) \, V_n \bigr) }\\
\quad = \bigrelent{ ( Q_{X,Y}^n \times P_U ) \, \tilde V_n } { ( P_{X,Y}^n \times P_U ) \, \tilde V_n } \\
\quad \stackrel{(a)}\leq \relent { Q_{X,Y}^n \times P_U }{ P_{X,Y}^n \times P_U } \\
\quad = \relent { Q_{X,Y}^n }{ P_{X,Y}^n } \\
\quad = n \relent { Q_{X,Y} }{ P_{X,Y} }, \label{eq:divBoundRDDistStor}
\end{IEEEeqnarray}
where $(a)$ follows from the Data-Processing inequality \cite[Lemma~3.11]{csiszarkoerner11}. As to $\ent {V | Q }$, we find that
\begin{IEEEeqnarray}{l}
\bigcondent { ( Q_{ X | Y }^n \times P_U ) \, V_n } { Q_Y^n } \\
\quad \stackrel{(a)}\geq \bigcondmuti { Q_{X | Y }^n \times P_U }{ V_n }{ Q_Y^n } \\
\quad \stackrel{(b)}= \condmuti { Q_{X | Y }^n }{ P_U \, V_n }{ Q_Y^n } + \bigcondmuti { P_U }{ V_n }{ Q_{X,Y}^n } \\
\quad \stackrel{(c)}= \condmuti { Q_{ X | Y }^n }{ P_U \, V_n }{ Q_Y^n } + \log | \setV_{\textnormal s} | \\
\quad \stackrel{(d)}\geq \condmuti { Q_{ X | Y }^n}{ Q_{ \hat X^n | X^n,Y^n } }{ Q_Y^n } + \log | \setV_{\textnormal s} | \\
\quad \stackrel{(e)}\geq n R_{ X | Y } ( Q_{X,Y}, \Delta ) + \log |\setV_{\textnormal s} |, \label{eq:entBoundRDDistStor}
\end{IEEEeqnarray}
where $(a)$ holds because entropy is nonnegative; $(b)$ follows from chain rule; $(c)$ holds because $U$ is independent of $( X^n, Y^n )$ and uniform over its support $\setV_{\textnormal s}$, and because $U$ is deterministic given $\bigl( X^n, Y^n, \hat X^n, I, \hat U \bigr)$ (which holds by \eqref{eq:pfEveAmbRDTripRVs} and because $V_{\textnormal s}$ is deterministic given $( X^n, Y^n )$); $(d)$ holds for the conditional PMF
\begin{IEEEeqnarray}{l}
Q_{\hat X^n|X^n,Y^n} ( \hat \vecx | \vecx, \vecy ) = \sum_{u, i, \hat u} P_U (u) \, V_n ( \hat \vecx, i, \hat u | \vecx, \vecy, u ), \, \forall \, ( \vecx, \hat \vecx, \vecy ) \in \setX^n \times \hat \setX^n \times \setY^n, \label{eq:condPMFHatXnGivXnYn}
\end{IEEEeqnarray}
because conditioning cannot increase entropy; and $(e)$ follows from the conditional Rate-Distortion theorem \cite{leinergray74} and
\begin{IEEEeqnarray}{l}
\Bigdistsubof { Q_{X,Y}^n \times Q_{\hat X^n|X^n,Y^n} }{ d^{(n)} \bigl( X^n, \hat X^n \bigr) \leq \Delta } = 1, \label{eq:distSmallerDWP1DistStor}
\end{IEEEeqnarray}
which holds by \eqref{eq:condPMFHatXnGivXnYn} and because
\begin{IEEEeqnarray}{l}
\Bigl( V_n ( \hat \vecx, i, \hat u | \vecx, \vecy, u ) > 0 \implies \exists \, \hat v \in \hat \setV \colon \hat \vecx = \psifun { \vecx | \vecy, i, \hat u, \hat v } \Bigr), \nonumber \\
\quad \,\, \forall \, (\vecx, \hat \vecx, \vecy, i, \hat u) \in \setX^n \times \hat \setX^n \times \setY^n \times \setI \times \setU.
\end{IEEEeqnarray}

More precisely, $(e)$ can be established as follows. Draw $\bigl( \underline {X^n}, \underline {Y^n}, \underline {\hat X^n} \bigr)$ from $\setX^n \times \setY^n \times \hat \setX^n$ according to the PMF $Q_{X,Y}^n \times Q_{\hat X^n|X^n,Y^n}$. By \eqref{eq:distSmallerDWP1DistStor}
\begin{IEEEeqnarray}{l}
\BigEx{}{d^{(n)} ( \underline {X^n}, \underline{\hat X^n} )} = \BigEx{Q_{X,Y}^n \times Q_{\hat X^n|X^n,Y^n}}{ d^{(n)}( X^n, \hat X^n ) } \leq \Delta. \label{eq:distSmallerDWP1DistStorUnderline}
\end{IEEEeqnarray}
Consequently, we find that
\begin{IEEEeqnarray}{l}
\condmuti {Q_{X|Y}^n}{Q_{\hat X^n|X^n,Y^n}}{Q_Y^n} \nonumber \\
\quad = \condmutivars {\underline {X^n}}{\underline {\hat X^n}}{\underline {Y^n}} \\
\quad \stackrel{(f)}= \sum^n_{i = 1} \bigcondmutivars {\underline {X_i}}{\underline {\hat X^n}}{\underline {Y^n}, \underline {X^{i-1}}} \\
\quad \stackrel{(g)}\geq \sum^n_{i = 1} \bigcondmutivars {\underline {X_i}}{\underline {\hat X_i}}{\underline {Y_i}} \\
\quad \stackrel{(h)}= n \Biggl( \frac{1}{n} \sum^n_{i = 1} \bigcondmuti {Q_{X|Y}}{Q_{\hat X_i| X_i, Y_i}}{Q_Y} \Biggr) \\
\quad \stackrel{(i)}\geq n \Biggcondmuti {Q_{X|Y}}{\frac{1}{n} \sum^n_{i = 1} Q_{\hat X_i|X_i,Y_i}}{Q_Y} \\
\quad \stackrel{(j)}\geq n \min_{\substack{ Q_{\hat X| X, Y} \colon \\ \Ex {}{d ( X, \hat X ) \leq \Delta}}} \condmuti {Q_{X|Y}}{Q_{\hat X | X,Y}}{Q_Y} \\
\quad \stackrel{(k)}= \RDfun,
\end{IEEEeqnarray}
where $(f)$ follows from the chain rule; $(g)$ holds because $\underline {X_i}$ and $\bigl( \underline {Y^{i-1}}, \underline {Y_{i+1}^n}, \underline {X^{i-1}} \bigr)$ are independent, and because conditioning cannot increase entropy; $(h)$ holds for the conditional PMFs $Q_{\hat X_i| X_i, Y_i}, \,\, i \in [1:n]$ that satisfy
\begin{IEEEeqnarray*}{l}
Q_{\hat X_i| X_i, Y_i} (\hat x_i | x_i, y_i) = \sum_{\substack{x^{i-1}, \hat x^{i-1}, y^{i-1}, \\ x^n_{i + 1}, \hat x^n_{i + 1}, y^n_{i + 1} }} Q_{X,Y}^{i-1} (x^{i-1}, y^{i-1}) \, Q_{X,Y}^{n - i} (x^n_{i+1}, y^n_{i+1}) \, Q_{\hat X^n|X^n,Y^n} (\hat x^n | x^n, y^n), \\
\quad \,\, \forall \, ( x_i, \hat x_i, y_i ) \in \setX \times \hat \setX \times \setY;
\end{IEEEeqnarray*}
$(i)$ holds because mutual information is convex in the transition law (here $Q_{\hat X_i|X_i,Y_i}$); $(j)$ holds because \eqref{eq:distSmallerDWP1DistStorUnderline} implies that
\begin{IEEEeqnarray}{rCl}
\Delta & \geq & \BigEx{}{d^{(n)} ( \underline {X^n}, \underline{\hat X^n} )} \\
& = & \BiggEx{}{\frac{1}{n} \sum^n_{i = 1} d \bigl( \underline {X_i}, \underline {\hat X_i} \bigr)} \\
& = & \bigEx{Q_{X,Y} \times \bigl( \frac{1}{n} \sum^n_{i = 1} Q_{\hat X_i|X_i, Y_i} \bigr)}{d ( X, \hat X )};
\end{IEEEeqnarray}
and $(k)$ holds by the definition of the rate-distortion function under the PMF $Q_{X,Y}$ \eqref{eq:defRDFun}. This concludes the proof of \eqref{eq:entBoundRDDistStor}.

Having established \eqref{eq:entBoundRDDistStor}, we are now ready to conclude the proof of \eqref{eq:secrecyRD2}. By substituting $( \hat X^n, I, \hat U )$ for $A$ and $Y^n$ for $B$ in \eqref{eq:condRenEntAltExp}, we obtain from \eqref{eq:condRenEntAltExp}, \eqref{eq:divBoundRDDistStor}, and \eqref{eq:entBoundRDDistStor} that
\begin{IEEEeqnarray}{l}
\renent {\tirho}{\hat X^n, I, \hat U \bigl| Y^n } \nonumber \\
\quad \geq \bigcondent { ( Q_{X | Y }^n \times P_U ) \, V_n } { Q_Y^n } \nonumber \\
\qquad - \rho^{-1} \Bigrelent {Q_Y^n \times \bigl( ( Q_{X | Y }^n \times P_U ) \, V_n \bigr)}{P_{Y}^n \times \bigl( ( P_{X|Y}^n \times P_U ) \, V_n \bigr)} \\
\quad \geq n \Bigl( \RDfun - \rho^{-1} \relent { Q_{X,Y} }{ P_{X,Y} } \Bigr) + \log |\setV_{\textnormal s}|.
\end{IEEEeqnarray}
Because this holds for every PMF $Q_{X,Y}$ on $\setX \times \setY$, and by the definition of $\RDexp$ \eqref{eq:defFunctional},
\begin{IEEEeqnarray}{l}
\renent {\tirho}{\hat X^n, I, \hat U | Y^n } \nonumber \\
\quad \geq n \sup_{Q_{X,Y}} \Bigl( \RDfun - \rho^{-1} \relent { Q_{X,Y} }{ P_{X,Y} } \Bigr) + \log |\setV_{\textnormal s}| \\
\quad = n \RDexp + \log |\setV_{\textnormal s}|.
\end{IEEEeqnarray}
This, $|\setV_{\textnormal s}| = 2^{n R_{\textnormal s}}$, and \eqref{eq:secrecyRD3} imply \eqref{eq:secrecyRD2}. This concludes the proof of \eqref{eq:privExpBob1RD}.\\

We next prove \eqref{eq:privExpBobEBRD}. If $R_1 + R_2 < \RDexp - \rho^{-1} E_{\textnormal B}$, then Corollary~\ref{co:equivBunteResultGuessingRD} in the guessing version and Corollary~\ref{co:optTaskEncRD} in the list version imply that the modest privacy-exponent is negative infinity. We hence assume that $R_1 + R_2 > \RDexp - \rho^{-1} E_{\textnormal B}$. We can now use the same line of argument as in the proof of \eqref{eq:privExpBob1RD} but with \eqref{eq:bobAmbTo1EBIs0} replaced by
\begin{equation}
\limsup_{n \rightarrow \infty} \frac{\log \bigl( \mathscr A_{\textnormal B} ( P^n_{X,Y}, \Delta ) \bigr)}{n} \leq E_{\textnormal B} \label{eq:bobAmbEBRD}
\end{equation}
to show that the modest privacy-exponent cannot exceed the RHS of \eqref{eq:privExpBobEBRD}. To show that the modest privacy-exponent is lower-bounded by the RHS of \eqref{eq:privExpBobEBRD}, we argue as for the privacy-exponent, except that here we choose the nonnegative triple $( R_{\textnormal s},\tilde R_1,\tilde R_2 )$ as follows:
\begin{enumerate}
\item If $R_2 \leq \bigl( \RDexp - \rho^{-1} E_{\textnormal B} \bigr) / 2$, then choose
\begin{IEEEeqnarray}{l}
R_{\textnormal s} = 0, \quad \tilde R_1 = \RDexp - \rho^{-1} E_{\textnormal B} - R_2, \quad \tilde R_2 = R_2.
\end{IEEEeqnarray}
\item Else if $\bigl( \RDexp - \rho^{-1} E_{\textnormal B} \bigr) / 2 < R_2 \leq \RDexp - \rho^{-1} E_{\textnormal B}$, then choose
\begin{IEEEeqnarray}{l}
R_{\textnormal s} = 2 R_2 - \RDexp + \rho^{-1} E_{\textnormal B}, \nonumber \\
\tilde R_1 = \tilde R_2 = \RDexp - \rho^{-1} E_{\textnormal B} - R_2.
\end{IEEEeqnarray}
\item Else if $\RDexp - \rho^{-1} E_{\textnormal B} < R_2$, then choose
\begin{IEEEeqnarray}{l}
R_{\textnormal s} = R_2, \quad \tilde R_1 = \tilde R_2 = 0.
\end{IEEEeqnarray}
\end{enumerate}
\end{proof}



%
\end{appendix}
\end{document}